\title{Quantum Information Scrambling, Chaos, Sensitivity, and Emergent State Designs}
\author{Varikuti Naga Dileep}
\date{June 2024}
\begin{document}

\begin{minipage}{0.4\textwidth}
    \quad\includegraphics[width=0.5\textwidth]{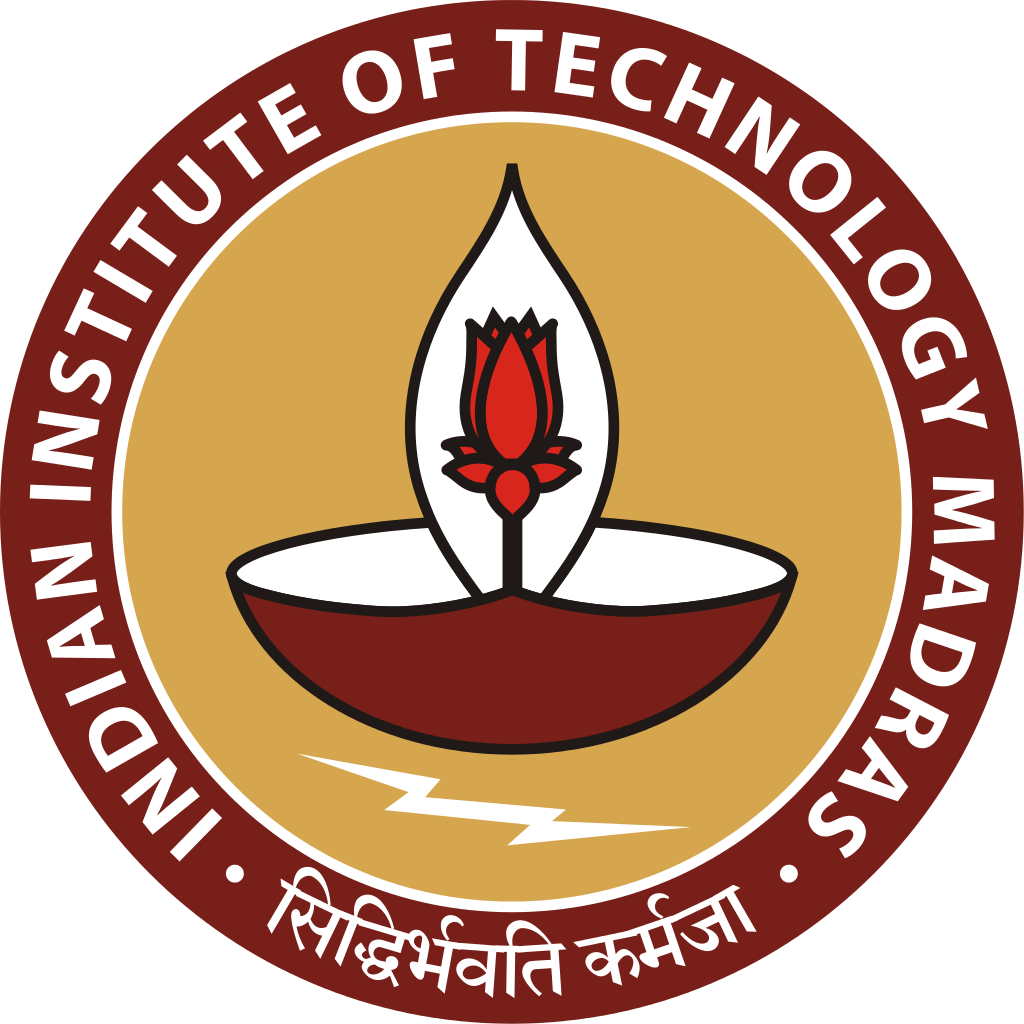}
\end{minipage}
\hfill
\begin{minipage}{0.6\textwidth}
    DEPARTMENT OF PHYSICS
    \par
    INDIAN INSTITUTE OF TECHNOLOGY MADRAS
    \par
    CHENNAI -- 600036
\end{minipage}
\par
\makebox[\textwidth][l]{\rule{\paperwidth}{1mm}}
\vspace{1cm}
\begin{center}
\vspace*{\parskip}
\setlength{\fboxrule}{2pt}
\setlength{\fboxsep}{10pt}
        \begin{minipage}{0.9\textwidth}
        \begin{center}
            \bfseries\LARGE{Quantum Information Scrambling, Chaos, Sensitivity, and Emergent State Designs}
        \end{center}
        \end{minipage}
\end{center}

\begin{center}
    \includegraphics[scale=0.4]{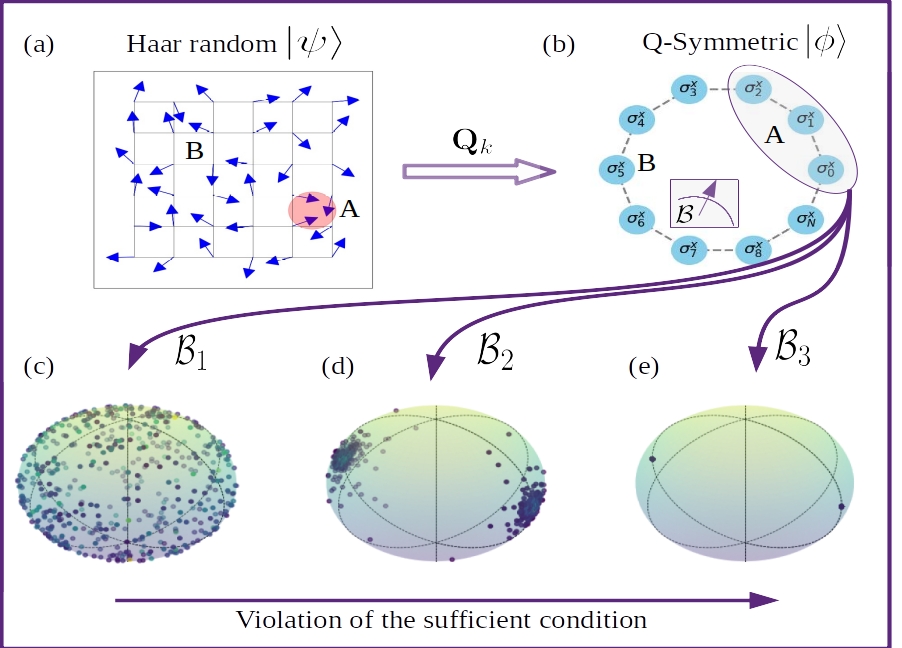}
\end{center}

\vspace{\parskip}

\begin{center}
    \textit{A Thesis}
\end{center}
\begin{center}
    \textit{Submitted by}
\end{center}
\begin{center}
    \textbf{VARIKUTI NAGA DILEEP}
\end{center}
\vspace{1cm}
\begin{center}
    \textit{For the award of the degree}
\end{center}
\begin{center}
    \textit{Of}
\end{center}
\begin{center}
    \textbf{DOCTOR OF PHILOSOPHY}
\end{center}
\begin{center}
    June, 2024
\end{center}
\vspace*{\fill}
\textcopyright\ 2024\ Indian Institute of Technology Madras

%

\chapter*{}
\vspace*{\fill}
 \newcommand{\Hquad}{\hspace{0.5em}} 

\begin{center}
\begin{minipage}{0.5\textwidth}
\itshape
\begin{eqnarray*}
&&Yoga-sthaḥ\Hquad kuru\Hquad karm\overline{a}ṇi\Hquad sa\dot{n}ga\dot{m}\Hquad tyaktv\overline{a}\Hquad dhana\Tilde{n}jaya\Hquad |\nonumber\\
&&siddhy-asiddhyoḥ\Hquad samo\Hquad bh\overline{u}tv\overline{a}\Hquad samatva\dot{m}\Hquad yoga\Hquad uchyate\Hquad||
\end{eqnarray*}
\end{minipage}
\end{center}
\hspace*{0.75\textwidth} {\bfseries}

\textbf{Translation}: O Arjun, maintain an unwavering dedication to your duty, relinquishing attachment to both success and failure. Such a state of equanimity is called Yog.
\vspace*{\fill}
\newpage

\chapter*{}
\vspace*{\fill}

\begin{center}
\begin{minipage}{0.75\textwidth}
\itshape

To my mom, dad, our Guru ``Sree Venkanna Babu (Govinda Swamy)", and my brother. 
\end{minipage}
\end{center}

\vspace*{\fill}
\newpage

\chapter*{Thesis Certificate}

This is to undertake that the Thesis titled \textbf{QUANTUM INFORMATION
SCRAMBLING, CHAOS, SENSITIVITY, AND EMERGENT STATE DESIGNS},
submitted by me to the Indian Institute of Technology Madras, for the award of \textbf{Doctor
of Philosophy}, is a bona fide record of the research work done by me under the
supervision of \textbf{Dr. Vaibhav Madhok}. The contents of this Thesis, in full or in parts,
have not been submitted to any other Institute or University for the award of any degree
or diploma.

\vspace{2cm}

\textbf{Chennai 600036\hspace{8cm} Varikuti Naga Dileep}\\

\textbf{June 2024}

\vspace{2cm}

\textbf{\hspace{11cm} Dr. Vaibhav Madhok}\newline
\text{\hspace{12.5cm} Research advisor}\newline
\text{\hspace{12.2cm}Associate Professor}\newline
\text{\hspace{11.7cm}Department of Physics}\newline
\text{\hspace{13.6cm}IIT Madras}\newline

\vspace*{\fill}
\textcopyright\ 2024\ Indian Institute of Technology Madras
\par

\chapter*{List of Publications}
\subsection*{Refereed journals based on thesis}
\begin{enumerate}
    \item \textcolor{blue}{Naga Dileep Varikuti}, Abinash Sahu, Arul Lakshminarayan, and Vaibhav Madhok, “Probing dynamical sensitivity of a non-kolmogorov-arnold-moser system through out-of-time-order correlators,” Phys. Rev. E \textbf{109}, 014209 (2024). \par

    \item \textcolor{blue}{Naga Dileep Varikuti}, and Vaibhav Madhok, “Out-of-time ordered
    correlators in kicked coupled tops: Information scrambling in mixed phase space and the role of conserved quantities,” Chaos \textbf{34}, 063124 (2024).\par

    \item \textcolor{blue}{Naga Dileep Varikuti} and Soumik Bandyopadhyay, ``Unraveling the emergence of quantum state designs in systems with symmetry," Quantum \textbf{8}, 1456 (2024).

\end{enumerate}

\subsection*{Refereed journals (others)}

\begin{enumerate}    
    \item Sreeram Pg, \textcolor{blue}{Naga Dileep Varikuti}, and Vaibhav Madhok, ``Exponential speedup in measuring out-of-time-ordered correlators and gate fidelity with a single bit of quantum information," Physics Letters A, \textbf{397}:127257, 2021. \par

    \item Abinash Sahu, \textcolor{blue}{Naga Dileep Varikuti}, Bishal Kumar Das, and Vaibhav Madhok, ``Quantifying operator spreading and chaos in Krylov subspaces with quantum state reconstruction," Phys. Rev. B, \textbf{108}, 224306 (2023).\par

    \item Abinash Sahu, \textcolor{blue}{Naga Dileep Varikuti}, and Vaibhav Madhok, ``Quantum tomography under perturbed hamiltonian evolution and scrambling of errors–a quantum signature of chaos," arXiv preprint arXiv:2211.11221 (2022). [under review]

\end{enumerate}

\chapter*{Acknowledgments}
\noindent
Firstly, I express my gratitude to Dr. Vaibhav Madhok, my supervisor, for his consistent guidance and support during my Ph.D. journey. Vaibhav, your physics insights and occasional remarks have been invaluable to me at every step. Thank you for always being reassuring and encouraging me to foster independent thinking while emphasizing the collaborative nature of research. I am especially grateful for the opportunity you've provided me to explore independent research ideas. Thanks a ton for always being available to discuss physics at any time of the day.

I am indebted to Prof. Arul Lakshminarayan for the invaluable hours spent engaging in insightful discussions. I particularly thank him for always being available to discuss physics despite his extremely busy schedule. Special thanks to Soumik Bandyopadhyay for his time and devotion at various stages of our collaborative projects and for being a good friend. I want to thank my colleagues, friends, and collaborators --- Sreeram PG, Abinash Sahu, and Bishal Kumar Das, with whom I have had countless insightful blackboard discussions. I am grateful to the participants of the quantum journal club meetings. These meetings were instrumental in generating several core ideas that were subsequently developed into significant parts of the current thesis. 

I am grateful to the members of my doctoral committee, Prof. S. Lakshmi Bala (chairperson, retd), Prof. Rajesh Narayanan (current chairperson), Prof. Arul Lakshminarayan, Dr. Prabha Mandayam, and Prof. Andrew Thangaraj for their continuous assessment and evaluation of my research. I sincerely thank IIT Madras for the HTRA fellowship and adequate research facilities, including computing resources, during my Ph.D. tenure. I also thank the Centre for Quantum Information, Communication and Computing (CQuICC), IIT Madras, for their generous financial support at various stages of my doctoral research.  

Thanks to our current and previous postdocs --- Ramdas, Balakrishnan Viswanathan, Shrikant Utagi, Aravinda, and Subhrajit Modak. I thank my fellow Ph.D. students --- Dhrubajyoti Biswas, Sourav Manna, Praveen, Vikash, Debjyoti Biswas, Sourav Dutta, and Bhavesh. I also thank my friends in the quantum group --- Akshaya, Suhail, Vishnu, and Soumyabrata Paul. Special thanks to the MSc project student and my dear friend, Shraddha, for the invaluable wisdom and support you have imparted to me. Many thanks to other project students --- Shiva Prasad, Dinesh, Bidhi, and Arun. I also want to express my heartfelt thanks to my friends at IITM — Minati, Jatin, Shakshi, Urvashi, Dipanvita, Bubun, Anshul, Nihar, Koushik, Devendar, Sarvesh, Santhosh, Himanshu, Chandra Shekhar Tiwari, and Sonam — for the wonderful times we shared. I am grateful to my former master's companions --- Prathmesh and Alok, as well as my bachelor's companion, China Anjaneyulu, for always being available for quick chats. Special thanks to my lifelong friends, Rama Krishna, Gorre Gopi Krishna, and Gautham, for always believing in me. Your friendship means a lot to me.

Finally, my heartfelt gratitude extends to my family --- mom, dad, brother, sister-in-law, Rudhra (my little nephew), Prabhu, and all my other cousins. Your unwavering presence through every challenge and setback, support during my toughest moments, and boundless love have been truly invaluable. I am profoundly grateful and blessed to have each and every one of you in my life.

\begin{abstract}
\noindent
\begin{tabular}{@{} >{\bfseries}p{0.2\textwidth} p{0.77\textwidth} @{}}
\MakeUppercase{Keywords} &
Quantum chaos; Kolmogorov-Arnold-Moser theorem; kicked harmonic oscillator; out-of-time ordered correlators; quantum Fisher information; kicked coupled tops; random matrix theory; quantum state designs; projected ensembles\\
\end{tabular}
\par
\vspace{1cm}
Understanding quantum chaos is of profound theoretical interest and carries significant implications for various applications, from condensed matter physics to quantum error correction. Recently, out-of-time ordered correlators (OTOCs) have emerged as a powerful tool to characterize and quantify chaos in quantum systems. For a given quantum system, the OTOCs measure incompatibility between an operator evolved in the Heisenberg picture and an unevolved operator. In the first part of this thesis, we employ OTOCs to study the dynamical sensitivity of a perturbed non-Komogorov-Arnold-Moser (non-KAM) system in the quantum limit as the parameter that characterizes the \textit{resonance} condition is slowly varied. For this purpose, we consider a quantized kicked harmonic oscillator (KHO) model that displays stochastic webs resembling Arnold's diffusion that facilitates large-scale diffusion in the phase space. Although the maximum Lyapunov exponent of the KHO at resonances remains close to zero in the weak perturbative regime, making the system weakly chaotic in the conventional sense, the classical phase space undergoes significant structural changes. Motivated by this, we study the OTOCs when the system is in resonance and contrast the results with the non-resonant case. At resonances, we observe that the long-time dynamics of the OTOCs are sensitive to these structural changes where they grow quadratically as opposed to linear or stagnant growth at non-resonances.
On the other hand, our findings suggest that the short-time dynamics remain relatively more stable and show the exponential growth found in the literature for unstable fixed points. The numerical results are backed by analytical expressions derived for a few special cases. The OTOC analysis is followed by a study of quantum Fisher information (QFI) at the resonances and a comparison with the non-resonance cases. We shall show that scaling of the QFI in time is greatly enhanced at the resonances, making the dynamics of the non-KAM systems good candidates for quantum sensing. 

In the second part, we study the OTOCs in a system of kicked coupled tops. We address connections between inter-system coupling, chaos, and information scrambling in a system where the mechanism of chaos and coupling is the same hyper-fine interaction between the spins. Due to the conservation law, the system admits a decomposition of the dynamics into distinct invariant subspaces. Under strong coupling, the OTOC growth rate in the largest subspace correlates remarkably well with the classical Lyapunov exponent. In the mixed phase space, we use Percival's conjecture to partition the eigenstates of the Floquet map into ``regular" and ``chaotic." Using these states as the initial states, we examine how their mean phase space locations affect the growth and saturation of the OTOCs. Beyond the largest subspace, we study the OTOCs across the entire system, including all other smaller subspaces. For certain initial operators, we analytically derive the OTOC saturation using random matrix theory (RMT).

The last part of the thesis is devoted to the study of the emergence of quantum state designs as a signature of quantum chaos and the role of symmetries in this phenomenon. Recently proposed projected ensemble framework utilizes quantum chaos as a resource to construct approximate higher-order state designs. Under this framework, projective measurements are performed on one part of a quantum state obtained from a sufficiently chaotic evolution. Then, the ensemble of pure states of the other part tends to approximate higher-order state designs. Despite being ubiquitous, the effects of symmetries on the emergence of quantum state designs remain under-explored. While quantum chaos randomizes, symmetries enforce order in a system. We thoroughly investigate this by demonstrating the interplay between symmetries and measurements in constructing approximate state designs. At the core of our work, we shed light on what projective measurements to consider in the presence of symmetries. We do so by analytically deriving a sufficient condition on the measurement basis relative to a symmetry operator. The condition, if satisfied, guarantees the emergence of state designs. It significantly simplifies the search for an appropriate measurement basis as we identify instances where a basis is incompatible. We numerically corroborate these results for random quantum states with different symmetries and states obtained from the chaotic evolution of the transverse field Ising model. Considerable violation of the sufficient condition yields non-convergence to the designs, as also observed in the numerical simulations, a crucial aspect to understanding the necessity of the condition. In summary, this work provides insights into appropriate measurement bases for constructing quantum state designs in systems with symmetries. 
\end{abstract}

\tableofcontents

\listoffigures

\listoftables

\chapter*{ABBREVIATIONS}

\begin{itemize}
    \item \textbf{BGS} \hspace{1cm} Bohigas-Giannoni-Schmit.
    \item \textbf{COE} \hspace{1cm} Circular orthogonal ensemble.
    \item \textbf{CSE} \hspace{1cm} Circular symplectic ensemble.
    \item \textbf{CUE} \hspace{1cm} Circular unitary ensemble.
    \item \textbf{ETH} \hspace{1cm} Eigenstate thermalization hypothesis.
    \item \textbf{GOE} \hspace{1cm} Gaussian orthogonal ensemble.
    \item \textbf{GSE} \hspace{1cm} Gaussian symplectic ensemble.
    \item \textbf{GUE} \hspace{1cm} Gaussian unitary ensemble.
    \item \textbf{KAM} \hspace{1cm} Kolmogorov-Arnold-Moser.
    \item \textbf{KCT} \hspace{1cm} Kicked coupled tops.
    \item \textbf{KHO} \hspace{1cm} Kicked harmonic oscillator.
    \item \textbf{LE} \hspace{1cm} Lyapunov exponent.
    \item \textbf{OBC} \hspace{1cm} Open boundary conditions.
    \item \textbf{OTOCs} \hspace{1cm} Out-of-time-ordered correlators.
    \item \textbf{PBC} \hspace{1cm} Periodic boundary conditions.
    \item \textbf{RMT} \hspace{1cm} Random matrix theory.
    \item \textbf{TI} \hspace{1cm} Translation invariant.
\end{itemize}\newpage

\nomenclature{$U$, $U_{\tau}$}{Unitary operator/Floquet operator}
\nomenclature{$\mathbb{I}$}{Identity operator}
\nomenclature{$\tr$}{Trace}
\nomenclature{$\mathcal{O}(t)$}{Time evolved operator}
\nomenclature{$\textbf{I}$}{Angular momentum vector}
\nomenclature{$\textbf{J}$}{Angular momentum vector}
\nomenclature{$\sigma^x$, $\sigma^y$, $\sigma^z$}{Pauli matrices}
\nomenclature{$\ket{\theta,\phi}$}{Spin coherent state}
\nomenclature{$\ket{\delta\theta,\thinspace \delta\phi}$}{Projection of tensor product of spin coherent state on an invariant subspace}
\nomenclature{$\mathcal{H}^{d}$}{Hilbert space of the dimension $d$}
\nomenclature{$C_{AB}(t)$}{OTOC at time $t$ for initial operators $A$ and $B$. }
\nomenclature{$\mathcal{E}_{\text{Haar}}$}{Ensemble of Haar random states}
\nomenclature{$\mathcal{E}$}{An ensemble of quantum states}
\nomenclature{$\varepsilon$}{A small positive real number}
\nomenclature{$\parallel \cdot\parallel_{p} $}{Schatten-p norm}
\nomenclature{$\Delta^{(t)}_{\mathcal{E}}$}{Trace distance between the $t$-th moments of $\mathcal{E}$ and $\mathcal{E}_{\text{Haar}}$}
\nomenclature{$\pi_{j}$}{An arbitrary permutation operator}
\nomenclature{$\mathbf{\Pi}_{t}$}{Projector onto the permutation symmetric subspace of $t$-replicas of a Hilbert space}
\nomenclature{$S_{t}$}{Permutation group of $t$-elements}
\nomenclature{$\mathbf{T}_{k}$}{Projector onto a translation symmetric subspace with the momentum charge $k$.}
\nomenclature{$\mathbf{Z}_{k}$}{Projector onto a $Z_2$-symmetric subspace with the momentum charge $k$.}
\nomenclature{$U(d)$}{Unitary group of dimension-$d$}
\nomenclature{$U_{\text{TI}}(d)$}{Unitary subgroup of dimension-$d$ containing all the translation symmetric unitaries}
\nomenclature{$T$}{Translation operator}
\nomenclature{$\mathbb{E}(\cdot)$}{Expectation value of a quantity}
\nomenclature{$\mathcal{E}^{k}_{\text{TI}}$}{Ensemble of translation invariant states with momentum charge $k$}
\nomenclature{$\mathcal{B}$}{A basis of states}
\nomenclature{$\text{gcd}(N, j)$}{Greatest common divisor of two numbers $N$ and $j$}
\nomenclature{$\Delta(\mathbf{Q}_{k}, \mathcal{B})$}{Quantifier of violation of sufficient condition shown by a measurement basis with respect to the symmetry operator $\mathbf{Q}$ and charge $k$}
\nomenclature{$S_{AB}$}{Swap operator between the Hilbert spaces $\mathcal{H}_{A}$ and $\mathcal{H}_{B}$}
\nomenclature{$H$}{Hamiltonian}
\printnomenclature

\setcounter{page}{1}     
\pagenumbering{arabic}   

\chapter{Introduction}

\section{Introduction}
The chaos of the universe, often nothing short of mesmerizing, reminds us that even a small butterfly can bring unexpected changes when we least expect them. One can start by asking the question, what is chaos? In classical physics, a bounded dynamical system exhibiting exponential sensitivity to initial conditions is said to be chaotic \cite{ott2002chaos, strogatz2018nonlinear, devaney2018introduction}.
This means two nearby phase space trajectories separate exponentially 
at a rate given by the characteristic Lyapunov exponent (LE) of the system. As a result, small disturbances in the initial conditions of a perfectly deterministic system may lead to completely different trajectories in the phase space as the system evolves over a sufficient length of time. This is what is popularly known as the ``butterfly effect" --- the flap of wings of a butterfly in Brazil can cause a tornado in China \cite{lorenz1963deterministic}. Therefore, the system loses its long-term predictability even though the equations describing its dynamics are completely deterministic. In such a scenario, the trajectories typically do not settle on fixed points, periodic orbits, or limit cycles \cite{ott2002chaos, strogatz2018nonlinear, devaney2018introduction}. Despite extensive study of chaos in dynamical systems, no universally accepted mathematical definition exists. According to  Devaney's popular text on chaos \cite{devaney2018introduction}, a system can be considered chaotic if it meets any two of the following three criteria: (i) sensitive dependence on initial conditions, (ii) transitivity and (iii) existence of a dense set of periodic orbits. Infact, meeting any two of the above automatically implies the third one \cite{banks1992devaney}. Hence, a transitive system with a sensitive dependence on initial conditions can be considered a chaotic system. 

 


While classical physics can explain the macroscopic world, quantum mechanics is essential for a completely accurate description of the world at the atomic scale. The double slit experiment, quantum tunneling, photon statistics, and electron diffraction are a few phenomena that have no classical explanations \cite{scully1997quantum, shankar2012principles}. Then, a natural pivotal question to ask is the presence of chaos at the atomic level.
This question has been a much-debated issue due to the fact that the unitary evolution of quantum mechanics preserves the inner product between two initial state vectors. Therefore, the initial disturbance doesn't increase exponentially. Instead, it stays constant throughout the evolution. However, suppose one describes classical mechanics in terms of the evolution of probability densities. In that case, one can show
that even classical dynamics preserve the overlap of initial probability densities throughout evolution \cite{koopman1931hamiltonian, neumann1932operatorenmethode, jordan1961lie}. This seems reasonable as the analogue of a wave function in quantum mechanics is a probability density in classical phase space. Chaos, as characterized by the sensitivity to initial conditions, refers to the exponential departure of nearby \textit{trajectories} and not \textit{probabilities}. Although this explanation motivates one to look beyond the overlap of state vectors in quantum mechanics, this does not answer several pertinent questions: How far does our classical understanding of chaos lead us into the quantum world? Are there ways to characterize chaos in quantum systems the way LEs do for classically chaotic systems? How does chaos emerge from the underlying quantum dynamics in the classical world? If there exists a notion of quantum chaos, what is the route of chaos starting from an integrable system? Classically, this problem has been extensively studied in the last century, starting from Poincaré and culminating in the celebrated Kolmogorov-Arnold-Moser (KAM) theorem \cite{arnold2009proof, kolmogorov1954conservation, moser1962invariant, moser1967convergent, dumas2014kam, Moser+2001, poschel2009lecture}. Therefore, another relevant question one can ask is about the role of the KAM stability \footnote{Here, KAM stability refers to the robustness of the classical phase space in integrable systems under weak, generic perturbations. For further details, see Sec. \ref{kamts}.} of integrable systems in the quantum limit \cite{burgarth2021kolmogorov}.

It must be emphasized here that searching for a proper characterization of quantum chaos is not simply fixing a ``definition". There are profound consequences that are intimately tied to quantum chaos. Ehrenfest correspondence principle (also known as classical-quantum correspondence) tells us the length of time till which quantum expectation values of observables are going to follow classical equations of motion. The characteristic ``break-time" when these two depart is exponentially small if the underlying dynamics of the system in the classical limit are chaotic as opposed to regular \cite{berman1978condition, toda1987quantal, gu1990evidences}. This line of thought was picked up by Zurek, who reached provocative conclusions by suggesting that Hyperion, one of the moons of Saturn, should deviate from its trajectory within 20 years \cite{zurek1998decoherence}! His solution to why this has not happened was the role played by environmental decoherence \footnote{In quantum mechanics, decoherence can be naively understood as the process by which a system loses or shares its information with the environment due to the inherent coupling between the two.}, which suppresses the quantum effects responsible for the breakdown of Ehrenfest correspondence. The role of chaos in the rate of decoherence was studied by Zurek and Juan Pablo Paz, who showed that LEs should determine how rapidly the system decoheres \cite{paz2002environment}. Furthermore, the connections of non-linear dynamics, chaos, and thermalization are at the cornerstone of statistical mechanics. How rapidly and under what conditions an isolated quantum system thermalizes is connected to the underlying chaos in the system dynamics.


Quantum chaos also has a crucial role to play in quantum information processing. This is because, though quantum systems do not show sensitivity to initial states, they can show sensitivity to small changes in the Hamiltonian parameters \cite{peres1984stability}. This sensitivity presents challenges in simulating the quantum chaotic systems on various quantum platforms, especially in the face of hardware errors \cite{chinni2022trotter}. Moreover, in quantum control protocols, one has to reach a target state despite many-body chaos and unavoidable fluctuations in the control dynamics \cite{tomsovic2023controlling}. Therefore, the design of control protocols must consider the consequences of chaos. It is also worth noting that the sensitivity of quantum systems can be a resource in quantum parameter estimation protocols due to its relation with the quantum Fisher information.
Hence, understanding quantum chaos can lead to harnessing the properties of quantum systems to develop superior information processing devices for computation, communication, and metrology. On the other hand, quantum chaos can address several foundational questions regarding the quantum-to-classical transition, defining the quantum-classical border, and the emergence of classical chaos from the underlying quantum mechanics.

This thesis aligns with the overarching context we have outlined so far. The \textit{out-of-time ordered correlators} (OTOCs) play an analogous role in the quantum regime as the LEs in the classical limit \cite{larkin}. Two-fourths of this thesis are devoted to studying the OTOCs in two different systems, namely, kicked harmonic oscillator (KHO) \cite{dileep2024} and kicked coupled tops (KCTs) \cite{varikuti2022out}. The former system has the special property that it is a non-KAM system in the classical limit, meaning that it is highly sensitive to parameter variations. Such changes are also reflected in the corresponding quantum dynamics, making quantum simulations of these systems on quantum computing platforms highly challenging. Through a comprehensive analysis of the OTOCs, we provide a way to rigorously characterize the dynamical sensitivity of this system in the quantum regime. Due to structural instabilities, the non-KAM class of systems has recently been shown to display large-scale trotter errors \footnote{The Trotter error arises when approximating the quantum evolution by splitting the total Hamiltonian into simpler parts and evolving each part separately over small time steps. This error decreases as the time steps become smaller.} on digital quantum computers \cite{chinni2022trotter}. 
We expect that our analysis will pave the way for the development of efficient simulation techniques for these systems. Due to their exceptionally high sensitivity, these systems can be used as quantum sensors. In this thesis, we provide a comprehensive analysis of the quantum Fisher information in the quantum KHO model. 
On the other hand, the KCT system does not show such sensitivity. The OTOC analysis in this system aims to understand the role of its rich mixed-phase space dynamics in the scrambling of information. In the latter part of this thesis, we depart from the conventional OTOC picture of quantum chaos. We examine quantum chaos in many-body quantum systems through a measurement-induced phenomenon called \textit{emergent state designs} \cite{cotler2023emergent} in the presence of symmetries \cite{varikuti2024unraveling}.

\section{Classical Integrability to chaos --- via KAM theorem}
\label{kamts}
Integrability is a centuries-old concept and cornerstone of classical physics. A classical Hamiltonian system of $d$-degrees of freedom can be characterized by a $2d$-number of first-order differential equations involving pairs of canonically conjugate variables. The integrability is traditionally referred to as the exact solvability of these equations by means of standard integration techniques, thereby writing the evolution of phase space variables in terms of elementary mathematical functions of time \cite{dumas2014kam}. Whenever a direct integration is not feasible, the conventional approach is to seek solutions by changing variables via canonical transformations. However, whether and when such a transformation exists for the given Hamiltonian is a hard question to answer. In 1853, J. Liouville first provided a rigorous answer to this question, showing that the integrability requires at least $d$-number of independent constants of motion (COMs), which are in complete involution. The COMs are smooth functions over the phase space that do not change with time along the phase space trajectories, and the involution means that their Poisson brackets mutually vanish. It is to be noted that a system of $2d$ independent linear differential equations would require $2d-1$ COMs for complete solvability. Thanks to the symplectic structure of the Hamiltonian dynamics, only $d$-COMs are needed \cite{babelon2003introduction, arnol2013mathematical}. 
Then, the Hamiltonian can be transformed into a new set of action-angle variables, where the COMs play the role of the actions, and the angle coordinates remain cyclic. If $\{I_i\}_{i=1}^{d}$ denote the action variables and $\{\theta_i\}_{i=1}^{d}$ are the corresponding angle variables, Hamilton's equations of motion are given by
$I_i(t)=I_i(0)=\text{const}$, and $\theta_i(t)=(\omega_i t+\theta_0)\mod 1 $, where $\omega_i=\dfrac{\partial H(I_1, I_2, \cdots, I_d)}{\partial \theta_i}$ are interpreted as characteristic frequencies. Fixing each of the COM to a constant value results in a $d$-dimensional surface in the $2d$-dimensional phase space. This surface has the properties of a $d$-torus and is usually termed an invariant torus --- any trajectory set out on this torus traverses on it forever with the characteristic frequencies. These tori are said to be resonant if the trajectories always return to their initial positions after a certain time. Otherwise, if the trajectories never return to their initial conditions, such tori are called non-resonant. The corresponding trajectories are usually referred to as quasi-periodic. When the Hamiltonian is non-degenerate, meaning the characteristic frequencies are non-linear functions of the actions, resonant tori are distributed among non-resonant ones in a manner analogous to the distribution of rational numbers among irrational numbers on the real line $(\mathbb{R})$. Non-degeneracy is crucial in establishing the stability of the integrable systems against small generic perturbations. While integrable systems are generally scarce in nature, they still accurately describe much of the physical world as we know it. Notable examples of these systems include Kepler's two-body problem, the harmonic oscillator, and Euler tops \cite{reichl2021transition}.

\subsection{Kolmogorov-Arnold-Moser stability}

In the early twentieth century, mathematicians and physicists actively investigated the robustness of integrable systems in the presence of weak perturbations. Suppose $H(I)$ denotes the initial integrable Hamiltonian in the actin-angle coordinates. Then, the perturbed Hamiltonian will take the form $H'(I, \theta)=H(I)+\varepsilon f(I, \theta, t)$, where $\varepsilon$ is the perturbation strength and $f(I, \theta, t)$ denotes the perturbation. Then, this pertinent question was asked: \textit{What happens to the conserved quantities when an integrable system is subjected to a smooth generic and weak perturbation that preserves the symplectic structure of Hamilton's equations of motion?}. This is equivalent to asking if the perturbed system admits new COMs derived from the perturbative expansion of the original system's COMs. 
This problem was initially tackled by Poincaré, who observed that the presence of the resonant tori leads to the divergence of the perturbative expansion \cite{poschel2009lecture}. The sketch of this result involves seeking a generator $(\mathcal{\eta})$ of a symplectic transformation, which aims to convert the perturbed Hamiltonian into a new set of action-angle variables $\{(J_i, \phi_i)\}_{i=1}^{d}$ such that the final Hamiltonian remains free of $\phi_i$s up to the first order in the perturbation strength. The result further involves expanding both the generator and the perturbation in the Fourier series, subsequently leading to a relation between the Fourier coefficients of the same: $\eta_{K}(J)=-\dfrac{f_{k}(J)}{(2\pi i)(K.\hat{\omega})}$, where $K\in\mathbb{Z}^{d}-\{\hat{0}\}$ and $\hat{\omega}$ is the vector of characteristic frequencies. The Fourier coefficients diverge whenever \( K \cdot \hat{\omega} = 0 \), which is indeed the case for resonant tori. Additionally, even when the tori are not resonant, there are situations where \( K \cdot \hat{\omega} \) takes very small values, resulting in the non-convergence of the Fourier series expansion of the generator. This result became widely known as the \textit{small-devisor problem}. A naive interpretation of this result without further examination implies that integrable systems are generally unstable under perturbations and tend to become ergodic even when the perturbation is small. Moreover, it indicates that, in general, any generic system is non-integrable \cite{genecand1993transversal, markus1974generic, zehnder1973homoclinic}. However, this interpretation was later proven incorrect, although Poincaré's result itself was not false but rather an incomplete solution. Later advancements due to Kolmogorov, Arnold, and Moser revealed that the integrable systems, under certain conditions, are indeed stable. A definitive set of results, collectively known as the Kolmogorov-Arnold-Moser (KAM) theorem, was formulated to establish this stability, ultimately resolving the problem of the stability of integrable systems \cite{arnold2009proof, kolmogorov1954conservation, moser1962invariant, moser1967convergent, dumas2014kam, Moser+2001, poschel2009lecture}.

The first result of the KAM theorem was due to Kolmogorov, who adopted a different approach to study the small-divisor problem. Recall that the non-resonant tori occupy the phase space with a measure of one in non-degenerate systems, while the resonant tori only constitute a measure zero set. Instead of directly examining the generator function over the entire phase space, Kolmogorov was more interested in studying the stability of individual non-resonant tori and, consequently, the quasiperiodic trajectories over them. This prompted him to study a special set of invariant tori, excluding the resonant and the near-resonant tori. This set is characterized by the frequency vectors as $\mathcal{M}\equiv\{\omega\in \mathbb{R}^{d}| |K\cdot \omega|\geq \alpha/|K|^{\tau}\text{ for all }K\in\mathbb{Z}^{d} \text{ and }K\neq \hat{0}\}$, where $|K|=|K_1|+\cdots +|K_{d}|$, $\alpha>0$, and $\tau>n-1$. It turned out that the invariant tori corresponding to this set are highly non-resonant. Then, the Fourier series expansion of the generating function converges on this set, establishing the stability of the non-resonant tori in the phase space. Under weak perturbations, these tori get slightly deformed. This indicates that the non-degenerate integrable systems generally are stable under perturbations \cite{arnold2009proof, kolmogorov1954conservation, moser1962invariant, moser1967convergent, poschel2009lecture, reichl2021transition}.
Moreover, this set constitutes the full measure over the phase space while the complementary set with resonant and near-resonant tori constitutes measure zero in the phase space. This result was further refined and alternative proofs were provided by Arnold and Moser. The KAM theorem enables the application of perturbation theory to study the dynamics of perturbed integrable systems, which is often applicable to real-world systems.

\subsection{Routes to chaos}
\textbf{KAM route to chaos:}
When the perturbation strength is weak enough, the KAM theorem ensures the regularity of the typical (non-degenerate) near-integrable systems. As the perturbation strength increases, the resonant tori break first. The non-resonant tori follow the resonant ones as the perturbation is further increased. Larger perturbation strengths inevitably lead to the breaking of all invariant tori, rendering the system completely non-integrable. Most of the physical systems existing in nature are generally non-integrable. Examples include the famous three-body problem, double pendulum, and anharmonic oscillators. When no conserved quantities are present, these systems can exhibit ergodic properties. This means that a single phase-space trajectory can fill the entire phase-space region as it evolves indefinitely over time. However, it is important to note that the non-integrability itself is not a sufficient condition for ergodicity, and the latter is slightly a stronger condition for chaos than the former. The following hierarchy characterizes various levels of ergodicity in the dynamical systems \cite{cornfeld2012ergodic, halmos2017lectures, sep-ergodic-hierarchy}:
\begin{eqnarray*}
\text{Bernoulli}\subset\text{Kolmogorov}\subset\text{Strong mixing}\subset\text{Weak mixing}\subset\text{Ergodic}.
\end{eqnarray*}
Recall that the sensitive dependence on the initial conditions is one of the three main characteristics of chaotic systems. Let $\hat{X}(0)$ denote an initial condition of a phase space trajectory corresponding to an arbitrary chaotic system, and $\hat{X}(t)$ be its location after $t$-time steps. Then, another trajectory with a slightly different initial condition $\hat{X}(0)+\delta \hat{X}$ shows an exponential separation from the previous trajectory as it evolves, i.e., $\delta \hat{X}(t)/\delta \hat{X}(0)\sim e^{\lambda_{\text{LE}} t}$, where $\lambda_{\text{LE}}$ is the maximum positive LE. For chaotic systems, LE is always positive. In the hierarchy, the Bernoulli and Kolmogorov systems display chaotic behavior.

\textbf{Non-KAM route to chaos:}
Although the KAM stability is guaranteed for generic integrable systems, the validity of the KAM theorem rests upon a few fundamental assumptions. For example, the theorem presupposes that the unperturbed Hamiltonian is non-degenerate, which means that when expressed in the action-angle variables, the Hamiltonian takes the form of a nonlinear function involving only the action variables while the angle variables remain cyclic. In addition, the characteristic frequency ratios must be sufficiently irrational for the phase space tori to survive the perturbations. Upon failing to meet these assumptions, the tori will likely break immediately at any arbitrary perturbation, leading to structural instabilities. Near the structural instabilities (also called resonances), the integrability of the system gets completely lost to the perturbations. Most integrable systems generally satisfy the KAM conditions. There is, however, a family of non-KAM systems that do not follow the usual KAM route to non-integrability \cite{sankaranarayanan2001quantum, sankaranarayanan2001chaos}. At the resonances, the non-KAM systems display large-scale structural changes in the presence of perturbations. In the classical phase space, the resonances are generally associated with breaking the invariant phase space tori via the creation of stable and unstable phase space manifolds. Such a mechanism results in diffusive chaos in the phase space even when the perturbation is arbitrarily small. Thus, the non-KAM systems show high sensitivity to the small changes in the system parameters at the resonances. The harmonic oscillator model is a well-studied example of a non-KAM integrable system. 

\textbf{Mixed phase space dynamics}
Mixed phase space dynamics refers to the coexistence of regular and chaotic motions within the phase space of the given dynamical system \cite{trail2008entanglement, varikuti2022out}. In such systems, the phase space displays regions where the trajectories are regular, bounded, and predictable alongside regions where the trajectories show chaotic behavior, exhibiting sensitive dependence on initial conditions. This can arise whenever the system has hyperbolic fixed points as well as stable fixed points. This behavior is usually observed in systems that show a transition between integrability, non-integrability, and chaos as a parameter that characterizes the chaos in the system varies. Examples of systems include the standard textbook models for quantum chaos, such as kicked tops, kicked coupled tops, and kicked rotors. Despite coexistence, these regions are perfectly separable in the phase space.

\section{Many facets of quantum chaos}
Quantum chaos aims to understand the emergence of chaos and unpredictability in quantum systems. As initially proposed by Michael Berry, one way to tackle this problem is to study quantum systems that are chaotic in the classical limit \cite{berry1989quantum}. Quantum mechanically, the evolution of an isolated system is given by an arbitrary unitary operator. When two different quantum states evolve under the same unitary, the distance between them remains constant throughout the evolution. This renders a naive generalization of the definition of chaos to the quantum regime meaningless. Hence, searching for suitable signatures of quantum chaos through other means became necessary. Asher Peres, in his seminal paper, introduced Loschmidt echo \footnote{Note that an experimental implementation of the Loschmidt echo was done way before it found applications in quantum chaology \cite{hahn1950spin}.} as a diagnostic tool for chaos in quantum systems \cite{peres1984stability}. The Loschmidt echo measures the overlap between a quantum state that evolved forward in time under a given Hamiltonian ($H$) and then evolved backward in time under a perturbed or modified Hamiltonian ($H'$), expressed as $|\langle\psi |e^{iH't}e^{-iHt} | \psi\rangle|^2$. Subsequently, this quantity has been extensively studied across various physical settings, including classical limits, to understand its behavior in both chaotic and regular systems \cite{jalabert2001environment, jacquod2001golden, cerruti2003uniform, gorin2006dynamics, goussev2012loschmidt, prosen2002stability}. Around the time Asher Peres proposed the Loschmidt echo, Bohigas, Giannoni, and Schmit (BGS) conjectured that quantum systems with chaotic classical limits adhere to Wigner-Dyson statistics \cite{bohigas1984characterization} \footnote{For more details refer to Chapter \ref{chap-back}}. on the other hand, for the quantum systems with regular classical limits, it was conjectured that the level spacings follow Poisson statistics due to the absence of level repulsions \cite{berry1976closed}. 
Shortly afterward, a dynamic measure called spectral form factor (SFF) was studied as a probe for quantum chaos \cite{haake1991quantum} \footnote{The spectral form factor is given by \( \text{SFF} = |\text{Tr}(U)|^2 \), where \( U \) represents the unitary evolution operator of the system.}, which complemented the BGS conjecture in examining quantum chaos. The SFF is expressed as $\langle |\text{Tr}(e^{-iH})|^2 \rangle$, where the quantity within the angular brackets is averaged over statically similar system Hamiltonians. Like the level spacing statistics, the SFF shows universal behavior for chaotic quantum systems. In addition, various other related measures, including the average level spacing ratio \cite{atas2013distribution}, have been extensively used to investigate the onset of chaos in both single-body and many-body quantum systems, regardless of the existence of classical limits. While extremely useful in probing quantum chaos, their applicability is often limited due to the presence of symmetries and other aspects. For instance, one must ensure the system is desymmetrized while dealing with spectral statistics in systems with symmetries \cite{tekur2020symmetry}. On the other hand, the spectral form factor is not self-averaging. Hence, physicists have sought to identify various alternative probes of quantum chaos. With the advent of quantum information theory, the dynamical generation of entanglement in many-body systems became another useful quantum chaos indicator in many-body systems \cite{bandyopadhyay2002testing, bandyopadhyay2004entanglement}. Recently, a powerful probe called \textit{out-of-time ordered correlators} (OTOCs) has been introduced as a tool to investigate quantum chaos \cite{chaos1}. The OTOCs quantify the compatibility between a time-evolved observable and a stationary observable. The OTOCs, unlike other probes, have a direct correspondence to the maximum classical LE. In many-body systems, the OTOCs emerge as natural diagnostics of scrambling of initially localized information. This caused a large-scale interest in the quantum chaos community. Interestingly, the OTOCs were proposed in 1969 by Larkin and Ovchinnikov in the context of superconductivity \cite{larkin}. However, they have been recently revived in the study of the black hole information paradox \cite{pawan, chaos1}.

\section{Emergence of state designs --- a new signature of quantum chaos}
Chaos in many-body quantum systems is closely related to the idea of generating random quantum states through the scrambling mechanism. This renders quantum chaos a crucial resource in various protocols like quantum state tomography \cite{madhok2016characterizing, madhok2014information, sreeram2021quantum, sahu2022quantum, sahu2022effect}, information recovery in the Hayden–Preskill protocol \cite{rampp2023hayden}, and coherent state targeting through quantum chaos control \cite{tomsovic2023controlling, tomsovic2023controlling1}. On the other hand, preparing random quantum states and operators is an essential ingredient to explore a variety of quantum protocols, such as randomized benchmarking \cite{benchmarking1, knill2008randomized, benchmarking2}, randomized measurements \cite{vermersch2019probing, elben2023randomized}, circuit designs \cite{harrow2009random, brown2010convergence}, quantum state tomography \cite{smith2013quantum, merkel2010random}, etc., and has vast applications ranging from quantum gravity \cite{sekino2008fast}, information scrambling \cite{styliaris2021information, pawan}, quantum chaos \cite{haake1991quantum}, information recovery \cite{hayden2007black, yoshida2017efficient}, machine-learning \cite{huang2020predicting, huang2022quantum, holmes2021barren} to quantum algorithms \cite{tilly2022variational}. Quantum state $t$-designs were introduced to answer the pertinent question: \textit{How can one efficiently sample a Haar random state from the given Hilbert space?}
To this end, state designs correspond to finite ensembles of pure states uniformly distributed over the Hilbert space, replicating the behavior of Haar random states to a certain degree \cite{renes2004symmetric, klappenecker2005mutually, benchmarking2}. 
However, generating such states in experiments is a challenging task since it requires precise control over the targeted degrees of freedom with fine-tuned resolution \cite{morvan2021qutrit, proctor2022scalable, boixo2018characterizing}.

Motivated by the recent advances in quantum technologies \cite{gross2017quantum, blatt2012quantum, browaeys2016experimental, gambetta2017building}, the \textit{projected ensemble} framework has been introduced as a natural avenue for the emergence of state designs from quantum chaotic dynamics \cite{cotler2023emergent, choi2023preparing}. Under this framework, one employs projective measurements on the larger subsystem (bath) of a single bi-partite state undergoing quantum chaotic evolution. For each outcome on the bath subsystem, the smaller subsystem gets projected onto a pure state. These pure states, together with the outcome probabilities, are referred to as the projected ensemble. The projected ensembles remarkably converge to a state design when the measured part of the system is sufficiently large. This phenomenon, dubbed as \textit{emergent state design}, has been closely tied to a stringent generalization of regular quantum thermalization. Under the usual framework, predominately characterized through the Eigenstate Thermalization Hypothesis (ETH) \cite{deutsch1991quantum, srednicki1994chaos, ETH_ansatz_expt, d2016quantum, deutsch_18, eth_nonherm}, the bath degrees of freedom are traced out, and the thermalization is retrieved at the level of local observables. Whereas the projected ensemble retains the memory of the bath through measurements such that thermalization is explored for the sub-system wavefunctions. This generalization has been referred to as \textit{deep thermalization} \cite{ho2022exact, ippoliti2022solvable, ippoliti2023dynamical}.
The emergence of higher-order state designs has been explicitly studied in recent years under various physical settings \cite{cotler2023emergent, ho2022exact, ippoliti2022solvable, lucas2023generalized, shrotriya2023nonlocality, versini2023efficient}, including dual unitary circuits \cite{claeys2022emergent, ippoliti2023dynamical} and constrained physical models \cite{bhore2023deep} with applications to classical shadow tomography \cite{mcginley2023shadow} and benchmarking quantum devices \cite{choi2023preparing}. 
In the case of chaotic systems without symmetries, arbitrary measurement bases can be considered to witness the emergence of state designs. The presence of symmetries is expected to influence this property.

Symmetries in quantum systems are associated with discrete or continuous group structures. Their presence causes the decomposition of the system into charge-conserving subspaces. This results in constraining the dynamical \cite{ope4, ope5, friedman2019spectral} and equilibrium properties \cite{yunger2016microcanonical} of many-body systems \cite{nakata2023black, bhattacharya2017syk, balachandran2021eigenstate, kudler2022information, chen2020many, paviglianiti2023absence, agarwal2023charge, varikuti2022out}. 
When a generic system displays symmetry, ETH is known to be satisfied within each invariant subspace \cite{deutsch1991quantum, srednicki1994chaos, d2016quantum}. Deep thermalization, on the other hand, depends non-trivially on the specific measurement basis \cite{cotler2023emergent, bhore2023deep}.   
Motivated by this, we ask the following intriguing question in the later part of this thesis:
\textit{What's the general choice of measurement basis for the emergence of $t$-designs when the generator state abides by a symmetry?}
In order to address this question, we first adhere our analysis to generator states with translation symmetry. In particular, we consider the ensembles of the random translation invariant (or shortly T-invariant) states and 
investigate the emergent state designs within the projected ensemble framework. 
We then elucidate the generality of our findings by extending its applicability to other discrete symmetries.   



\section{Structure of the thesis}

This thesis is structured as follows:

* In the following chapter, we outline several essential mathematical tools and frameworks that will be beneficial for understanding the subsequent chapters of this thesis. 

* In Chapter \ref{nonKAMchap}, we employ OTOCs to study the dynamical sensitivity of a perturbed non-KAM system in the quantum limit as the parameter that characterizes the \textit{resonance} condition is slowly varied. In particular, we study the OTOCs when the system is in resonance and contrast the results with the non-resonant case. We support the numerical results with analytical expressions derived for a few special cases. We will then extend our findings concerning the non-resonant cases to a broad class of near-integrable KAM systems.

* In Chapter \ref{nonKAMchapsen}, we examine the quantum Fisher information generated by the KHO model and contrast the resonances with the non-resonances. We provide analytical arguments to complement the numerical results. 

* In Chapter \ref{KCTchap}, We study operator growth in a system of kicked coupled tops using the OTOC. In this work, along with the globally chaotic dynamics, we explore scrambling behavior in the mixed phase space. In the mixed phase space, we invoke Percival's conjecture to partition the eigenstates of the Floquet map into ``regular" and ``chaotic" and show that the scrambling rate for the mixed phase space dynamics can be predicted by OTOCs calculated with respect to the chaotic eigenstates. In addition to the largest subspace, we study the OTOCs across the entire system, encompassing all other subspaces.

* In Chapter \ref{deepthchap}, we examine the emergence of state designs from the random generator states exhibiting symmetries. Leveraging on translation symmetry, we analytically establish a sufficient condition for the measurement basis leading to the state $t$-designs. Subsequently, we inspect the violation of the sufficient condition to identify bases that fail to converge. We further demonstrate the emergence of state designs in a physical system by studying the dynamics of a chaotic tilted field Ising chain with periodic boundary conditions. To delineate the general applicability of our results, we extend our analysis to other symmetries.
    
* We conclude this thesis in Chapter \ref{conclusionthesis}, offering perspectives on future directions.

\newtheorem{theorem}{Theorem}[section]
\chapter{Background}\label{chap-back}
In this chapter, we provide essential tools and frameworks necessary for understanding analyses in the subsequent chapters. For better accessibility, mathematical concepts are elucidated with relevant examples. In the following, we briefly cover RMT, OTOCs, and quantum Fisher information, delve into representation theory tools, and comprehensively discuss quantum state designs with applications. Readers familiar with these concepts can skip this chapter.

\section{Random matrix theory and universal ensembles}
In the 1950s, while studying complex heavy nuclei, Eugene Wigner noticed that their spectral fluctuations could be modeled using the eigenvalues of random symmetric matrices with elements drawn independently from the Gaussian distribution $\sim\mathcal{N}(0, 1)$. Essentially, he suggested that the spectra of the complex nuclei share statistical features with the spectra of the random Gaussian symmetric matrices. This idea sparked a widespread development of RMT in subsequent years \cite{mehta2004random}, which finds applications across various branches of physics including atomic and nuclear physics \cite{weidenmuller2009random, mitchell2010random}, statistical mechanics \cite{forrester2010log}, condensed matter physics \cite{beenakker1997random}, and, notably, quantum chaos \cite{haake1991quantum} and quantum information theory \cite{collins2016random}. In the context of quantum chaos, this field was further advanced with the works of Dyson, who formalized the notion of Gaussian ensembles and classified them into three different ensembles, namely, Gaussian unitary ensemble (GUE), Gaussian orthogonal ensemble (GOE), and Gaussian symplectic ensemble (GSE) \cite{porter1965statistical}. The corresponding probability distribution functions of these ensembles are given by
\begin{equation}
  \setlength{\arraycolsep}{0pt}
  P(H) = \left\{ \begin{array}{ l l }
    &{} e^{-n\text{Tr}(H^2)/2}/\mathcal{N}_{\text{GUE}}\quad \text{($H$ is complex Hermitian)}\\
    &{} e^{-n\text{Tr}(H^2)/4}/\mathcal{N}_{\text{GOE}}\quad \text{($H$ is real symmetric)} \\
    &{} e^{-n\text{Tr}(H^2)}/\mathcal{N}_{\text{GSE}} \quad \text{($H$ is Hermitian quaternionic)}
  \end{array} \right.
\end{equation}
The probability measure over GUE is invariant under arbitrary unitary transformations, i.e., $P(H)=P(u^{\dagger}Hu)$ for any $u\in U(d)$, where $U(d)$ is the $d$-dimensional unitary group. Similarly, GOE and GSE are invariant under orthogonal and symplectic transformations, respectively. The matrices drawn from these ensembles display universal spectral characteristics. In particular, the eigenvalues of these matrices repeal each other so that the probability of finding a pair of eigenvalues close to each other is very small. This is evident from the joint distribution function of the eigenvalues, which takes the following form \cite{mehta2004random}:
\begin{eqnarray}
P(\lambda_{1}, \lambda_{2}, \cdots, \lambda_{d})=\dfrac{1}{\mathcal{N}_{\beta}}\exp\left\{-\dfrac{1}{2}\sum_{i=1}^{d}\lambda^2_{i}\right\}\prod_{j<k}\left| \lambda_{j}-\lambda_{k}  \right|^{\beta} ,    
\end{eqnarray}
where
\begin{eqnarray}
\mathcal{N}_{\beta}=(2\pi)^{d/2}\prod_{j=1}^{d} \dfrac{\Gamma(1+j\beta/2)}{\Gamma(1+\beta/2)}.     
\end{eqnarray}
and $\beta=1, 2$ and $4$ correspond to GOE, GUE and GSE. The eigenvalues in the above expression are arranged in descending order, i.e., $\lambda_1\geq \lambda_2\cdots\geq\lambda_d$. The probability of having two identical eigenvalues is clearly zero. While the original purpose of RMT was to study the properties of heavy nuclei, it was later conjectured that quantum chaotic systems display spectral properties resembling those of one of these ensembles, depending on whether the system possesses time-reversal symmetry \cite{bohigas1984characterization}. To be more specific, if $s$ denotes a random variable corresponding to the distance between two nearest neighbor eigenvalues of a matrix chosen randomly from the Gaussian ensembles, its probability density is given by $P(s)\propto s^{\beta}e^{-A_\beta s^2}$, where $A_\beta=\left[\dfrac{\Gamma((\beta+2)/2)}{\Gamma((\beta+1)/2)}\right]^2$. If the given Hamiltonian is not time-reversal symmetric, the spectral statistics follow that of the GUE. In contrast, the GOE ($T^2=1$) and the GSE ($T^2=-1$) admit the time-reversal symmetry, where $T$ is the time-reversal operator.

Following the above classification, Dyson further introduced three analogous unitary ensembles \cite{dyson1962threefold}, namely circular unitary ensemble (CUE), circular orthogonal ensemble (COE), and circular symplectic ensemble (CSE) based on their properties under time-reversal operation. While the Gaussian ensembles are suitable for modeling time-independent systems, the circular ensembles are necessary for modeling time-dependent systems, particularly Floquet systems. The CUE encompasses all unitary matrices from the unitary group $U(d)$, i.e., CUE $\cong U(d)$, and hence a group. On the other hand, the COE consists of the symmetric unitaries, i.e., for any $u\in\text{COE}(d)$, we have $u^{T}=u$. it is to be noted that, unlike the CUE, the COE does not form a group. Moreover, for any $u\in \text{CUE}(d)$, $u^{T}u\in \text{COE}(d)$. The circular ensembles provide an accurate description of the chaotic time-dependent quantum systems. 

RMT description of quantum chaos has been immensely successful. Besides, the systems whose classical limits give rise to regular dynamics have also been studied extensively by the quantum chaos community in RMT context. In their seminal paper, Berry and Tabor \cite{berry1976closed} conjectured that the eigenvalues of the quantum systems with regular classical limits show no correlations among them. They behave as if they are identical and independently distributed (iid) random variables. If $s$ is a random variable describing the spacing distribution between two adjacent levels of the Hamiltonian, the probability density of $s$ follows $P(s)=e^{-s}$. In the case of time-dependent systems (Floquet), the eigenphases of the unitary behave as if they were drawn uniformly at random from the unit complex circle.

\subsection{Out-of-time ordered correlators}
The OTOCs were first introduced in the context of superconductivity \cite{larkin} and have been recently revived in the literature to study information scrambling in many-body quantum systems \cite{ope2, ope1, ope4, ope5, lin2018out, shukla2022out}], quantum chaos [\cite{chaos1, pawan, seshadri2018tripartite, lakshminarayan2019out, shenker2, moudgalya2019operator, omanakuttan2019out, manybody2, chaos2, prakash2020scrambling, prakash2019out, varikuti2022out, markovic2022detecting}, many-body localization \cite{manybody3, manybody4, manybody1, huang2017out} and holographic systems \cite{shock1, shenker3}. Given two operators $A$ and $B$, the out-of-time-ordered commutator function in an arbitrary quantum state $|\psi\rangle$ is given by
\begin{eqnarray}\label{commutator}
C_{|\psi\rangle,\thinspace AB}(t)=\langle\psi| \left[A(t), B\right]^{\dagger}\left[A(t), B\right]|\psi\rangle,
\end{eqnarray}
where $A(t)=\hat{U}^{\dagger}(t)A\hat{U}(t)$ is the Heisenberg evolution of $A$ governed by the Hamiltonian evolution of the system. When expanded, the function $C_{|\psi\rangle,\thinspace AB}(t)$ contains two-point and four-point correlators. Since the time ordering in these four-point correlators is non-sequential, they are usually referred to as the OTOCs. The behavior of $C_{AB}(t)$ depends predominantly on the four-point correlators. Hence, the terms OTOC and the commutator function are often used interchangeably to denote the same quantity $C_{|\psi\rangle, AB}(t)$.

To understand how the OTOCs diagnose chaos, consider the phase space operators $\hat{X}$ and $\hat{P}$ in the semiclassical limit ($\hbar\rightarrow 0$), where the Poisson brackets replace the commutators. It can be readily seen that $\{ X(t), P \}^2=(\delta X(t)/\delta X(0))^2\sim e^{2\lambda t}$, where $\lambda$ is the Lyapunov exponent of the system under consideration, which is positive for chaotic systems. The correspondence principle then establishes that the OTOCs of a quantum system, whose classical limit is chaotic, grow exponentially until a time known as Ehrenfest’s time $t_{\text{EF}}$ that depends on the dynamics of the system \cite{schubert2012wave, rozenbaum2020early, jalabert2018semiclassical, chen2018operator}. For the single particle chaotic systems, $t_{\text{EF}}$ scales logarithmically with the effective Planck constant and inversely with the corresponding classical Lyapunov exponent --- $t_{\text{EF}}\sim \ln(1/\hbar_{\text{eff}})/\lambda$. For $t>t_{\text{EF}}$, the correspondence breaks down due to the non-trivial $\hbar$ corrections arising from the phase space spreading of the initially localized wave packets. While many recent works have revealed that the early-time growth rate of OTOCs correlates well with the classical Lyapunov exponent for the chaotic systems, it is, however, worthwhile to note that the exponential growth may not always represent true chaos in the system \cite{pappalardi2018scrambling, hashimoto2020exponential, pilatowsky2020positive, xu2020does, hummel2019reversible, steinhuber2023dynamical}. Nonetheless, by carefully treating the singular points of the system, one can show that the OTOCs continue to serve as a reliable diagnostic of chaos \cite{wang2022statistical, wang2021quantum}. 

\begin{figure}
    \centering
    \includegraphics[width=\textwidth,height=8.5cm]{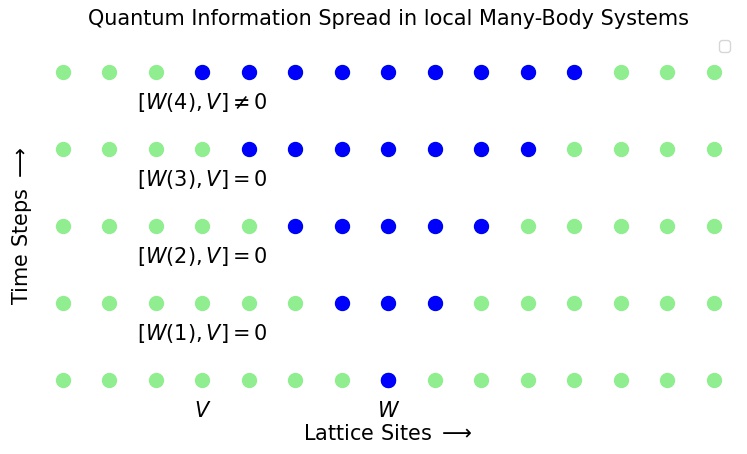}
    \caption{Schematic illustrating how information propagates in a local many-body quantum system with nearest neighbor interactions. }
    \label{fig:scramb}
\end{figure}

Recently, the OTOCs have been found to show intriguing connections with other probes of quantum chaos such as tripartite mutual information \cite{pawan}, operator entanglement \cite{styliaris2021information, zanardi2021information}, quantum coherence \cite{anand2021quantum} and Loschmidt echo \cite{yan2020information} to name a few. Moreover, the OTOCs have been investigated in the deep quantum regime and observed that the signatures of short-time exponential growth can still be found in such systems \cite{sreeram2021out}. Also, see Ref. \cite{pg2022witnessing} for an interesting comparison of the OTOCs with observational entropy, a recently introduced quantity to study the thermalization of closed quantum systems \cite{vsafranek2019quantum, vsafranek2019quantum1}.

Another intriguing feature of OTOC is that in multipartite systems, the growth of OTOC is related to spreading an initially localized operator across the system degrees of freedom. A representative figure is shown in Fig. \ref{fig:scramb}. The figure illustrates how local perturbations, caused by the applications of local unitaries in many-body systems with nearest-neighbor interactions, spread across the entire system. The commutator-Poisson bracket connection gives an analog of the classical separation of two trajectories with quantum mechanical operators replacing the classical phase space trajectories. Under chaotic many-body dynamics, an initially simple local operator becomes increasingly complex and will have its support grown over many system degrees of freedom, thereby making the initially localized quantum information at later times hidden from the local measurements \cite{pawan, hayden2007black, moudgalya2019operator, ope1}. As the operator evolves in the Heisenberg picture under chaotic dynamics, it will start to become incompatible with any other operator with which it commutes initially. Thus, the OTOCs provide a perfect platform to diagnose information scrambling in many-body quantum systems.

\section{Quantum Fisher Information}
This section briefly outlines classical and quantum Fisher information and its connection to quantum sensing accuracy. Quantum sensing involves the estimation of an unknown parameter $\theta$ \cite{helstrom1969quantum}. It mainly constitutes three steps: (i) preparation of an optimal state $|\psi_{\theta}\rangle$, (ii) learning $|\psi_\theta\rangle$ through measurements in a suitable basis, and (iii) inferring $\theta$ from the measurement data by employing an appropriate classical estimation strategy. The Cram\'{e}r-Rao inequality lower bounds the variance of the unbiased estimators of $\theta$, thereby providing a maximum derivable precision in estimating $\theta$ \cite{rao1992information, cramer1999mathematical}: $\Delta^2\theta\geq1/nI(\theta)$, where $n$ is the number of times the experiment is repeated. The classical Fisher information (CFI), as denoted by $I(\theta)$, is defined as the variance of the score function: $I(\theta)=\mathbb{E}_\theta\left[\partial_{\theta}(\ln P_{\theta}(X))\right]^2$, where $P_{\theta}(X)$ denotes probability distribution associated with the measurement output $X$. Therefore, the CFI is measurement dependent. When optimized over all the measurements (POVMs), the CFI attains a maximum value, which is commonly termed quantum Fisher information (QFI) \cite{braunstein1994statistical, braunstein1996generalized}. In a quantum sensing protocol, if $m$ quantifies the resources available, such as the number of probes, then QFI, in general, will scale as $I(\theta)\sim m$ (standard quantum limit). However, invoking quantum effects such as entanglement in step-(i) can lead to the Heisenberg limited sensing: $I(\theta)\sim m^2$ \cite{giovannetti2004quantum, giovannetti2006quantum}.

Geometrically, the CFI can be identified with the only monotone Riemannian metric on a smooth statistical manifold \cite{chentsov1982statiscal, morozova1991markov}. As a result, given a parametric probability distribution $P_\theta$, the second derivative of all monotonic distances $d(P_\theta, P_{\theta+\epsilon})$ with respect to $\epsilon$ are identical to the CFI upto a constant multiplicate factor, i.e., $\partial^2_{\epsilon}d(P_\theta, P_{\theta+\epsilon})|_{\epsilon\rightarrow 0}\equiv C I(\theta)$. In other words, the CFI measures the sensitivity of a probability distribution to infinitesimal variations in its parameters \cite{meyer2021fisher}. Analogously, given $|\psi_{\theta}\rangle=\hat{U}_{\theta}|\psi\rangle$, QFI can be defined as the sensitivity of the state $|\psi_{\theta}\rangle$ to small variations in $\theta$ \cite{helstrom1969quantum, braunstein1994statistical}:
\begin{equation}
I(\theta)= 4\lim_{\epsilon\rightarrow 0}\left( \dfrac{1-|\langle\psi_{\theta+\epsilon}|\psi_\theta\rangle|^2}{\epsilon^2} \right)
=\left. -2\partial^2_{\epsilon}|\langle \psi_{\theta+\epsilon}|\psi_\theta \rangle|^2\right|_{\epsilon=0}
\end{equation}
where, in the second equality, the fidelity metric $|\langle\psi_{\theta+\epsilon}|\psi_\theta\rangle|^2$ is assumed to be smooth and differentiable with respect to $\theta$. Physically, the parameter $\theta$ carries Hamiltonian information such as coupling strengths, magnetic or electric field strengths, etc. Note that the fidelity metric is given by the Fubini-Study metric for pure states, whereas the Bures metric is used for mixed states. Due to the divergent susceptibilities in the ground states, the quantum critical transitions have been proposed as a promising tool in quantum metrology \cite{zanardi2008quantum, invernizzi2008optimal, ivanov2013adiabatic, tsang2013quantum, macieszczak2016dynamical, rams2018limits, frerot2018quantum, chu2021dynamic, garbe2020critical, garbe2022critical, gietka2022understanding}. On the other hand, recent studies have used quantum chaotic dynamics as a resource in quantum metrology \cite{fiderer2018quantum, liu2021quantum}.

\section{Tools from representation theory}
This section provides a concise yet brief overview of essential mathematical techniques from representation theory, such as Haar integrals, schur-weyl duality, and Levy's lemma. These tools facilitate both analytical and numerical calculations in the subsequent chapters. However, note that this chapter merely provides an outline of these tools. For a detailed treatment, readers are encouraged to refer to \cite{brouwer1996diagrammatic, collins2006integration, zhang2014matrix}, as well as the references therein.
In the theory of quantum information and computation, one often encounters the integrals of the form $\int_{u\in U(d)}d\mu(u)f(u)$, where $d\mu(u)$ is the normalized measure associated with the unitary group $U(d)$, and $f$ denotes a measurable function compactly supported over $U(d)$ [see for e.g. \cite{zanardi2000entangling, popescu2006entanglement, zhang2014matrix} and references therein]. The notion of measure associates an `invariant area' element to the topological groups, meaning that the area element remains left or right invariant under the group action. 
To be more precise, consider a topological group as denoted by $G$. Then for any $g\in G$, $\mu(g)$ denotes the measure over $G$. If $\mu(g)=\mu(gh)$ for any $h\in G$, then $\mu(g)$ is said to be right-invariant. Likewise, $\mu(g)=\mu(hg)$ implies that the measure is left-invariant. Then, the Haar measure is the one which is both left and right-invariant under the group action. This happens when $G$ is a locally compact group. The compactness ensures that the group is bounded and closed, thereby making certain measurable functions over these groups finite, particularly in the context of Haar integrals. Examples of compact groups include the unitary group $U(d)$, special unitary group $SU(d)$, and orthogonal group $O(d)$. Its worthwhile to note that the latter two are two subgroups of the former. Since the unitary group $U(d)$ is compact, one can naturally associate it with a Haar measure, i.e., for a compactly supported function $f$, the following integral exists and is finite:
\begin{eqnarray}
\int_{u\in U(d)}f(u)d\mu(u)=\int_{u\in U(d)}f(ug)d\mu(ug)=\int_{u\in U(d)}f(gu)d\mu(gu)    
\end{eqnarray}
for any $g\in U(d)$. Moroever, the Haar measure satisfies $\int_{u\in U(d)}1d\mu(u)=1$ and for any $S\subset U(d)$, we have $\int_{u\in S}1d\mu(u)\geq 0$. 
Furthermore, the Haar measure formalizes the notion of sampling elements from compact groups or spaces uniformly at random. Under certain conditions, these integrals are amenable to exact solutions. These solutions are aided by the famous Schur-Weyl duality. In the following, we briefly outline the same with a few illustrative examples. 

\subsection{Schur-Weyl duality} 
The Schur-Weyl duality is a powerful tool to solve certain problems involving Haar integrals over compact lie groups. Here, we mainly focus on the integrals over the unitary group $U(d)$. For a detailed review and treatment of the Schur-Weyl duality for the unitary group with applications, refer to \cite{zhang2014matrix, mele2023introduction}. Also, for solving Haar integrals using diagrammatic calculus, refer to \cite{brouwer1996diagrammatic}. 

\textbf{Definition.} (Commutant) Let $\mathcal{L}(\mathbb{C}^{d})$ be the algebra of operators acting on $\mathcal{H}^{d}$. Then, for some $S\subseteq \mathcal{L}(\mathbb{C}^{d})$, its $t$-th order commutant can be defined as 
\begin{eqnarray}
 \text{Comm}(S, t)\equiv \{P\in \mathcal{L}(\mathbb{C}^{d})^{\otimes t}/ \left[P, Q^{\otimes t}\right]=0\quad \forall\thinspace Q\in S \}   
\end{eqnarray}

\textbf{Definition.} (Permutation operators) Let $S_t$ denote the permutation group over $t$-replicas of the Hilbert space ($\mathcal{H}^{d}$), each characterized by the indices $i_1, i_2, \cdots i_t$, then for any $\hat{\pi}\in S_{t}$, its action on the product state can be uniquely defined as 
\begin{eqnarray}
\hat{\pi} \left(|\phi_{i_1}\rangle \otimes |\phi_{i_2}\rangle \otimes\cdots \otimes |\phi_{i_t}\rangle\right) = |\phi_{\pi^{-1}(i_1)}\rangle \otimes |\phi_{\pi^{-1}(i_2)}\rangle \otimes\cdots \otimes |\phi_{\pi^{-1}(i_t)}\rangle, 
\end{eqnarray}
where the set $\{\pi^{-1}(i_1), \cdots\pi^{-1}(i_t)\}$ denotes the permutation of the indices $\{i_1, \cdots , i_t\}$ corresponding to the operator $\hat{\pi}$.

\begin{theorem}{\normalfont (Schur-Weyl duality)}
Let $U(d)$ be the unitary group supported over the Hilbert space $\mathcal{H}^{d}$. Then, for any $t\in\mathbb{Z}^{+}$, the commutant of $U(d)$, $\text{Comm}(U(d), t)$, is given by the $t$-th order permutation group $S_t$, i.e., 
\begin{eqnarray}
\normalfont{\text{Comm}(U(d), t)}\equiv \text{span}\left( \pi_{i}\in S_t\right).
\end{eqnarray}
\end{theorem}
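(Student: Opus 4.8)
The plan is to establish the two inclusions $\mathrm{span}(S_t) \subseteq \mathrm{Comm}(U(d),t)$ and $\mathrm{Comm}(U(d),t) \subseteq \mathrm{span}(S_t)$ separately. The first inclusion is the easy direction: for any permutation operator $\hat\pi \in S_t$ and any $Q \in U(d)$, one checks directly on product vectors that $\hat\pi\, Q^{\otimes t} = Q^{\otimes t}\, \hat\pi$, since $Q^{\otimes t}$ acts identically on each tensor factor and therefore commutes with any rearrangement of the factors. By linearity this extends to all of $\mathrm{span}(S_t)$, giving the inclusion.

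For the reverse inclusion, I would use the standard double-commutant / averaging argument. First I would introduce the \emph{twirl} (Haar average) $\mathcal{T}(P) = \int_{U(d)} U^{\otimes t}\, P\, (U^{\dagger})^{\otimes t}\, d\mu(U)$ and observe that $P \in \mathrm{Comm}(U(d),t)$ if and only if $\mathcal{T}(P) = P$; hence it suffices to show that the image of $\mathcal{T}$ is contained in $\mathrm{span}(S_t)$. The cleanest route is representation-theoretic: decompose $(\mathcal{H}^d)^{\otimes t}$ under the commuting actions of $U(d)$ (acting diagonally via $U \mapsto U^{\otimes t}$) and of $S_t$ (acting by permuting factors) into isotypic components $\bigoplus_\lambda W_\lambda \otimes V_\lambda$, where $\lambda$ runs over partitions of $t$ with at most $d$ parts, $W_\lambda$ is an irreducible $U(d)$-module and $V_\lambda$ an irreducible $S_t$-module. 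The key structural facts I would invoke are that this decomposition is multiplicity-free in the sense that each pair $(W_\lambda, V_\lambda)$ appears exactly once, and that the two group algebras generate each other's commutants. Granting the multiplicity-free decomposition, Schur's lemma forces any operator commuting with all $U^{\otimes t}$ to act as a scalar on each $W_\lambda$ and arbitrarily on each $V_\lambda$, i.e. to lie in $\bigoplus_\lambda \mathbb{1}_{W_\lambda} \otimes \mathrm{End}(V_\lambda)$; but this algebra is precisely the image of the group algebra $\mathbb{C}[S_t]$, which is $\mathrm{span}(S_t)$. (Equivalently, one can phrase it as: $\mathrm{End}_{U(d)}\big((\mathcal{H}^d)^{\otimes t}\big)$ is generated by $S_t$.)

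The main obstacle is the multiplicity-free decomposition of $(\mathcal{H}^d)^{\otimes t}$ into joint $U(d)\times S_t$ irreducibles — this is the genuine content of Schur–Weyl duality and cannot be dispensed with. Depending on how self-contained the thesis wants to be, I would either cite it from the standard references already listed (\cite{zhang2014matrix, mele2023introduction, brouwer1996diagrammatic}) or sketch it via the observation that the algebra $\mathcal{A}$ generated by $\{U^{\otimes t}\}$ and the algebra $\mathcal{B}$ generated by $S_t$ are mutual commutants (one inclusion being the easy direction above, the other following from the double commutant theorem once one knows $\mathcal{B}$ is semisimple, which holds since $\mathbb{C}[S_t]$ is semisimple in characteristic zero by Maschke's theorem), and then counting dimensions. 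A secondary, more elementary alternative worth mentioning is the polarization argument: any $U^{\otimes t}$-invariant operator, written in an orthonormal basis, has entries that are polynomial identities forced to be invariant under $GL(d)$, and classical invariant theory (the first fundamental theorem for $GL(d)$) identifies the invariants with contractions realized exactly by permutations. I expect to present the representation-theoretic proof as the main line and relegate the invariant-theory argument to a remark.
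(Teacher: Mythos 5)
The paper provides no proof to compare against: the text immediately following the theorem statement reads ``The proof of this theorem is omitted here,'' so yours is the only argument on the table. Your outline is the standard representation-theoretic proof and it is sound. The easy inclusion $\mathrm{span}(S_t)\subseteq\mathrm{Comm}(U(d),t)$ by direct check on product tensors is correct; the reduction of the hard inclusion to showing the image of the Haar twirl lies in $\mathrm{span}(S_t)$ is correct (both directions of the equivalence $P\in\mathrm{Comm}\Leftrightarrow\mathcal{T}(P)=P$ hold, the forward one trivially and the reverse because $\mathcal{T}(P)$ is always invariant by translation-invariance of Haar measure); and the Schur's-lemma argument giving $\mathrm{Comm}(U(d),t)=\bigoplus_\lambda \mathbb{1}_{W_\lambda}\otimes\mathrm{End}(V_\lambda)$ from the multiplicity-free bimodule decomposition, followed by identifying this with the image of the semisimple algebra $\mathbb{C}[S_t]$ via Wedderburn theory, is the right way to finish. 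You are also right to flag that the multiplicity-free decomposition is the genuine content: it is logically equivalent to the theorem, so it must either be cited or established independently. If you go the double-commutant route you sketch, be careful to keep the two ingredients separate: semisimplicity of $\mathbb{C}[S_t]$ gives you that it equals its own bicommutant, but you still need the independent input that $\mathrm{Comm}(\mathbb{C}[S_t])$ is generated by $\{Q^{\otimes t}\}$ (this is where the polarization argument $\mathrm{span}\{A^{\otimes t}:A\in\mathrm{End}(\mathcal{H}^d)\}=\mathrm{End}(\mathcal{H}^d)^{\otimes t,\,\mathrm{sym}}$ does the work), together with the observation that replacing $GL(d)$ by $U(d)$ does not change the commutant because $U(d)$ is Zariski-dense in $GL(d,\mathbb{C})$.
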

The proof of this theorem is omitted here. An immediate consequence of this theorem is that an operator $P\in \mathcal{L}(\mathbb{C}^{d})^{\otimes t}$ commutes with all operators in the unitary group $U(d)$ if and only if $P$ is a linear combination of the elements of $S_{t}$ [\cite{roberts2017chaos}], i.e., 
\begin{eqnarray}
   P=\sum_{i=1}^{t!}a_i \hat{\pi_i}\quad\Longleftrightarrow  \quad u^{\dagger \otimes t}P u^{\otimes t}=P \quad \forall \thinspace u\in U(d). 
\end{eqnarray}

\subsubsection{Examples:}
We now consider a few illustrative examples to demonstrate the applicability of the Schur-Weyl duality in solving certain Haar integrals over the unitary group. 

(\textbf{1}). We first solve the integral $L=\int_{u\in U(d)}d\mu(u) \left(u^{\dagger}\otimes u\right)$. We first write individual elements of the operator $L$ as 
\begin{eqnarray}
 L_{ij, kl}&=&\int_{u\in U(d)}d\mu(u)\langle ij|u^{\dagger}\otimes u|kl\rangle \nonumber\\
 &=&\int_{u\in U(d)}d\mu(u) \left( \langle i|u^{\dagger}|k\rangle \langle j|u|l\rangle \right)\nonumber\\
 &=&\langle i|\left[ \int_{u\in U(d)}d\mu(u)\left( u^{\dagger} |k\rangle\langle j| u \right)  \right] |l\rangle
\end{eqnarray}
Clearly, the operator $M=\int_{u\in U(d)}d\mu(u) \left( u^{\dagger}|k\rangle\langle j| u \right)$ commutes with any $v\in U(d)$, i.e., 
\begin{eqnarray}
  v^{\dagger}Mv=\int_{u\in U(d)}d\mu(u)\left( v^{\dagger}u^{\dagger}|k\rangle\langle j| uv \right)  =\int_{u\in U(d)}d\mu(u)\left(u^{\dagger}|k\rangle\langle j| u \right)=M, 
\end{eqnarray}
where the second equality follows from the right invariance of the Haar measure under the group action. Therefore, the integral can be solved by writing it as a linear combination of the elements of $S_1$, i.e., 
\begin{eqnarray}
M=\int_{u\in U(d)}d\mu(u)\left(u^{\dagger}|k\rangle\langle j| u \right)=\alpha\mathbb{I},  \end{eqnarray}
where $\alpha$ can be obtained by equating the traces of the operators on both sides. It then follows that $M=\dfrac{\mathbb{I}}{d}\delta_{kj}$. This implies that $L_{ij, kl}=\dfrac{\delta_{kj}\delta_{il}}{d}$. It is now straightforward to get the final expression for $L$:
\begin{eqnarray}
   L=\sum_{i, j, k, l=0}^{d-1}|ij\rangle\langle kl| L_{ij, kl}= \dfrac{1}{d}\sum_{i, j, k, l=0}^{d-1}|ij\rangle\langle kl| \delta_{il}\delta_{jk}=\dfrac{1}{d}\sum_{ij}|ij\rangle\langle ji| =\dfrac{F}{d}, 
\end{eqnarray}
where $F$ is the swap operator. 

(\textbf{2}). We now consider an example involving moment operators of the unitary group. For an arbitrary $A\in \mathcal{L}(\mathbb{C}^{d})^{\otimes t}$, the moment operator of order $t$ can be written as 
\begin{eqnarray}
M_{t}(A)=  \int_{u\in U(d)}d\mu(u)\left( u^{\dagger \otimes t} A u^{\otimes t} \right).   
\end{eqnarray}
The right invariance of the Haar measure implies that $M(A)$ commutes with $v^{\otimes t}$ for all $v\in U(d)$. Hence, the solution can be written using the linear combination of the operators from $S_t$. For $t=1$, $M_{1}(A)=\text{Tr}(A)\mathbb{I}_{d}/d$. For $t=2$, the solution is given by
\begin{eqnarray}
M_2(A)=    \left(\dfrac{\text{Tr}(A)}{d^2-1}-\dfrac{\text{Tr}(AF)}{d(d^2-1)}\right)\mathbb{I}_{d^2}+\left( \dfrac{\text{Tr}(AF)}{d^2-1}-\dfrac{\text{Tr}(A)}{d(d^2-1)} \right)F
\end{eqnarray}

\subsection{Levy's lemma}
\textbf{Definition} (Lipschitz continuous functions). A function $f: X\rightarrow Y$ is Lipschitz continuous with Lipschitz constant $\eta$, if for any $x_1, x_2\in X$, it holds that 
\begin{eqnarray}
d_y(f(x_1), f(x_2))\leq \eta d_x(x_1, x_2),     
\end{eqnarray}
where $d_x$ and $d_y$ indicate the distance metrics associated with the spaces $X$ and $Y$,  respectively. The Lipschitz continuity is a stronger form of the uniform continuity of $f$ [\cite{o2006metric}], and $\eta$ upper bounds the slope of $f$ in $X$ [\cite{milman1986asymptotic, ledoux2001concentration}]. 

\begin{theorem}$\normalfont\text{\cite{milman1986asymptotic, ledoux2001concentration}}$
Let $f: \mathbb{S}^{d-1} \rightarrow \mathbb{R}$ be a Lipschitz function defined over a $(d-1)$-sphere $\mathbb{S}^{d-1}$, equipped with a natural Haar measure. Suppose a point $x \in \mathbb{S}^{d-1}$ is drawn uniformly at random from $\mathbb{S}^{d-1}$. Then, for any $\varepsilon > 0$, the following concentration inequality holds:
\begin{eqnarray}
\text{Pr}\left[ \left|f(x)-\mathbb{E}_{x\in \mathbb{S}^{d-1}}(f(x))\right|\geq \varepsilon \right]\leq 2\exp\left\{ \dfrac{-d\varepsilon^2}{9\pi^3\eta^2} \right\}    , 
\end{eqnarray}
where $\eta$ is the Lipschitz constant of $f$ and $c$ is a positive constant.
\end{theorem}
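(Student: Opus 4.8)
The plan is to prove this concentration inequality by Lévy's classical route through the \emph{spherical isoperimetric inequality}: first establish that $f$ concentrates around a median $m_f$, then transfer the statement to concentration around the mean $\mathbb{E}(f)$ at the cost of enlarging the constant in the exponent (which is exactly where a crude-looking factor such as $9\pi^3$, rather than the sharp $2$, enters).

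\textbf{Step 1 (isoperimetry, used as a black box).} Recall the spherical isoperimetric inequality: among all Borel sets $A\subseteq\mathbb{S}^{d-1}$ of prescribed measure, a geodesic cap minimizes the measure of the $\delta$-enlargement $A_\delta:=\{x\in\mathbb{S}^{d-1}: d(x,A)\le\delta\}$, where $d$ is the geodesic metric. Specializing to $A$ with $\mu(A)\ge 1/2$ so that the extremal cap is a hemisphere, and estimating the measure of the complement of an enlarged hemisphere by a Gaussian-type tail, one gets a bound of the shape
\[
\mu(A_\delta)\ \ge\ 1-\sqrt{\tfrac{\pi}{8}}\,\exp\!\left(-\tfrac{(d-1)\delta^2}{2}\right)
\]
(up to the usual lower-order ambiguity between $d-1$ and $d-2$ depending on normalization). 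I would import this inequality from the literature rather than re-derive it; its proof is the one genuinely hard ingredient.

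\textbf{Step 2 (concentration around the median).} Let $m_f$ be a median of $f$ and set $A=\{f\le m_f\}$, $B=\{f\ge m_f\}$, both of measure $\ge 1/2$. Since $f$ is $\eta$-Lipschitz with respect to $d$, we have $A_\delta\subseteq\{f\le m_f+\eta\delta\}$ and $B_\delta\subseteq\{f\ge m_f-\eta\delta\}$. Applying Step 1 to $A$ and $B$ with $\delta=\varepsilon/\eta$ and a union bound yields
\[
\Pr\big[\,|f(x)-m_f|\ge\varepsilon\,\big]\ \le\ 2\sqrt{\tfrac{\pi}{8}}\,\exp\!\left(-\tfrac{(d-1)\varepsilon^2}{2\eta^2}\right).
\]

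\textbf{Step 3 (median to mean) and the main obstacle.} Bound $|\mathbb{E}(f)-m_f|\le\mathbb{E}|f-m_f|=\int_0^\infty\Pr[|f-m_f|\ge t]\,dt$ and substitute the Step 2 estimate to obtain $|\mathbb{E}(f)-m_f|\le c_0\,\eta/\sqrt{d}$ for an explicit $c_0$. Then for $\varepsilon$ in the relevant range write $\Pr[|f-\mathbb{E}(f)|\ge\varepsilon]\le\Pr[|f-m_f|\ge\varepsilon-c_0\eta/\sqrt d]$ and absorb the shift into a weaker constant; the trivial regime $\varepsilon\gtrsim\operatorname{diam}(\mathbb{S}^{d-1})=\pi$ is handled separately since there the left side is $0$. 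The only real content is the isoperimetric inequality of Step 1; the rest is bookkeeping, and matching the precise numerical constant $9\pi^3$ is the least illuminating part. An alternative self-contained route avoids isoperimetry entirely by using that the uniform measure on $\mathbb{S}^{d-1}$ satisfies a logarithmic Sobolev inequality with constant $O(1/d)$ and running Herbst's argument on $\mathbb{E}[e^{\lambda f}]$, or by projecting a standard Gaussian vector in $\mathbb{R}^d$ onto the sphere and invoking Gaussian concentration; in all three routes, pinning down the stated constant is the fussy step rather than the conceptual one.
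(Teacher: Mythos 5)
The paper never proves this lemma: after stating it, the text simply writes ``A proof of this theorem can be found in Ref.\ \cite{gerken2013measure}'' and moves on. So there is no paper argument for you to reproduce or diverge from. What you have sketched --- spherical isoperimetry as the hard input, Lipschitz-ness to convert the $\delta$-enlargement of a half-space $\{f\le m_f\}$ into the sub-level set $\{f\le m_f+\eta\delta\}$, a union bound to get a two-sided tail around the median, and a triangle-inequality shuttle from median to mean at the cost of worsening the exponent --- is precisely the classical Milman--Schechtman / Ledoux route that the cited lecture notes follow, so conceptually your proposal and the (referenced) proof coincide.

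Two small points of hygiene if you intend to promote this to a full proof. First, the trivial cutoff in Step 3 should be $\varepsilon\gtrsim\eta\pi$, not $\varepsilon\gtrsim\pi$: the Lipschitz bound gives $|f(x)-\mathbb{E}(f)|\le \eta\cdot\mathrm{diam}(\mathbb{S}^{d-1})=\eta\pi$, so it is $\eta\pi$ above which the left side vanishes. Second, and more substantively, your Steps 2--3 as written produce a bound of the rough shape $2\sqrt{\pi/8}\,\exp\!\bigl(-(d-1)\varepsilon^2/(8\eta^2)\bigr)$; turning this into the quoted form $2\exp\!\bigl(-d\varepsilon^2/(9\pi^3\eta^2)\bigr)$ requires absorbing the prefactor into the exponent and choosing the median-to-mean split carefully. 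You flag this as ``the fussy step,'' which is fair, but for the statement \emph{as literally written} that arithmetic is the one remaining piece, so your proposal is an outline rather than a finished proof. (Incidentally, the clause ``and $c$ is a positive constant'' at the end of the theorem is vestigial --- no $c$ appears in the displayed inequality --- so there is nothing there for a proof to establish.) The alternative routes you mention (log-Sobolev plus Herbst, or Gaussian projection) are also standard and would give the same $O(1/d)$ sub-Gaussian shape; they would not match the paper's cited reference, which goes through isoperimetry, but since the paper offers no proof that distinction is moot.
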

A proof of this theorem can be found in Ref. \cite{gerken2013measure}. Levy's lemma guarantees that the value of a Lipschitz continuous function at a typical $x\in \mathbb{S}^{d-1}$ is always close to its mean value, as given by $\mathbb{E}_{x\in \mathbb{S}^{d-1}}(f(x))$. The difference between the mean and a typical value is exponentially suppressed with the Hilbert space dimension.

\section{Quantum designs} 
The notion of \textit{designs} has its roots in the theory of numerical integration of polynomial functions over continuous domains \cite{hardin1996mclaren, bondarenko2013optimal}. Consider the problem of integrating a polynomial function over a three-dimensional unit sphere ($2$-sphere). Then, for an arbitrary $f(r, \theta, \phi)$, obtaining a closed form expression of the integral $\int f(r, \theta, \phi)r^2dr\sin\theta d\theta d\phi$ is typically beyond reach. This prompts one to rely on numerical techniques, which can often be challenging due to the continuum of points over which integration needs to be performed. The concept of designs addresses this pertinent question: How can one efficiently sample a finite set of points from a continuous domain, ensuring that the weighted sum of the polynomial at these points effectively equates to the integral? In the context of $n$-sphere, such a set is called a \textit{spherical design}. A spherical $t$-design represents a finite set of points over which the weighted sum of any $t$-th or less than $t$-th degree polynomial matches the average integration over the entire sphere. For a $2$-sphere, the vertices of a regular tetrahedron embedded in it form a $2$-spherical design [see Fig. \ref{fig:designs}a]. This means that the average value of any second-degree polynomial function can be calculated by evaluating the function only at four vertices of the tetrahedron. Since pure quantum states in Hilbert space, $\mathcal{H}^{d}$, can be regarded as points on a hypersphere of dimension $2d-1$, one can generalize the notion of spherical designs to the quantum states \cite{renes2004symmetric}.  

\begin{figure}
    \centering
    \includegraphics[scale=0.6]{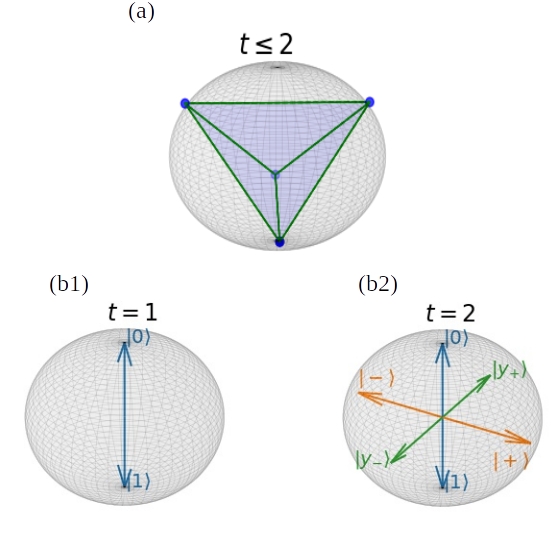}
    \caption{Schematic representation of designs. (a). Vertices of a regular tetrahedron embedded inside a $2$-sphere form a spherical $2$-design. (b1) and (b2) illustrate quantum state $1$ and $2$-designs on the bloch sphere.}
    \label{fig:designs} 
\end{figure}

A $t$-th order quantum state design ($t$-design) is an ensemble of pure quantum states that reproduces the average behavior of any quantum state polynomial of degree $t$ or less over all possible pure states, represented by the Haar average. An ensemble $\mathcal{E}\equiv\{p_i, |\psi_{i}\rangle\}$ is an exact $t$-design if and only if its moments match those of the Haar ensemble up to order $t$, i.e., 
\begin{eqnarray}
\sum_{i=0}^{|\mathcal{E}|-1}p_i\left(|\psi\rangle\langle\psi |\right)^{\otimes t}=\int_{|\psi\rangle}d\psi \left(|\psi\rangle\langle\psi|\right)^{\otimes t}. 
\end{eqnarray}
One can use Schur-Weyl duality to deduce an exact expression for the Haar integral on the right-hand side \cite{renes2004symmetric} in terms of the linear combination of permutation operators acting on $t$-replicas of the Hilbert space. The same is given by 
\begin{eqnarray}
\int_{|\psi\rangle\in\mathcal{E}_{\text{Haar}}}d\psi \left(|\psi\rangle\langle\psi|\right)^{\otimes t}= \dfrac{\mathbf{\Pi}_{t}}{\mathcal{D}}, \end{eqnarray}
where the Haar measure over $\mathcal{E}_{\text{Haar}}$ is denoted by $d\psi$, $\mathcal{D}=d(d+1)\cdots (d+t-1)$, and 
$\mathbf{\Pi}_{t}$ is the projector onto the permutation symmetric subspace of the replica Hilbert spaces $\mathcal{H}^{\otimes N}\otimes\mathcal{H}^{\otimes N}\otimes\cdots t$-times, i.e., $\mathbf{\Pi}_{t}=\sum_{\pi_i\in S_t}\pi_i$. It is to be noted that similar to the quantum state designs, unitary operator designs have also been investigated with active interest in recent years. Applications of quantum designs span various quantum protocols such as quantum state tomography \cite{madhok2016characterizing, madhok2014information, sreeram2021quantum, sahu2022quantum}, randomized benchmarking \cite{benchmarking1, knill2008randomized, benchmarking2}, randomized measurement protocols \cite{vermersch2019probing, elben2023randomized}, and many more [see Fig. \ref{fig:applications}]. In the following, we illustrate the importance of state designs in randomized measurement protocols with application to measuring the OTOCs.

\begin{figure}
    \centering
    \includegraphics[scale=0.4]{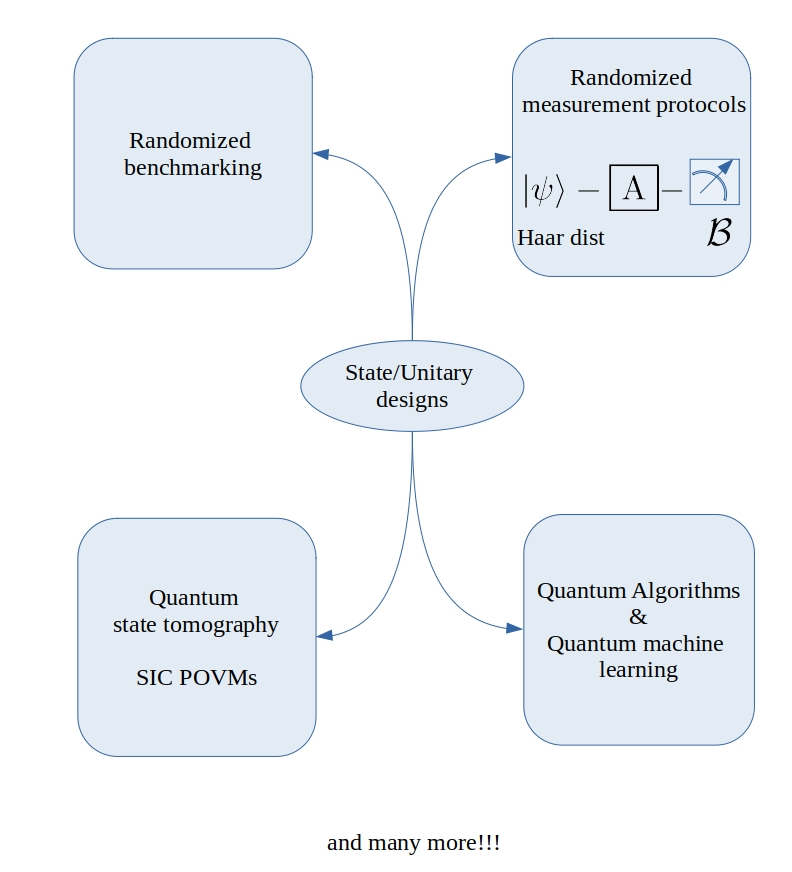}
    \caption{Schematic illustrating a few applications of quantum designs. }
    \label{fig:applications}
\end{figure}

\subsection{Application: Randomized measurement protocols for OTOC}
In quantum protocols, employing many randomized measurements can simplify the computation of certain quantities \cite{elben2023randomized}. This is achieved by leveraging the properties of the Haar measure, specifically by utilizing the Haar random states and unitaries. This approach harnesses the mathematical structure provided by the Haar integrals, leading to more manageable measurements of several information-theoretic quantities. The random measurement protocol has been first demonstrated for computing the traces of the powers of density matrices $\text{Tr}(\rho^n)$ \cite{van2012measuring}. Later, this technique has been extended to calculate the R{\'e}nyi entropies \cite{elben2018renyi}, OTOCs \cite{vermersch2019probing}, and multipartite entanglement \cite{knips2020multipartite, ketterer2019characterizing}. Here, we demonstrate the protocol for the OTOC computations and exemplify the significance of quantum state designs in this protocol. Computing the OTOCs in a physical experiment can be challenging as it requires time-reversal operations on the evolution of the given Hamiltonian. Moreover, if the system is chaotic, imperfect time reversal operations can lead to a significant accumulation of errors in the OTOC measurements. The randomized measurement protocol for the OTOCs completely bypasses the need for time reversals in such scenarios. To demonstrate, we consider a traceless Hermitian operator $W$ and a unitary operator $V$ as the initial operators for the OTOC measurements. Then, the four-point OTOC is given by $C(t)\propto\text{Tr}\left(W(t)V^{\dagger}W(t)V\right)$. It then follows that
\vspace{-0.5cm}
\begin{eqnarray}
\text{Tr}(\underbrace{W(t)}_{P}\underbrace{V^{\dagger}W(t)V}_{Q})&=&
\begin{minipage}{0.38\textwidth}
\centering
\vspace{0.3cm}
\includegraphics[scale=0.525]{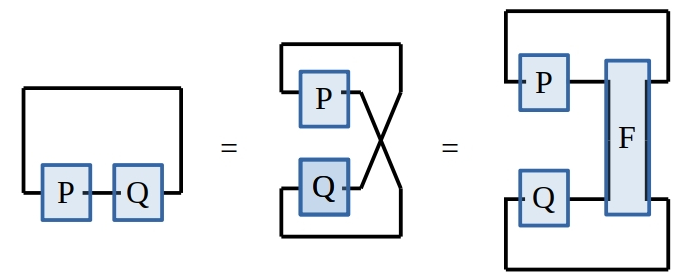}
\end{minipage}\nonumber\\
&=&\text{Tr}\left[F\left( W(t)\otimes V^{\dagger}W(t)V \right)\right]+\underbrace{\text{Tr}\left( W(t)\otimes V^{\dagger}W(t)V \right)}_{=0}\nonumber\\
&=&\text{Tr}\left[ \left( F+\mathbb{I} \right)\left( W(t)\otimes V^{\dagger}W(t)V \right) \right]\nonumber\\
&\propto&\text{Tr}\left[ \int_{d\psi}d\psi \left(|\psi\rangle\langle\psi|\right)^{\otimes 2}\left( W(t)\otimes V^{\dagger}W(t)V \right) \right] \nonumber\\
&=& \int_{\psi}d\psi \left[\langle\psi |W(t)|\psi\rangle \langle\psi | V^{\dagger}W(t)V |\psi\rangle\right].
\end{eqnarray}
In the first equality, we have used tensor network techniques to show that $\text{Tr}(PQ)=\text{Tr}(F(P\otimes Q))$. In the second equality, we have added an extra term $\text{Tr}(W(t)\otimes V^{\dagger}W(t)V)$, which is equal to $0$ as the operator $W$ is trace-less. In the fourth equality, we have replaced $(F+\mathbb{I})$ with the Haar integral $\int_{\psi}d\psi \left(|\psi\rangle\langle\psi|\right)^{\otimes 2}$. The final expression is a Haar integral over a second-degree quantum state polynomial. Since the Haar average requires that the experiment be performed infinitely many times, the second-order state designs can be used to reduce the complexity of the experiment. It is worth noting that the union of all mutually unbiased bases form state $2$-designs \cite{klappenecker2005mutually}.

\subsection{Projected ensemble framework}
\begin{figure}
\begin{center}
\includegraphics[scale=0.45]{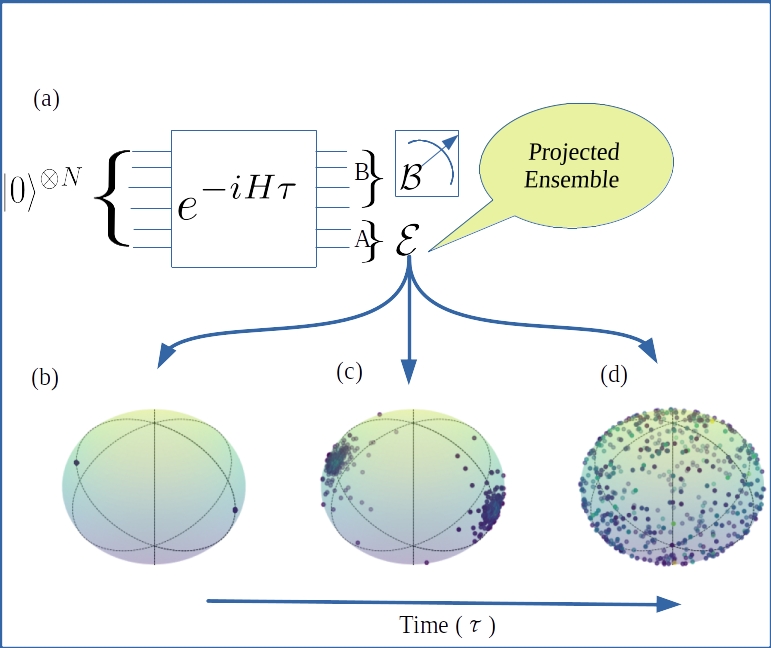}
\end{center}
\caption{\label{fig:sch} Schematic illustrating the projected ensemble framework for generating approximate higher-order state designs. (a) A chaotic Hamiltonian acts upon a trivial initial state. A subsystem of the final state is projectively measured. The panels (b)-(d) are representatives of the distribution of the pure states in the projected ensemble at different evolution times. }   
\end{figure}

Here, we briefly outline the projected ensemble framework \cite{cotler2023emergent}. The projected ensemble framework aims to generate approximate quantum state designs from a single chaotic or random many-body quantum state. The protocol involves performing local projective measurements on part of the system. First, consider a generator quantum state $|\psi\rangle\in\mathcal{H}^{\otimes N}$, where $\mathcal{H}$ denotes the local Hilbert space of dimension $d$ and $N=N_A+N_B$ denotes the size of the system constituting subsystems-$A$ and $B$. Then, projectively measuring the subsystem-$B$ gives a statistical mixture of pure states (or projected ensemble) corresponding to the subsystem-$A$. When the measurement basis is $\mathcal{B}\equiv\{|b\rangle\}$ supported over the subsystem-$B$, the resultant state corresponding to the subsystem-A can be written as follows:
\begin{eqnarray}\label{unnor}
|\Tilde{\psi}(b)\rangle =\left(\mathbb{I}_{2^{N_A}}\otimes|b\rangle\langle b|\right)|\psi\rangle \quad \text{ (unnormalized) },  
\end{eqnarray}
with the probability $p_b=\langle\Tilde{\psi}(b)|\Tilde{\psi}(b)\rangle=\langle\psi |b\rangle\langle b|\psi\rangle$. Since the projective measurements disentangle the subsystems, we can safely disregard the subsystem-$B$ and focus on the quantum state of subsystem-$A$. After normalizing the post-measurement state, we obtain 
\begin{eqnarray}
|\phi(b)\rangle=\dfrac{|\Tilde{\psi}(b)\rangle}{\sqrt{p_b}}=\dfrac{\left(\mathbb{I}_{2^{N_A}}\otimes\langle b|\right)|\psi\rangle}{\sqrt{\langle\psi |b\rangle\langle b|\psi\rangle}}.    
\end{eqnarray}
These states, together with the born probabilities, given by $\mathcal{E}(|\psi\rangle, \mathcal{B})\equiv\left\{p_b, |\phi(b)\rangle\right\}$ are collectively called a projected ensemble. The projected ensembles approximate higher-order quantum state designs if $|\psi\rangle$ is generated by a chaotic evolution \cite{ho2022exact, cotler2023emergent}, i.e., for $t\geq 1$, 
\begin{eqnarray}\label{Deltat}
\Delta^{(t)}_{\mathcal{E}}\equiv\left\|\sum_{|b\rangle\in\mathcal{B}}\dfrac{\left(\langle b|\psi\rangle\langle\psi |b\rangle\right)^{\otimes t}}{\left(\langle\psi |b\rangle\langle b|\psi\rangle\right)^{t-1}}-\int_{\phi\in\mathcal{E}^{A}_{\text{Haar}}}d\phi \left[|\phi\rangle\langle\phi|\right]^{\otimes t}\right\|_1\leq\varepsilon.
\end{eqnarray}
In the above equation, the trace norm (or Schatten $1$-norm) of an operator $W$, denoted as $\|W\|_{1}$, is defined as $\|W\|_{1}=\text{Tr}(\sqrt{W^{\dagger}W})$, which is equivalent to the sum of singular values of the operator. The two terms inside the trace norm are the $t$-th moments of the projected ensemble and the ensemble of Haar random states supported over $A$, respectively. 
The trace distance $\Delta^{(t)}_{\mathcal{E}}$ in Eq. (\ref{Deltat}) vanishes if and only if the ensemble $\mathcal{E}(|\psi\rangle, \mathcal{B})$ forms an exact $t$-design \cite{renes2004symmetric, klappenecker2005mutually, benchmarking2}. If the generator state $|\psi\rangle$ is Haar random, the trace distance $\Delta^{(t)}_{\mathcal{E}}$ exponentially converges to zero with $N_B$ for any $t\geq 1$, as demonstrated in Ref. \cite{cotler2023emergent}. In this case, the measurement basis can be arbitrary, and the behavior is generic to the choice of basis.  
On the other hand, for the generator state abiding by symmetry, the choice of measurement basis becomes crucial.

\section{Takeaways from this chapter}
This chapter outlines the basics of RMT, OTOCs, quantum Fisher information, and quantum state designs. In addition, we have provided a few essential mathematical tools from the representation theory, such as Haar integrals and Schur-Weyl duality, to help the reader understand the contents of the following chapters. For the sake of comprehensiveness, a few examples have been provided to demonstrate the applicability of the aforementioned tools. 

\newpage

\chapter{Out-of-time Ordered correlators in a non-KAM system}
\label{nonKAMchap}
\section{Introduction}
\label{section-1}
Quantum chaos is the study of quantum systems whose classical counterparts are chaotic. An overwhelming majority of such studies have considered Hamiltonian chaos in classical systems, where the celebrated Kolmogorov-Arnold-Moser (KAM) theorem is applicable and studied the signatures of classical chaos in the quantum domain. Recall that the KAM theorem states that if an integrable Hamiltonian system is subjected to a weak generic perturbation, most invariant tori in the phase space will persist with slight deformations \cite{arnold2009proof, kolmogorov1954conservation, moser1962invariant, moser1967convergent}. The chaos in such systems manifests through the gradual destruction of the invariant tori. However, the validity of the KAM theorem rests upon a few fundamental assumptions. For example, it presupposes that the unperturbed Hamiltonian is non-degenerate, meaning that when expressed in the action-angle variables, the Hamiltonian takes the form of a nonlinear function involving only the action variables while the angle variables remain cyclic. In addition, the characteristic frequency ratios must be sufficiently irrational for the phase space tori to survive the perturbations. Upon failing to meet these assumptions, the tori will likely break immediately at any arbitrary perturbation, leading to the emergence of widespread chaos \cite{poschel2009lecture}. Most realistic physical systems satisfy the KAM conditions. There is, however, a family of non-KAM systems that do not follow the usual KAM route to chaos.

In this chapter, we address the following question: How sensitive is the information scrambling to the perturbations in a quantum system whose classical counterpart is non-KAM? We tackle this question by studying the scrambling at classical resonances, which are the salient features of a non-KAM system. At the resonances, the non-KAM systems display large-scale structural changes in the presence of perturbations. In the classical phase space, the resonances are generally associated with breaking the invariant phase space tori via the creation of stable and unstable phase space manifolds. Such a mechanism results in diffusive chaos in the phase space even when the perturbation is arbitrarily small. Thus, the non-KAM systems show high sensitivity to the small changes in the system parameters at the resonances. Earlier works exemplified the dynamics of non-KAM systems by studying the systems that transit from being discontinuous to continuous depending upon the values of the appropriate parameters \cite{sankaranarayanan2001quantum, sankaranarayanan2001chaos, paul2016barrier, santhanam2022quantum}. We instead focus on a system that exhibits non-KAM behavior as a consequence of the classical degeneracy. With this objective in mind, we adopt the kicked harmonic oscillator (KHO) model as a paradigm to study information scrambling. We use out-of-time-ordered correlators (OTOCs) to diagnose and investigate the sensitivity of scrambling at the resonances and the non-resonances of the quantum KHO.

This chapter is structured as follows. In Sec. \ref{section-2}, we review some basic features of the KHO model, including resonances and non-resonances in both classical and quantum domains. We analyze the behavior of OTOCs in Sec. \ref{section-3} with a special emphasis given to the short-time dynamics in Sec. \ref{resonanceOTOC}. Thereafter, we focus on the asymptotic time dynamics of the OTOCs in Sec. \ref{non-resonanceOTOC} and show how the OTOCs distinguish the resonances from the non-resonances. In Sec. \ref{analtytical}, we analytically derive the OTOCs for a few special cases of the quantum KHO model. Then, in Sec. \ref{OTOCXP}, we provide a brief overview of the OTOCs for the phase space operators. We finally conclude this text in Sec. \ref{section-4} with a few remarks on the relevance of this work to the stability of quantum simulators.

\section{Model: Kicked harmonic oscillator}
\label{section-2}
We consider the harmonic oscillator model with a natural frequency $\omega$, subjected to periodic kicks by a nonlinear position-dependent field, having the following Hamiltonian \cite{chernikov1989symmetry, billam2009quantum, berman1991problem, kells2005quantum, afanasiev1990width, reichl2021transition, gardiner1997quantum, rechester1980calculation, ichikawa1987stochastic, ishizaki1991anomalous, daly1994classical, borgonovi1995translational, engel2007quantum}:
\begin{eqnarray}
H=\dfrac{P^2}{2m}+\dfrac{1}{2}m\omega^2X^2+ K\cos(kX)\sum_{n=-\infty}^{\infty}\delta\left(t-n\tau\right),
\end{eqnarray}
where $X$ is the position, $P$ is the momentum, $m$ is the mass of the oscillator and $k$ is the wave vector. The strength of the kicking is denoted by $K$. The time interval between two consecutive kicks is given by $\tau$. For simplicity, here and throughout this chapter, we shall take $m=k=1$. The system is parity invariant, i.e., $H(P, X)=H(-P, -X)$. 

In the action-angle coordinates, the Hamiltonian of the harmonic oscillator ($H_0$) scales linearly with the action coordinate --- the canonical transformation to the action-angle variables as given by $(X, P)= (\sqrt{2I/\omega}\cos\theta, \sqrt{2I\omega}\sin\theta)$ yields $H_0=\omega I$. Since $H_0$ is linear in $I$, the characteristic frequency of the phase space tori $(\partial H_0/\partial I=\omega)$ turns out to be independent of $I$, which violates one of the KAM assumptions that the unperturbed Hamiltonian must be non-linear in $I$ (the non-degeneracy condition). Hence, the harmonic oscillator is classified as non-KAM integrable. In the following, we briefly discuss the dynamical aspects of the KHO model in both the classical and quantum limits. 

\subsection{Classical dynamics}
\begin{figure}
\begin{center}
\includegraphics[scale=0.58]{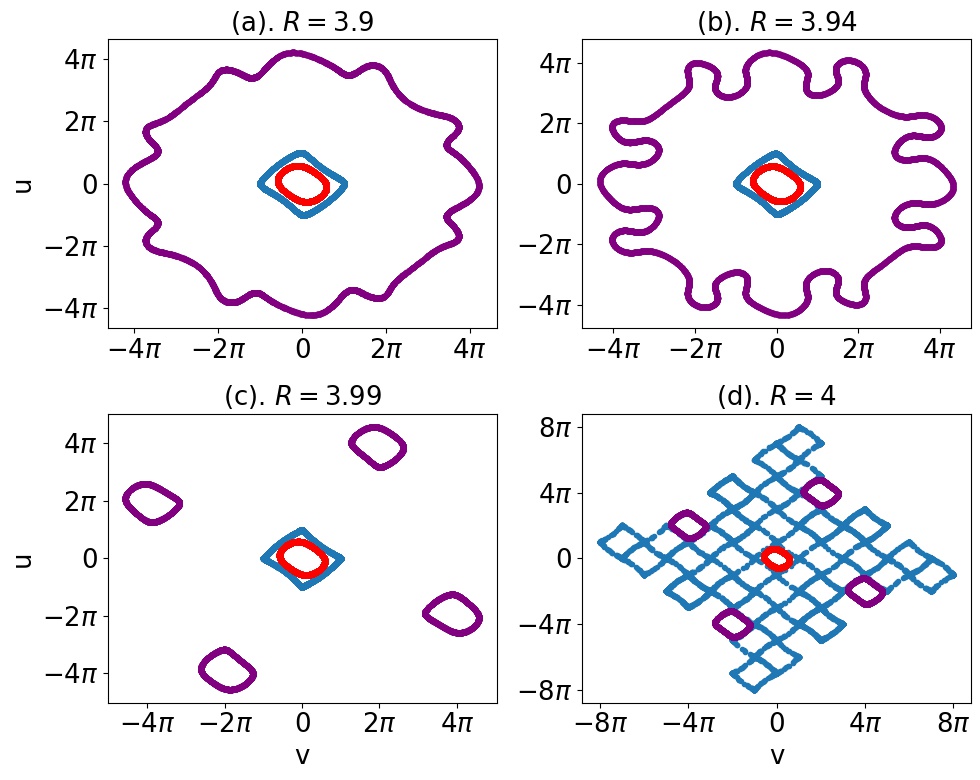}
\end{center}
\caption{\label{fig:poincare}The figure illustrates the classical phase space trajectories of the KHO model for three randomly chosen initial conditions in the vicinity of $R=4$, and each evolved for $10^4$ time steps. Here, we set $K=1$ and $\tau=1$. When $R=4$, for the given set of parameters, the phase space mainly constitutes three regions: the stochastic web, period four islands, and a period one regular island surrounding the origin. Each initial condition in the figure corresponds to one of these regions. When the system is fully non-resonant, the phase space is mostly regular. With $R$ varying from $3.9$ to $4$ in the sequential order depicted in panels (a)-(d), the trajectories become increasingly distorted until they completely break apart at the resonance. The periodic islands feature stable periodic points at their centers and unstable periodic points on their exterior. The boundaries of these islands collectively constitute the stochastic web that facilitates large-scale phase space diffusion.}
\end{figure}
In the classical limit, the dynamics of the KHO model can be visualized by the following two-dimensional map:
\begin{eqnarray}\label{dynmap}
u_{n+1} &=&(u_n+\epsilon\sin v_n)\cos(\omega\tau) +v_n \sin(\omega\tau),\nonumber\\
v_{n+1} &=&-(u_n+\epsilon\sin v_n)\sin(\omega\tau) +v_n \cos(\omega\tau),
\end{eqnarray}
where $u=P/\omega$, and $v=X$ denote scaled momentum and position variables, respectively and $\epsilon=K/\omega$ is the effective strength of perturbation. In the remainder of this chapter, we adopt the notation $\omega\tau=2\pi/R$. The system is considered classically resonant (or simply in resonance) whenever $R$ assumes integer values. For non-integer $R$, the system is non-resonant. The KHO map can be physically realized through the motion of a charged particle in a constant magnetic field, with the particle being kicked by the wave packets of an electric field that propagates perpendicular to the direction of the magnetic field \cite{zaslavsky2007physics}.

Under the rescaling of the variables $(X, P/\omega)\rightarrow (v, u)$, the unperturbed harmonic oscillator is rotationally invariant in the phase space. Besides, the kicking potential, $\cos X$, is invariant under the translations along the $X$-direction by integer multiples of $2\pi$. Thus, when $K$ is non-zero, there will be a natural competition between the translational and rotational symmetries of the phase space. This competition becomes more prominent for $R\in R_c\equiv\{1, 2, 3, 4, 6\}$ \cite{chernikov1989symmetry, afanasiev1990width}. The system is exactly solvable for $R=\{1, 2\}$. In the remaining cases, the phase space consists of periodic stochastic webs. These webs display translation and rotational invariance in the phase space. In particular, the cell structure of these webs closely resembles tessellations. For $R=4$, the web appears as a square lattice, and for $R=3$ and $6$, it is a Kagome lattice with hexagonal symmetry. In this work, we focus on the vicinity of $R=4$ for the studies of information scrambling. 

The stochastic webs resemble Arnold's diffusion in systems with more than two degrees of freedom. Their thickness is exponentially small in $\epsilon$ \cite{afanasiev1990width}. These webs arise from the exteriors of the periodic islands and mainly consist of the separatrices of the KHO. Any trajectory that sets out on the separatrices will eventually diffuse towards the infinity --- $\langle r_n\rangle\sim \epsilon\sqrt{n}$, where $r_n=\omega\sqrt{P^2+X^2}$ is the distance traversed by an average phase space trajectory after $n$ time steps. As a result, the mean energy of the system grows linearly at any finite perturbation --- $\langle E_n\rangle\sim\epsilon^2 n$. The diffusion, however, is suppressed for weak perturbations when $R$ takes non-integer values. In the latter case, the diffusion coefficient remains close to zero for small perturbations \cite{kells2005quantum}. Nevertheless, the differences between the resonances and the non-resonances become less apparent as the perturbation increases. Figure \ref{fig:poincare} shows the phase space trajectories of the KHO system for a few randomly chosen initial conditions in the vicinity of $R=4$ for $K=1$. The corresponding separatrix equation is given by $v=\pm(u+\pi)+2l\pi$, $l\in\mathbb{Z}$ \cite{afanasiev1990width}. The phase space is regular with distorted circles when the system is non-resonant. However, it can be seen from the figure that the trajectories get increasingly deformed as $R$ approaches $4$. At $R=4$, the phase space undergoes significant changes due to the creation of period-$4$ orbits. Such behavior has applications in the chaotic electron transport in semiconductor superlattices \cite{fromhold2001effects, fromhold2004chaotic}.

\subsection{Quantum dynamics}
The existence of stochastic webs in the classical phase space can have far-reaching consequences on the corresponding quantum dynamics, which we briefly discuss here to set the ground for the OTOC analysis in the next section. As the system is being kicked at periodic intervals of time, the time-evolution is given by the following Floquet operator:
\begin{equation}
\hat{U}_\tau=\exp\left\{-\dfrac{2\pi i}{R} \hat{a}^\dagger \hat{a}\right\}\exp\left\{-\dfrac{iK}{\hbar}\cos\hat{X} \right\},
\end{equation}
where $\hat{a}$ and $\hat{a}^{\dagger}$ are the annihilation and creation operators corresponding to the particle trapped in the harmonic potential, respectively. The position operator is $\hat{X}=\sqrt{\hbar/2\omega}(\hat{a}+\hat{a}^{\dagger})$. The irrelevant global phase $e^{-i\pi /R}$ is ignored. The quantum chaos in the KHO model has been extensively studied over many years \cite{berman1991problem, shepelyansky1992quantum, daly1994classical, kells2005quantum, daly1996non, engel2007quantum, kells2004dynamical}. Experimental proposals have also been put forth to realize the dynamics of quantum KHO using ion traps, and Bose-Einstein condensates \cite{gardiner1997quantum, carvalho2004web, gardiner2000nonlinear, duffy2004nonlinear, billam2009quantum}.

Recall that the classical KHO displays translational invariance whenever $R\in R_{c}$, which, in the quantum limit, translates into the existence of commuting groups of translations. In particular, under the translation invariance, the $R$-th powers of $\hat{U}_{\tau}$ commute with either one or two parameter groups of translations or displacement operators \cite{borgonovi1995translational}. As a result, the system admits extended Floquet states in the phase space, leading to an unbounded growth of the mean energy $\langle(a^\dagger a)(t)\rangle$. Moreover, these states also facilitate the dynamical tunneling of the localized coherent states \cite{carvalho2004web}. As we shall see later, the translation invariance is also crucial to obtaining certain analytical expressions of the OTOCs, which are otherwise intractable.

For the non-integer $R$ values, previous studies argued that quantum localization takes place, and energy growth will be stopped after some time \cite{borgonovi1995translational}. More specifically, when $R$ is an irrational number, the quantum KHO model can be mapped to a tight-binding approximation in the limit of $K\lesssim \hbar\pi$, which explains the localization of the quantum dynamics \cite{frasca1997quantum, kells2005quantum}. The effects of localization are also prevalent in the scrambling dynamics. We will explore this in more detail in the coming sections.

\section{ Scrambling in quantum KHO model}
\label{section-3}
In this section, we address the central goal of this chapter, which is to contrast the dynamics of information scrambling at resonances with that of non-resonances of the quantum KHO. The OTOCs are natural candidates to quantify the scrambling. While the OTOCs have been extensively studied in finite-dimensional systems, including those with time dependence \cite{prakash2020scrambling, sreeram2021out, borgonovi2019timescales, shen2017out, zamani2022out}, the studies on continuous variable systems have primarily focused on time independent settings \cite{zhuang2019scrambling, hashimoto2020exponential}. However, the system considered in this work is both continuous variable and time-dependent, leading to possible unbounded orbits in phase space and consequently to unbounded growth of OTOCs.

To study the OTOCs in the continuous variable systems, an appropriate choice of initial operators would be the canonical pair of position ($\hat{X}$) and momentum ($\hat{P}$) operators. However, for the reasons that become clear later, we instead consider the bosonic ladder operators $\hat{a}$ and $\hat{a}^{\dagger}$ as the initial operators. For an arbitrary state $|\psi\rangle$, we are interested in evaluating the quantity
\begin{eqnarray}\label{sf}
C_{|\psi\rangle,\thinspace \hat{a}\hat{a}^{\dagger}}(t)=\langle\psi| [\hat{a}(t), \hat{a}^\dagger]^\dagger [\hat{a}(t), \hat{a}^\dagger]|\psi\rangle,  
\end{eqnarray}
where $\hat{a}(t)=\hat{U}_{\tau}^{\dagger t} \hat{a} \hat{U}_{\tau}^t$ is the Heisenberg evolution of $\hat{a}$ under the dynamics of KHO and $t$ is the total number of time steps. Before analysing $C_{|\psi\rangle,\thinspace \hat{a}\hat{a}^{\dagger}}(t)$, it is helpful first to examine how $\hat{a}(t)$ depends on $t$. A single application of $\hat{U}_\tau$ on $\hat{a}$ gives
\begin{eqnarray}
\hat{a}(1)=\hat{U}^{\dagger}_{\tau}\hat{a}\hat{U}_{\tau}=e^{-2\pi i/R}\left[\hat{a}+\dfrac{iK}{\sqrt{2\hbar\omega}}\sin\hat{X}\right].
\end{eqnarray}
After $t$ recursive applications of $\hat{U}_{\tau}$, the Heisenberg evolution of $\hat{a}$ reads as
\begin{eqnarray}\label{bosonic-evolution1}
\hat{a}(t)e^{2\pi i t/R}=\hat{a}+\dfrac{iK}{\sqrt{2\hbar\omega}}\sum_{j=0}^{t-1}e^{2\pi ij/R}\sin\hat{X}(j),
\end{eqnarray}
where $\hat{X}(j)=\hat{U}^{\dagger j}_{\tau}\hat{X}\hat{U}^j_{\tau}$ and $\sin\hat{X}=[e^{i\hat{X}}-e^{-i\hat{X}}]/2i$. From Eq. (\ref{bosonic-evolution1}), we make the following immediate observations: (i) the total number of terms on the right-hand side scales linearly with $t$ and (ii) the operator $\sin\hat{X}(j)$ is always bounded, i.e., $\|\sin\hat{X}(j)|\psi\rangle\|\leq|\|\psi\rangle\|$ for any $|\psi\rangle$. As a result, the asymptotic growth of  $\hat{a}(t)$ can be at most linear. Long-time dynamics of several other quantities, such as mean energy growth, follow directly from this result. For instance, in an arbitrary Fock state $|n\rangle$, one can use Eq. (\ref{bosonic-evolution1}) to show that the mean energy is always bounded above by a quadratic function of $t$, i.e.,
\begin{equation}
\langle n|(a^\dagger a)(t)|n\rangle\leq n+\sqrt{\dfrac{2n}{\hbar\omega}}Kt+\dfrac{K^2t^2}{2\hbar\omega}.
\end{equation}

In a similar way, we use Eq. (\ref{bosonic-evolution1}) to learn the behavior of $C_{|\psi\rangle,\thinspace \hat{a}\hat{a}^{\dagger}}(t)$. To do so, we take
\begin{eqnarray}\label{com}
\left[\hat{a}(t), \hat{a}^{\dagger}\right]e^{2\pi it/R}=1+\dfrac{iK}{\sqrt{2\hbar\omega}}\sum_{j=0}^{t-1}e^{2\pi ij/R}\left[\sin\hat{X}(j), \hat{a}^{\dagger}\right].
\end{eqnarray}
For $K=0$, the system is just an integrable harmonic oscillator and the OTOC remains a constant in any given state $|\psi\rangle$ for all $t\geq 0$ --- $C_{|\psi\rangle,\thinspace \hat{a}\hat{a}^{\dagger}}(t)=1$. This means that the initial operator $\hat{a}$ remains fully regular, i.e., the operator $\hat{a}(t)$ retains the diagonality in the coherent state basis. For $K\neq 0$, as the system evolves, the operator will start to scramble into the operator Hilbert space via the mixing of eigenstates, which is hinted at by the positive growth of the OTOC. For example, after one time step, we can calculate explicitly that
\begin{equation}\label{first_step}
C_{|\psi\rangle,\thinspace \hat{a}\hat{a}^{\dagger}}(t=1)=1+\dfrac{K^2}{4\omega^2}\langle\psi|\cos^2\hat{X}|\psi\rangle\geq 1\text{ (for all }|\psi\rangle).
\end{equation}
For $t>1$, in general, a closed form expression for $C_{|\psi\rangle,\thinspace \hat{a}\hat{a}^{\dagger}}(t)$ is out of reach. Hence, we resort to numerical methods to probe the OTOCs. To be precise, we numerically compute the OTOCs by considering a weak ($K\lesssim 1$) and a moderately strong ($K\sim O(1)$) kicking strength under both resonance ($R=4$) and non-resonance ($R=3.9$) conditions. We choose two initial quantum states: the vacuum state $|0\rangle$ and a coherent state $|\alpha\rangle$. In what follows, we shall first examine the early-time behavior of the OTOCs, then proceed to analyze the long-time dynamics.
\begin{figure*}
\includegraphics[width=\textwidth,height=6.5cm]{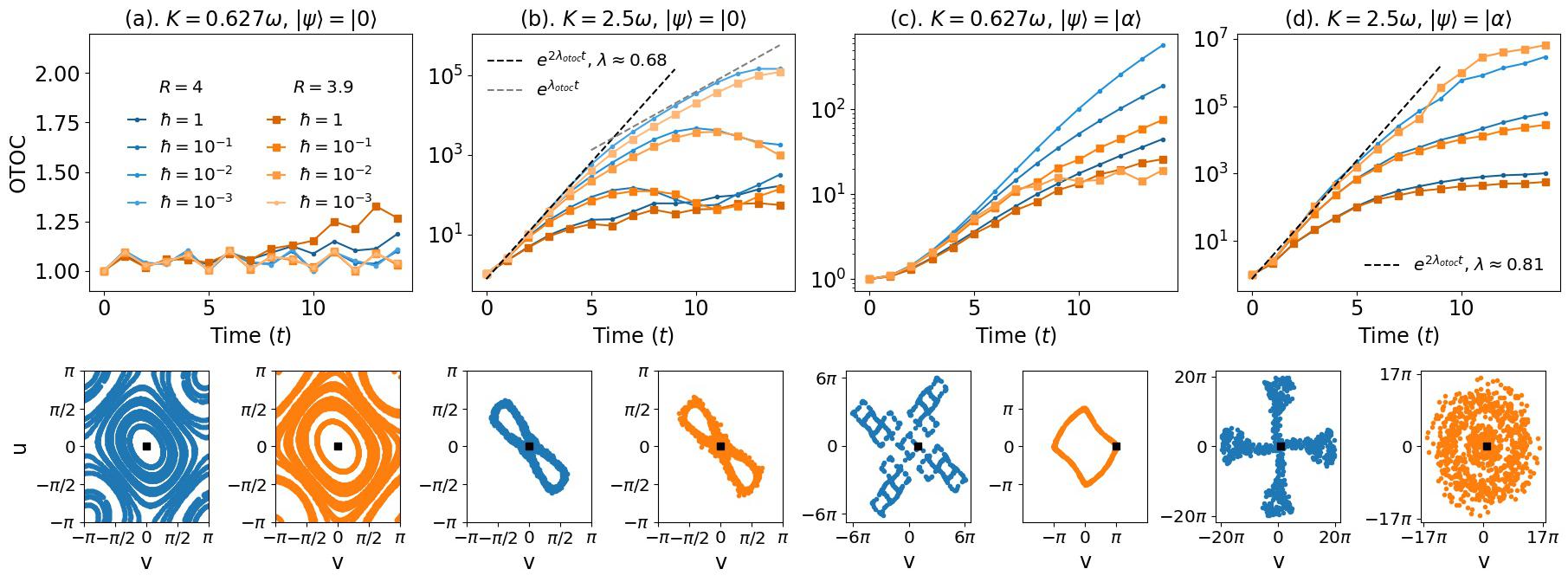}
\caption{\label{fig:protoc} The figure illustrates the early time OTOC behavior near $R=4$ in two different perturbative regimes of the quantum KHO, considering two different initial states, namely, the vacuum state $|0\rangle$ and a coherent state $|\alpha\rangle$, where $\Re(\alpha)=\sqrt{\omega/2\hbar}\langle\hat{X}\rangle$ and $\Im(\alpha)=\langle\hat{P}\rangle/\sqrt{2\omega\hbar}$, centered at an unstable period four point given by $(\langle\hat{P}\rangle, \langle\hat{X}\rangle)=(0, \pi)$. The kicking period is fixed at $\tau=1$, which automatically sets $\omega=2\pi/R$. While the top panels demonstrate the OTOC calculations, the bottom panels show the classical phase space dynamics in the vicinity of the corresponding initial conditions. The square-shaped black dot in the bottom panels represents the corresponding classical initial condition. Also, note the color coding --- we use blue to denote the resonant case ($R=4$) and orange for the non-resonant case ($R=3.9$). The darker colors denote the deep quantum regime, and the lighter colors indicate the semiclassical domain. (a)-(b) The system is initialized in the vacuum state $|0\rangle$. In (a), the kicking strength is fixed at $K=0.627\omega$, corresponding to the weak perturbative regime. Moderately strong perturbation is considered in (b). The plot shows early-time dynamics on the semi-log scale. Black and grey colored dashed lines are plotted here to illustrate the two-step early-time exponential growth. Furthermore, we find no visible differences between the cases of $R=4$ and $R=3.9$. (c)-(d) Here, we repeat the same calculations as before by replacing the vacuum state with the coherent state [see the main text for details].}
\end{figure*}

\subsection{Early-time dynamics}
\label{resonanceOTOC}
Classically, the vacuum state corresponds to a fixed point of the dynamical map given in Eq. (\ref{dynmap}). In contrast, the coherent state is chosen to be centered at a point on the classical stochastic web associated with the resonant case $R=4$, satisfying the equation $v=\pm(u+\pi)+2l\pi$, $l\in\mathbb{Z}$. In the phase space, these two initial conditions give rise to different dynamics altogether. Thus, it is essential to see if the differences are also reflected in the behavior of the OTOCs over short periods.
\subsubsection{Vacuum state (fixed point) OTOC}
We first discuss the resonant scenario and then contrast it with the non-resonant one. Due to the correspondence principle, the early time growth of the vacuum state OTOCs will have a close correspondence with the fixed point behavior of $(u, v)=(0, 0)$ in the classical phase space. To illustrate, we consider the Jacobian matrix of the classical map evaluated at $(0, 0)$. 
\begin{equation}\label{jacob}
J_{R=4}=
\left.\begin{bmatrix}
0 & 1 \\
-1 & -\dfrac{K\cos{v_n}}{\omega}
\end{bmatrix} \right|_{u_n=0, v_n=0}=
\begin{bmatrix}
0 & 1 \\
-1 & -\dfrac{K}{\omega}
\end{bmatrix},
\end{equation}
whose eigenvalues are
\begin{math}
\gamma_\pm =(-K\pm\sqrt{K^2-4\omega^2})/2\omega .
\end{math}
In this work, we always assume that $K$ is positive. Therefore, for $K<2\omega$, the eigenvalues are a pair of complex conjugates with unit modulus, i.e.,  $|\gamma_{\pm}|=1$, which implies that the fixed point is stable. The phase space trajectories in this region will wrap around $(0, 0)$ in closed elliptically-shaped orbits. Consequently, the vacuum state OTOCs are expected to remain stagnant until the Ehrenfest time due to the vanishing Lyapunov exponent. The map then undergoes a bifurcation at $K=2\omega$ with the emergence of two other stable fixed points. For $K>2\omega$, the point $(0, 0)$ becomes a saddle --- in this case, the trajectories with the initial conditions located slightly off $(0, 0)$ will diverge exponentially from one another with the saddle point exponent $\lambda_s$ given by $\sim\text{max}\{\log(|\gamma_+|), \log(|\gamma_-|)\}$. Accordingly, the OTOCs in the vacuum state $|0\rangle$ or any coherent state $|\alpha\rangle$ that lives close by are expected to display short-time exponential growth whenever $K>2\omega$. 

Figure \ref{fig:protoc}a and \ref{fig:protoc}b illustrate the vacuum state OTOCs for two different kicking strengths, namely, $K=0.627\omega$ and $K=2.5\omega$ in the vicinity of $R=4$. All the blue curves (with point markers) correspond to $R=4$, and the orange-colored curves (with square markers) represent the case of $R=3.9$. First, when $R=4$ and $K$ is small, the function $C_{|0\rangle,\thinspace \hat{a}\hat{a}^{\dagger}}(t)$ shows an initial stagnant behavior, which is demonstrated in Fig. \ref{fig:protoc}a. The lack of growth at initial times is reminiscent of the classical elliptic stability of the point $(0, 0)$. On the other hand, for $K=2.5\omega$, we observe the initial growth to be quadratic in the deep quantum regime ($\hbar=1$) [see Fig. \ref{fig:protoc}b]. This is because, in the deep quantum regime, the Ehrenfest time $\tau_{\text{EF}}\sim\ln(1/\hbar_{\text{eff}})/2\lambda_s$ \cite{schubert2012wave} remains very small, which makes it hard to observe the exponential growth. However, one can witness genuine exponential growth by slowly moving towards the semiclassical regime. This can be done by tuning the Planck constant $\hbar$. While for $K=0.627\omega$, the semiclassical OTOCs under the resonance remain stagnant for much longer as they should be, the case of $K=2.5\omega$ shows a clear exponential scaling. In the latter case, when $\hbar=10^{-3}$, the quantum Lyapunov exponent extracted from the OTOC through an exponential fitting of the first six data points is $\lambda_{\text{otoc}}\approx 0.68$. This value aligns well with the classical saddle point exponent of the origin, $\lambda_s=\ln(2)\approx 0.693$. In Ref. \cite{steinhuber2023dynamical}, it has been observed that in locally hyperbolic systems, the OTOCs exhibit a two-step early-time exponential growth. Due to the hyperbolicity of $(0, 0)$, a similar behavior is expected in the vacuum state OTOC for $K > 2\omega$. We indeed observe in Fig. \ref{fig:protoc}b that the initial growth of $\sim e^{2\lambda_{\text{otoc}}t}$ is followed by a subsequent regime scaling as $\sim e^{\lambda_{\text{otoc}}t}$.

To strengthen the correspondence between the classical and the quantum exponents, we analytically extract the quantum exponent from the vacuum state OTOC in the semiclassical limit ($\hbar\rightarrow 0$). Let $l \in \mathbb{R}^+$ and $l>1$, then reducing $\hbar$ by the factor of $l$ is equivalent to increasing both $K$ and $\omega$ in the second term of the Floquet operator by the same factor $l$ while keeping $\hbar$ fixed. This, however, does not affect the first term $\exp\{-i2\pi/R \hat{a}^{\dagger}\hat{a}\}$. Since $K$ is now increased, one plausible effect would be enhanced scrambling. Then again, for small $t$, with the initial state $|0\rangle$, it holds that $\cos(\hat{X}/\sqrt{l})\approx 1-\hat{X}^2/2l$ for $l\gg 1$. Therefore, we have $e^{-iKl\cos(\hat{X}/\sqrt{l})}\approx e^{-iKl}e^{iK\hat{X}^2/2}$. The phase term $e^{-iKc}$ can be ignored here. At $R=4$, under the modified Floquet evolution, the bosonic operators evolve according to the following transformation:
\begin{eqnarray}
\begin{bmatrix} \hat{a}(t) \\ \hat{a}^{\dagger}(t) \end{bmatrix}
 =
\begin{pmatrix}
   y-i & y \\
   y & y+i
\end{pmatrix}^{t}
\begin{bmatrix} \hat{a} \\ \hat{a}^{\dagger} \end{bmatrix}, 
\end{eqnarray}
where $y=K/2\omega$. 
The eigenvalues of the above linear transformation are given by $\lambda_{\pm}=(-K\pm\sqrt{K^2-4\omega^2})/2\omega$, which also turn out to be the eigenvalues of the Jacobian in Eq. (\ref{jacob}). Hence the semiclassical vacuum state OTOC ($C_{|0\rangle ,\hat{a}\hat{a}^{\dagger}}(t)$) grows exponentially in time with $\text{max}\{\log(|\lambda_+|), \log(|\lambda_-|)\}$ being the rate of growth. For $K>2\omega$, the growth remains exponential. For $K=2\omega$, the classical bifurcation point, $|\lambda_{\pm}|=1$, leading to the stagnant behavior. When $K$ is below $2\omega$, oscillatory behavior is expected. This establishes a strong classical-quantum correspondence in the limit of $\hbar\rightarrow 0$. At this point, the time scale for which the above approximation remains valid is unclear. Nevertheless, given the nature of classical fixed point behavior, we anticipate its validity within the Ehrenfest regime.

Interestingly, for $R=3.9$, we observe that the short-time growth of the OTOCs almost coincides with that of the resonant counterpart, regardless of the strength of kicking. This can be intuitively understood by examining the fixed point nature of ($0, 0$) under the non-resonance condition. The general condition for the bifurcation of $(0, 0)$ is given by
\begin{equation}\label{bif}
\dfrac{K}{\omega}=\dfrac{2\cos\omega\tau\pm 2}{\sin\omega\tau}, \text{ where }\omega\tau=\dfrac{2\pi}{R}.
\end{equation}
Accordingly, for $R=3.9$, the bifurcation point is located at $K= 1.921\omega$, slightly off that of $R=4$. Therefore, for $K<1.921\omega$, the point $(0, 0)$ remains stable. Also, when the perturbation is more than what is required for the bifurcation, the origin becomes a saddle with the emergence of two other stable fixed points. Thus, the bifurcation mechanism resembles that of the resonance, with the only difference being a slight change in the bifurcation location along the parameter axis of $K$. Consequently, at any given perturbation strength, the phase trajectories in the vicinity of the origin do not acquire any major changes as $R$ is moved from $4$ to $3.9$. The phase space plots presented in Fig. \ref{fig:protoc}a support this conclusion. Therefore, we expect the short-time growth of the vacuum state OTOCs for $R=3.9$ and $R=4$ to be qualitatively identical, irrespective of other system parameters, which is confirmed by the numerical results. Similar to the case of $R=4$, the two-step exponential growth is also observed for $R=3.9$. This behavior is evident in Fig. \ref{fig:protoc}b, where the blue and orange curves exhibit nearly identical growth rates.

\subsubsection{Coherent state (on the web) OTOCs}
As for the coherent state OTOCs, at $R=4$, the short-time dynamics rely mainly on the nature of the classical stochastic web. Recall that for $K<2\omega$, the origin ($0, 0$) is the only classical fixed point. There exist, however, an infinite number of period four points, each located at $(u, v)=(p\pi, q\pi)$ with $p, q\in\mathbb{Z}$ for any arbitrary small $K$ \cite{kells2005quantum}. The period four orbits these points generate are stable if $p+q$ is even and unstable otherwise. Thus, the presence of alternate stable and unstable manifolds causes the movement of any trajectory set out from the neighborhood of an unstable orbit to be highly complex. Such a complex motion persists even in the limit of small $k$, where the Lyapunov exponent approaches zero --- for example, see the phase space trajectories in Fig. \ref{fig:poincare}d. We are interested in seeing if such behavior is also reflected in coherent state OTOCs. Here, for the OTOC calculations, we focus on a specific coherent state with the mean coordinates located at $(\langle\hat{P}\rangle, \langle\hat{X}\rangle)=(0, \pi)$, which corresponds to an unstable period four point. 

The results are plotted in Fig. \ref{fig:protoc}c and \ref{fig:protoc}d for the same perturbation strengths as before. We first discuss the case of resonance. When $K=0.627\omega$, the OTOC in the deep quantum regime exhibits an approximate quadratic growth. In this case, the corresponding classical phase space trajectories diverge at a rate given by $\lambda_{\text{cl}}\approx 0.06$, the maximum Lyapunov exponent associated with the unstable period four point $(0, \pi)$. Due to the smallness of the exponent, it's difficult to observe the exponential growth even when $\hbar$ is small. On the other hand, when $K=2.5\omega$, the corresponding classical exponent is given by $\lambda_{\text{cl}}\approx 0.9$. In this case, the OTOC in the deep quantum regime ($\hbar =1$) still displays an initial algebraic growth. However, upon reducing $\hbar$, a clean exponential growth can be observed. For $\hbar=10^{-2}$, the rate of growth of the OTOC is extracted to be $\lambda_{\text{otoc}}\approx 0.81$, which agrees closely with the classical exponent, highlighting a good quantum-classical correspondence. The corresponding plots are shown in Fig. \ref{fig:protoc}d. We also notice that the coherent state OTOCs for smaller $\hbar$ overshoot those for larger $\hbar$ in both panels \ref{fig:protoc}c and \ref{fig:protoc}d. Moreover, since the stochastic webs fill a significant portion of the phase space at resonances, it is expected that nearly all coherent states overlapping with these webs will exhibit similar behavior as shown in Fig. \ref{fig:protoc}c and \ref{fig:protoc}d.

\begin{figure*}
\includegraphics[width=\textwidth,height=4.cm]{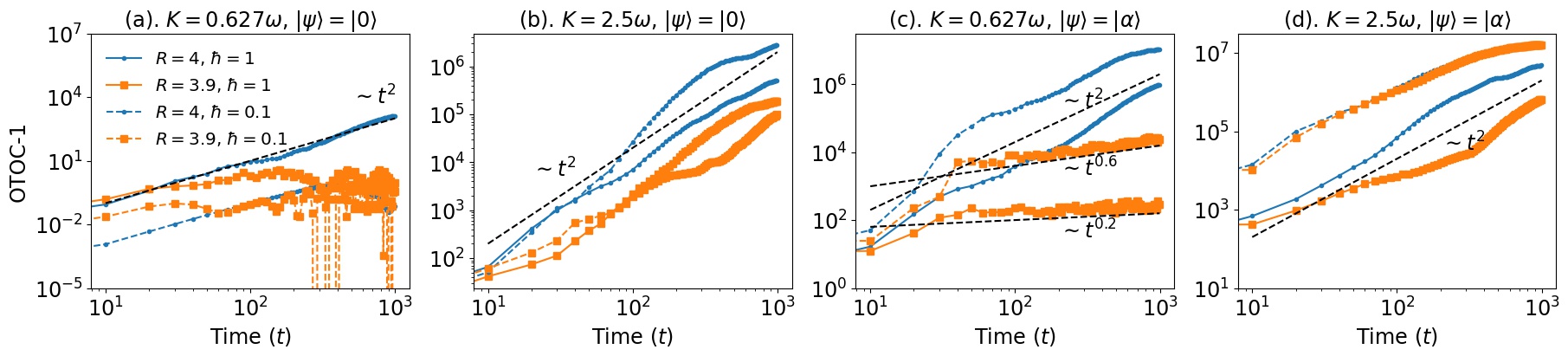}
\caption{\label{fig:protoc_ext} The figure illustrates the OTOC growth over a period of $10^3$ time steps for both the initial states $|0\rangle$ and $|\alpha\rangle$. We consider the interval between two successive data points $10\tau$ to avoid computational overhead due to the unbounded nature of the system. The parameters across the panels are fixed to be the same as those in Fig. \ref{fig:protoc}. The black-colored dashed lines following various scalings are drawn to contrast the growth under the resonance and non-resonance conditions. In the strong perturbative regime ($K=2.5\omega$), the quadratic growth remains the dominant behavior for all the parameters as shown in \ref{fig:protoc_ext}b and \ref{fig:protoc_ext}d. The differences between the resonant and non-resonant cases become apparent when $K$ is small [see the main text for details. Also see Fig. \ref{fig:protoc} for the corresponding short-time growth]. Note that as the time interval is large, the oscillations may not be fully visible in Fig. \ref{fig:protoc}a.}
\end{figure*}

To contrast the above results with the non-resonant case, we repeat the OTOC calculations by setting $R=3.9$ for the same coherent state $(\langle \hat{P}\rangle, \langle\hat{X}\rangle)=(0, \pi)$. First, under small perturbations, the classical phase space remains primarily regular. Hence, the initial time growth of OTOCs is expected not to contain any exponential scaling. We confirm this from Fig. \ref{fig:protoc}c, where the initial growth (as shown by orange curves with square markers) for all the cases of $\hbar$ is algebraic and shows no exponential growth. At initial times, the orange curves nearly coincide with the blue curves. However, a clear departure from one another can be observed as time progresses. The algebraic growth can be reasoned with the following argument: Although the system is in non-resonance, the closeness of $R$ to the resonance condition suggests a competition between the diffusive and regular dynamics. The diffusive dynamics usually dominate the early-time behavior. Hence, we see a coincidence between the initial growths of resonant and non-resonant cases. On the other hand, when $K$ is sufficiently large, the classical phase space trajectories display chaotic diffusion. One might expect exponential scaling over a short time in this case. The numerics indicate that the OTOCs in the deep quantum regime still display algebraic growth. Nevertheless, the exponential growth appears in the semiclassical regime [see Fig. \ref{fig:protoc}d], where the orange curves show identical growth as the blue curves. Recall that reducing $\hbar$ leads to a competition between the enhanced scrambling and the oscillatory behavior due to an effective increase in $K$ and $\omega$. Here, for $\hbar=0.1$, the OTOC grows at a faster rate than the case of $\hbar=1$ for both values of $K$. There is, however, no reason to expect similar behavior as $\hbar$ further varies. For instance, in Fig. \ref{fig:protoc}c, the growth for $\hbar=0.01$, as shown by the lighter orange curve, seems to be suppressed after some initial time. Intuitively, the correspondence principle implies oscillatory behavior in the OTOC as $\hbar\rightarrow 0$ as the phase space is stable under non-resonances. Hence, for very small $\hbar$, the enhanced scrambling effect is suppressed, and the oscillatory behavior takes over. When the perturbation is large, the semiclassical OTOCs outperform those in the deep quantum regime [see Fig. \ref{fig:protoc}d].

\subsection{Long-time dynamics}
\label{non-resonanceOTOC}
Although the OTOCs do not appear to distinguish the non-resonances from the resonances over a short time, the long-time dynamics produce significant differences between the two. These differences are more pronounced when the kicking strength is small ($K\lesssim 1$). Let us first examine the resonant cases to see how long-time dynamics stand out. Recall from Eq. (\ref{com}) that the total number of terms in the equation grows linearly with time. At each time step, the right-hand side grows by one more term, which is given by $[\sin\hat{X}(j), \hat{a}^{\dagger}]e^{2\pi ij/R}$. The resonances will then lead to a \textit{coherent} addition of all the terms, thus rendering the asymptotic growth of the OTOC a quadratic function of time, for a typical initial state --- $C(t)\sim K^2t^2$. When $K$ is too small ($K\ll 1$), the operator $\sin\hat{X}(j)$ grows only by a negligible amount. Then, the OTOC can be explicitly shown to exhibit quadratic growth by ignoring the terms of order $O(K^3)$ and higher in the time-evolved operator $\hat{a}(t)$. Refer to Appendix \ref{appendix:b} for more details, where we give an explicit derivation for the same. However, since $\sin\hat{X}(j)$ is bounded, the quadratic growth is expected to persist even when $K$ is large.

To verify numerically, we refer to Fig. \ref{fig:protoc_ext}, where the long-time dynamics have been illustrated for $\sim 10^3$ time steps. Due to the unbounded nature of the system, a sufficiently large Hilbert space is needed to perform these numerical simulations. Here, we analyze the resonant cases in the figure. The figure demonstrates that in the deep quantum regime ($\hbar=1$), for both the initial states $|0\rangle$ and $|\alpha\rangle$, the long-time dynamics always scale quadratically as long as the resonance condition is satisfied regardless of the strength of perturbation. While the same result seems to hold in the semiclassical limit for the coherent state, the vacuum state OTOC shows anomalous oscillatory behavior as shown in Fig. \ref{fig:protoc_ext}a. This behavior is not surprising as it is expected due to the classical-quantum correspondence. The classical stability of $(0, 0)$ implies a longer $t_{\text{EF}}$ for the vacuum state OTOC in the semiclassical limit compared to other cases in the figure. The numerics suggest that this behavior persists beyond $10^3$ time steps. The quadratic scaling may emerge as the OTOC picks up non-trivial $\hbar$ effects. Moreover, for large $K$, the quadratic growth persists for both $\hbar=1$ and $\hbar=0.1$ as shown in Fig. \ref{fig:protoc_ext}b and \ref{fig:protoc_ext}d. It is worth noting that the degree of scrambling, as quantified by the magnitude of the OTOC, always depends on the specific choice of the state vector despite both states exhibiting long-time quadratic growth. For example, the coherent states with the mean coordinates located on the stochastic web delocalize quickly in the phase space when acted upon by the Floquet unitary $\hat{U}_{\tau}$. In these states, the initial operators are relatively more prone to get scrambled compared to the vacuum state $|0\rangle$.

Contrary to the resonant cases, the non-resonant $R$ comprises a dense set of rational numbers (excluding integers) with measure zero and irrational numbers with measure one. When $R$ is irrational, the equidistribution property implies that in the long-time limit, the phases $\{e^{2\pi ij/R}\}_{j=0}^{t}$ will tend to behave as though they were drawn uniformly at random from the unit circle in the complex plane. This leads to \textit{incoherent} summations, causing various terms in Eq. (\ref{com}) to interfere destructively. Therefore, we expect that the growth of the OTOC will be suppressed. As we shall show in the next section, the irrationality of $R$, on average, induces linear growth in the OTOC [see also Appendix \ref{appendix:b} for more details]. Let us also point out that the OTOC growth in a typical pure state may generally vary from the linear behavior. Moreover, the rational $R$ values also induce subdued growth, though not as much as the irrationals. The results demonstrated in Fig. \ref{fig:protoc_ext} indicate that the growth is suppressed for $R=3.9$ in all the cases considered (shown in the orange-colored curves). When $K$ is small, while the vacuum state OTOCs oscillate as shown in Fig. \ref{fig:protoc_ext}a, the coherent state OTOCs follow power laws given by $\sim t^{0.2}$ and $\sim t^{0.6}$ respectively for $\hbar=1$ and $\hbar=0.1$ [see Fig. \ref{fig:protoc_ext}c]. As $K$ becomes large, the OTOCs at non-resonances will also grow quadratically as the system becomes fully chaotic.    
\section{Analytically Tractable cases}
\label{analtytical}
In the preceding section, our analysis primarily relied on numerical investigations to compare the operator growth in both the resonant and the non-resonant scenarios. Given that the kicking potential is highly non-linear, the exact solutions of the quantum KHO are generally intractable. There is, however, a narrow window of opportunity involving a few special cases, such as the symmetries and \textit{quantum resonances} that allow for the analytical treatment of the OTOCs. In this section, we exploit the translation invariance of the quantum  KHO by considering a few selective resonant cases to obtain the OTOCs as explicit functions of $t$. In particular, we consider the trivial cases, $R=1$ and $2$, followed by a more intricate case of $R=4$. Additionally, we provide an analytical approximation for the OTOC averaged over the space of pure states, which we shall refer to as the average state-OTOC, by considering small $K$ ($K\ll 1$) and an irrational $R$. 
\subsection{Case-1: $R=1$ and $2$}
Here, we focus on the commutator function $C_{|\psi\rangle,\thinspace \hat{a}\hat{a}^{\dagger}}(t)$ for the case of $R=2$. We do not consider $R=1$ separately since the resulting expressions for the OTOCs are the same in both cases. In the present case, stochastic webs do not constitute the classical phase space due to the exact solvability of the classical map, which is given by
\begin{eqnarray}\label{cla-R2}
v_{n} &=&(-1)^nv_0\nonumber\\ 
u_{n} &=&(-1)^n\left[u_0+n\varepsilon\sin v_0\right].
\end{eqnarray} 
It is worth noting that the time-evolved operators $\hat{X}(t)$ and $\hat{P}(t)$ admit the same functional form as the classical variables $v_n$ and $u_n$, indicating a persistent classical-quantum correspondence. 

When $R=2$, the phase operator $e^{-(2\pi i/R) \hat{a}^{\dagger}\hat{a}}$ possesses alternating $+1$ and $-1$ on its diagonal. This enables us to express $\hat{a}(t)$ explicitly as 
\begin{equation}
(-1)^t\hat{a}(t)=\hat{a}+\dfrac{iKt}{\sqrt{2\hbar\omega}}\sin\hat{X},
\end{equation}
which shows a clear linear dependence on $t$ accompanied by an oscillatory behavior originating from the term $(-1)^t$. By noting that $[\sin\hat{X}, \hat{a}^{\dagger}]=\sqrt{\hbar/2\omega}\cos\hat{X}$, we finally obtain
\begin{eqnarray}\label{otocatR2}
C_{|\psi\rangle,\thinspace \hat{a}\hat{a}^{\dagger}}(t)=1+\dfrac{K^2t^2}{4\omega^2}\langle\psi|\cos^2\hat{X}|\psi\rangle,
\end{eqnarray}
which is a quadratic function of $t$. The oscillating term $(-1)^t$ disappears as the OTOCs involve absolute squares of the commutators. Moreover, the commutator function always retains quadratic growth except when $\langle\psi|\cos^2\hat{X}|\psi\rangle=0$. As an example, we here take $|\psi\rangle=|0\rangle$, then $\langle 0|\cos^2\hat{X}|0\rangle=e^{-\hbar/2\omega}\cosh(\hbar/2\omega)$. We can therefore write
\begin{equation}
C_{|0\rangle,\thinspace \hat{a}\hat{a}^{\dagger}}(t)=1+\dfrac{K^2t^2}{4\omega^2}e^{-\hbar/2\omega}\cosh\left(\dfrac{\hbar}{2\omega}\right).
\end{equation}

To obtain the OTOC as a state-independent quantity, one can average $C_{|\psi\rangle,\thinspace \hat{a}\hat{a}^{\dagger}}(t)$ over the space of pure states or a suitable set of vectors forming a continuous variable $1$-design. In continuous variable systems, the Fock state and the coherent state bases are known to form $1$-designs \cite{blume2014curious, iosue2022continuous}. Note that in finite-dimensional systems, the OTOCs are often evaluated in the maximally mixed states, equivalent to averaging over the Haar random pure states. On the contrary, the notion of maximally mixed states in continuous variable systems is not well-defined due to the diverging traces induced by the infinite-dimensional Hilbert spaces. Despite this limitation, the averaging procedure can still provide insights into the nature of scrambling in a typical pure state.

For $R=2$, one can readily evaluate the average state-OTOC as follows:
\begin{eqnarray}
\overline{C_{|\psi\rangle,\thinspace \hat{a}\hat{a}^{\dagger}}}(t)=1+\dfrac{K^2t^2}{4\omega^2}\overline{\cos^2\hat{X}}.
\end{eqnarray}
This expression can be simplified by recalling that $\cos\hat{X}$ is related to the displacement operators.
\begin{equation}
\cos\hat{X}=\dfrac{1}{2}\left[D\left(i\sqrt{\dfrac{\hbar}{2\omega}}\right)+D\left(-i\sqrt{\dfrac{\hbar}{2\omega}}\right)\right].
\end{equation}
From Eq. (\ref{dispavg}) and (\ref{dispavgfinal}), it then follows that
\begin{eqnarray}
\overline{C_{|\psi\rangle,\thinspace \hat{a}\hat{a}^{\dagger}}}(t)&=&1+\dfrac{K^2t^2}{16\omega^2}\left[\overline{D\left(i\sqrt{\dfrac{2\hbar}{\omega}}\right)+D\left(-i\sqrt{\dfrac{2\hbar}{\omega}}\right)+2D(0)}\right]\nonumber\\
&=&1+\dfrac{K^2t^2}{8\omega^2}.
\end{eqnarray}
In the second equality, we used the result from Appendix \ref{appendix:a} that $\overline{D(\beta)}=\delta_{\Re(\beta), 0}\delta_{\Im(\beta)}$ for any $\beta\in\mathbb{C}$. We also assumed a finite non-zero value for $\hbar$. The above expression for $\overline{C_{|\psi\rangle,\thinspace \hat{a}\hat{a}^{\dagger}}}(t)$ is exact and valid for all kicking strengths.

\subsection{Case-2: $R=4$}
Here, we set $R=4$ and derive the analytical expression for the OTOC owing to certain restrictions on $\omega$. Our analysis can also be extended to the other cases when $R=3$ and $6$. Recall from Ref. \cite{borgonovi1995translational} and Section \ref{section-2} that under the translational invariance, $R$-th powers of the unitary evolution, $U_{\tau}^R$, commute with either one-parameter or two-parameter groups of translations depending on the values $\omega$ assumes. Since $\hat{a}(t)$ contains the terms $\sin\hat{X}(j)$ for $j=0, ..., t-1$, we are interested in finding a suitable $\omega$ that ensures the commutation relation $[\sin{\hat{X}}, \hat{U}^4]=0$. In order to proceed, it is useful to write $U^4_{\tau}$ in a compact form as follows:
\begin{equation}
\hat{U}^4_{\tau}=\left(e^{-i(K/\hbar)\cos\left(\hat{P}/\omega\right)}e^{-i(K/\hbar)\cos{\hat{X}}}\right)^2,
\end{equation}
where the Floquet operator within the parentheses can be identified with the kicked Haper model [\cite{kells2005eigensolutions}]. Since $\sin\hat{X}$ always commutes with $\cos\hat{X}$, we now only require to find $\omega$, that allows for the commutation between $\sin\hat{X}$ and $\cos(\hat{P}/\omega)$, i.e, $[\sin\hat{X}, \cos(\hat{P}/\omega)]=0$.

One can verify that the commutator $[\sin\hat{X}, \cos(\hat{P}/\omega)]$ indeed vanishes whenever $\omega=\hbar/2k\pi$ with $k\in\mathbb{Z}^+$. To see this, we consider the following:
\begin{eqnarray}
\left[e^{i\hat{X}}, e^{\pm i\hat{P}/\omega}\right]&=&e^{i\hat{X}}e^{\pm i\hat{P}/\omega}-e^{\pm i\hat{P}/\omega}e^{i\hat{X}} \nonumber\\
&=&e^{i\hat{X}\pm (i\hat{P}/\omega)\mp (i\hbar/2\omega)}-e^{i\hat{X}\pm (i\hat{P}/\omega)\pm (i\hbar/2\omega)}\nonumber\\
&=&(-1)^k\left[e^{i\hat{X}\pm 2ki\pi\hat{P}}-e^{i\hat{X}\pm 2ki\pi\hat{P}}\right]\hspace{0.3cm}\text{for }\omega=\dfrac{\hbar}{2k\pi}\nonumber\\
&=&0.
\end{eqnarray}
The second equality follows from the BCH formula. In the third equality, we have made the substitution that $e^{\pm ik\pi}=(-1)^k$. Likewise, $e^{-i\hat{X}}$ can also be shown to commute with $\cos(\hat{P}/\omega)$. Alternatively, one can also show that the operator $e^{\pm i\hat{X}}$ belongs to a one-parameter group of translations whenever the above condition on $\omega$ is satisfied \cite{borgonovi1995translational}. As a result, we have $\hat{U}^{\dagger 4}_{\tau}\sin\hat{X}\hat{U}^{4}_{\tau}=\sin\hat{X}$. Therefore, for some $j\in \{0, 1, 2, 3\}$, the translational invariance implies $\sin\hat{X}(j)=\sin\hat{X}(t+j)$, for $t$ being an integer multiple of $4$. Furthermore, due to $\hbar$ dependence, this is also called the quantum resonance condition \cite{billam2009quantum}. In this case, the operator $\hat{a}(t)$ at a total time $t=4s$, where $s$ is a non-negative integer, can be expressed as an explicit function of $t$:
\begin{eqnarray}\label{R4bos}
\hat{a}(t)=\hat{a}+\dfrac{iKt}{4\sqrt{2\hbar\omega}}\sum_{j=0}^3 e^{ij\pi/2}\sin\hat{X}(j).
\end{eqnarray}
This allows us to write the OTOC as follows: 
\begin{align}\label{otocat4res}
C_{\hat{a}\hat{a}^{\dagger}}(t)=1+&\dfrac{Kt}{4\sqrt{2\hbar\omega}}\left[\sum_{j=0}^3i^{j+1}\left[\sin\hat{X}(j), \hat{a}^{\dagger}\right]+\textbf{h.c.}\right]\nonumber\\
&+\dfrac{K^2t^2}{32\hbar\omega}\left[\sum_{j, j'=0}^{3}i^{j'-j}\left[\sin\hat{X}(j), \hat{a}^{\dagger}\right]^{\dagger}\left[\sin\hat{X}(j'), \hat{a}^{\dagger}\right]\right],
\end{align}
where $\textbf{h.c.}$ denotes the Hermitian conjugate of the operator $i^{j+1}[\sin\hat{X}(j), \hat{a}^{\dagger}]$.
\begin{figure}
\begin{center}
\includegraphics[scale=0.38]{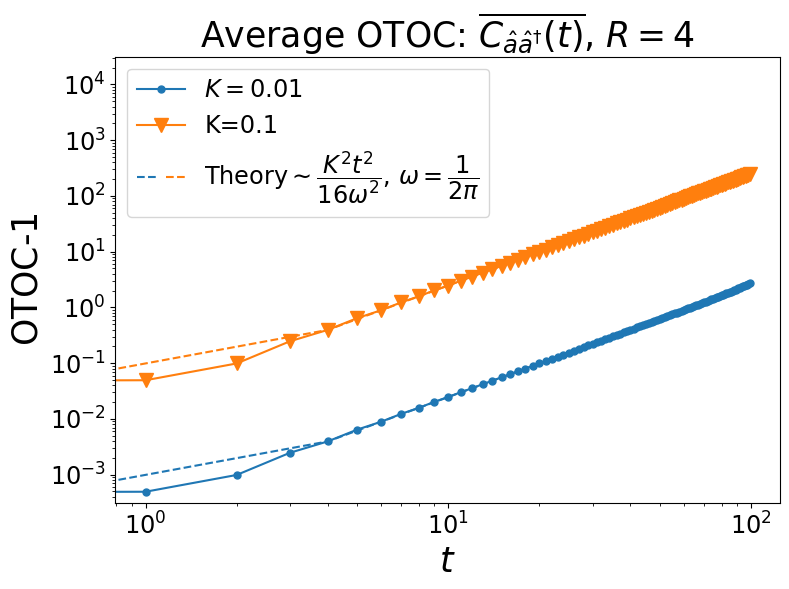}
\end{center}
\caption{\label{fig:otoct} The averaged commutator function $\overline{C_{\hat{a}\hat{a}^{\dagger}}(t)}$ for the translation invariant case at $R=4$. The figure illustrates the growth for two values of $K$, namely, $K=0.01$ and $0.1$. We have taken $\hbar=1$. While the solid lines with the markers denote numerical results, the dashed lines correspond to the theory as given in Eq. (\ref{otocat4}). The solid and dashed lines almost coincide for all $t>0$. For the numerical results, we perform the average over two thousand number states.}
\end{figure}
Although Eq. (\ref{otocat4res}) implies that the growth of the OTOC is a quadratic polynomial of time, the presence of operators such as $\{\sin\hat{X}(j)\}_{j=1}^3$, which are not amenable to the exact treatment, makes the OTOC only quasi-exact solvable. Nevertheless, in the limit of weak perturbations, we can still obtain a good approximate analytical solution up to the leading orders in $K$. In particular, when $K$ is small ($K\ll\ 1$) and at sufficiently long times, the terms of the order $O(Kt)$ and $O(K^2t^2)$ dominate over the remaining others, such as $O(K^mt)$ with $m\geq 2$ and $O(K^nt^2)$ with $n\geq 3$. Therefore, we only require approximating the terms $\{\sin\hat{X}(j)\}$ to the zeroth order in $K$ --- $\sin\hat{X}(1)\approx i\sin (\hat{P}/\omega)\approx -\sin\hat{X}(3)$ and $\sin\hat{X}(2)\approx -\sin\hat{X}$. Thus, ignoring all the other insignificant contributions, we finally obtain
\begin{equation}\label{R4comf}
C_{\hat{a}\hat{a}^{\dagger}}(t)\approx 1+\dfrac{K^2t^2}{16\omega^2}\left[\cos\hat{X}+\cos\left(\dfrac{\hat{P}}{\omega}\right)\right]^2.
\end{equation}
Eq. (\ref{R4comf}) can be evaluated in any arbitrary initial state. We calculate the average state-OTOC to better understand its behavior in a typical quantum state. 
\begin{equation}\label{otocat4}
\overline{C_{|\psi\rangle,\thinspace \hat{a}\hat{a}^{\dagger}}(t)}= 1+ct^2\hspace{0.2cm}\text{for }t=4s, \hspace{0.1cm}s\in\mathbb{Z}^+\cup\{0\}, 
\end{equation}
where $c\approx K^2/16\omega^2$ for $K\ll 1$.

Figure \ref{fig:otoct} contrasts Eq. (\ref{otocat4}) with the numerically computed average state-OTOC for two different kicking strengths. The quantum resonance condition is invoked by fixing the frequency at $\omega=\hbar/2\pi$, with $\hbar=1$. We find an excellent agreement between the numerical results and Eq. (\ref{otocat4}). Note, however, that the approximation breaks down at large values of $K$ as the higher order terms in $K$ become significant and can no longer be ignored.

One can generalize this analysis to the other translationally invariant cases by imposing suitable conditions on $\omega$. For instance, an analogous condition for $R=3$ in the deep quantum regime reads $\omega=\sqrt{3}/4k\pi$, where $k\in\mathbb{Z}^+$, thereby demonstrating quadratic growth of the OTOC when observed at $t=3s$, with $s$ being a non-negative integer.
An upshot of this analysis is that the quantum resonances hinder the early-time exponential behavior, irrespective of the choice of other free parameters, such as the kicking strength $K$. It is worth noting that in finite-dimensional chaotic systems (or systems with torus boundary conditions in the classical limit), the OTOCs typically saturate to values predicted by the random matrix theory \cite{GMata2023}. In contrast, finding an unbounded operator growth is often possible in infinite-dimensional systems like the KHO. Quantum resonances are one such example. Here, we leveraged their solvability to show the indefinite operator growth.

\subsection{Small $K$ and irrational $R$}
\label{irr-der}
In the last section, we argued that whenever $R$ takes irrational values, various terms in Eq. (\ref{bosonic-evolution1}) destructively interfere, given that $K$ is small. As a result, the corresponding dynamics are suppressed. Here, we extend this argument by providing an analytical expression for the average state-OTOC, $\overline{C_{|\psi\rangle,\thinspace \hat{a}\hat{a}^{\dagger}}}(t)$, in the weak perturbative regime by assuming irrational values for $R$. The complete derivation is presented in Appendix \ref{appendix:b}. Retaining only the leading order terms in $K$ up to the order of $O(K^2)$, the Heisenberg evolution of $\hat{a}$ can be written as follows:
\begin{align}
\hat{a}(t)e^{\frac{2\pi it}{R}}\approx\hat{a}+\dfrac{iK}{\sqrt{2\omega}}\sum_{j=0}^{t-1}e^{\frac{2\pi ij}{R}}\sin\left(\hat{X}_{\frac{2\pi j}{R}}\right)-\dfrac{K^2}{\sqrt{2\omega}}\sum_{j=0}^{t-1}\sum_{n=0}^{j-1}e^{\frac{2\pi ij}{R}}\left[\cos\left(\hat{X}_{\frac{2\pi n}{R}}\right),\thinspace \sin\left(\hat{X}_{\frac{2\pi j}{R}}\right) \right],\nonumber\\
\end{align}
where $\hat{X}_{\theta}=(\hat{a}e^{-i\theta}+\hat{a}^{\dagger}e^{i\theta})/\sqrt{2\omega}$, the quadrature operator with the phase $\theta$. We have taken $\hbar=1$ for the sake of simplicity. In the next step, we plug this into Eq. (\ref{com}). This is then followed by the absolute squaring of the commutator --- $[\hat{a}(t), \hat{a}^{\dagger}]^{\dagger}[\hat{a}(t), \hat{a}^{\dagger}]$, which is given in Eq. (\ref{irr-comm-comp}) of Appendix \ref{appendix:b}. After averaging over the pure states, we finally obtain
\begin{figure}
\begin{center}
\includegraphics[scale=0.4]{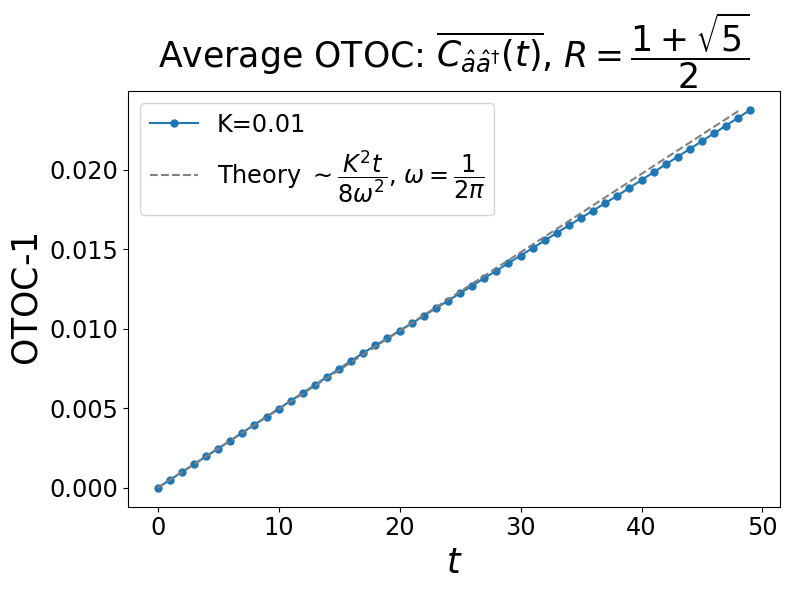}
\end{center}
\caption{\label{fig:otocatgm} The averaged commutator function $\overline{C_{\hat{a}\hat{a}^{\dagger}}(t)}$ in a weak perturbative regime when $R$ is irrational. We fix $R$ as the golden mean number, known to be the most irrational number. The figure compares the numerically computed average OTOC with the theoretical prediction given in Eq. (\ref{irr-otoc}). For the numerics, we average the OTOC over two thousand Fock states.}
\end{figure}
\begin{equation}\label{irr-otoc}
\overline{C_{|\psi\rangle,\thinspace \hat{a}\hat{a}^{\dagger}}}(t)\approx 1+\dfrac{K^2 t}{8\omega^2},\hspace{0.5cm}\text{for }K\ll 1
\end{equation}
To test Eq. (\ref{irr-otoc}), we numerically evaluate the commutator function averaged over the Fock state basis when $R$ assumes an irrational number. For the numerical calculations, we take $R$ to be the golden mean number --- $R=(1+\sqrt{5})/2$. Figure. \ref{fig:otocatgm} compares the numerical result with Eq. (\ref{irr-otoc}). We see a good agreement between the both.

We assert that the above result is not limited to the KHO model under the irrationality of $R$ but rather encompasses a broader range of KAM systems, including that of the finite dimensions. To be precise, the linear growth here is a direct consequence of (i) the uncorrelated eigenphases of the unperturbed evolution $e^{-(2\pi i/R)\hat{a}^{\dagger}\hat{a}}$, and (ii) the initial operators being conserved up to a phase under the unperturbed evolution, i.e., $e^{(2\pi i/R)\hat{a}^{\dagger}\hat{a}}\hat{a}e^{-(2\pi i/R)\hat{a}^{\dagger}\hat{a}}=\hat{a}e^{-2\pi i/R}$. One can readily verify that the average state-OTOC in a typical integrable quantum system perturbed by a weak generic time-dependent potential, in general, exhibits linear growth until the saturation as long as the above two conditions are satisfied --- $\overline{C(t)}\sim \varepsilon^2t$ with $\varepsilon$ being the strength of the perturbation. Let us also mention that, in general, the first condition holds for any typical KAM integrable system owing to the Berry-Tabor conjecture \cite{Berry375, Berry77a, berry1976closed}. In Appendix \ref{appendix:c}, we provide a detailed derivation for the linear growth of the OTOCs in finite-dimensional quantum systems by incorporating the aforementioned conditions.

\section{OTOC for phase space operators}
\label{OTOCXP}
For completeness, we here provide a bird's-eye view of the position-momentum OTOC in an arbitrary state $|\psi\rangle$ given by
\begin{eqnarray}
C_{XP}(t)=\langle\psi|\left[\hat{X}(t), \hat{P}\right]^{\dagger}\left[\hat{X}(t), \hat{P}\right]|\psi\rangle. 
\end{eqnarray}
Followed by Eq. (\ref{bosonic-evolution1}), the Heisenberg evolution of $\hat{X}$ can be readily obtained as 
\begin{equation}\label{Xevol}
\hat{X}(t)=u^{\dagger t}\hat{X}u^t +\dfrac{K}{\omega}\sum_{j=0}^{t-1}\sin\left[\dfrac{2\pi}{R}\left(t-j\right)\right]\sin\hat{X}(j),
\end{equation}
where $u=e^{(2\pi it/R) \hat{a}^{\dagger}\hat{a}}$, the evolution under the harmonic oscillator Hamiltonian with an effective frequency $\omega\tau=2\pi/R$. Due to the presence of $u$, the OTOC exhibits persistent oscillations in time. On the other hand, the second term takes into account the effects of the kicking potential on operator growth.
\begin{figure}
\begin{center}
\includegraphics[scale=0.35]{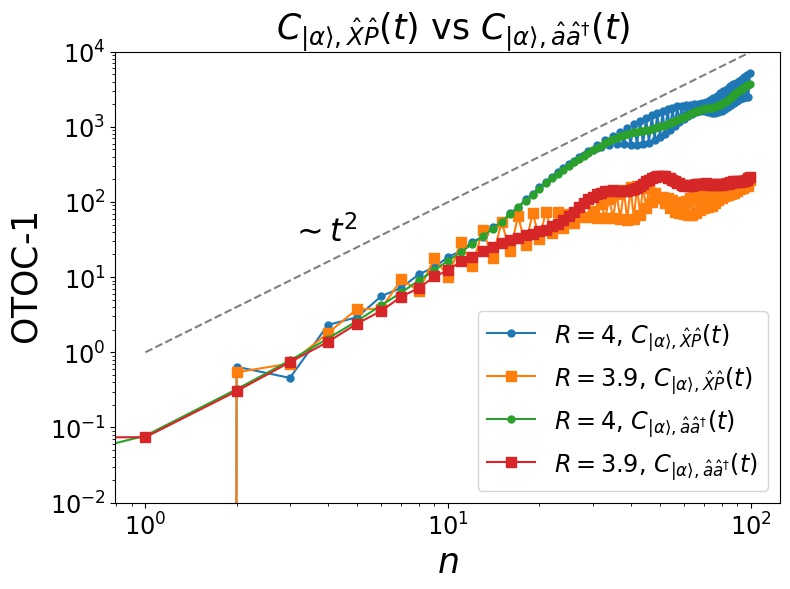}
\end{center}
\caption{\label{fig:xpotoc} Contrasting the OTOCs corresponding to the phase space operators with the ladder operators in the vicinity of $R=4$. The system parameters are kept fixed at $K=1$ and $\tau=1$. The coherent state centered at $(\hat{P}, \hat{X})=(0, \pi)$ is once again chosen to be the initial state for the OTOC calculations. The figure illustrates that the $\hat{X}\hat{P}$-OTOC and the $\hat{a}\hat{a}^{\dagger}$-OTOC exhibit strikingly similar behaviors in both the resonant and non-resonant cases, with the only difference being the presence of oscillations in the former.}
\end{figure}

When $K=0$, the OTOC in any given state $|\psi\rangle$ is $C_{|\psi\rangle,\thinspace \hat{X}\hat{P}}(t)=\hbar^2 \cos^2(2\pi t/R)$. As $K$ varies from zero, the initial operator $\hat{X}$ starts to scramble into the operator Hilbert space, which is also accompanied by the oscillatory behavior.
We numerically compute the $\hat{X}\hat{P}$-OTOC and compare the results with the ladder operator OTOC. The findings are shown in Fig. \ref{fig:xpotoc}. The figure demonstrates that the qualitative nature of the OTOCs for both choices of the initial operators is identical, except for the presence of oscillations in the $\hat{X}\hat{P}$-OTOC due to the harmonic evolution. These oscillations may pose challenges while probing the operator growth and quantum Lyapunov exponents. Nevertheless, a comparison between Equation (\ref{bosonic-evolution1}) and Equation (\ref{Xevol}) reveals that the oscillating component is absent in the time evolved ladder operator $\hat{a}(t)$. As a result, we observe steady growth without oscillations in the ladder operator OTOC, prompting us to study it as an alternative to the $\hat{X}\hat{P}$-OTOC.

\section{Summary and Discussion}
\label{section-4}


In this chapter, we have examined how the resonances that arise when a degenerate classical system is perturbed by a weak time-dependent potential affect the dynamics of information scrambling in the quantum domain. For this purpose, we considered the kicked harmonic oscillator model. Classically, this system exhibits very complex dynamics. Under the resonance condition, the system undergoes unusual structural changes in the phase space. These unusual changes can be traced to the emergence of stable and unstable periodic orbits, leading to diffusive chaos in the phase space for any finite kicking strength. Conversely, under the non-resonance condition, the phase space trajectories of the harmonic oscillator maintain their regularity with slight deformations, thereby effectively suppressing the chaos. Motivated by this peculiar classical behavior, our work examined how these effects manifest in the dynamics of operator growth and scrambling in the quantum limit.     

To study the information scrambling, we considered the OTOC with the bosonic ladder operators as the initial operators. We mainly focussed on two specific initial states: the vacuum state $|0\rangle$ that corresponds to a fixed point of the KHO map and a coherent state $|\alpha\rangle$ with its mean coordinates located on the stochastic web. We numerically studied the OTOC, particularly emphasizing the early-time and the asymptotic dynamics. In the semiclassical limit, the early-time dynamics correlate very well with the classical dynamics of the KHO. In this region, the quantum Lyapunov exponents extracted from the OTOC match very well with the corresponding classical Lyapunov exponents. To strengthen the correspondence between the classical and quantum exponents, we analytically derived the quantum exponent from the vacuum state OTOC in the semiclassical limit. Furthermore, we observed that in the weak perturbative regime, the transitions between resonance and non-resonance conditions do not affect the early-time growth, while the asymptotic dynamics remain highly sensitive. On the contrary, when the perturbation is strong, the differences become less visible. 

The numerical and analytical evidence presented in this work suggests a long-time quadratic growth for the OTOC when the system is in resonance for a generic quantum state, irrespective of the strength of the perturbation. In contrast, the growth under the non-resonance condition is largely suppressed. We also identified different scalings for the non-resonant OTOCs in the coherent state. To support the numerical results, we have argued based on the coherent and incoherent additions of the terms in the expansion of $[\hat{a}(t), \hat{a}^{\dagger}]e^{2\pi it/R}$ that demonstrate the distinct qualitative behavior exhibited by the resonances and the non-resonances. 

Following the numerical results, we provided analytical treatment of the OTOC for a few exceptional cases of $R\in R_c$ by utilizing the translation invariance of the KHO. For $R=1$ and $2$, the results are exact. We have shown that the corresponding commutator function grows quadratically. At $R=4$, we utilized the quantum resonance condition to obtain the quasi-exact expressions of the OTOCs. The quantum resonances largely impede the early-time exponential growth of the OTOCs. We also provided analytical derivation for the linear growth of the average state-OTOC whenever $R$ takes irrational values, given that the kicking strength is sufficiently small. 

One central focus in quantum many-body physics is to simulate quantum systems on a quantum device and benchmark these simulations in the presence of hardware errors \cite{sahu2022quantum, trivedi2022quantum}. In the semiclassical limit, the assurance will be provided for the stability of simulation if the classical counterpart of the target Hamiltonian is KAM and the error that scales extensively with the number of particles is small enough \cite{bulchandani2022onset}. However, for systems that have no classical analogue, the quantum KAM theorem remains elusive. Nevertheless, recent progress in this direction studied the stability analysis for the symmetries of quantum systems --- see for instance, Refs. \cite{brandino2015glimmers, burgarth2021kolmogorov} and the references therein. Quantum simulations of systems whose classical limit is non-KAM can pose a significant challenge to experimental implementations. Any slight perturbation in the form of hardware noise can give rise to dynamics far from the target dynamics. Moreover, digital quantum simulation of these systems is also challenging due to structural changes near resonances even when the hardware error is negligible \cite{chinni2022trotter}. Hence, more careful methods must be devised to simulate this class of systems. We hope our study paves the way for exploring these intriguing directions.   
\chapter{Quantum sensing using non-KAM systems}
\label{nonKAMchapsen}

\section{Introduction}
Quantum metrology (also known as quantum sensing) holds great potential for quantum-enhanced measurements and is mainly concerned with estimating unknown parameters of quantum Hamiltonians \cite{giovannetti2004quantum, giovannetti2006quantum}. These parameters could be magnetic field strength, electric field strength, or the frequency of a harmonic oscillator. Quantum sensing is intimately related to the distinguishability of quantum states \cite{braunstein1994statistical, braunstein1996generalized}. Phenomena, like quantum critical transitions and quantum chaos have been known to give rise to a high degree of sensitivity in quantum systems. The sensing protocols based on critical transitions, where the system under consideration exhibits high sensitivity to variations in the order parameters, have been extensively studied over many years \cite{zanardi2008quantum, invernizzi2008optimal, ivanov2013adiabatic, tsang2013quantum, macieszczak2016dynamical, rams2018limits, frerot2018quantum, garbe2020critical, chu2021dynamic, garbe2022critical}. Likewise, quantum chaotic systems, with their hypersensitivity to parameter variations, display better sensing capabilities than integrable systems \cite{fiderer2018quantum, liu2021quantum}.

Classical systems that do not obey the Kolmogorov–Arnold–Moser theorem (also known as non-KAM systems) can show exceptionally high susceptibility to small time-dependent perturbations. In this chapter, we depart from the studies of conventional quantum sensors and examine the non-KAM systems in the quantum limit for sensing applications. Under certain conditions (such as resonances), these systems become highly sensitive to the variations in the parameters despite being weakly chaotic \cite{sankaranarayanan2001quantum, sankaranarayanan2001chaos, dileep2024}. This motivates us to study the dynamics of these systems carefully for sensing applications. For this purpose, we consider the quantized dynamics of the kicked harmonic oscillator (KHO) system.

In the previous chapter (see chapter \ref{nonKAMchap} and also Ref. \cite{dileep2024}), we rigorously probed the OTOCs in the quantum KHO model and contrasted the resonant cases with the non-resonant cases. The Hamiltonian of this system is given by
\begin{eqnarray}
 H=\underbrace{\dfrac{P^2}{2m}+\dfrac{1}{2}m\omega^2 X^2}_{H_0}+\underbrace{K\cos \left( kX \right)}_{H_1} \sum_{n=-\infty}^{\infty}\delta(t-n\tau),  
\end{eqnarray}
where $X$ is the position, $P$ is the momentum, $m$ is the mass of the oscillator and $k$ is the wave vector. The strength of the kicking is denoted by $K$. The time interval between two consecutive kicks is given by $\tau$. Like the previous chapter, we fix $m=k=1$ throughout this chapter. The resonances are particular occurrences in this system where the perturbation is time-periodic, and the frequencies of the unperturbed harmonic oscillator ($H_0$) and the perturbation ($H_1$) commensurate --- $\dfrac{\omega}{(2\pi/\tau)}=\dfrac{1}{R}$, $R\in \mathbb{Z}^{+}$, where $\omega$ and $2\pi/\tau$ are the frequencies of $H_0$ and $H_1$, respectively. The non-resonances are characterized by non-integer $R$ values. At resonances, the dynamics become highly susceptible to variations in $R$. In the quantum limit, the corresponding Floquet operator of the system is given by 
\begin{eqnarray}\label{floq}
\hat{U}_{\omega}=\exp\left\{-\dfrac{2\pi i}{R} \hat{a}^\dagger \hat{a}\right\}\exp\left\{-\dfrac{iK}{\hbar}\cos\hat{X} \right\},    
\end{eqnarray}
where $\hat{a}$ and $\hat{a}^{\dagger}$ denote the bosonic annihilation and creation operators and $\hat{X}=\dfrac{\hat{a}+\hat{a}^{\dagger}}{\sqrt{2\omega}}$. The classical non-KAM effects at the resonances translate into sharp changes in eigenvalues and eigenvector statistics of the quantum dynamics. These sharp changes render the system highly suitable for sensing applications.
Using the quantum KHO as a toy model and the quantum Fisher information (QFI) as a figure of metric, we demonstrate that the sensitivity of the non-KAM systems near resonances can be utilized in sensing (metrology) applications.

Let us now revisit the details of QFI from chapter \ref{chap-back}: QFI quantifies the sensitivity of a parametrized quantum state $|\psi_{\theta}\rangle$ to the variations in $\theta$. Here, we are interested in evaluating QFI associated with the parameter $\omega$, the natural frequency of the unperturbed harmonic oscillator. Then, the QFI of a given state having the imprint of $\omega$ is given by
\begin{equation}
I\left(\omega ; {|\psi_{\omega}\rangle}\right)=-2\left.\partial_{\varepsilon}^2|\langle\psi_{\omega+\varepsilon}|\psi_{\omega}\rangle|^2\right|_{\varepsilon=0}.
\end{equation}
We encode $\omega$ onto a known initial quantum state by using the dynamics of the quantum KHO. This means that the given initial state is acted upon by the time evolution unitary of the quantum KHO: $|\psi_\omega (t)\rangle=\hat{U}^{t}_{\omega}|\psi\rangle$. Under this setting, the evolution of QFI over time is given by 
\begin{eqnarray}\label{quench}
I(\omega ;|\psi\rangle, t)= 4\lim_{\varepsilon\rightarrow 0}\left( \dfrac{1-|\langle\psi|\hat{U}_{\omega+\varepsilon}^{\dagger t}\hat{U}^{t}_{\omega}|\psi\rangle|^2}{\varepsilon^2} \right)\equiv 4\left(\Delta^2_{|\psi\rangle} \hat{h}(\omega ; t)\right),
\end{eqnarray}
where $\hat{h}(\omega ; t) =i \hat{U}^{\dagger t}_{\omega}\partial_{\omega}\hat{U}^{t}_{\omega}$, and $\Delta^2$ denotes variance with respect to the initial state $|\psi\rangle$ \cite{gu2010fidelity}. The operator $\hat{h}(\omega ; t)$, when acts on $|\psi_{\omega}(t)\rangle$, generates infinitesimal translations along $\omega$ in the parameter space. In this work, we evaluate QFI in different quantum states evolved under the dynamics of the quantum KHO and contrast the results corresponding to the resonances with the non-resonant cases. We show that the resonances allow for huge enhancements in the time scaling of QFI compared to the quadratic scaling at the non-resonances. We shall also provide arguments to show that QFI growth is directly related to the mean energy growth in the given quantum state.

The rest of the chapter is structured as follows: The main results are presented in Sec. \ref{sen-Results}. In Sec. \ref{Mean energy growth vs. QFI}, we demonstrate a connection between the mean energy growth and QFI in the quantum KHO model. We then discuss the behavior of the Loschmidt echo (Loschmidt echo) in Sec. \ref{Loschmidt Echo}. Then, in Sec. \ref{QFI in the quantum KHO model}, we contrast QFI at non-resonances with the non-resonances of the KHO model. We discuss QFI under special conditions called quantum resonances in Sec. \ref{quantum resonances}. Finally, we conclude this chapter in Sec. \ref{sen-discussion}.

\begin{section}{Results}
\label{sen-Results}
This work only considers the pure initial states for the parameter encoding. 
When the initial state is pure, QFI associated with $\omega$ is related to the variance of the generator $\hat{h}(\omega; t)$ in that state. The analysis is slightly intricate for the mixed initial states, and we do not carry it out here. We now seek to obtain the generator corresponding to the quantum KHO model. For $t=1$, this operator takes the following form:
\begin{eqnarray}\label{generator_exp}
\hat{h}\left(\omega ; 1\right)=i\hat{U}_{\omega}^\dagger \partial_{\omega}\hat{U}_{\omega}
=\tau\hat{n}-\dfrac{K}{2\omega}\hat{A}\sin \hat{A},
\end{eqnarray}
where $\hat{n}=\hat{a}^{\dagger}\hat{a}$ is the number operator and $\hat{A}=(\hat{a}e^{-i\omega\tau}+\hat{a}^\dagger e^{i\omega\tau})/\sqrt{2\omega}$. It then follows that the generator corresponding to $\hat{U}^{t}_{\omega}$, the Floquet operator after $t$-number of steps, can be written as
\begin{eqnarray}\label{timegen}
\hat{h}(\omega ; t)&=&\sum_{j=0}^{t-1}\hat{U}^{\dagger j}_{\omega}\hat{h}(\omega ;1)\hat{U}^{j}_{\omega}\nonumber\\ 
&=&\sum_{j=0}^{t-1}\left[\tau\hat{n}(j)-\dfrac{K}{2\omega}\hat{A}(j)\sin \hat{A}(j)\right],
\end{eqnarray}
where $\hat{n}(j)$ and $\hat{A}(j)$ denote the Heisenberg evolution of the respective operators for $j$-number of time steps under the dynamics of the quantum KHO. The time evolution of QFI is then given by 
\begin{eqnarray}\label{gen-time}
I\left(\omega ; |\psi\rangle, t\right)=\Delta^{2}_{|\psi\rangle}\left[\sum_{j=0}^{t-1}\left(\tau\hat{n}(j)-\dfrac{K}{2\omega}\hat{A}(j)\sin \hat{A}(j)\right)\right]
\end{eqnarray}
This equation implies that the evolution of QFI strongly correlates with the mean energy growth in a given quantum state under the dynamics of the quantum KHO. In what follows, we shall examine how the mean energy growth in a state affects QFI in the quantum KHO model. Then, we shall proceed with a numerical analysis of QFI.

\subsection{Mean energy growth vs. QFI}
\label{Mean energy growth vs. QFI}
In atom-optical models, the evolution of QFI is strongly correlated with the mean energy growth in the system \cite{pang2017optimal, giovannetti2006quantum, garbe2022critical}. In these systems, it has been shown that if the mean energy grows as $\langle\hat{n}(t)\rangle\sim t^\alpha$, where $\alpha\geq 0$, then the maximum attainable scaling of QFI follows: $I(\omega ;|\psi\rangle, t)\sim t^{2\alpha+2}$. We shall show that this result also holds for the quantum KHO model. To see this, we consider the following inequality:
\begin{eqnarray}\label{bound}
I(\omega; |\psi\rangle, t)=4\Delta^2\left(\sum_{j=0}^{t-1}\hat{h}(\omega ; j)\right)\leq 4\left(\sum_{j=0}^{t-1}\sqrt{\Delta^2\hat{h}(\omega ; j)}\right)^2,
\end{eqnarray}
where $\sqrt{\Delta^2 \hat{h}(\omega, j)}$ is the standard deviation of the time evolved generator $\hat{h}(\omega, j)$ in the initial state $|\psi\rangle$. This quantity can be bounded from above as follows:  
\begin{eqnarray}
\sqrt{\Delta^2 \hat{h}(\omega ;j)}\leq\sqrt{\tau^2\Delta^2\hat{n}(j)}+\sqrt{\dfrac{K^2}{4\omega^2}\Delta^2 \left(\hat{A}(j)\sin{\hat{A}}(j)\right)},
\end{eqnarray}
which leads to the following inequality:
\begin{equation}\label{fin-ineq}
I(\omega; |\psi\rangle, t)\leq 4 \left[\sum_{j=0}^{t-1}\tau\Delta \hat{n}(j)+\dfrac{K}{2\omega}\Delta \left(\hat{A}(j)\sin \hat{A}(j)\right)\right]^2.
\end{equation}
The above expression suggests a connection between the maximum possible growth of QFI and the mean energy fluctuations in the system. From the inequality, one can heuristically argue that in a generic quantum state $|\psi\rangle$, if the mean energy grows as $t^\alpha$, where $\alpha\geq 0$, then $\Delta^2 \hat{n}(t)\sim t^{2\alpha}$. Accordingly, the growth of QFI in time can be at most a hexic function of time as implied by the inequality in Eq. (\ref{fin-ineq}) --- $I(\omega; |\psi\rangle, t)\sim \left[\sum_{j=0}^{t-1} O(j^\alpha)\right]^2$. From the known inequality $\int_{0}^{t-1}j^{\alpha}dj<\sum_{j=0}^{t-1} j^\alpha <\int_{0}^{t}j^{\alpha}dj$, it follows that $\sum_{j=0}^{t-1} j^\alpha\sim t^{\alpha+1}$. As a result, QFI scales as $I(\omega; t)\sim O(t^{2\alpha +2})$. Also, note that the second term on the right-hand side of the inequality in Eq. (\ref{fin-ineq}) grows at most with the scaling $t^{\alpha +2}$. Thus, the maximum possible growth of QFI is solely determined by the growth of the mean energy.

We now examine how the mean energy grows in the quantum KHO model. For that, we first analyze the dynamical evolution of the ladder operators. 
\begin{eqnarray}\label{bosonic-evolution}
\hat{a}(t)e^{i\omega\tau t}=\hat{a}+\dfrac{iKe^{i\omega\tau}}{\sqrt{2\omega}}\sum_{j=0}^{t-1}e^{ij\omega\tau}\sin \hat{A}(j).
\end{eqnarray}
As time progresses, the number of terms in the expansion of $\hat{a}(t)$ grows linearly. Besides, all the terms that are accumulated are bounded --- for any $0\leq j< t$ and for any $|\psi\rangle$, we have $\|\sin\left(\hat{X}(j)\right) |\psi\rangle\|\leq\||\psi\rangle\|$. This indicates that under KHO dynamics, the quadrature operators can grow linear at maximum, i.e., $\langle X\rangle\sim t$ and $\langle P\rangle\sim t$. Moreover, using Eq. (\ref{bosonic-evolution}), one can bound the mean energy of the quantum KHO in an arbitrary state. For instance, in an arbitrary Fock state $|l\rangle$, the mean energy can be bounded from above as follows:
\begin{eqnarray}\label{energy-bound}
\langle l|\hat{n}(t)|l\rangle\leq l+\dfrac{\sqrt{2l}Kt}{\sqrt{\omega}}+\dfrac{K^2t^2}{2\omega}. 
\end{eqnarray}
It is known that under the resonance condition, in particular, when the system has translation invariance, the mean energy of the quantum KHO can grow quadratically with time \cite{borgonovi1995translational}. Consequently, long-time growth of QFI can display hexic scaling with time ($\sim t^6$) at maximum.

\subsection{Loschmidt echo}
\label{Loschmidt Echo}
The distance metric in the evaluation of QFI, $|\langle\psi|U^{\dagger t}_{\theta+\varepsilon}U^t_{\theta}|\psi\rangle|^2$, also known as Loschmidt echo is an established diagnostic of quantum chaos. The Loschmidt echo was initially studied by Asher Peres to investigate the stability of quantum mechanical trajectories owing to the perturbations in the Hamiltonian or time evolution operator \cite{peres1984stability}. Since its inception, the Loschmidt echo has been studied extensively in systems of a broader spectrum ranging from regular to chaotic \cite{prosen2002stability, Goussev:2012}. The KHO is an unbounded system, leading to the infinite-dimensional Hilbert space. Here, we briefly outline a few established results of the Loschmidt echo for infinite dimensional systems. In weakly perturbative regimes (with perturbation strength $\varepsilon$), the Loschmidt echo for the regular systems has been shown to display short-time parabolic decay ($\sim 1-\varepsilon^2t^2/\hbar^2$) followed by a Gaussian decay. The parabolic decay appears for time scales of order $O(1/\varepsilon)$, allowing the floquet operator to be approximated by its second-order Taylor series expansion. On the other hand, the initial decay has been shown to be linear for chaotic systems. Late-time exponential decay follows the initial linear decay in these systems. Since QFI is related to the Loschmidt echo in the limit of vanishing perturbation $(\varepsilon\rightarrow 0)$, we try to extrapolate the results of Loschmidt echo to QFI of the quantum KHO model in the regular regions. It is worth noting that the aforementioned studies of Loschmidt echo are only applicable when both the unperturbed ($U_\theta$) and the perturbed ($U_{\theta+\varepsilon}$) systems are either regular or chaotic \cite{prosen2002stability}. Meanwhile, the KHO system remains dominantly regular when it is non-resonant. However, the system displays large-scale structural changes at resonance points and hence can be a resource in quantum metrology.

\begin{figure}
\center
\includegraphics[scale=0.5]{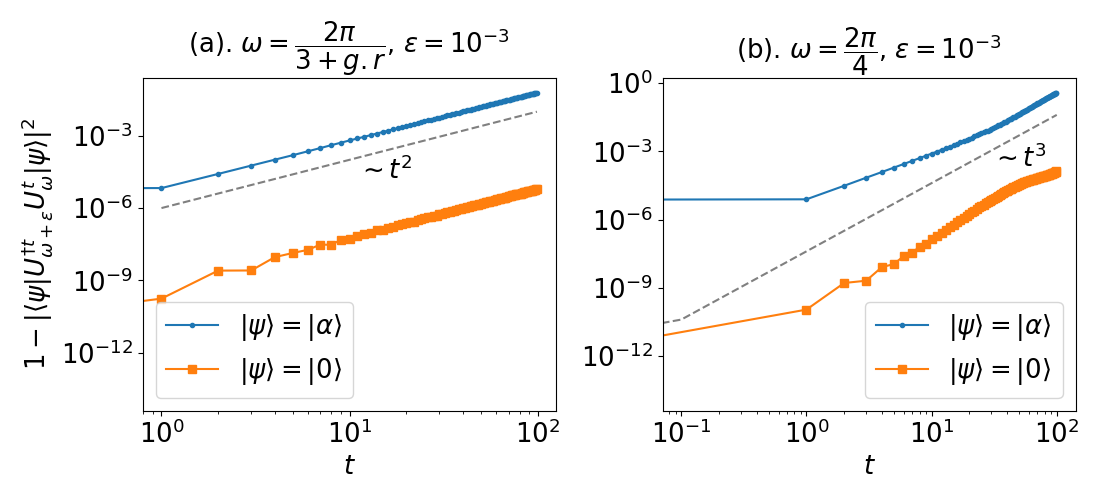}
\caption{\label{fig:Loschmidt} The evolution of $1$-Loschmidt echo for the quantum KHO for both non-resonant (R = 3 + g.r) and resonant (R = 4) cases is illustrated in panels (a) and (b), respectively. Two initial states, namely, the vacuum state and a coherent state with mean coordinates on the phase space $(\langle\hat{X}\rangle, \langle\hat{P}\rangle)=(\pi, 0)$, are considered. The perturbation is taken to be $\varepsilon=10^{-3}$. Other parameters are fixed at $K=0.1$ and $\tau=1$. The dashed lines in both panels contrast the growth of $1-$Loschmidt echo with the quadratic and cubic laws.}
\end{figure}

We here present a brief numerical analysis of the Loschmidt echo for the quantum KHO with respect to the variations in $\omega$. The other parameters are kept fixed at $\tau=1$ and $K=0.1$. The results are depicted in Fig. \ref{fig:Loschmidt}. The figure shows the quantity $1-|\langle\psi|U_{\omega+\varepsilon}^t\hat{U}_{\omega}^t|\psi\rangle |^2$ plotted against the number of time steps (t) for both resonant and non-resonant cases by taking two initial states, namely, the vacuum state and a coherent state centered in the phase space at $(\langle \hat{X}\rangle ,\langle \hat{P}\rangle )=(\pi, 0)$. The perturbation strength is $\varepsilon=10^{-3}$. When $\omega=\dfrac{2\pi}{3+g.r}$, where $g.r$ is the golden ratio, the system remains strongly non-resonant \cite{frasca1997quantum}, and the classical phase space retains an entirely regular structure. As a result, the quantity $1-|\langle\psi|U_{\omega+\varepsilon}^t\hat{U}_{\omega}^t|\psi\rangle |^2$ displays quadratic growth for both the initial states, which fits well with the earlier studies \cite{prosen2002stability}. On the other hand, at the resonance point $R=4$, the growth of $1-|\langle\psi|U_{\omega+\varepsilon}^t\hat{U}_{\omega}^t|\psi\rangle |^2$ exceeds the quadratic scaling for both initial states. This indicates that QFI at resonance points can demonstrate remarkable time-scaling enhancements. At a more detailed level, as discussed earlier, these enhancements are deeply rooted in the nature of the mean energy growth \cite{pang2017optimal}. Motivated by the Loschmidt echo results, we now calculate the time evolution of QFI and study the relevant scalings of QFI. We investigate the behavior of QFI at resonance points with particular interest.

\subsection{QFI in the quantum KHO model}
We first examine the evolution of QFI for $\omega=2\pi/R$ with non-integer $R$. This automatically sets the time interval of the kicking to be $\tau=\dfrac{2\pi}{R\omega}$. We also fix $K=0.1$. Recall that the quantum diffusion is suppressed whenever $R$ takes non-integer values, particularly highly irrational numbers. Hence, the system dynamics can be considered completely regular. From known results of the Loschmidt echo \cite{prosen2002stability, Goussev:2012}, the dynamical behavior of QFI, in this case, can be inferred to be quadratic ($\sim t^2\Delta^2 \hat{h}(\omega; 1)$). This is confirmed by the numerical results depicted in Fig. \ref{fig:fisherfull} and cross-validated by the Loschmidt echo calculations shown in Fig. \ref{fig:Loschmidt}a. Moreover, the mean energy ($\langle \hat{a}^{\dagger}\hat{a}(t) \rangle$), in this case, remains stagnated due to the localization effects --- $\langle\psi|\hat{a}^\dagger \hat{a}(t)|\psi\rangle\sim c t^0$, $c$ is a constant \cite{borgonovi1995translational}. Hence, the mean energy viewpoint supports the above inference that QFI grows quadratically. The numerical results in Fig. \ref{fig:fisherfull}a and \ref{fig:fisherfull}b illustrate the evolution of QFI at two non-resonance points ($R=3+g.r$, $5.2$) for the same initial states, as considered in the Loschmidt echo computation. Note that the smaller $K$ values render the system regular for non-integer $R$ values. However, for larger $K$ values, the system eventually becomes chaotic irrespective of the values of $R$. In such a case, the results will be modified according to the mean energy growth. 

\label{QFI in the quantum KHO model}
\begin{figure}
\center
\includegraphics[scale=.5]{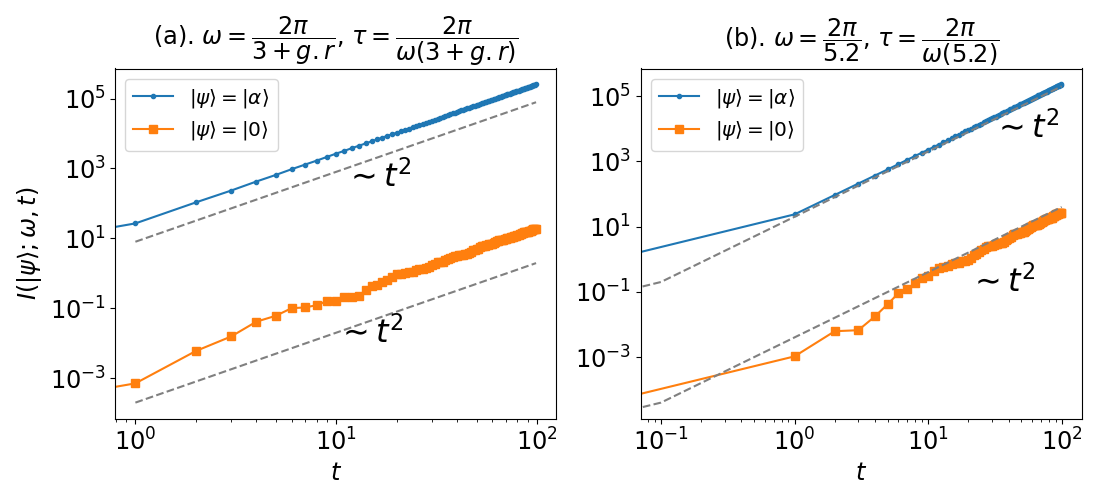}
\caption{\label{fig:fisherfull}Evolution of QFI for two arbitrary non-resonance conditions: (a). $R=3+$g.r, g.r is the golden ratio number, and (b). $R=5.2$. The initial states are fixed as before. The other parameters are $\tau=1$ and $K=0.1$. In all the cases considered, QFI follows a clear quadratic scaling. 
}
\end{figure}

We now examine how the resonances affect the growth of QFI. When the system is in resonance, QFI can exhibit an enhanced growth rate. Here, we deal with the resonance points $R\notin R_c\equiv\{1, 2, 3, 4, 6\}$ and $R\in\mathbb{Z}^+$, which do not display the translational invariance property in the classical phase space. As a representative case, we take $R=5$, where the classical KHO shows quasi-crystalline structures in the classical phase space. For this choice of $R$, the quantum KHO undergoes localization to delocalization transition as the kicking strength $K$ varies from zero \cite{shepelyansky1992quantum}. Alternatively, one can observe similar behavior in the limit of large effective Planck constant $\hbar$, even if the kicking strength $K$ is too small. As the system becomes delocalized, the mean energy $\langle\psi(t)| \hat{a}^\dagger \hat{a}|\psi(t)\rangle$ grows atmost linearly \cite{borgonovi1995translational}. This can lead to a maximum $t^4$ scaling in QFI. Figure \ref{fig:fisherfull5} shows the corresponding numerical results. We again fix the kicking strength at $K=0.1$. In \ref{fig:fisherfull5}a, we illustrate QFI behavior by taking $\omega=\dfrac{2\pi}{5}$. For these parameters, the quantum KHO shows localization properties. This results in quadratic growth of QFI in both initial states. On the other hand, when $\omega=\dfrac{2}{\pi}$, the system becomes delocalized and the mean energy grows linearly. As a result, QFI grows as $\sim t^4$, which is depicted in Fig. \ref{fig:fisherfull5}b.

\end{section}

\begin{figure}
\center
\includegraphics[scale=.5]{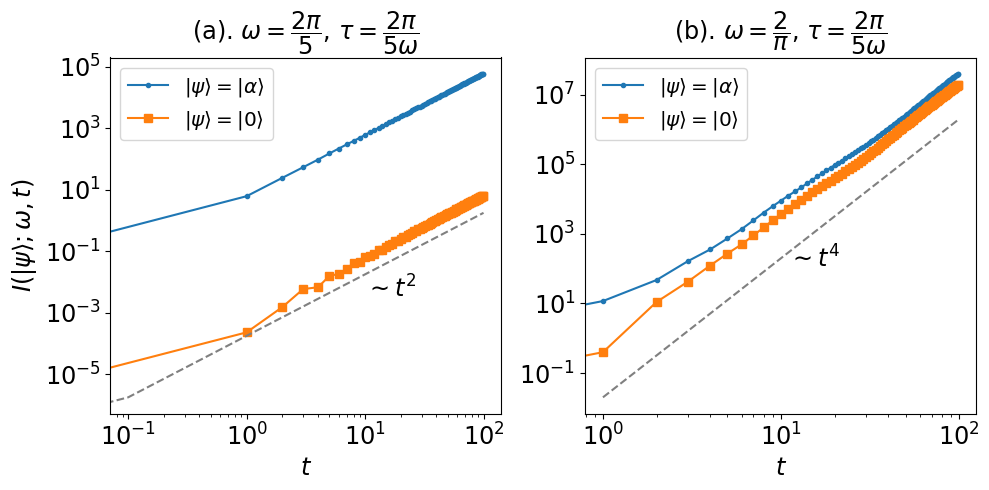}
\caption{\label{fig:fisherfull5}The evolution of QFI for $R=5$: (a). $\omega=\dfrac{2\pi}{5}$, and (b). $\omega=\dfrac{2}{\pi}$. The kicking interval $\tau$ is automatically fixed as $\tau=\dfrac{2\pi}{\omega R}$. The initial states are fixed as before. The kicking strength is $K=0.1$. In panel (a), the system remains regular, leading to a quadratic growth of QFI for both initial states. In (b), due to the linear growth of the mean energy $\langle (\hat{a}^{\dagger}\hat{a})(t)\rangle$, QFI can be seen to follow a quartic scaling.  
}
\end{figure}

\section{Role of translational invariance and quantum resonances}
\label{quantum resonances}
\label{section-4}
Recall that for $R\in R_c$, the classical KHO displays translational invariance in the phase space. In the quantum limit, this translates into the existence of symmetry groups of displacement operators --- $\hat{U}^{R}_{\omega}$ commutes with either one-parameter or two-parameter group of translations \cite{borgonovi1995translational}. The translational invariance leads to extended Floquet states in the phase space and, consequently, to possible unbounded growth of the mean energy. 
Here, we study QFI for $R\in R_c$. In particular, we consider $R=1$, $2$, and $4$, where the former two are special cases amenable to the analytical treatment. In the latter case, particular instances exist in the parameter space of $\omega$ known as quantum resonances, where the mean energy growth can be analytically shown to be quadratic.

\begin{figure}
\center
\includegraphics[scale=0.5]{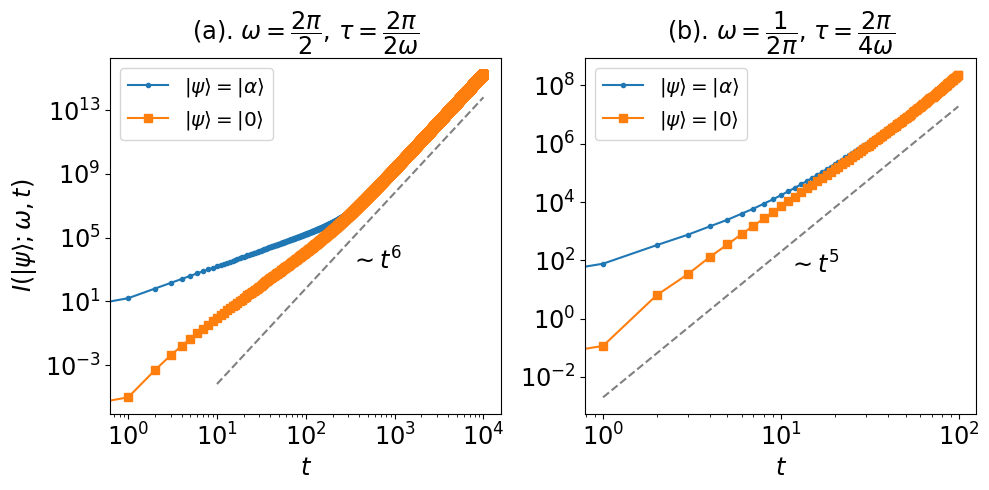}
\caption{\label{fig:fis-2-trans} The plot shows the time evolution of QFI under translational invariance. (a) QFI growth is shown for $10^4$ time steps. Initially, the coherent state and the vacuum state display different scaling. At around $\sim10^{2.5}$ time steps, QFI for both states displays a crossover to $t^6$ scaling. In (b), QFI growth for over $100$ time steps under a quantum resonance condition at $R=4$ is illustrated. The growth follows $t^5$ scaling. }
\end{figure}

We obtain an analytical expression of the generator $\hat{h}_{t}(\omega)$ for these special cases of $R$ as they are amenable to the exact computations. Let us recall the generator corresponding to $t$-number of time steps:
\begin{eqnarray}\label{tevolgen}
h(\omega, t)=\sum_{j=0}^{t-1}U^{\dagger j}h(\omega, 1) U^{j},
\end{eqnarray}
The cases where $R=1$ and $2$ are trivial with regards to the translational invariance, for which the Heisenberg evolution of the bosonic operators admit a simple form, which is given by,
\begin{eqnarray}\label{timeevolvedbosonic}
a(t)e^{i\omega t}=a+i\dfrac{Kt}{\sqrt{2\omega}}\sin\left( \hat{X} \right), 
\end{eqnarray}
where $\omega\tau=\dfrac{2\pi}{R}$. The expression holds true for both $R=2$ and $R=1$. Here, we shall analyze the case for $R=2$, and the analysis for the other case follows a similar approach. Equation (\ref{timeevolvedbosonic}) suggests that the Heisenberg evolution of the ladder operators has a linear time dependence for the above-specified frequencies. It is also worth noting that Eq. (\ref{timeevolvedbosonic}) holds true whenever $\omega\tau=n\pi$ for $n\in \mathbb{Z}^{+}$. The linear time dependence implies an unbounded quadratic growth of the mean energy. For instance, if the initial state is a fock state of a quantum harmonic oscillator, the mean energy grows as per the following expression:
\begin{eqnarray}
\langle n|(\hat{a}^\dagger \hat{a} )(t)|n\rangle=n+\dfrac{K^2t^2}{4\omega}\left(1-e^{-1/\omega}L_n^0\left(\dfrac{2}{\omega}\right)\right),
\end{eqnarray}
where $L_n^0$ is an associated Laguerre polynomial. As the mean energy growth is quadratic ($\alpha =2$), QFI will grow algebraically with an exponent $\lesssim 2\alpha+2=6$. Through numerical results and analytical arguments, we indeed confirm that after a sufficiently long time, the evolution of QFI converges to the scaling $\sim t^6$. We show this by obtaining an analytical expression for the operator $\hat{h}(\omega; t)$ for $R=2$.

To proceed further, we first write down $\hat{h}(\omega; 1)$, the generator after one time step, in the following.  
\begin{eqnarray}\label{gen}
\hat{h}(\omega; 1)&=&\tau \hat{a}^\dagger \hat{a}-\dfrac{K}{2\omega}\left( \dfrac{\hat{a}e^{-i\omega\tau}+\hat{a}^\dagger e^{i\omega\tau}}{\sqrt{2\omega}} \right)\sin\left( \dfrac{\hat{a}e^{-i\omega\tau}+\hat{a}^\dagger e^{i\omega\tau}}{\sqrt{2\omega}} \right)\nonumber\\
&=&\tau \hat{a}^\dagger \hat{a}-\dfrac{K}{2\omega}\hat{X}\sin\hat{X},
\end{eqnarray}
where we used $\omega\tau=2\pi/R$. Then, for some $j\in \mathbb{Z}^+$, 
\begin{eqnarray}
U^{\dagger j}\hat{h}(\omega, 1)U^j&=&\tau \hat{a}^\dagger \hat{a}(j)-\dfrac{K}{2\omega}\hat{X}(j)\sin\hat{X}(j)\nonumber\\
&=&\tau\left\{ \hat{a}^\dagger \hat{a} -i\dfrac{Kj}{\sqrt{2\omega}}\sin(\hat{X}) \hat{a}+i\dfrac{Kj}{\sqrt{2\omega}}\hat{a}^\dagger\sin\hat{X} +\dfrac{K^2j^2}{2\omega}\sin^2\hat{X} \right\}-\dfrac{K}{2\omega}\hat{X}\sin\hat{X}.\nonumber\\
\end{eqnarray}
It then follows that
\begin{eqnarray}\label{timeevolvedgen}
\hat{h}(\omega, t)=\tau\left\{t\hat{a}^\dagger \hat{a} -\dfrac{iKS_1(t)}{\sqrt{2\omega}}\left(\sin (\hat{X}) \hat{a}-\hat{a}^\dagger\sin\hat{X}\right)+\dfrac{K^2S_2(t)}{2\omega}\sin^2 \hat{X} \right\}-\dfrac{Kt}{2\omega}\hat{X}\sin\hat{X} ,\nonumber\\
\end{eqnarray}
where 
\begin{eqnarray*}
S_1(t)=\dfrac{t(t-1)}{2}\hspace{0.5cm}\text{and}\hspace{0.5cm}S_2(t)=\dfrac{t(t-1)(2t-1)}{6}. 
\end{eqnarray*}
From Eq. (\ref{timeevolvedgen}), QFI growth in any arbitrary initial state $|\psi\rangle$ can be straightforwardly obtained by evaluating the variance of $\hat{h}(\omega; t)$, i.e.,  
\begin{eqnarray}
I(|\psi\rangle; \omega, t)=4\left( \langle \psi|\hat{h}^2(\omega, t)|\psi\rangle-\langle \psi|\hat{h}(\omega, t)|\psi\rangle^2 \right) 
\end{eqnarray}
One can see that the scalar coefficients in the expression of the generator $\hat{h}(\omega, t)$ are polynomials in $t$ with degrees ranging from linear to cubic. This will imply that $I_{|\psi\rangle}(\omega; t)\sim O(t^6)$. We numerically evaluate QFI for the two initial states --- the vacuum state $|0\rangle$ and the coherent state $|\alpha\rangle$. The results are shown in Fig. \ref{fig:fis-2-trans}a. For both the initial states, the hexic scaling is clearly visible beyond $t\sim 10^3$ time steps.

We now focus on QFI growth under quantum resonance conditions. For $R=4$, and $\omega=l/2\pi$, where $l\in\mathbb{Z}^{+}$, it was shown in Ref. \cite{borgonovi1995translational} that $\hat{U}^4$ commutes with a two-parameter group of displacement operators in the phase space of the form
\begin{eqnarray}
\mathcal{D}_{n_1, n_2}\equiv D\left(2\pi\sqrt{\dfrac{\omega}{2}}\left(n_1-in_2\right)\right), 
\end{eqnarray}
where $n_1, n_2\in \mathbb{Z}$ are the free parameters and $D(\beta)$ represents a displaccement operator with the complex argument $\beta$. In this case, the mean energy was shown to grow quadratically in the long-time limit. This can be made slightly more rigorous with an example. We consider $\omega=1/2\pi$, which automatically sets $\tau={\pi}^2$. For an arbitrary choice of other free parameters, one can show that $[\hat{U}^{4}_{\omega}, e^{\pm i\hat{P}/\omega}]=0$. This, for $t=4s$, where $s\in\mathbb{Z}^{+}\cup\{0\}$, results in an explicit linear growth in the Heisenberg evolution of the bosonic ladder operators:
\begin{eqnarray}
\hat{a}(t)=\hat{a}-\dfrac{Kt}{4\sqrt{2\omega}}\sum_{j=0}^{3}i^{j}\sin\left(\dfrac{\hat{P}(j)}{\omega}\right),  
\end{eqnarray}
where $\hat{P}=-i\sqrt{\dfrac{\omega\hbar}{2}}\left(\hat{a}-\hat{a}^{\dagger}\right)$ is the momentum operator. Accordingly, the mean energy operator $(\hat{a}^{\dagger}\hat{a})(t)$ displays a clear quadratic dependence over $t$ irrespective of the choice of the initial state, i.e., 
\begin{eqnarray}
\left(\hat{a}^{\dagger}\hat{a}\right)(t)&=& \hat{a}^{\dagger}\hat{a}-\dfrac{Kt}{4\sqrt{2\omega}}\sum_{j=0}^{3}\left[ (-i)^j\sin\left( \dfrac{\hat{P}(j)}{\omega} \right)\hat{a}+i^j\hat{a}^{\dagger}\sin\left( \dfrac{\hat{P}(j)}{\omega} \right) \right]+\dfrac{K^2t^2}{32\omega}\left[\sum_{j=0}^{3} i^{j}\sin\left(\dfrac{\hat{P}(j)}{\omega}\right) \right]^2.\nonumber\\
\end{eqnarray}
The mean energy growth shows a crossover to the quadratic scaling beyond the time that scales like $t\sim 1/K^2$ in any generic initial state. Therefore, one can expect that QFI shows at most hexic scaling. We numerically evaluate QFI by considering the same initial quantum states as before. The results are plotted in Fig. \ref{fig:fis-2-trans}b. The figure demonstrates that QFI shows $\sim t^5$ scaling for both the initial states. We expect a crossover to the $\sim t^6$ scaling after the time scale $T\sim O(1/K^2)$.

\begin{section}{Summary and Discussion}
\label{sen-discussion}

Quantum Fisher information is a central quantity in the quantum parameter estimation theory, which measures the sensitivity of a parametrized quantum state to an infinitesimally small perturbation. In this regard, under certain conditions, the systems that do not respect the assumptions of the KAM theorem (also known as non-KAM systems) display higher sensitivity to the perturbations than a typical chaotic or regular system. In this work, through numerical analysis in conjunction with analytical results, we have examined the metrological performance of the quantum KHO dynamics, a well-known non-KAM system, as a quantum sensor. We have used QFI as a figure of metric. The frequency of the unperturbed oscillator has been taken as the parameter to be estimated or sensed. Moreover, we have used the unitary encoding protocol to imprint the parameter onto the known initial states.

First, we examined the connection between mean energy growth and QFI in the KHO model. We have argued that if the mean energy in a state grows as $t^{\alpha}$, QFI will grow as $t^{2\alpha +2}$. Under the non-resonance condition, it is known that the mean energy remains nearly constant due to the localization effects in the phase space. In this case, we have numerically shown that QFI grows quadratically over time. In contrast, the mean energy grows quadratically at resonances, implying nearly $t^6$ scaling in QFI growth. We have numerically verified these predictions for the vacuum state and a coherent state.
Moreover, when $R\in R_c$, the system dynamics are amenable to the quasi-exact solutions. By considering $R=2$, we have analytically obtained the generator of translations in $\omega$ at any non-zero time. Since this generator has a $t^3$ dependence, QFI shows a clear $t^6$ dependence, which we have verified numerically. Finally, we have also considered a particular case called the quantum resonance condition. At quantum resonances, the mean energy can be analytically shown to be a quadratic function of time in a typical quantum state. As a result, QFI, in this case, can show $t^6$ scaling. In addition, we have discussed the behavior of the Loschmidt echo under the resonance and non-resonance conditions. 

\end{section}
\chapter{Information scrambling in kicked coupled tops system}
\label{KCTchap}

\section{Introduction}
Chaos in classical physics is closely related to non-integrability, ergodicity, complexity, entropy production, and thermalization, which are central to the study of many-body physics and classical statistical mechanics. The quantum mechanics of systems whose classical counterparts are chaotic, generally known as quantum chaos, aims to extend these ideas to the quantum domain. In classical physics, the sensitive dependence on initial conditions implies chaos. A naive generalization of classical chaos to the quantum domain fails due to the unitarity of quantum evolutions. Hence, it is necessary to find the signatures of chaos in quantum systems through various other means. Many quantities, such as level spacing statistics \cite{haake1991quantum}, entanglement entropy \cite{bandyopadhyay2002testing, bandyopadhyay2004entanglement}, out-of-time ordered correlators (OTOCs) \cite{chaos1}, and tri-partite mutual information \cite{pawan}, have emerged as powerful tools to characterize the chaos in quantum systems. OTOCs are particularly interesting due to their usefulness in characterizing various features of quantum many-body systems. Originally introduced in the theory of superconductivity \cite{larkin}, the OTOCs are being studied with renewed interest in the context of quantum many-body systems \cite{ope2, ope1, ope4, ope5, lin2018out}, quantum chaos \cite{chaos1, pawan, seshadri2018tripartite, lakshminarayan2019out, shenker2, moudgalya2019operator, manybody2, chaos2, cotler2017chaos}, many-body localization \cite{manybody3, manybody4, manybody1, huang2017out, pg2021exponential} and holographic systems \cite{shock1, shenker3}. The early time growth rate of OTOCs, in particular, is being actively studied \cite{rozenbaum2017lyapunov, prakash2020scrambling, jalabert2018semiclassical, lakshminarayan2019out, ope5, garcia2018chaos, chen2018operator, moudgalya2019operator, haehl2019classification, chen2017out, omanakuttan2019out, alonso2019out, borgonovi2019timescales, yan2020information, rozenbaum2020early, rozenbaum2019universal, lerose2020bridging}, which is a quantum counterpart of the classical Lyapunov exponent (LE).

In order to define the OTOC, consider two local Hermitian or unitary operators $A$ and $B$ acting on two disjoint local subsystems of a given system with the dimensions $d_A$ and $d_B$, respectively. Then, a function of the commutator, namely the squared commutator, is given by 
\begin{equation}\label{commutator}
C(t)=\frac{1}{2}\langle\psi| [A(t), B]^{\dagger}[A(t), B]|\psi\rangle,
\end{equation}
where $A(t)=U^{\dagger}(t)A(0)U(t)$ and $U$ represents system's time evolution operator. For the sake of simplicity and experimental feasibility, the state $|\psi\rangle$ is usually taken to be the maximally mixed state --- $\mathbb{I}/d_Ad_B$, where $d_A$ and $d_B$ denote subsystem dimensions. In this text, we consider Hermitian operators for OTOC calculations. Then, the commutator function becomes $C(t)=C_{2}(t)-C_{4}(t)$, where
\begin{eqnarray}
C_2(t)=\frac{\tr(A^2(t)B^2)}{d_Ad_B}\quad\text{and}\quad C_4(t)=\frac{\tr(A(t)BA(t)B)}{d_Ad_B}.
\end{eqnarray}
Here, $C_2(t)$ is a time-ordered two-point correlator, and $C_4(t)$ represents the four-point correlator function. $C_4(t)$ possesses an unusual time ordering, which is why it is referred to as OTOC (Out-of-Time-Ordered Correlator). The four-point correlator is the dominant factor driving the total commutator function $C(t)$. Hence, we use the terms OTOC and commutator function interchangeably to denote the same quantity $C(t)$.
 
In this chapter, we study the dynamics of OTOCs in the kicked coupled tops (KCT), a bipartite system with a well-defined classical limit. In prior studies, many authors have considered systems of two (or more) degrees of freedom with time-dependent Hamiltonians of the form $H_{12}(t) = H_1(t) + H_2(t) + H_{12}(t)$, where the classical dynamics generated by $H_1$ and $H_2$ can exhibit chaos for each degree of freedom separately, and the coupling interaction $H_{12}$ between them can be independently varied. For such systems, the chaoticity parameter and coupling constant play different roles. To this effect, a system of coupled kicked rotors has been previously studied \cite{prakash2020scrambling, prakash2019out} where two kicked rotors that independently exhibit chaos were weakly coupled. In this work, we study a system where the coupling strength is the chaoticity parameter, and chaos occurs due to this mechanism rather than separately in the two systems.

In this chapter, our primary goal is to examine operator growth as quantified by the OTOCs in the system of kicked coupled tops. In the context of classical-quantum correspondence, works thus far have largely focussed on the OTOCs in quantum systems with chaotic classical limits. However, the behavior of operator growth in the mixed-phase space remains poorly understood with only a few studies \cite{mondal2021dynamical, notenson2023classical, bergamasco2019out, roy2021entanglement, kidd2021saddle, kidd2020thermalization}. This prompts us to study the role of the mixed-phase space dynamics on operator growth along with the fully chaotic dynamics. Studies in the context of mixed-phase space are interesting for the following reasons: Firstly, the random matrix theory (RMT) has been successful in characterizing the quantum systems that are chaotic in the classical limit. RMT can accurately explain level spacing distributions as well as other statistical properties like the saturation value of entanglement for a globally chaotic system. Furthermore, the regular systems (with Poisson statistics) can be well described using the diagonal unitaries with the diagonal elements chosen uniformly at random from the unit complex circle. However, in the case of mixed systems, the RMT can only capture the universal properties associated with the ratio of the phase space volumes occupied by the regular and chaotic regions \cite{rosenzweig1960repulsion, berry1984semiclassical, prosen1994semiclassical, robnik2000topics, kravtsov2015random}. Secondly, classical systems in mixed-phase space enforce a perfect separation between chaotic and regular trajectories. While the former live in the chaotic sea and have a positive Lyapunov exponent, the latter reside in the regular islands and show no chaos. However, quantum mechanically, the time-evolved operators corresponding to the mixed phase space will have support over both regular and chaotic regions of the phase space, making direct quantum-classical correspondence extremely challenging.

The works thus far have studied the mixed phase space scrambling in a few different models such as standard map \cite{notenson2023classical}, coupled cat maps \cite{bergamasco2019out}, coupled kicked rotors \cite{prakash2020scrambling}, and Bose-Hubbard models \cite{kidd2020thermalization, kidd2021saddle}. 
It has been argued that in these systems, the short-time growth provides an ambiguous indicator of transition to chaos \cite{kidd2021saddle}. This is partly because of the ambiguity associated with the states lie at the boundaries between the regular and chaotic regions. These states, despite being regular, can show characteristics (such as entanglement entropy) similar to the chaotic states \cite{lombardi2011entanglement, madhok2015comment}. Nevertheless, the long-time dynamics of the OTOCs provide a much clearer indicator of chaos in these cases \cite{garcia2018chaos}. Moreover, in fully chaotic systems, the OTOCs approach saturation exponentially, while mixed phase space leads to a multi-step relaxation of OTOCs towards saturation \cite{notenson2023classical}. In this work, to study the scrambling in mixed-phase space, we approach a different route. In particular, we invoke Percival's conjecture \cite{percival1973regular} and partition the eigenstates of the Floquet map into ``regular" and ``chaotic" and examine the behavior of OTOCs in those states. 

Apart from being chaotic, the system we consider also displays a conservation law. The conserved quantities have been shown to slow the information scrambling in various systems \cite{chen2020quantum, ope4, ope5, kudler2021information, cheng2021scrambling, balachandran2021eth}. These works typically explore random unitary circuits or spin chains as physical models. These models, however, lack a smooth classical limit. In the latter part of this work, we explore how the conserved quantity constrains the growth of OTOCs as the system approaches a classical limit, highlighting the interplay between symmetries and chaos.

This chapter is structured as follows. In Sec. \ref{Model}, we review some of the details of the KCT model. Section \ref{Information scrambling in the KCT model} is devoted to analyzing information scrambling in the KCT model. In Sec. \ref{OTOC in the largest invariant subspace} and \ref{Scrambling in the mixed phase space}, we examine scrambling in the largest subspace for the fully chaotic and mixed phase space regimes, respectively. In Sec. \ref{the IzJzOTOC} and \ref{the IxJxOTOC}, we consider the entire system KCT system and study the scrambling for two different initial operator choices. Section \ref{Scrambling, operator entanglement and coherence} is devoted to the cases of random operators, wherein we show that the OTOC for the random operators is connected to the operator entanglement and coherent generating power of the time evolution operator. Finally, we conclude the text in Sec. \ref{SUMMARY AND DISCUSSIONS}. 

\section{Model: Kicked coupled top}\label{Model}
The kicked coupled top (KCT) model, originally introduced in Ref. \cite{trail2008entanglement}, is governed by the following Hamiltonian:
\begin{equation}
H=\dfrac{\alpha}{\sqrt{|\mathbf{I}||\mathbf{J}|}} \mathbf{I.J} +\beta\sum_{n=-\infty}^{\infty}\delta(t-n\tau) J_{z},
\end{equation}
where $\alpha$ and $\beta$ denote coupling strength and kicking strength, respectively. For simplicity, here and throughout, we take $|\mathbf{I}|=|\mathbf{J}|=J$. The system evolution can be understood as alternating rotations of $\mathbf{J}$ around $z$-axis, followed by a precession of $\mathbf{I}$ and $\mathbf{J}$ about $\mathbf{F=I+J}$ by an angle proportional to $\alpha |\mathbf{F}|$. Total magnetization along $z$-axis ($F_z=\hat{I}_{z}+\hat{J}_{z}$) is the only known constant motion in this system. The absence of enough conserved quantities makes the system non-integrable and also chaotic when $\alpha$ is sufficiently large. We note that a coupled top system has been studied earlier \cite{feingold1983regular}, and a time-independent variant of this model has also been studied \cite{fan2017quantum}. 
In our work, the motivation for choosing the particular system is that this Hamiltonian describes the hyperfine interaction between nuclear spin and total electron angular momentum with a magnetic field that has a negligible effect on the nucleus. While not extending deeply into the semiclassical regime, this realization remains suitable for investigating non-trivial mesoscopic regimes in large atoms with heavy nuclei and many valence shell electrons \cite{trail2008entanglement}. In contrast, an earlier study \cite{mondal2021dynamical} on coupled tops had a Hamiltonian that does not have an isotropic interaction. Moreover, the external field in their model acts on both spins, and their main objective is to use ``Fidelity OTOCs" (FOTOCS) to detect scars.

\begin{figure}
\center
\includegraphics[scale=0.6]{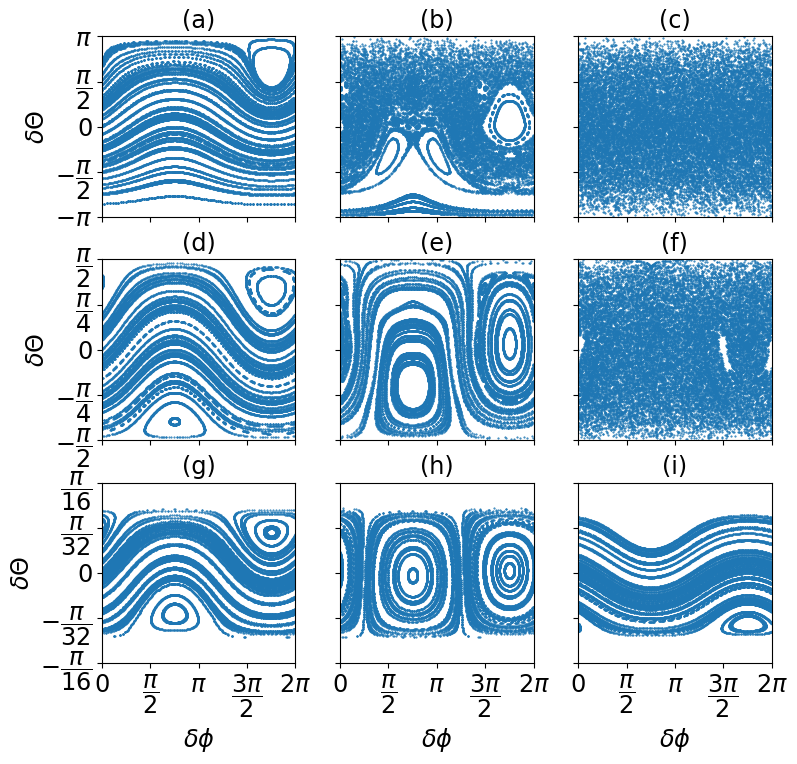}
\caption{\label{fig1} Poincaré surface of sections corresponding to different coupling strengths (along the rows) and different sectors of $F_z$ (along the columns). The kicking strength $\beta$ is kept constant at $\pi/2$. Panels (a)-(c) illustrate the case where $F_z=0$. Along the row, the coupling strengths are $\alpha=1/2$, $3/2$, and $6$ from left to right, respectively. In panels (d)-(f), $F_z=1$, while panels (g)-(i) illustrate $F_z=1.99$. For $\alpha=1/2$, the phase space in the sectors close to $F_z=0$ remains mostly regular across all $F_z$ sectors. However, $\alpha=3/2$ and $\alpha=6$ correspond to mixed and fully chaotic phase spaces. The phase space maintains regularity for all the couplings in the smaller sectors that are close to $|F_z|=2$.}
\end{figure}

Initially, the classical KCT model exhibits six degrees of freedom, constrained by $|\mathbf{I}|=|\mathbf{J}|=J$. This constraint results in four effective degrees of freedom. Conservation of $F_z$ further reduces the phase space from four to three dimensions. Additionally, fixing $F_z$ can effectively reduce the phase space to two dimensions. To actually visualize this, we write the Hamiltonian in terms of canonical conjugate pairs of sum and difference coordinates given by $(F_{z}=I_{z}+J_{z}, \bar{\phi}=\phi_{I}+\phi_{J})$ and $(\delta F_{z}=I_{z}-J_{z}, \delta \phi = \phi_{I}-\phi_{J})$. Then, the Hamiltonian becomes
\begin{eqnarray}\label{classical_eq}
H&=&\alpha\left[I_{z}J_{z}+|\mathbf{I}||\mathbf{J}|(\sin(\phi_{I})\sin(\phi_{J})+\cos(\phi_{I})\cos(\phi_{J}))\right]+\beta \sum_{n=-\infty}^{\infty}\delta(t-n\tau) J_{z}\nonumber\\
&=&\alpha\left(\frac{F_z^2-\delta F_z^2}{4}+ |\mathbf{I}||\mathbf{J}|\cos(\delta\phi)\right)+\beta\sum_{n=-\infty}^{\infty}\delta(t-n\tau)\left( \frac{F_{z}-\delta F_{z}}{2} \right), 
\end{eqnarray}
where we have defined $I_x=|\mathbf{I}|\cos(\phi_{I})$ and $I_y=|\mathbf{I}|\sin(\phi_{I})$ and similarly, $J_x=|\mathbf{J}|\cos(\phi_{J})$ and $J_y=|\mathbf{J}|\sin(\phi_{J})$.
Since $F_z$ is conserved, $\overline{\phi}$ remains a cyclic coordinate and does not appear in the Hamiltonian. Therefore, to examine an invariant Poincaré section corresponding to a constant $F_z$, we only need the variables $(\delta F_z, \delta\phi)$. For $F_z=0$, the corresponding Hamiltonian is
\begin{equation}\label{fzzeoHam}
H={\alpha}\left[\frac{-\delta F_{z}^2}{4}+|\mathbf{I}||\mathbf{J}| \cos(\delta \phi)\right]-\beta\sum_{n=-\infty}^{\infty}\delta(t-n\tau) \frac{\delta F_{z}}{2} .
\end{equation}
By writing $\delta F_z= \cos\theta_I -\cos\theta_J =\sin(\delta\theta/2)\sin(\overline{\theta}/2)$, where $\overline{\theta}=\theta_I+\theta_J=\pi$ and $\delta\theta=\theta_I-\theta_J$, the phase space can be visualized in $(\delta\theta, \delta\phi)$ variables for various initial conditions.

By noting that the rotations of the classical vectors can be implemented by the SO($3$) operators, the phase space of the KCT model can be easily visualized. To be specific, the evolution of the classical angular momentum vectors $\mathbf{I}$ and $\mathbf{J}$ can be written as follows:    
\begin{eqnarray}\label{In}
\mathbf{I}(n+1) = \exp\left\{ \alpha\left( F_x\hat{L}_x+F_y\hat{L}_{y}+F_zL_z \right) \right\}\mathbf{I}(n)  
\end{eqnarray}
and 
\begin{eqnarray}\label{Jn}
\mathbf{J}(n+1)=\exp\left\{ \beta L_z \right\} \exp\left\{ \alpha\left( F_x\hat{L}_x+F_y\hat{L}_{y}+F_zL_z \right) \right\}\mathbf{J}(n), 
\end{eqnarray}
where $L_{x, y, z}$ are the generators of the SO$(3)$ group, and the sum of the two vectors is given by $\mathbf{F}=\mathbf{I}+\mathbf{J}=[F_x, F_y, F_z]$. For convenience, in the above equations, we have normalized the classical vectors to $1$. The Poincaré sections corresponding to three different $F_z$ sectors, namely, $F_z=0$, $1$, and $1.99$ (along the columns) are shown in Fig. (\ref{fig1}) for a few different $\alpha$ values along the rows. In the absence of coupling, the system remains integrable. In the largest sector ($F_z=0$), we observe that the chaos slowly builds up with the increase in $\alpha$. However, the sectors far from $F_z=0$ show regular dynamics even when the $\alpha$ is large as evident from Fig. \ref{fig1}g-\ref{fig1}i. Figure (\ref{lyap}a) shows the illustration of maximum LE (averaged over many initial conditions) vs. $\alpha$ calculated using Benettin's algorithm \cite{benettin1980lyapunov, kuznetsov2012hyperbolic} in $F_z=0$ section. The plot suggests that the growth of the average LE versus the coupling strength is logarithmic --- $\lambda\sim\log(\alpha)$. Interestingly, this is very similar to the analytical calculation of LE done for the kicked top in Ref. \cite{PhysRevE.56.5189}. Moreover, as $F_z$ increases, the area of the corresponding phase space sector decreases. As a result, the largest Lyapunov exponent $\lambda_{F_z}$, which roughly scales like $\sim\log(\text{Area})$, decreases. Fig. (\ref{lyap}b) illustrates the same for two different $\alpha$ values. We observe that $\lambda_{F_z}$ decreases linearly with $F_z$. Near $|F_z|=2J$, the area tends to vanish, leading to highly localized dynamics. 
\begin{figure}
\center
\includegraphics[scale=0.5]{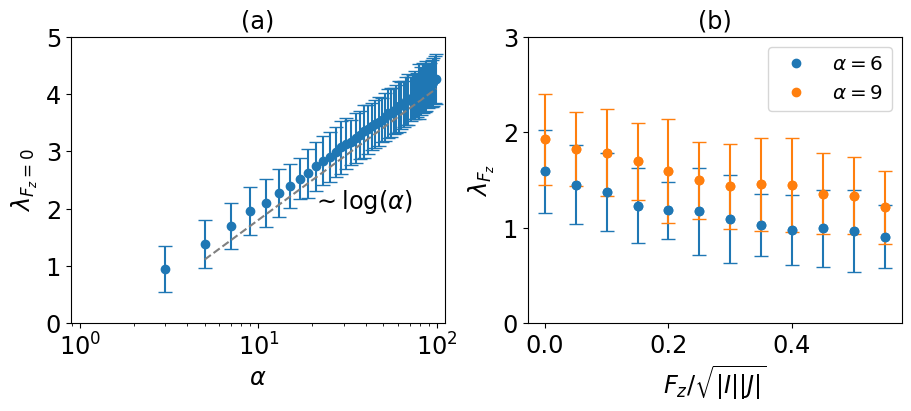}
\caption{\label{lyap} (a) Illustration of the maximum Lyapunov exponent vs. coupling strength $\alpha$ in the $F_z=0$ sector averaged over $10^3$ initial conditions. The plot suggests that the average Lyapunov exponent in the largest invariant subspace ($\lambda_{F_z=0}$) follows an approximately logarithmic law. (b) The Lyapunov exponent is calculated for different values of $F_z$. The interaction strength and the kicking parameter are kept fixed at $6$ and $\pi/2$, respectively. The error bars represent the standard deviation of the maximum Lyapunov exponent.}
\end{figure}

In the quantum regime, the evolution of the system is given by the following Floquet operator:
\begin{eqnarray}\label{KCT}
U=\exp\left\{\dfrac{-i\alpha}{\sqrt{IJ}} \mathbf{I.J}\right\} \exp\left\{-i\beta \hat{J}_{z}\right\},
\end{eqnarray}
where $\mathbf{I.J}=\hat{I}_x\hat{J}_x+\hat{I}_y\hat{J}_y+\hat{I}_z\hat{J}_z$
The quantum version of the KCT is also non-integrable due to insufficient conservation laws. In the uncoupled representation, the elements of $U$ can be written as follows:
\begin{eqnarray}\label{floq}
U_{m_1m_2, m_3m_4}=\sum_{F=|M|}^{2J}e^{-i\frac{\alpha}{2\sqrt{|I||J|}}F(F+1)}e^{-i\beta m_4}C_{I,m_1;J, m_2}^{F, M}C_{I,m_3;J, m_4}^{F, M},
\end{eqnarray}
where $C_{I, m_i;J, m_j}^{F, M}=\langle F, M|I, m_i;J, m_j\rangle$ denote Clebsch-Gordan coefficients. Similar to its classical counterpart, $U$ also admits a decomposition into various invariant subspaces characterized by the quantum number $F_z$ --- $U=\oplus_{F_z}U_{F_z}$, where $F_z$ runs from $-2J$ to $2J$. As a result, $U_{m_1m_2, m_3m_4}=0$ whenever $|m_1+m_2|\neq|m_3+m_4|$. Moreover, the dimension of each subspace is related to $F_z$ as $d=2J+1-|F_z|$.

An interesting feature of quantum chaos is its connections with random matrix theory (RMT). Dyson introduced three random unitary ensembles \cite{dyson1962threefold}, namely circular unitary ensemble (CUE), circular orthogonal ensemble (COE), and circular symplectic ensemble (CSE) based on their properties under time-reversal operation. If a system exhibits chaos in the classical limit ($\hbar \rightarrow 0$), the spectral statistics of its evolution align with one of the three random unitary ensembles. In the present case, we consider the generalized time-reversal operation and see that the system under study is invariant under the time-reversal operation.
\begin{equation}
\label{Eq:Reversal}
T=e^{i \beta \hat{J}_{z}} K,
\end{equation}
where $K$ is complex conjugation. Since both $\hat{I}_{y}$ and $\hat{J}_{y}$ change sign under conjugation, while the $x$ and $z$ components of the angular momentum remain unaffected,
\begin{equation}
K \hat{J}_{z} K = \hat{J}_{z}; \hspace{1 pc} K(\bf{I} \cdot \bf{J})K = \bf{I} \cdot \bf{J}.
\end{equation}

Hence,
\begin{eqnarray}
T U_{\tau} T^{-1} &=& \left( e^{i \beta \hat{J}_{z}} K \right) \left( e^{-i\tilde{\alpha} \bf{I} \cdot \bf{J}}  e^{-i \beta \hat{J}_{z}}  \right) \left( K e^{-i \beta \hat{J}_{z}} \right)  \\
&=&  e^{i \beta \hat{J}_{z}} \left( e^{i \tilde{\alpha} \bf{I}\cdot \bf{J}} e^{i \beta \hat{J}_{z}}\right) e^{-i \beta \hat{J}_{z}} \nonumber\\
&=& e^{i \beta \hat{J}_{z}}e^{i \tilde{\alpha}\bf{I}\cdot \bf{J}}= U_{\tau}^{\dagger}. \nonumber
\end{eqnarray}
As a result, the dynamics are invariant under time reversal operation. Moreover, $T^2=1$. Since no additional discrete symmetries are present, COE is the appropriate unitary ensemble for the KCT model.

\section{Information scrambling in the KCT model}\label{Information scrambling in the KCT model}
Recall from the previous section that the KCT model conserves $\hat{F}_z = \hat{I}_z + \hat{J}_z$, i.e., $[U, \hat{F_z}]=0$. The conservation law causes the decomposition of the dynamics into invariant subspaces. Among these subspaces, the largest one typically remains dominant. Hence, scrambling in this system is mainly driven by the largest subspace dynamics, but the regular dynamics in smaller subspaces somewhat offset it. In the following, we will examine scrambling in the largest subspace, emphasizing OTOC dynamics in the chaotic and mixed phase space regimes. Afterward, we will consider the entire system and analyze the OTOC dynamics for two pairs of initial operators, namely, $(\hat{I}_{z}, \hat{J}_{z})$ --- where both operators commute with the total magnetization operator $\hat{F}_z$, and $(\hat{I}_{x}, \hat{J}_{x})$ --- where neither operator commutes with $\hat{F}_z$.

\subsection{OTOC in the largest invariant subspace}\label{OTOC in the largest invariant subspace}
\subsubsection{Short-time growth}
Unlike smaller subspaces ($|F_z|\gg 0$), at sufficiently strong coupling $(\alpha\gtrsim 6)$, the subspaces close to $F_z=0$ are predominantly chaotic. Here, we numerically study the OTOC for a pair of operators acting exclusively on the largest invariant subspace ($F_z=0$) of the KCT model. 
The corresponding Floquet evolution in this subspace can be written as follows:
\begin{equation}\label{floqsubspace}
U_{F_z=0}=\sum_{F=0}^{2J}\sum_{m_1, m_2}e^{-i(\frac{\alpha}{2J}F(F+1)+\beta m_2)}C_{I,m_1;J, -m_1}^{F, 0}C_{I,m_2;J, -m_2}^{F, 0}|m_1\rangle\langle m_2|.
\end{equation}
Here, the indices $m_1$ and $m_2$ run from $-J$ to $J$, implying that the unitary acts on $(2J+1)$-dimensional Hilbert space.

\begin{figure}
\center
\includegraphics[scale=0.55]{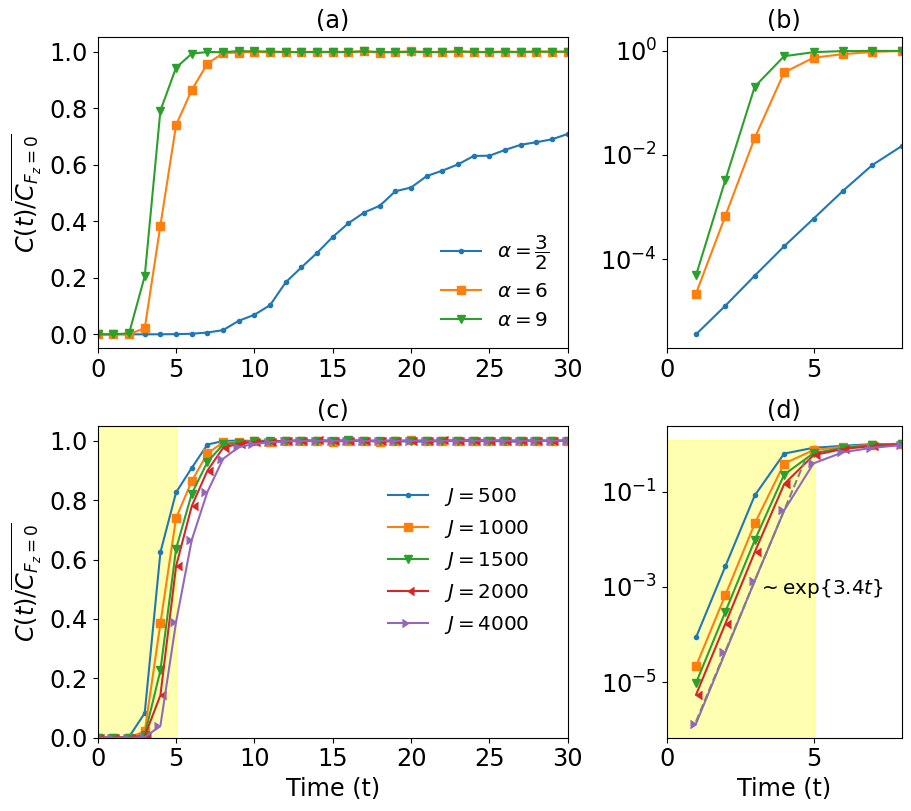}
\caption{\label{largestotoc} Illustration of the OTOC for different $\alpha$ and $J$ values in the largest invariant subspace. The initial operators are $A=B=\hat{S}_z$, the generator of rotations around $z$-axis. These operators act exclusively on the largest subspace $F_z=0$. The initial state is $\rho=\mathbb{I}/(2J+1)$, the maximally mixed state. In panel (a), the OTOC is shown for three different couplings, namely, $\alpha=3/2$, $6$, and $9$. The magnitude of $J$ is fixed at $1000$. Panel (b) shows corresponding early-time exponential growth on a semi-log plot. The plot indicates that the OTOC growth rate increases with an increase in $\alpha$. Panels (c) and (d) demonstrate OTOC growth for varying $J$ values while keeping $\alpha=6$. The plots are normalized by dividing the OTOC $C_{F_z=0}(t)$ with $\overline{C_{F_z=0}}$, the infinite time average of the OTOC.} 
\end{figure}

For the OTOC calculations, we take the initial operators $A=B=\hat{S}_z$, where $S_z$, the generator of rotation along $z$-axis --- $\hat{S}_{z}=\sum_{m=-J}^{J}m|m\rangle\langle m|$, acts non-trivially on the $F_z=0$ subspace. Then, the OTOC is given by 
\begin{eqnarray}
 C_{F_z=0}(t)=\dfrac{1}{2J+1}\left[ \tr\left( \hat{S}^2_z(t)\hat{S}^2_{z} \right)-\tr\left( \hat{S}_{z}(t)\hat{S}_z\hat{S}_z(t)\hat{S}_z \right) \right],  
\end{eqnarray}
where $S_z(t)$ denotes the Heisenberg evolution of $S_z$ under the dynamics given in Eq. (\ref{floqsubspace}). Here, we are fixing the maximally mixed state $\rho=\mathbb{I}/(2J+1)$ as the initial state for the OTOC calculations. In the classical limit, for chaotic systems, the Ehrenfest time is expected to scale as $t_{\text{EF}}\sim\log(\text{dim})/\lambda_{\text{cl}}$, where $\lambda_{\text{cl}}$ is the corresponding classical Lyapunov exponent. Within this timescale, $C_{F_z=0}(t)$ shows an exponential growth if $\alpha$ is sufficiently large. The corresponding numerical results are shown in Fig. \ref{largestotoc}. Figure \ref{largestotoc}a illustrates the operator growth for different values of $\alpha$ by plotting the normalized OTOC --- $C_{F_z=0}(t)/\overline{C_{F_z=0}}$ versus time. Here, $\overline{C_{F_z=0}}$ denotes infinite time average of $C_{F_{z}=0}(t)$ and is given by
\begin{eqnarray}
\overline{C_{F_z=0}}&=&\lim_{t\rightarrow\infty}\dfrac{1}{t}\int_{0}^{t}C_{F_z=0}(s)ds\nonumber\\
&=&\dfrac{1}{2J+1}\left[\sum_{m}[A^2]_{m m}[B^2]_{mm}-[A_{mm}]^2[B_{mm}]^2-\sum_{m\neq n}\left(A_{m m}A_{n n}B_{m n}B_{nm}+A_{mn}A_{n m}B_{m m}B_{nn}\right)\right],\nonumber\\
\end{eqnarray}
where $A_{mn}=\langle E_m|A|E_n\rangle$ etc. and $U_{F_z=0}=\sum_{n}e^{-i\phi_{n}}|E_n\rangle\langle E_n|$ denotes the eigen-decomposition of the subspace Floquet operator. In the figure, the angular momentum is kept fixed at $J=1000$. The figure shows that the growth rate increases with $\alpha$. Moreover, for $\alpha=3/2$, the classical phase space has a mix of regular islands and the chaotic sea [see Fig. \ref{fig1}b]. This results in a slower OTOC growth than the fully chaotic cases (corresponding to $\alpha=6$ and $9$), as illustrated in figure \ref{largestotoc}a. We elaborate more on the mixed-phase space OTOC dynamics in the next subsection. Figure \ref{largestotoc}b demonstrates the corresponding early-time exponential growths of the OTOCs plotted in \ref{largestotoc}a. In Fig. \ref{largestotoc}c and \ref{largestotoc}d, we take the fully chaotic case by fixing $\alpha=6$ and contrast the OTOC growth for different dimensions. The corresponding numerical results yield a quantum exponent of $\lambda_{\text{quant}}\approx 1.67$ for $J=4000$, closely matching the classical LE, $\lambda_{\text{cl}}\approx 1.55$, indicating a good classical-quantum correspondence. The quantum exponent is extracted by fitting the first five data points ($0\leq t<4$) in the curve to an exponential scale. A detailed presentation of the quantum exponents for different $J$-values is given in the table. \ref{table1}.
\begin{figure}
\center
\includegraphics[scale=0.55]{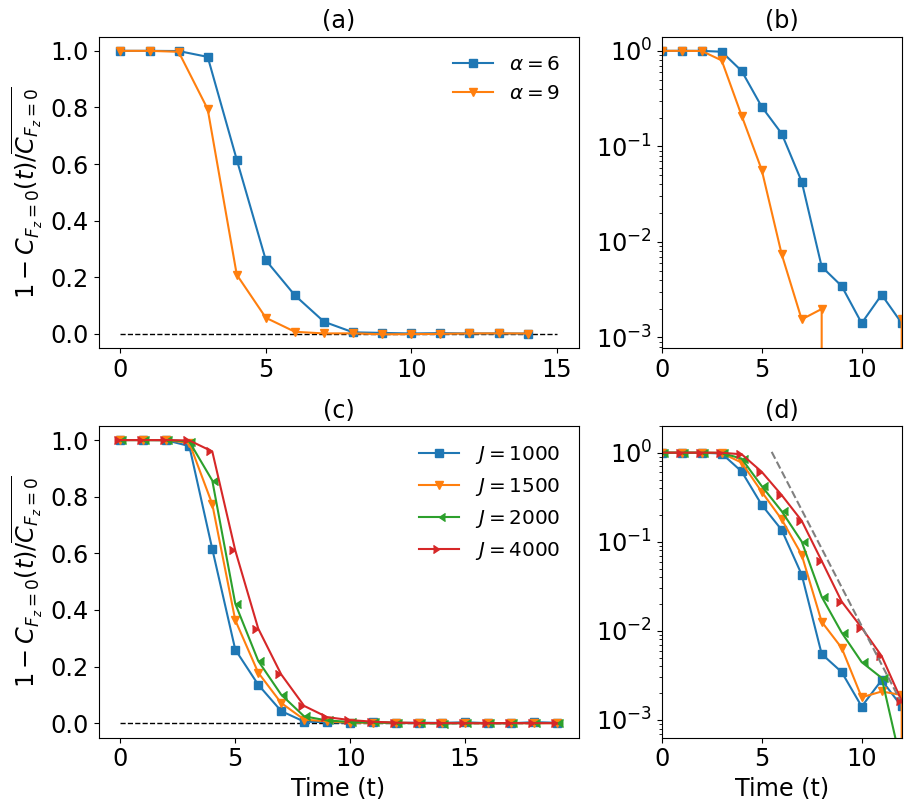}
\caption{\label{largest_relax} Relaxation dynamics of the OTOC as probed by the quantity $1-C_{F_z=0}(t)/\overline{C_{F_z=0}}$. The initial operators are the same as before: $A=B=\hat{S}_z$. Panel (a) and (b) depict relaxation for two different couplings: $\alpha = 6$ and $9$, with fixed $\beta = \pi/2$ and $J = 1000$, shown on linear and semi-log plots, respectively. The plot illustrates that the relaxation to the saturation is faster when $\alpha$ is more. In (c), we fix $\alpha=6$ and vary $J$ from $500$ to $4000$. The plot suggests that the relaxation time scale increases logarithmically with the dimension of the system. Panel (d) illustrates the corresponding relaxation dynamics on a semi-log plot. Note the horizontal black dashed lines in (a) and (c) along the zero on the $Y$-axis. These lines are drawn to demonstrate that the long-time behavior approaches zero.}
\end{figure}

\subsubsection{Relaxation dynamics}
A sharp relaxation to the saturation follows the initial growth. While the short-time behavior of the OTOCs is well-studied and understood, relaxation is relatively unexplored, with only a few studies. For example, exponential relaxation has been found in random quantum circuit models \cite{bensa2022two} and maximally chaotic Floquet systems \cite{claeys2020maximum}. Relaxation in weakly coupled bipartite systems with chaotic subsystems has been studied in Ref. \cite{prakash2020scrambling}. It has been demonstrated that for fully chaotic systems, the relaxation to the equilibrium follows an exponential scaling \cite{polchinski2015chaos, garcia2018chaos}. 
Given a localized state $|\psi\rangle$ and a fully chaotic $U$, randomization of the state implies that $|\langle\psi|U^t|\psi\rangle|^2\sim e^{-\gamma t}\sim 1/(2J+1)$. Then, the information gets fully scrambled in a time window of $t_{\text{sc}}\sim\log(2J+1)/\gamma$. Note that $t_{\text{sc}}$ can not be smaller than $t_{\text{EF}}$, i.e., $t_{\text{EF}}\lesssim t_{\text{sc}}$. Therefore, in strongly chaotic systems, the relaxation takes place over a window of time $t_{\text{relx}}=(t_{\text{sc}}-t_{\text{EF}})\sim\log(2J+1)$. Beyond this, we expect that the OTOC will get saturated. 
To study the relaxation, we take the quantity $1-C_{F_z=0}(t)/\overline{C_{F_z=0}}$. For $t=0$, we have $C_{F_z=0}(t)=0$, implying that $1-C_{F_z=0}(t)/\overline{C_{F_z=0}}=1$. In the long time limit, this quantity approaches $0$. This is because, as $t\rightarrow\infty$, we have $C_{F_z=0}(t)\sim \overline{C_{F_z=0}}$.
The relaxation dynamics are illustrated in Fig. \ref{largest_relax}. Figure \ref{largest_relax}a and \ref{largest_relax}b demonstrate the relaxation for two different $\alpha$ values while keeping $J=1000$. In Fig. \ref{largest_relax}c and \ref{largest_relax}d, we illustrate the relaxation for different $J$ with $\alpha$ fixed. As time progresses, the curves approach zero with fluctuations. These fluctuations are of the order $10^{-3}$ to $10^{-4}$. We further observe that the relaxation follows an exponential decay for the parameters considered. This result is consistent with the results obtained in Ref. \cite{garcia2018chaos}. 
Though not the central focus of our work, we find the relaxation dynamics an interesting observation subject to further exploration.
\begin{figure}
\center
\includegraphics[scale=0.6]{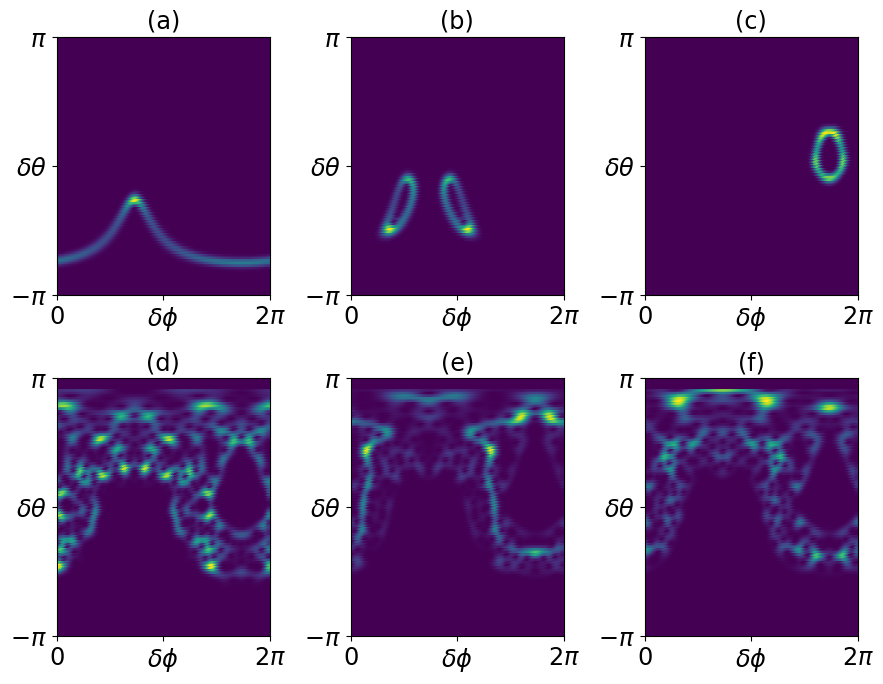}
\caption{\label{fig:eigen_husimi} Husimi plots of the Floquet states of $U_{F_z=0}$. The system parameters correspond to the mixed phase space in the classical limit --- $\alpha=3/2$ and $\beta=\pi/2$. We take $I=J=200$. The Floquet states in (a), (b), and (c) are chosen randomly from the series of points inside the boxes (a), (b), and (c) of Fig. \ref{fig:C_inf_eigen_vs_Sz}, respectively. These three states are localized around different fixed points. (d)-(f) panels correspond to the chaotic Floquet states randomly chosen from the box \ref{fig:C_inf_eigen_vs_Sz}d.  }
\end{figure}

\subsection{Scrambling in the mixed phase space }\label{Scrambling in the mixed phase space}
In fully chaotic systems, the OTOCs exhibit initial exponential growth and subsequent relaxation, followed by saturation. The saturation value can be predicted using an appropriate random matrix ensemble, such as the COE in the case of the KCT model \cite{haake1991quantum}. However, for the systems with mixed phase space classical limit, a direct correspondence with RMT is absent. Moreover, in these systems, the choice of the initial state largely influences the OTOC growth and saturation, a behavior uncommon in fully chaotic cases. To be precise, the initial states with a significant overlap with the coherent states centered near stable fixed points of the phase space limit the degree of operator scrambling. On the contrary, the scrambling is enhanced if the initial state is localized in the chaotic sea. We observe this behavior in both short-term and long-term dynamics of the OTOC.

To begin, we first consider an initial state, which has a non-zero overlap with the coherent states located on the chaotic sea. Percival's conjecture can be used to construct such states \cite{percival1973regular}. The conjecture categorizes Eigenstates or Floquet states of the system into regular (near stable fixed points) and chaotic states (randomly distributed across chaotic regions). To illustrate this in the KCT model, we show the Husimi plots of six randomly chosen Floquet states corresponding to $\alpha=3/2$, $\beta=\pi/2$, and $J=200$ in Fig. \ref{fig:eigen_husimi}. For a given state, the Husimi function is given by $F_H=\langle \delta\theta, \delta\phi |\rho|\delta\theta, \delta\phi\rangle$, where $|\delta\theta, \delta\phi\rangle$ represents the projection of tensor products of spin coherent states $|\theta_I, \phi_I\rangle \otimes |\theta_J, \phi_J\rangle$ onto the largest invariant subspace ($F_z=0$) \cite{trail2008entanglement}:
\begin{align}\label{spin_coherent state}
|\delta\theta, \delta\phi\rangle =\frac{1}{\mathcal{N}}\sum_{m=-J}^{J}\mu^{m}\frac{(2J)!}{(J-m)!(J+m)!}|m,-m\rangle,
\end{align}
where 
\begin{equation*}
\mu=e^{i\delta\phi/2}\left(\dfrac{1+\sin(\delta\theta /2)}{1-\sin(\delta\theta /2)}\right)
\end{equation*} 
and $\mathcal{N}$ denotes the normalizing constant. In Fig. \ref{fig:eigen_husimi}, the top panels (\ref{fig:eigen_husimi}a, \ref{fig:eigen_husimi}a and \ref{fig:eigen_husimi}c) correspond to the Husimi plots of the Floquet states localized near regular islands. On the other hand, the bottom panels (\ref{fig:eigen_husimi}d, \ref{fig:eigen_husimi}e, and \ref{fig:eigen_husimi}f) represent the states delocalized across the chaotic sea. It's important to note that Percival's conjecture is not always accurate, particularly in the deep quantum regime. Nevertheless, quantities such as Husimi entropy of the states can be employed to roughly distinguish the regular states from the chaotic ones \cite{trail2008entanglement}. As chaotic Floquet states are widespread across the chaotic sea, we choose one such state and initialize the system in it. We then compute the OTOC in Eq. (\ref{commutator}) for the largest subspace by fixing the initial operators $A=B=\hat{S}_z$:
\begin{figure}
\center
\includegraphics[scale=0.5]{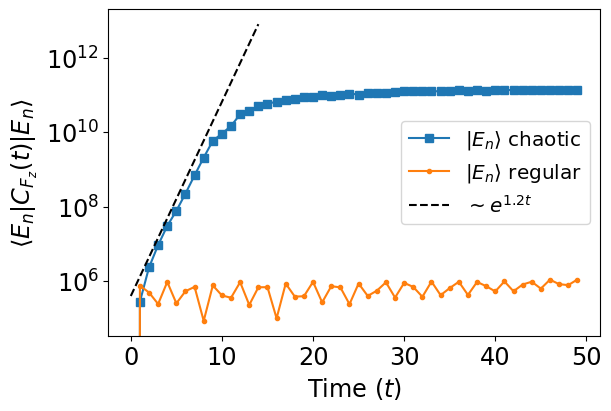}
\caption{\label{fig:OTOC_mixed} Illustration of the OTOC growth in different initial states when the system is associated with the mixed phase space in the $F_z=0$ subsector of the classical phase space. We consider the Floquet states as the initial states. We fix $\alpha=3/2$, $\beta=\pi/2$ and $J=1000$. The initial operators are $A=B=\hat{S}_z$. In the figure, the blue curve represents the OTOC growth in a chaotic Floquet state. The orange curve corresponds to a regular Floquet state localized near a regular island shown in Fig. \ref{fig:eigen_husimi}a. While the former grows at a rate $\sim e^{1.2 t}$ ($\lambda_{\text{otoc}}\approx 0.6$) initially, the latter one fluctuates around a mean value for all the times. The classical Lyapunov exponent for initial conditions in the chaotic sea is $\lambda_{\text{cl}} \approx 0.5$, determined using Benettin's algorithm. }
\end{figure}
\begin{eqnarray}\label{EN}
\langle E_n|C_{F_z=0}(t)|E_n \rangle= \langle E_n|\left[\hat{S}_z(t), \thinspace \hat{S}_z\right]^{\dagger}\left[\hat{S}_z(t), \thinspace \hat{S}_z\right] |E_n\rangle, 
\end{eqnarray}
where $|E_n\rangle$ denotes a randomly chosen chaotic Floquet state of the system. The corresponding results are shown in Fig. \ref{fig:OTOC_mixed}. In the figure, the blue curve represents the OTOC growth in the chaotic initial state. In this case, the OTOC shows an early time exponential growth ($\sim e^{1.2t}$), followed by saturation. To contrast this with the regular states, we also considered a Floquet state localized near a regular island shown in Fig. \ref{fig:eigen_husimi}a and computed the OTOC in it. In the figure, the regular state OTOC is shown in orange. In this case, the OTOC does not grow and shows fluctuations around its mean value. This clearly indicates that the regular states constrain the scrambling of operators in the mixed-phase space. On the other hand, the chaotic states enhance the scrambling. 

\begin{figure}
\center
\includegraphics[scale=0.6]{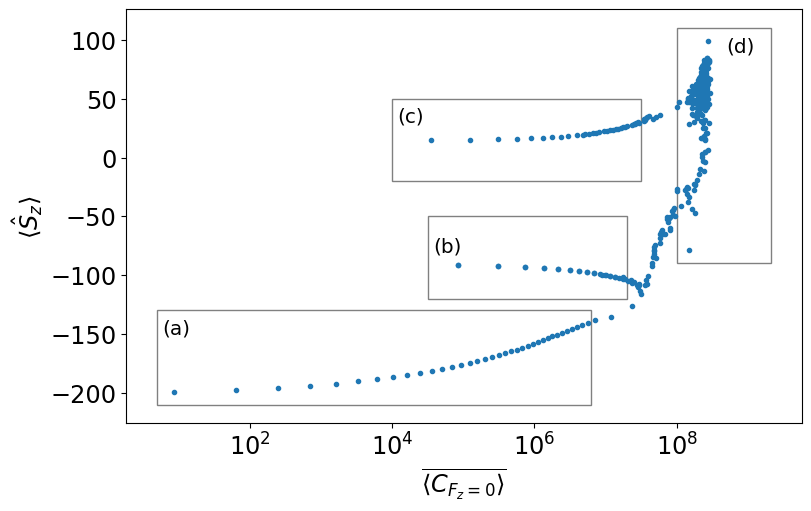}
\caption{\label{fig:C_inf_eigen_vs_Sz} Illustration of $\langle S_z\rangle =\langle E_n| \hat{S}_{z}|E_n\rangle$ versus long-time averaged OTOC with respect the initial state $|E_n\rangle$, when the classical limit of the system is associated with the mixed phase ($\alpha=3/2$ and $\beta=\pi/2$) in the $F_z=0$ sector. The states $\{|E_n\rangle\}$s denote the Floquet states of the operator $U_{F_z=0}$. The angular momentum value is taken to be $I=J=200$. For all the data points, the initial operators are fixed: $A=B=\hat{S}_{z}$. Boxes (a), (b), and (c) correspond to the Floquet states localized on the regular regions of the phase space. Box (d) represents a series of chaotic states that display flat long-time averaged OTOC. Note that the boxes are only representative of the states corresponding to different regions of the phase space and do not characterize the boundaries between these regions. }
\end{figure}

We further elucidate the role of Percival's conjecture in the saturation of mixed-phase space OTOC. To do so, we calculate infinite time averages of the OTOC in the Floquet states and contrast the regular states with the chaotic states:
\begin{eqnarray}\label{floquet_otoc}
\overline{\langle E_n| C_{F_z=0}|E_n\rangle}=\lim_{t\rightarrow\infty}\dfrac{1}{t}\int_{0}^{t}ds\langle E_n|C_{F_z=0}(s)|E_n \rangle,
\end{eqnarray}
where $\langle E_n|C_{F_z=0}(t)|E_n \rangle$ is taken from Eq. (\ref{EN}).
We now cluster the Floquet states based on the infinite time averages and their mean locations in the phase space. 
The mean location can be identified by finding $\langle E_n|\hat{S}_z|E_n\rangle$, which correlates to $\delta\theta$ in the semiclassical limit. Figure \ref{fig:C_inf_eigen_vs_Sz} shows $\overline{\langle E_n|C_{F_z=0}|E_n\rangle}$ vs. $\langle E_n|\hat{S}_z|E_n\rangle$ for a fixed $J=200$ and $\alpha=3/2$. The correlation between $\langle\hat{S}_{z}\rangle$ and $\delta\theta$ causes the Floquet states near stable fixed points to display nearly identical $\langle \hat{S}_z\rangle$ values \cite{gorin1997phase, trail2008entanglement}. In the figure, box (a) corresponds to the series of states localized near the stable south pole [see Fig. \ref{fig:eigen_husimi}a]. The boxes (b) and (c) correspond to the regular islands displayed in Fig. \ref{fig:eigen_husimi}b and Fig. \ref{fig:eigen_husimi}c. Moreover, the chaotic Floquet states typically display higher $\overline{\langle E_n|C_{F_z=0}|E_n\rangle}$ as they are delocalized in the chaotic sea and approximate random quantum states. These states are concentrated vertically along a line within box (d). Note that the plots in Fig. \ref{fig:eigen_husimi}d-\ref{fig:eigen_husimi}f represent three random Floquet states from the box (d). Lesser values of the time-averaged OTOC for the states chosen from boxes (a), (b), and (c) indicate that the operators are less prone to get scrambled if the initial state is localized on a regular island. This analysis shows that the OTOCs are sensitive to the initial state vectors when the classical limit of the system is in a mixed-phase space. Moreover, it is to be noted that the infinite time averages for the regular Floquet states appear to vary smoothly with $\langle \hat{S}_z\rangle$ within the boxes (a), (b), and (c). The points near the extreme right ends of these boxes exhibit infinite time averages close to chaotic states despite being regular. These points represent the states that are localized near the boundaries between the chaotic sea and the regular islands of the phase space. Similar behavior has been previously found for the entanglement entropy \cite{lombardi2011entanglement, madhok2015comment}. Hence, at the boundaries, the operators get scrambled despite the initial states there being regular. Furthermore, our findings complement previous results indicating that hyperbolic fixed points can induce scrambling \cite{hashimoto2020exponential, pilatowsky2020positive, pappalardi2018scrambling, hummel2019reversible, xu2020does, steinhuber2023dynamical}. Conversely, our results demonstrate that regular states located near the boundary of the stable islands with the chaotic region can also induce scrambling.

\begin{figure}
\center
\includegraphics[scale=0.5]{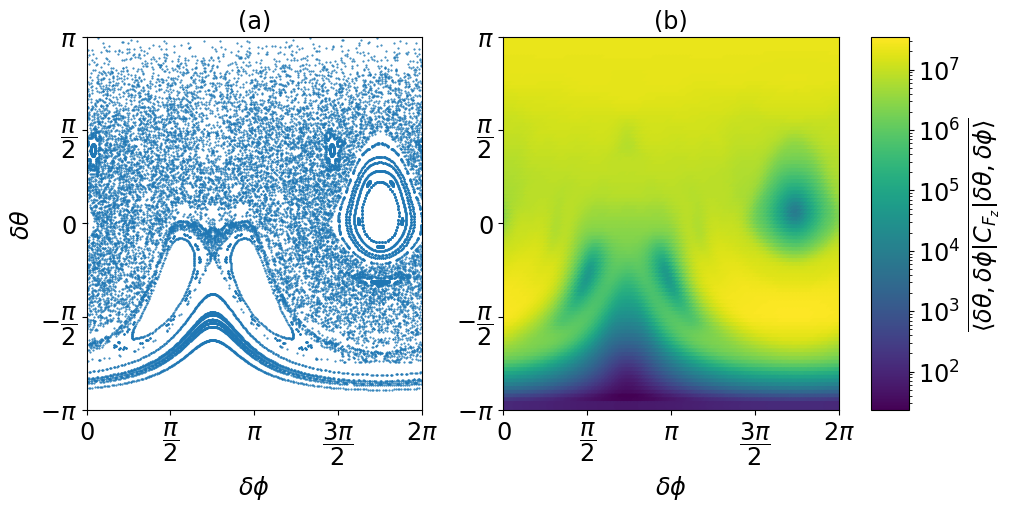}
\caption{\label{fig:husimi_comp} (a) and (b). Side-by-side comparison, showing the infinite-time average of the largest subspace OTOC in the spin coherent states (Fig. (b)) as a superb signature of classical chaos in mixed-phase space ($\alpha =\frac{3}{2}, \beta =\frac{\pi}{2}$) as shown by the Poincaré surface of the section in (a). The angular momentum value is taken to be $I=J=100$. The initial operators are $A=B=\hat{S}_z$. }
\end{figure}

\subsubsection{Quantum-Classical correspondence in the phase space}

Quantum signatures of classical phase space structures appear in various contexts, including the dynamical generation of entanglement \cite{trail2008entanglement, miller1999signatures, ghose2004entanglement, wang2004entanglement, bandyopadhyay2002testing, dogra2019quantum}, tri-partite mutual information \cite{seshadri2018tripartite}, quantum discord \cite{madhok2015signatures} and information gain in quantum state tomography \cite{madhok2014information, madhok2016characterizing}. Does the long-time average of the OTOC in the coherent states display such signatures? To proceed, we replace the Floquet states $\{|E_n\rangle\}$ with the coherent states in the Eq. (\ref{floquet_otoc}) to calculate $\overline{\langle\delta\theta, \delta\phi|C_{F_z=0}|\delta\theta, \delta\phi\rangle}$. Figure \ref{fig:husimi_comp}a and \ref{fig:husimi_comp}b shows the side-by-side comparison between the classical Poincaré section and the scatter plot of the infinite time averaged OTOC (unnormalized) in a mixed-phase space ($\alpha =3/2$). The purpose of this analysis is twofold.
 Firstly, We see a remarkable correlation between the classical phase space and $\overline{\langle\delta\theta, \delta\phi| C_{F_z=0}|\delta\theta, \delta\phi\rangle}$. The latter reproduces classical phase space structures, such as regular islands and the chaotic sea. We see that the values of the quantum calculation attain a fairly uniform value across the entire chaotic sea irrespective of the coordinates of the initial coherent state employed in its computation. Therefore, the information about the initial coordinates of the coherent states in the chaotic sea gets washed away and cannot be recovered from the long-time average OTOC values as seen in the contour plots. This is in contrast with structures seen at the boundary of the chaotic sea and regular islands and also within the regular islands. For example, the darker regions encircle stable fixed points where the operators remain stable and less prone to scrambling. Secondly, while OTOCs have been primarily employed to study the initial rate of divergence of trajectories as captured by Lyapunov exponents, longtime averages of OTOCs provide an understanding of the quantum-classical border.

\begin{table}
\center
\begin{tabular}{ |p{2cm}||p{3cm}|p{3cm}||p{3.cm}|}
\hline
\multicolumn{4}{|c|}{$\alpha=6, \beta=\pi/2$} \\
\hline
Dimension & $\lambda_{F_z=0}$& $\lambda_{\hat{I}_{z}\hat{J}_{z}}$ &$\lambda_{\text{cl}(F_z=0)}$\\
\hline
$J$=500   & 1.7108$\pm O(10^{-4})$ $0\leq t< 4$  &1.47$\pm O(10^{-2})$ $0\leq t< 4$ &{$\approx$ 1.55}\\  
\vspace{.15cm} & \vspace{.15cm}&\vspace{.15cm}&\vspace{.15cm}\\
$J$=1000   & 1.7246$\pm O(10^{-4})$ $0\leq t< 4$ & 1.362$\pm O(10^{-2})$ $0\leq t< 4$&\\
\vspace{.15cm} & \vspace{.15cm}&\vspace{.15cm}&\vspace{.15cm}\\
$J$=2000   & 1.6406$\pm O(10^{-3})$ $0\leq t < 5$ & &\\
\vspace{.15cm} & \vspace{.15cm}&\vspace{.15cm}&\vspace{.15cm}\\
$J$=4000   & 1.6988$\pm O(10^{-3})$ $0\leq t< 5$ & &\\
\hline
\end{tabular}
\caption{The table illustrates the classical Lyapunov exponent of the largest subspace ($\lambda_{\text{cl}(F_z=0)}$) and contrasts it with the OTOC growth rate in the largest subspace ($\lambda_{F_z=0}$) for different $J$-values. Further contrast is made between $\lambda_{F_z=0}$ and the growth rate of $C_{\hat{I}_z\hat{J}_z}(t)$, which is denoted by $\lambda_{\hat{I_z}\hat{J}_z}$. The table also provides the time scales over which the exponential fit is considered. We fix the parameters of the system at $\alpha=6$ and $\beta=\pi/2$. Closeness of $\lambda_{F_z=0}$ and $\lambda_{\text{cl}(F_z=0)}$ indicates good quantum-classical correspondence. Moreover, we observe that $\lambda_{\hat{I}_z\hat{J}_z}$ is always slightly less than $\lambda_{F_z=0}$ (see the main text for more details).}
\label{table1}
\end{table}


\begin{figure}
\center
\includegraphics[scale=0.5]{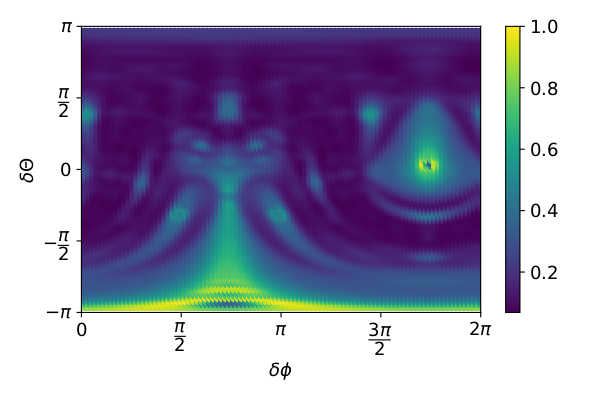}
\caption{\label{fig:otocb} Long time average of the squared commutator when the operators are projectors onto the coherent states. The expectation of the commutator function is taken over a maximally mixed state. The color bar represents the value of the commutator function (normalized).}
\end{figure}

We now consider the initial operators, which are projectors onto the coherent states, i.e., $A=B=|\delta\theta, \delta\phi\rangle\langle\delta\theta, \delta\phi|$. Figure (\ref{fig:otocb}) displays the density plot for the commutator function as a function of the mean coordinate of the coherent states. The plot is reminiscent of the classical Poincaré surface of the section shown in Fig. \ref{fig:husimi_comp}a when $\alpha=3/2$. Intuitively, we can understand the density plot as follows. When the operators are coherent state projectors, the squared commutator becomes 
\begin{equation}
C(t)=|\langle \psi|U(t)| \psi \rangle|^2-|\langle \psi|U(t)| \psi\rangle|^4.
\end{equation}

If the state lies deep inside the regular island close to a stable fixed point, the application of the Floquet operator should not take the state far away from the initial state. This implies that $|\langle \psi|U(t)| \psi \rangle|^2$ is expected to be close to unity. On the other hand, if the state lies in the chaotic sea then the quantity $|\langle \psi|U(t)| \psi \rangle|^2$ is expected to be close to zero as the application of the Floquet operator immediately randomizes the state. In both cases, the squared commutator will be close to zero, and hence, we observe darker contrast near these regions in the density plot. All the other regions, such as regions in the regular islands away from fixed points and stable orbits, display brighter contrast. In other words, the projectors onto the coherent states in the chaotic sea and the states corresponding to the stable fixed points are less prone to get scrambled.

\subsection{The $\hat{I}_{z}\hat{J}_{z}$ OTOC}\label{the IzJzOTOC}
Here, we compute the OTOC for the entire system of KCT by considering the initial operators $A=\hat{I}_{z}\otimes \mathbb{I}$ and $B=\mathbb{I}\otimes \hat{J}_{z}$. Since $A$ and $B$ commute with $F_z$, they can be decomposed as $A=\oplus_{F_z}A_{F_z}$, and $B=\oplus_{F_z}B_{F_z}$. Accordingly, the commutator function $C(t)$ becomes
\begin{equation}\label{F_z_conserving}
C_{\hat{I}_{z}\hat{J}_{z}}(t)=\dfrac{1}{(2J+1)^2} \sum_{F_z}\tr[A_{F_z}^2(t)B_{F_z}^2]-\tr[A_{F_z}(t)B_{F_z}A_{F_z}(t)B_{F_z}]
\end{equation}
where $A_{F_z}(t)=U^{\dagger t}_{F_z}A_{F_z}U^{t}_{F_z}$. Given the absence of cross-terms between different subspaces, each subspace makes an independent contribution to the OTOC. We shall see from the numerical results that $C_{\hat{I}_{z}\hat{J}_{z}}(t)$ shows a growth rate slightly less than that of $C_{F_z=0}(t)$ over a short period. This can be intuitively understood by examining the following quantity:
\begin{equation}
\dfrac{C_{\hat{I}_{z}\hat{J}_{z}}(t)}{C_{F_z=0}(t)}=\dfrac{1}{2J+1}\left[1+\sum_{\substack{F_z=-2J\\ F_z\neq 0}}^{2J}\left(1-\dfrac{|F_z|}{2J+1}\right)\dfrac{C_{F_z}(t)}{C_{F_z=0}(t)}\right], 
\end{equation}
where $C_{F_z}(t)$ denotes OTOC contribution from a subspace with quantum number $F_z$. Due to the fully chaotic nature, the largest subspace will likely have the shortest Ehrenfest time. For $t\lesssim t_{\text{EF} (F_z=0)}$, there is a high chance that in the subspaces close to the largest one, the OTOCs will grow exponentially with the rates $\lambda_{F_z\neq 0}\lessapprox\lambda_{F_z=0}$. Consequently, $C_{F_z\neq 0}(t)/C_{F_z=0}(t)\sim\exp{2(\lambda_{F_z\neq 0}-\lambda_{F_z=0})t}$ either decays exponentially but slowly or stays a constant. On the other hand, for $|F_z|\gg 0$, the quantity $C_{F_z\neq 0}(t)/C_{F_z=0}(t)$ decays quickly to zero. Therefore, $C_{\hat{I}{z}\hat{J}{z}}(t)/C_{F_z=0}(t)$ most likely decays for $t\lesssim t_{\text{EF}(F_z=0)}$. As a result, the initial growth rate of $C_{\hat{I}{z}\hat{J}{z}}(t)$ is expected to be slightly lower than that of $C_{F_z=0}(t)$. We confirm this from the numerical simulations shown in Fig. \ref{fig:zzotoc}. For the numerical simulations, we consider $\alpha=6$, which is strong enough to make a large number of subspaces fully chaotic. 
\begin{figure}
\center
\includegraphics[scale=0.5]{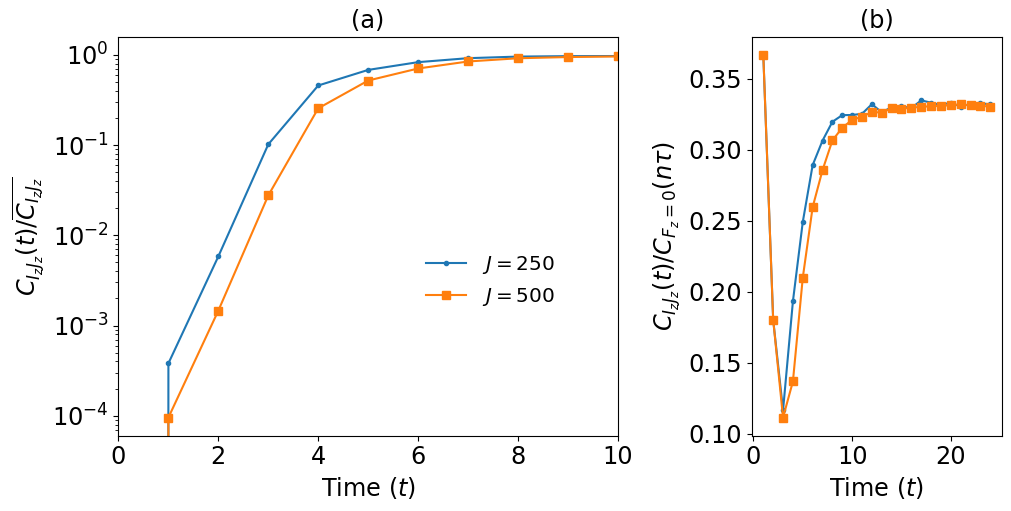}
\caption{\label{fig:zzotoc} (a) Illustration of the early-time exponential growth of $C_{\hat{I}_{z}\hat{J}_{z}}(n\tau)/\overline{C_{\hat{I}_{z}\hat{J}_{z}}}$ for two different $J$ values, where $\overline{C_{\hat{I}_z\hat{J}_{z}}}$ is the long-time average of $C_{\hat{I}_z\hat{J}_z}(n\tau)$. The initial operators are $A=\hat{I}_{z}\otimes \mathbb{I}$ and $B=\mathbb{I}\otimes \hat{J}_{z}$. Here, we fix the parameters $\alpha=6$ and $\beta=\pi/2$. See Table. \ref{table1} for the quantum exponents extracted from the OTOCs. (b) $C_{\hat{I}_{z}\hat{J}_{z}}(n\tau)/C_{F_z=0}(n\tau)$ vs. $n\tau$ for the same $J$ values as before, which aims to elucidate the influence of the largest subspace dynamics on the overall OTOC dynamics. }   
\end{figure} 

In Fig. \ref{fig:zzotoc}a, we plot the short-time exponential growth of the OTOC for two different $J$ values. The plot is normalized by dividing $C_{\hat{I}_{z}\hat{J}_{z}}(t)$ with its infinite time average. Table. \ref{table1} makes a comparison between $\lambda_{\hat{I}_{z}\hat{J}_{z}}$ and $\lambda_{F_z=0}$, and we see that the former is always slightly less than the latter, as inferred earlier. The quantity $C_{\hat{I}_{z}\hat{J}_{z}}(t)/C_{F_z=0}(t)$ is examined for $t\geq 1$ in Fig. \ref{fig:zzotoc}b. The ratio initially decays with time for $1\leq t\leq 3$, indicating that the largest subspace dominates. The decay time scale correlates with $t_{\text{EF} (F_{z}=0)}$. The ratio will grow afterward until the saturation, suggesting non-trivial contributions from the other subspaces.

We now study the relaxation dynamics using the quantity $1-C_{\hat{I}_{z}\hat{J}_{z}}(t)/\overline{C_{\hat{I}_{z}\hat{J}_{z}}}$. Recall that the subspace dynamics transition from chaotic to regular as $|F_z|$ is increased from zero. Consequently, we observe (i) longer Ehrenfest times with slower growth rates and (ii) lesser saturation values as predicted by RMT in those subspaces. Due to the observations (i) and (ii), $C_{\hat{I}_{z}\hat{J}_{z}}(t)$ takes a longer time to saturate compared to $C_{F_z=0}(t)$. The numerical results are shown in Fig. \ref{zz_relax}. The relaxation, as the figure suggests, proceeds in two steps. The first phase (see Fig. \ref{zz_relax}a), which is considerably dominated by the largest subspace dynamics, displays an exponential decay. The decay exponent correlates with that of $1-C_{F_z=0}(t)/\overline{C_{F_z=0}}$. For $J=500$, the decay exponent is obtained to be $\approx 0.55$. At the end of this phase, the OTOC contributions from the subspaces close to $F_z=0$ almost reach a saturation point. The relaxation then shows a crossover to an algebraic law $\sim t^{-0.73}$, where the smaller subspace dynamics dominate. Moreover, due to the regular dynamics of those subspaces, we observe large fluctuations in the quantity $1-C_{\hat{I}_{z}\hat{J}_{z}}(t)/\overline{C_{\hat{I}_{z}\hat{J}_{z}}}$ as $t$ approaches infinity. The corresponding numerics are shown in Fig. \ref{zz_relax}b. As $\alpha$ increases further, smaller subspaces eventually become chaotic, leading to sharper decay in the first phase and flatter dynamics in the second phase.
\begin{figure}
\center
\includegraphics[scale=0.5]{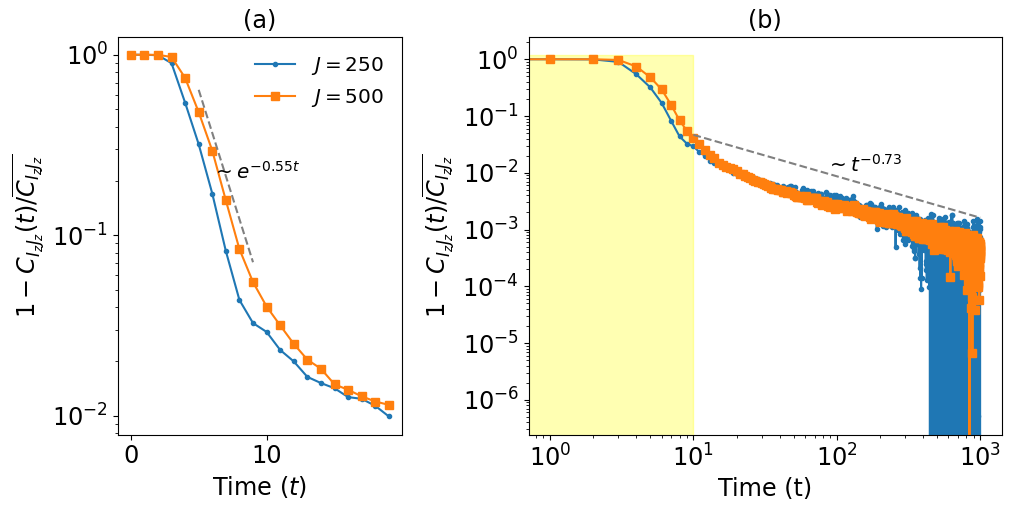}
\caption{\label{zz_relax} Illustration of the two step relaxation of the $I_z-J_z$ OTOC. The parameters are fixed at $\alpha=6$ and $\beta=\pi/2$. The results are shown for the same $J$ values considered in Fig. \ref{fig:zzotoc}. The relaxation appears to proceed in two steps. Panel (a) depicts the short-time exponential decay of $1-C_{\hat{I}_{z}\hat{J}_{z}}(n\tau)/\overline{C_{\hat{I}_{z}\hat{J}_{z}}}$ during the first stage of relaxation. The largest subspace dynamics largely dominate this phase. For $J=500$, we observe a decay that scales as $\sim e^{-0.55 t}$. (b). Long-time algebraic decay of $1-C_{\hat{I}_{z}\hat{J}_{z}}(n\tau)/\overline{C_{\hat{I}_{z}\hat{J}_{z}}}$ in the second-phase.}   
\end{figure}

\subsection{The $\hat{I}_{x}\hat{J}_{x}$ OTOC}\label{the IxJxOTOC}
Here, we study the operator growth for the local operators that do not commute with $\hat{F}_z$. Specifically, we take the initial operators 
\begin{eqnarray}\label{ini}
A=\hat{I}_x\otimes \mathbb{I} \text{ and } B=\mathbb{I}\otimes \hat{J}_x.    
\end{eqnarray}
Unlike $\hat{I}_z$ and $\hat{J}_z$, these operators do not admit the decomposition into disjoint invariant subspaces. Thus, OTOC dynamics are mixed across various subspaces. We aim to contrast the scrambling in this case with the previous one. Before going further, we first derive a useful result. First, we consider $A$ and write it in the computational basis as follows:
\begin{eqnarray}\label{eleA}
A&=&\left\{\dfrac{1}{2}\sum_{m, m'=-J}^{J}\left[P(J, m) \delta_{m', m+1}+ Q(J, m) \delta_{m', m-1} \right]|m\rangle\langle m'|\right\}\otimes \sum_{m''=-J}^{J}|m''\rangle\langle m''|\nonumber\\
&=&\dfrac{1}{2}\sum_{m, m', m''=-J}^{J}\left[P(J, m) \delta_{m', m+1}+ Q(J, m) \delta_{m', m-1} \right] |mm''\rangle\langle m'm''|, 
\end{eqnarray}
where $P(J, m)=\sqrt{(J-m)(J+m+1)}$, and $Q(J, m)=\sqrt{(J+m)(J-m+1)}$. Now, consider an arbitrary bipartite unitary operator $U$ with the symmetry operator $\hat{F}_z$, i.e., $[U, \hat{F}_z]=0$. Then, $U$ can be decomposed as $\oplus_{F_z}U_{F_z}$, where $U_{F_z}$ represents the unitary contribution having supported over the invariant subspace labeled by the charge $F_z$. Moreover, $F_z$ varies from $-2J$ to $2J$. For an arbitrary $F_z$, the corresponding subspace unitary operator can be written in the computational basis as follows:
\begin{eqnarray}
U_{F_z}=\sum_{\substack{m_1, m_2=-J\\m_1+m_2=F_z}}^{J}\sum_{\substack{m_3, m_4=-J\\m_3+m_4=F_z}}^{J}u_{m_1m_2, m_3m_4} |m_1m_2\rangle\langle m_3m_4|. 
\end{eqnarray}
Then, for some other $F'_{z}$, where $-2J\leq F'_{z}\leq 2J$, it follows that 
\begin{align}
U^{\dagger}_{F_z}A U_{F'_z}=&\sum_{\substack{m_1, m_2=-J\\m_1+m_2=F_z}}^{J}\sum_{\substack{m_3, m_4=-J\\m_3+m_4=F_z}}^{J}\sum_{\substack{m'_1, m'_2=-J\\m'_1+m'_2=F'_z}}^{J}\sum_{\substack{m'_3, m'_4=-J\\m'_3+m'_4=F'_z}}^{J}u^*_{m_1m_2, m_3m_4}u_{m'_1m'_2, m'_3m'_4}\nonumber\\
&\hspace{3cm}|m_3m_4\rangle\langle m_1m_2|A|m'_1m'_2\rangle\langle m'_3m'_4|, 
\end{align}
where the elements $\langle m_1m_2|A|m'_1m'_2\rangle$ follow Eq. (\ref{eleA}) and can be written explicitly as follows:
\begin{eqnarray}
\langle m_1m_2|A|m'_1m'_2\rangle=\dfrac{1}{2} \left[P(J, m_1) \delta_{m'_1, m_1+1}+ Q(J, m_1) \delta_{m'_1, m_1-1} \right]\delta_{m_2m'_2}.
\end{eqnarray}
By imposing the condition that $m_1+m_2=F_z$ and $m'_1+m'_2=F'_z$, one can see that $\langle m_1m_2|A|m'_1m'_2\rangle$ returns a non-zero value only when $F_z=F'_z\pm 1$. As a result, the operator $U^{\dagger}_{F_z}A U_{F'_z}$ returns a non-zero matrix only when $F_z= F'_z\pm 1$. A similar reasoning can be extended to the operator $B$, leading to the following:
\begin{eqnarray}\label{inter}
\left. 
  \begin{array}{ c l }
U_{F_z}^{\dagger}AU_{F'_z}\neq\mathbf{0}\\
U_{F_z}^{\dagger}BU_{F'_z}\neq\mathbf{0}
\end{array}
\right\}\quad\text{iff}\quad F'_z=F_z\pm 1, 
\end{eqnarray}
where $\mathbf{0}$ indicates the null matrix or zero matrix, which contains only zeros. The above equation holds for any arbitrary bipartite unitary operator conserving the total magnetization $F_z$ as long as the initial operator $A$ and $B$ are chosen according to Eq. (\ref{ini}). Also note that $U_{F_z}$ remains $(2J+1)^2$-dimensional, but acts non-trivially on the $F_z$-subspace alone. We reemphasize that the $F_z=0$ subspace typically drives the overall OTOC dynamics. However, the observation in Eq. (\ref{inter}) results in vanishing $4$-point correlators at all the subspace levels, i.e., 
\begin{equation}
\tr\left[U^{\dagger t}_{F_z}AU^{t}_{F_z}BU^{\dagger t}_{F_z}AU^{t}_{F_z}B\right]=0\quad \forall F_{z} \text{ and } t\geq 0.
\end{equation}
On the contrary, the $4$-point correlators that connect adjacent subspaces as given by\newline $\tr[U^{\dagger t}_{F_z}AU^{t}_{F_z\pm 1}BU^{\dagger t}_{F_z}AU^{t}_{F_z\pm 1}B]$ remain non-vanishing. The intertwining of adjacent subspace evolutions $U_{F_z}$ and $U_{F_z\pm 1}$ causes a slowdown of the OTOC dynamics whenever the largest subspace is supposedly dominant.

We now numerically calculate $C_{\hat{I}_{x}\hat{J}_{x}}(t)$ for two different values of $J$, namely, $J=150$ and $J=250$. The corresponding results are plotted in Fig. \ref{xxotoc}. Figure \ref{xxotoc}a demonstrates growth of $C_{\hat{I}_{x}\hat{J}_{x}}(t)/\overline{C_{\hat{I}x\hat{J}_x}}$ and the corresponding short-time exponential growth is depicted in Fig. \ref{xxotoc}b. The parameters $\alpha$ and $\beta$ are kept fixed at $6$ and $\pi/2$, respectively, as before. The exponential growth appears within the regime $1\leq t\leq 3$ for both values of $J$. Through the exponential fitting, we find $\lambda_{\hat{I}_{x}\hat{J}_{x} (J=150)}\approx 1.305$ and $\lambda_{\hat{I}_{x}\hat{J}_{x}(J=250)}\approx 1.388$, which are roughly about the same as those of $\lambda_{\hat{I}_{z}\hat{J}_{z}}$, but with a minute difference (see Table. \ref{table1}). However, the OTOC displays slower relaxation in the intermediate regime than in the previous case. For instance, for $J=250$, we observe $(1-C_{\hat{I}_{x}\hat{J}_{x}}(t)/\overline{C_{\hat{I}_{x}\hat{J}_{x}}})\sim \exp\{-0.15 t\}$, which is in contrast with the previous case, where $(1-C_{\hat{I}_{z}\hat{J}_{z}}(t)/\overline{C_{\hat{I}_{z}\hat{J}_{z}}})\sim \exp\{-0.6 t\}$. Hence, the above inference about the slowdown of OTOC dynamics is justified. 
\begin{figure}
\center
\includegraphics[scale=0.5]{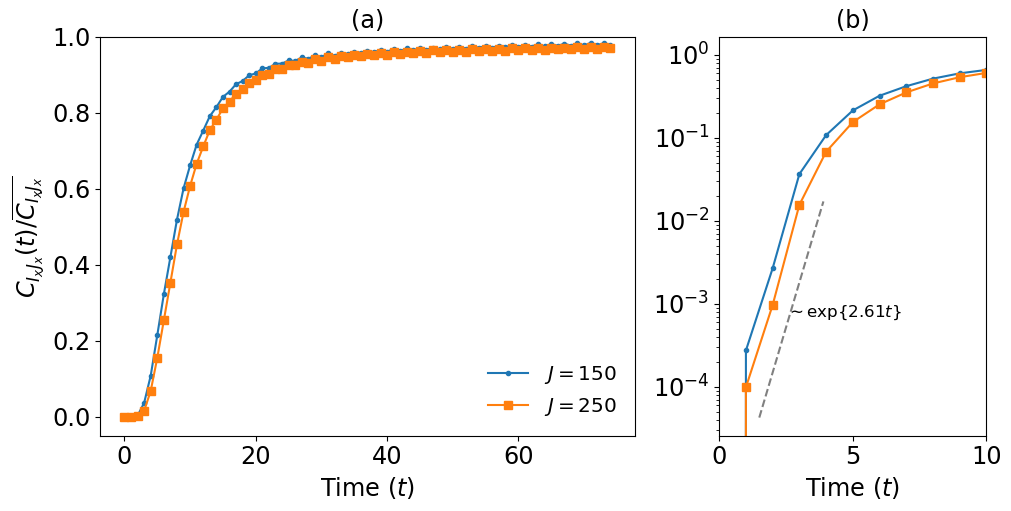}
\caption{\label{xxotoc} (a) The OTOC when the initial operators are $A=\hat{I}_{x}\otimes \mathbb{I}$ and $B=\mathbb{I}\otimes \hat{J}_{x}$ for two different $J$ values. The parameters are kept fixed at $\alpha=6$ and $\beta=\pi/2$. (b) Illustration of the early-time exponential growth on the semi-log scale. The OTOC for $J=250$ grows initially at a rate $\sim e^{2.61 t}$.}   
\end{figure}
 
Beyond the exponential decay, the quantity $1-C_{\hat{I}_x\hat{J}_x}(t)/\overline{C_{\hat{I}_x\hat{J}_x}}$ displays a crossover to an algebraic law. The results are shown in Fig. \ref{xxotoc_relax}. For J=250, through the power law fitting of the data in this regime, we find that the relaxation follows an approximate scaling $\sim t^{-0.85}$, which is slightly faster than the earlier case of $1-C_{\hat{I}_{z}\hat{J}_{z}}(t)/\overline{C_{\hat{I}_{z}\hat{J}_{z}}}\sim t^{-0.75}$. Moreover, due to the regular dynamics associated with the smaller subspaces of the KCT model, the saturation is usually accompanied by fluctuations around the mean value of the OTOC. 

\subsubsection{RMT prediction for the saturation of $\hat{I}_x\hat{J}_{x}$ OTOC}
The OTOCs in fully chaotic systems are expected to saturate after a sufficiently long time. The saturation value follows if we replace the Floquet operator $U^{t}$ with a set of random unitaries drawn from the appropriate ensemble \cite{lakshminarayan2019out}. Recall that the standard COE ensemble is the appropriate random matrix ensemble for the kicked coupled top. This is then followed by performing an average over the COE unitaries.
We denote the resultant OTOC with $C_{\text{RMT}}$ for the given operators $A$ and $B$. Moreover, to incorporate the symmetry operator $\hat{F}_z$, we generate the random COE unitaries of the form $U=\oplus_{F_z}U_{F_z}$, where $U_{F_z}\in \text{COE}(2J+1-|F_z|)$. The observation in Eq. (\ref{inter}) assists in simplifying the COE average of the two-point and four-point correlators as follows:
\begin{eqnarray}\label{23}
\overline{C_{2}}&=&\dfrac{1}{(2J+1)^2}\sum_{F_z}\tr\left[\overline{U^{\dagger}_{F_z}A^2U_{F_z}B^2}\right]\nonumber\\
&=&\dfrac{1}{(2J+1)^2}\sum_{F_z}\dfrac{\tr([A^2]_{F_z}^TB^2)+\tr([A^2]_{F_z})\tr([B^2]_{F_z})}{2J+2-|F_z|}\nonumber\\
\end{eqnarray}
and 
\begin{eqnarray}\label{24}
\overline{C_{4}}&=&\dfrac{1}{(2J+1)^2}\sum_{F_z}\tr\left[\overline{U^{\dagger}_{F_z}AU_{F_z\pm 1}BU^{\dagger}_{F_z}AU_{F_z\pm 1}B}\right]   \nonumber\\ 
&=& 0,
\end{eqnarray}
where $[A^2]_{F_z}=\mathbb{I}_{F_z}A^2\mathbb{I}_{F_z}$ and $I_{F_z}$ is an identity operator acting exclusively on  the $F_z$-subspace, i.e., $\mathbb{I}_{F_z}\equiv\mathbb{I}_{F_z}\oplus\mathbb{O}_{\text{rest}}$, the null matrix $\mathbb{O}_{\text{rest}}$ acts on rest of the subspaces. For a detailed derivative of Eqs. (\ref{23}) and (\ref{24}), refer to Appendix \ref{AppB}. It then follows that $C_{\text{RMT}}=\overline{C_2}$.

\begin{figure}
\center
\includegraphics[scale=0.5]{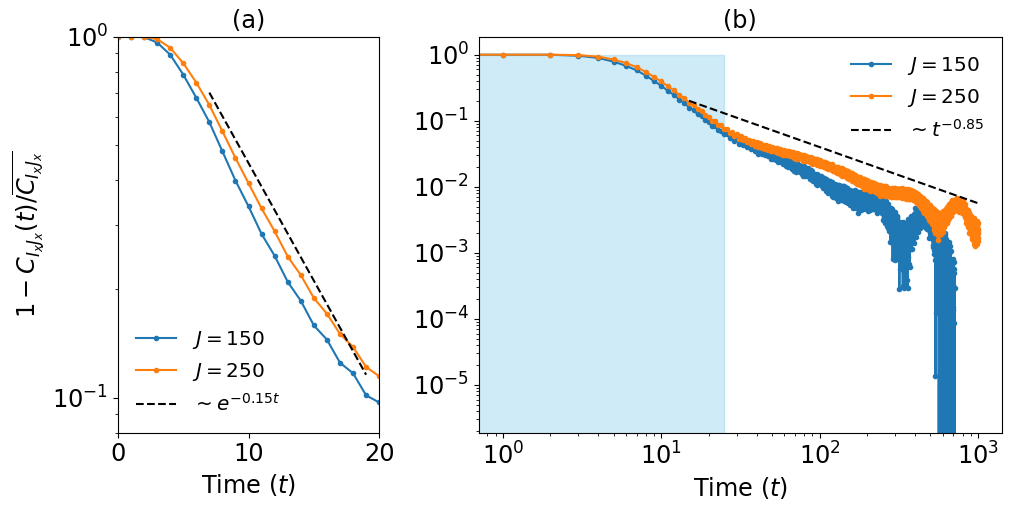}
\caption{\label{xxotoc_relax} Relaxation dynamics of the $\hat{I}_x-\hat{J}_x$ OTOC. We observe that the relaxation, similar to $\hat{I}_z-\hat{J}_z$, proceeds in two steps. In (a), the short-time exponential decay of $1-C_{\hat{I}_{x}\hat{J}_{x}}(n\tau)/\overline{C_{\hat{I}_{x}\hat{J}_{x}}}$ is shown, which is slower than its $I_z-J_z$ counterpart (see Fig. \ref{zz_relax}a). (b). Long-time relaxation following algebraic decay $\sim t^{-1.065}$ and $\sim t^{-0.85}$ respectively for $J=150$ and $J=250$.}   
\end{figure}

\section{Scrambling, operator entanglement and coherence}\label{Scrambling, operator entanglement and coherence}
The preceding section focused on the analysis of OTOCs in the KCT system for various choices of initial operators. In this section, we present a general result concerning the OTOCs and the operator entanglement of arbitrary quantum evolutions. The results in this section are independent of those presented in the previous section. Recent studies have demonstrated that in a bipartite quantum system, when initial operators are local, independent, and uniformly chosen at random from the unitary group, the averaged OTOC equates to the linear operator entanglement entropy of the corresponding quantum evolution \cite{styliaris2021information, pawan}. This result primarily relies on selecting Haar-random local unitaries as initial operators. However, it remains unclear whether such a relationship persists when the initial operators are different from Haar random unitaries. Here, we address this question by considering random Hermitian operators drawn from the unitarily invariant Gaussian ensemble (GUE) as the initial operators. We first establish that the connection between the OTOC and the operator entanglement holds up to a scaling factor for the GUE operators. Additionally, we examine diagonal GUE matrices as initial operators and demonstrate their relationship with the coherence-generating power introduced in Ref. \cite{anand2021quantum}.

\subsection{GUE initial operators}\label{GUE operators}   
We first consider a bipartite quantum system governed by the evolution $U$. We label the subsystems with $A$ and $B$ and assume that both are equal dimensional, i.e., $d_1=d_2=d$. We denote with $C_{PQ}(t)$, the OTOC for two initial operators $P=O_1\otimes\mathbb{I}$ and $Q=\mathbb{I}\otimes O_2$, where $O_1, \thinspace O_2\in\text{GUE}(d)$. We are interested in evaluating the quantity $C_{\text{GUE}}(t)=\overline{C_{PQ}}(t)=\overline{C_{2}}(t)-\overline{C_{4}}(t)$, where the overline indicates the averaging over all $O_1$ and $O_2\in \text{GUE}(d)$. In the following, we first evaluate $\overline{C_{2}}(t)$:
\begin{eqnarray}
\overline{C_{2}}(t)&=&\dfrac{1}{d^2}\int_{O_1\in\text{GUE}}d\mu(O_1)\int_{O_2\in\text{GUE}}d\mu(O_2)\tr\left( U^{\dagger t}O^2_1U^{t}O^{2}_{2} \right)\nonumber\\
&=&\dfrac{1}{d^2}\tr\left[U^{\dagger t}\left(\int_{O_1} d\mu(O_1)O^{2}_1\right)U^{t}\left(\int_{O_2} d\mu(O_2)O^{2}_{2}\right)\right]
\end{eqnarray}
where $d\mu(O_{1(2)})$ represents the normalized measure over GUE($d$). Then, the following intergal follows from the unitary invariance property of the GUE measure. 
\begin{eqnarray}\label{osqre}
\int_{O_1\in \text{GUE}} O^{2}_{1} d\mu(O_1)&=&\int_{O_1\in \text{GUE}} d\mu(O_1)\int_{u\in U(d)} d\mu(u) u^{\dagger}O^{2}_{1}u,\nonumber\\ 
&=&\int_{O_1\in\text{GUE}}d\mu(O_1)\left(\dfrac{\tr(O^{2}_{1})}{d}\right)\mathbb{I}_{d}\nonumber\\
&=&d\left(\mathbb{I}_{d}\right),
\end{eqnarray}
where $d\mu(u)$ is the invariant Haar measure over the unitary group $U(d)$. 
It then follows that
\begin{eqnarray}
\overline{C_2}(t)=\dfrac{1}{d^2}\tr(U^{\dagger t}U^t)=d^2
\end{eqnarray}

We now evaluate the average four-point correlator.  
\begin{eqnarray}\label{4poin}
\overline{C_4}(t)&=&\dfrac{1}{d^2}\int_{O_1}d\mu(O_1)\int_{O_2}d\mu(O_2)\tr\left(O_1(t)O_2O_1(t)O_2\right)\nonumber\\
&=&\dfrac{1}{d^2}\int_{O_1}d\mu(O_1)\int_{O_2}d\mu(O_2)\tr(S U^{\dagger t\otimes 2}O^{\otimes 2}_1U^{t\otimes 2}O_2^{\otimes 2}),\nonumber\\ 
\end{eqnarray}
where, in the second equality, a replica trick given in Ref. \cite{styliaris2021information} is used --- for any $d$-dimensional $p$ and $q$ matrices, the replica trick implies $\tr(pq)=\tr(S(p\otimes q))$. Also note that $S$ in Eq. (\ref{4poin}) swaps quantum states on the Hilbert space $\mathcal{H}_{AB}\otimes \mathcal{H}_{AB}$. This allows us to write $S$ as $S=S_{AA}S_{BB}$. To evaluate $\overline{C_{4}}(t)$, we again invoke the unitary invariance property of the probability measure of GUE.
\begin{equation}
\int_{O_1}d\mu(O_1) O^{\otimes 2}_{1} =\int_{O_1}d\mu(O_1)\int_{u\in U(d)} d\mu(u) \left(u^{\dagger \otimes2}O^{\otimes 2}_1u^{\otimes 2}\right),
\end{equation}
where the second integral on the right-hand side can be solved using the Schur-Weyl duality \cite{zhang2014matrix}. 
\begin{eqnarray*}
\int\limits_{u} d\mu(u) \left(u^{\dagger \otimes2}O^{\otimes 2}_{1}u^{\otimes 2}\right)= \left( \frac{\tr(O_1)^2}{d^2-1}-\frac{\tr(O^{2}_1)}{d(d^2-1)} \right)\mathbb{I}_{d^2} - \left(\frac{\tr(O_1)^2}{d(d^2-1)}-\frac{\tr(O^{2}_1)}{d^2-1}\right)S_{AA} .
\end{eqnarray*}
We are now left with the integrals of $\tr(O_1)^2$ and $\tr(O^{2}_{1})$ over GUE, which are related to the mean and variance of the elements of the GUE matrices. Upon performing these integrals, we obtain 
\begin{eqnarray}\label{31}
\int_{O_1\in \text{GUE}}d\mu(O_1) O^{\otimes 2}_{1}  = S_{AA}. 
\end{eqnarray}
Similarly, 
\begin{eqnarray}\label{32}
\int_{O_2\in \text{GUE}}d\mu(O_2) O^{\otimes 2}_{2}  = S_{BB}. 
\end{eqnarray}
Resultantly, the four-point correlator averaged over GUE local operators can be written as 
\begin{eqnarray}
\overline{C_4}(t)&=&\dfrac{1}{d^2}\tr[S (U^{\dagger t})^{\otimes 2}S_{AA}(U^{t})^{\otimes 2}S_{BB}]    \nonumber\\
&=& \dfrac{1}{d^2}\tr\left( S_{AA}(U^{\dagger t})^{\otimes 2} S_{AA}(U^{\dagger t})^{\otimes 2}\right)
\end{eqnarray}
It then follows that  
\begin{eqnarray}
C_{\text{GUE}}(t)=d^2-\frac{1}{d^2}\tr[S_{AA} (U^{\dagger t})^{\otimes 2}S_{AA}(U^{t})^{\otimes 2}], 
\end{eqnarray}
where the second term is related to the linear entanglement entropy of quantum evolutions \cite{styliaris2021information}. In deriving the above relation, we utilized the unitary invariance of the measure over GUE. Additionally, the mean ($\mu=0$) and variance ($\sigma^2=1$) of the elements of the GUE operators were crucial for deriving this relation. We note that the unitary invariance is sufficient to establish a relation between the operator entanglement and the OTOC. On the other hand, zero mean and unit variance ensure that Eqs. (\ref{31}) and (\ref{32}) hold. Therefore, as long as the operators are chosen from ensembles with the listed properties, such as unitary invariance, zero mean, and unit variance, the above relation is expected to remain valid.

\subsection{Random diagonal initial operators}\label{Random diagonal initial operators}
Here, we take $O_1$ and $O_2$ to be diagonal operators with the elements chosen randomly from the Gaussian distribution with the mean $\mu=0$ and the standard deviation $\sigma=1$, i.e., 
\begin{eqnarray}
O_1=\sum_{i=0}^{d-1}x_i|i\rangle\langle i|,\quad \text{and}\quad O_2=\sum_{i=0}^{d-1}y_i|i\rangle\langle i|,  
\end{eqnarray}
where $x_i$ and $y_i$ are independent Gaussian random variables. It then follows that
\begin{eqnarray}
\int_{O_{1(2)}}O^2_{1(2)}d\mu(O_{1(2)})&=&\mathbb{I}_{1(2)}, \nonumber\\
\text{and}\quad\int_{O_{1(2)}}(O_{1(2)}\otimes O_{1(2)})d\mu(O_{1(2)})&=&\sum_{i, j=0}^{d-1}\delta_{ij}|ij\rangle\langle ij|. 
\end{eqnarray}
The two-point and the four-point correlators in this case take the following forms:
\begin{eqnarray}
\overline{C_2}(t)&=& d^2\nonumber\\
\overline{C_4}(t)&=&\sum_{ijkl}\langle ji|U^{\dagger t\otimes 2}|kk\rangle\langle kk|U^{t\otimes 2}|ll\rangle\delta_{li}\delta_{lj}\nonumber\\
&=&\sum_{kl}\langle kk|U^{t\otimes 2}(t)|ll\rangle\langle ll|U^{\dagger t\otimes 2}|kk\rangle\nonumber\\
&=&\sum_{kl}|\langle k|U^t|l\rangle|^4
\end{eqnarray}
Hence the commutator function is given by
\begin{eqnarray}
C_{\text{dGUE}}(t)=1-\dfrac{1}{d^2}\sum_{kl}|\langle k|U^t|l\rangle|^4.
\end{eqnarray}
This is closely related to the coherence generating power (CGP) of the time evolution operator \cite{anand2021quantum}.

\section{Summary and Discussion}\label{SUMMARY AND DISCUSSIONS}
The central question concerning quantum chaos is how classically chaotic dynamics inform us about specific properties of quantum systems, e.g., the energy spectrum, nature of eigenstates, correlation functions, and, more recently, entanglement. Alternatively, what features of quantum systems arise since their classical description is chaotic? It is now well understood that if the quantum system has an obvious classical limit, then the dynamics of quantities like entanglement entropy and the OTOC are characterized by the corresponding classical Lyapunov exponents (see for example \cite{lerose2020bridging} and references therein). Recently, the study of OTOC and scrambling of quantum information as quantified by operator growth has witnessed a surge in interest. This is supplemented by the role of non-integrability, symmetries, and chaos in the above as well as in the study of thermalization and statistical aspects of many-body quantum systems. Our work is an attempt to explore operator scrambling and OTOCs and the role of symmetries and conserved quantities in the process.
For this purpose, we have employed a model system of KCTs that is simple enough yet exhibits rich dynamics.
Many-body quantum systems have an intimate connection with quantum chaos. There has been a significant push to understand the issues involving thermalization, irreversibility, equilibration, coherent backscattering, and the effects of quantum interference in such systems. The Eigenstate thermalization hypothesis (ETH) offers a rich platform for understanding several of these phenomena. However, one needs to be extra careful in attributing chaos and non-integrability to ETH, as recent studies have shown that even a classical Lipkin-Meshkov-Glick (LMG) model displays ETH properties \cite{kelly2020thermalization, lambert2021quantum}. In our work, we have considered the simplest many-body system, the kicked coupled top, which is rich enough to allow us to study these aspects. We have taken up this line of study and explored how information propagates with operator scrambling and how the incompatibility of two operators can quantify this with time.

We have numerically studied the OTOCs in the KCT model, which exhibits chaos in the classical limit and conserves the angular momentum along the $z$-axis. The conservation law enables us to partition the system into distinct invariant subspaces with constant magnetization. We first considered the largest invariant subspace and examined the scrambling in the fully chaotic regime. We observed a good quantum-classical correspondence between the OTOC growth rate and the maximum classical Lyapunov exponent. We then studied the mixed phase space OTOC dynamics with the help of Percival's conjecture. We have observed that the operators are more likely to get scrambled if the initial state corresponds to the chaotic sea. This is reflected both in the short-time and the long-time behavior. Additionally, we have studied the infinite time averages of the OTOCs in the Floquet states against their mean location in the phase space. We observed that states corresponding to the same regular islands form clusters. Interestingly, these averages appear to vary smoothly with respect to the mean location of these states. Conversely, the OTOC time averages in chaotic Floquet states take nearly uniform values. Also, the Floquet states that are localized near the boundaries exhibit infinite time averages close to chaotic states despite being regular. It is to be noted that similar results exist for the entanglement entropy \cite{lombardi2011entanglement, madhok2015comment}. Our findings complement previous results indicating that hyperbolic fixed points can induce scrambling without chaos \cite{hashimoto2020exponential, pilatowsky2020positive, pappalardi2018scrambling, hummel2019reversible, xu2020does, steinhuber2023dynamical}. In particular, our results demonstrate that regular states located near the boundary of the stable islands with the chaotic region can also induce scrambling. We have also studied the OTOCs by taking coherent states as initial states. It is interesting that while the short-time growth of OTOCs captures the Lyapunov divergence, the long-term behavior of OTOCs remarkably reproduces classical phase space structures. 

Subsequently, we extended our analysis to the entire KCT system, investigating OTOCs for two distinct choices of initial operators. We found that the scrambling behavior differs depending on the choice of initial operators. Furthermore, conservation laws have been observed to impede operator dynamics, regardless of the choice of the initial operators. We also proved an independent result concerning the OTOCs and operator entanglement entropy of arbitrary time evolution operators by taking the Gaussian unitary operators as the initial operators \cite{styliaris2021information}. Moreover, the OTOC is related to the coherence generating power for the diagonal Gaussian operators \cite{anand2021quantum}. It is worth noting that in this work, we have considered equal spin sizes for the OTOC analysis. Hence, it is interesting to see if considering unequal magnitudes results in different dynamics \cite{haake1991quantum}. Nevertheless, we do not expect significant differences as the time-reversal symmetry is preserved in the system. There has been debate whether or not ``scrambling" is necessary or sufficient for chaos \cite{dowling2023scrambling, xu2020does} as OTOCs can show rapid exponential growth from unstable fixed points. We feel one needs both short-term and long-term average growth to capture the complete dynamics of the system. Not only does this give a more complete picture when using OTOCs as probes, but it also settles the question that OTOCs can be used as probes for chaos when studied in both limits.

One of the central focuses in quantum information science is to simulate these systems on a quantum device and on the related issues involving quantifying the complexity of these simulations, benchmarking these simulations in the presence of errors, and exploring connections to quantum chaos in these systems. How is information scrambling related to the complexity of simulating many-body quantum systems \cite{sieberer2019digital}? We hope our study paves for the exploration of these intriguing directions.


\chapter{Deep thermalization and Emergent state designs in systems with symmetry}
\label{deepthchap}

\section{Introduction}
Quantum state designs \cite{renes2004symmetric, klappenecker2005mutually, benchmarking2}, by enabling an efficient sampling of random quantum states, play a quintessential role in devising and benchmarking various quantum protocols \cite{benchmarking1, knill2008randomized, benchmarking2} with broad applications ranging from circuit designs \cite{harrow2009random, brown2010convergence} to black hole physics \cite{sekino2008fast}. Symmetries, on the other hand, are expected to reduce the randomness of a state. Symmetries play decisive roles in constraining the static and dynamical characteristics of quantum many-body systems, ranging from stabilizing novel quantum phases and phase transitions, many-body localization, and quantum chaos to thermalization in closed and open systems \cite{ope4, ope5, friedman2019spectral, yunger2016microcanonical, nakata2023black, bhattacharya2017syk, balachandran2021eigenstate, kudler2022information, chen2020many, paviglianiti2023absence, agarwal2023charge, varikuti2022out}. Despite being ubiquitous, the effects of symmetry on the quantum state designs remain an outstanding question. The recently introduced projected ensemble framework generates efficient approximate state t-designs by using many-body quantum chaos as a resource \cite{choi2023preparing, cotler2023emergent, ho2022exact, ippoliti2022solvable, lucas2023generalized, shrotriya2023nonlocality, versini2023efficient, claeys2022emergent, ippoliti2023dynamical, bhore2023deep, mcginley2023shadow, choi2023preparing, mark2024maximum, liu2024deep, chan2024projected, vairogs2024extracting, varikuti2024unraveling}. 
In this chapter, we examine the emergence of state designs from the random generator states exhibiting symmetries \cite{varikuti2024unraveling}. The latter restricts the arbitrary choice of measurement basis for the projective measurements, which are at the heart of the framework. Initially relying on the translation symmetric case, we analytically establish a sufficient condition for the basis leading to the emergence of state designs. We then demonstrate an intriguing interplay between the measurements and the symmetry. To probe, we employ the trace distance measure between the moments of projected and Haar ensembles, which we analyze numerically to illustrate the convergence to the state designs. Subsequently, we investigate the violation of the sufficient condition in order to identify bases that fail to converge. Finally, we consider the evolution of a quantum state under the Hamiltonian of a chaotic tilted field Ising chain with periodic boundary conditions to display the emergence of state designs for physical systems with translation symmetry. By contrasting the results with the open boundary case, we show that the trace distance displays faster convergence in the initial time; however, it saturates to finite values, deviating from random matrix predictions at late time. To delineate the general applicability of our results, we extend our analysis to other symmetries, particularly by choosing $Z_2$ and reflection symmetries as representative cases. We expect our findings to pave the way for further exploration of deep thermalization and equilibration of closed and open quantum many-body systems.

\begin{figure}
\begin{center}
\includegraphics[scale=0.6]{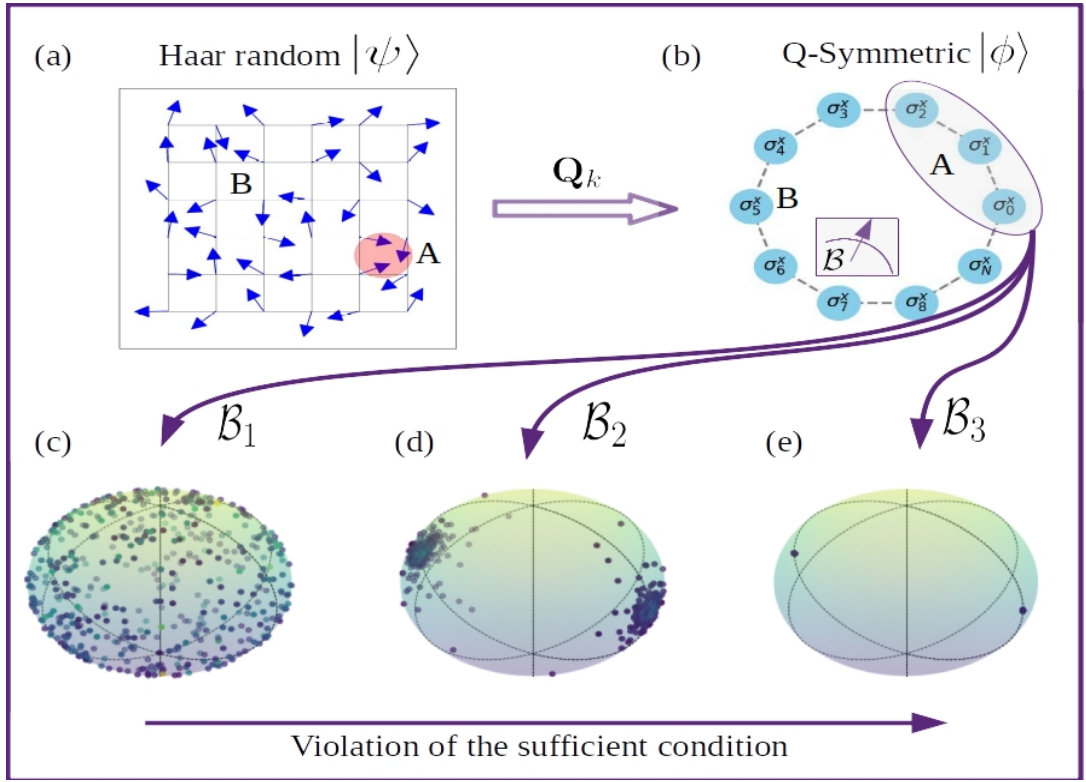}
\end{center}
\caption{\label{fig:sch} Schematic representation of the projected ensemble framework for a $Q$-symmetric state (an eigenstate of a symmetry operator $Q$), showcasing the interplay between measurement bases and symmetry. (a) Haar random state, wherein the spins are randomly aligned and entangled. An application of the subspace projector $Q_k$ takes the Haar random state to an invariant subspace with charge $k$, which is depicted in (b). For simplicity, we are showing a $Z_2$ symmetric state (see Sec. \ref{z2s}) in this schematic. The labels $\sigma^x_j$ indicate that the state is invariant under spin-flip operations. The resultant state is subjected to projective measurements over the subsystem-$B$. These measurements result in the so-called projected ensembles supported over $A$ 
(see Sec. \ref{Frame}).
(c)-(d) 
distributions of the resultant projected ensembles for three different measurement bases. While in (c), the measurements in $\mathcal{B}_1$ yield a uniformly distributed projected ensemble, $\mathcal{B}_2$ and $\mathcal{B}_3$ result in ensembles of pure states localized near $|\pm\rangle$ states. We understand that the latter cases largely violate a sufficient condition we derive in this text for the emergence of state designs. }   
\end{figure}

This chapter is organized as follows. In Sec. \ref{Frame}, we briefly outline the central question of this work and summarize our key results. In Sec. \ref{T-invariant}, we consider the ensembles of translation symmetric states and provide an analytical expression for their moments. We carry out the analysis of emergent state designs with symmetries in sec. \ref{emergent_designs} and provide a sufficient condition on the measurement bases. We then consider a chaotic Ising chain with periodic boundary conditions in Sec. \ref{sising} and examine deep thermalization in a state evolved under this Hamiltonian. In Sec. \ref{genn}, we generalize the results to other discrete symmetries such as $Z_2$ and reflection symmetries. Finally, we conclude this chapter in Sec. \ref{discussion}.

\section{Summary of main results}\label{Frame}
Recall that the projected ensemble framework aims to generate quantum state designs from a single chaotic or random many-body quantum state. First, consider a generator quantum state $|\psi\rangle\in\mathcal{H}^{\otimes N}$, where $\mathcal{H}$ denotes the local Hilbert space of dimension $d$ and $N=N_A+N_B$ denotes the size of the system constituting subsystems-$A$ and $B$. Then, projectively measuring the subsystem-$B$ in a basis $\mathcal{B}$ gives a statistical mixture of pure states (or projected ensemble) corresponding to the subsystem-$A$:
\begin{eqnarray}
|\phi(b)\rangle=\dfrac{|\Tilde{\psi}(b)\rangle}{\sqrt{p_b}}=\dfrac{\left(\mathbb{I}_{2^{N_A}}\otimes\langle b|\right)|\psi\rangle}{\sqrt{\langle\psi |b\rangle\langle b|\psi\rangle}},     
\end{eqnarray}
where the probabilities are given by $p_b=\langle\Tilde{\psi}(b)|\Tilde{\psi}(b)\rangle=\langle\psi |b\rangle\langle b|\psi\rangle$. Then, the projected ensemble on $A$ given by $\mathcal{E}(|\psi\rangle, \mathcal{B})\equiv\left\{p_b, |\phi(b)\rangle\right\}$ approximate higher-order quantum state designs if $|\psi\rangle$ is sufficiently chaotic or random \cite{ho2022exact, cotler2023emergent}. We use the following trace distance metric as a figure of metric to quantify how close the projected ensembles are to the Haar distribution: 
\begin{eqnarray}\label{Deltat}
\Delta^{(t)}_{\mathcal{E}}\equiv\left\|\sum_{|b\rangle\in\mathcal{B}}\dfrac{\left(\langle b|\psi\rangle\langle\psi |b\rangle\right)^{\otimes t}}{\left(\langle\psi |b\rangle\langle b|\psi\rangle\right)^{t-1}}-\int_{\phi\in\mathcal{E}^{A}_{\text{Haar}}}d\phi \left[|\phi\rangle\langle\phi|\right]^{\otimes t}\right\|_1\leq\varepsilon.
\end{eqnarray}
If the generator state $|\psi\rangle$ is Haar random, the trace distance $\Delta^{(t)}_{\mathcal{E}}$ exponentially converges to zero with $N_B$ for any $t\geq 1$, as demonstrated in Ref. \cite{cotler2023emergent}. In this case, the measurement basis can be arbitrary, and the behavior is generic to the choice of basis. On the other hand, for the generator state abiding by symmetry, the choice of measurement basis becomes crucial. In this chapter, we address this particular aspect.

Our general result can be summarized as follows: Given a symmetry operator $Q$ and a measurement basis $\mathcal{B}$, the projected ensembles of generic $Q$-symmetric quantum states approximate state $t$-designs if for all $|b\rangle\in\mathcal{B}$, $\langle b|\mathbf{Q}_{k}|b\rangle=\mathbb{I}_{2^{N_A}}$, where $\mathbf{Q}_{k}$ represents the projector onto an invariant subspace with charge $k$.
Here, it is to be noted that $\mathbf{Q}_{k}$ can be constructed by taking an appropriate linear combination of the elements of the corresponding symmetry group. We employ $\Delta^{t}_{\mathcal{E}}$ as a figure of merit for the state designs. We explicitly derive this condition for the ensembles of translation invariant states and provide arguments to show its generality to other symmetries. We further elucidate the emergence of state designs from translation symmetric states by explicitly considering a physical model.

\section{T-invariant quantum states}\label{T-invariant}
In this section, we construct the ensembles of translation symmetric (T-invariant) quantum states from the Haar random states and calculate the moments associated with those ensembles. The T-invariant states are well studied in condensed matter physics and quantum field theory for the ground state properties, such as entanglement and phase transitions. Due to the non-onsite nature of the symmetry, the generic T-invariant states (excluding the product states of the form $|\psi\rangle^{\otimes N}$) are necessarily long-range entangled\footnote{An operational definition of the long-range entanglement is as follows: A state $|\phi\rangle\in \mathcal{H}^{\otimes N}$ is long-range entangled if it can not be realized by the application of a finite-depth local circuit on a trivial product state $|0\rangle^{\otimes N}$.} \cite{gioia2022nonzero}. Thus, in a generic T-invariant state, the information is uniformly spread across all the sites, somewhat mimicking the Haar random states. This makes them ideal candidates for generating state designs besides Haar random states. Moreover, in the context of ETH, generic T-invariant systems have been shown to thermalize local observables \cite{santos2010localization, mori2018thermalization, sugimoto2023eigenstate}. Hence, studying deep thermalization in these systems is of profound interest.

Let $T=e^{iP}$ denote the lattice translation operator on a system with a total of $N$ sites, each having a local Hilbert space dimension of $d$, where $P$ is the lattice momentum operator. Then, $T$ can be defined by its action on the computational basis vectors as follows:
\begin{equation}
T|i_1, i_2, ..., i_N\rangle =|i_N, i_1, ..., i_{N-1}\rangle, 
\end{equation}
with $N$-th roots of unity as eigenvalues. A system is considered T-invariant if its Hamiltonian $H$ commutes with $T$, i.e., $[H, T]=0$. A T-invariant state $|\psi\rangle$ is an eigenstate of $T$ with an eigenvalue $e^{-2\pi i k/N}$, i.e., $T|\psi\rangle =e^{-2\pi ik/N}|\psi\rangle$, where $k\in\mathbb{Z}_{\geq 0}$ and $0\leq k\leq N-1$, characterizes the lattice momentum charge. We then represent the set of all pure T-invariant states having the momentum charge $k$ with $\mathcal{E}^{k}_{\text{TI}}$. 
To compute their moments, we first outline their construction from the Haar random states, followed by the Haar average. 

The translation operator, $T$, generates the translation group given by $\{T^j\}_{j=0}^{N-1}$. Then, a Haar random state $|\psi\rangle\in\mathcal{H}^{\otimes N}$ can be projected onto a translation symmetric subspace by taking a uniform superposition of the states $\{e^{2\pi ijk/N}T^j|\psi\rangle\}_{j=0}^{N-1}$:
\begin{eqnarray}\label{Tsym}
|\psi\rangle\rightarrow |\phi\rangle=\mathbb{T}_{k}(|\psi\rangle)=\dfrac{1}{\mathcal{N}}\mathbf{T}_{k}|\psi\rangle, 
\end{eqnarray}
where
\begin{eqnarray}\label{Tstate}
\mathbf{T}_{k}=\sum_{j=0}^{N-1}e^{2\pi ijk/N}T^j\quad\text{and }
\mathcal{N}=\sqrt{\langle\psi|\mathbf{T}^{\dagger}_{k}\mathbf{T}_{k}|\psi\rangle}.
\end{eqnarray}
Here, the mapping from $|\psi\rangle$ to the T-invariant state $|\phi\rangle$ is denoted with $\mathbb{T}_{k}(|\psi\rangle)$. 
The Hermitian operator $\mathbf{T}_{k}$ projects any generic state onto an invariant momentum sector with the charge $k$. Therefore, $\mathbf{T}_{k}\mathbf{T}_{k'}=N\mathbf{T}_{k}\delta_{k, k'}$, yielding the normalizing factor $\mathcal{N}=\sqrt{N\langle\psi|\mathbf{T}_{k}|\psi\rangle}$. The resultant state $|\phi\rangle$ is an eigenstate of $T$ with the eigenvalue $e^{-2\pi ik/ N}$, i.e., $T|\phi\rangle =e^{-2\pi ik/N}|\phi\rangle$. In this way, we can project the set of Haar random states to an $N$-number of disjoint sets of random T-invariant states, each characterized by the momentum charge $k$. Introducing the translation invariance causes partial de-randomization of the Haar random states. This is because a generic quantum state with support over $N$ sites can be described using $n_{\text{Haar}}\approx d^{N}$ independent complex parameters \cite{linden1998multi}. The translation invariance, however, reduces this number by a factor of $N$, i.e., $n_{\text{TI}}\approx d^N/N$. As we shall see, this results in more structure of the moments of the T-invariant states,  $\mathbb{E}_{\phi\in\mathcal{E}^{k}_{\text{TI}}}\left[|\phi\rangle\langle\phi|^{\otimes t}\right]$.

Before evaluating the moments of $\mathcal{E}^{k}_{\text{TI}}$, it is useful to state the following result: 
\begin{theorem}
Let $U_{\text{TI}}(d^N)$ be a subset of the unitary group $U(d^N)$, which contains all the unitaries that are T-invariant, i.e., $[v, T]=0$ for all $v\in U_{\text{TI}}(d^N)$, then $U_{\text{TI}}(d^N)$ is a compact subgroup of $U(d^N)$.  
\end{theorem}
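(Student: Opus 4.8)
The statement asserts that $U_{\text{TI}}(d^N) \equiv \{v \in U(d^N) : [v,T] = 0\}$ is a compact subgroup of $U(d^N)$. My plan is to verify the subgroup axioms first, then deduce compactness from the fact that $U_{\text{TI}}(d^N)$ is a closed subset of the compact group $U(d^N)$, invoking the standard fact that a closed subgroup of a compact group is compact. For the subgroup structure, I would check: (i) the identity $\mathbb{I}$ commutes with $T$, so $\mathbb{I} \in U_{\text{TI}}(d^N)$, hence the set is nonempty; (ii) closure under multiplication — if $[v_1, T] = 0$ and $[v_2, T] = 0$, then $v_1 v_2 T = v_1 T v_2 = T v_1 v_2$, so $v_1 v_2 \in U_{\text{TI}}(d^N)$; (iii) closure under inverses — if $vT = Tv$, then left- and right-multiplying by $v^{-1} = v^\dagger$ gives $T v^{-1} = v^{-1} T$, so $v^{-1} \in U_{\text{TI}}(d^N)$. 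These are one-line computations.

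For compactness, the key observation is that $U_{\text{TI}}(d^N)$ is the preimage of the single point $\{0\}$ (the zero operator) under the continuous map $f : U(d^N) \to \mathcal{L}(\mathbb{C}^{d^N})$ defined by $f(v) = vT - Tv$. Since $T$ is a fixed operator and matrix multiplication and subtraction are continuous, $f$ is continuous; the singleton $\{0\}$ is closed in $\mathcal{L}(\mathbb{C}^{d^N})$; hence $U_{\text{TI}}(d^N) = f^{-1}(\{0\})$ is closed in $U(d^N)$. Because $U(d^N)$ is itself compact (as noted earlier in the excerpt, the unitary group is a compact Lie group), a closed subset of it is compact, and a closed subgroup of a compact topological group is again a compact topological group. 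This completes the argument.

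I do not anticipate a genuine obstacle here — the statement is essentially a textbook fact — but if I wanted to be self-contained I might instead argue compactness directly via sequential compactness: take any sequence $(v_n)$ in $U_{\text{TI}}(d^N)$; by compactness of $U(d^N)$ it has a subsequence converging to some $v \in U(d^N)$; since each $v_n T = T v_n$ and multiplication is continuous, passing to the limit gives $vT = Tv$, so $v \in U_{\text{TI}}(d^N)$. The only mild subtlety worth a sentence is that $U_{\text{TI}}(d^N)$ is nontrivial (it contains at least the group generated by $T$ itself and all functions of $T$), so the statement is not vacuous; this also sets up the later use of the Haar measure on $U_{\text{TI}}(d^N)$, which exists precisely because the group is compact. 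I would close by remarking that the compactness guarantees a normalized Haar measure on $U_{\text{TI}}(d^N)$, which is what makes the subsequent averaging over T-invariant unitaries well-defined.
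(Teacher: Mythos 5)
Your proposal is correct and follows essentially the same route as the paper: the paper also shows $U_{\text{TI}}(d^N)$ is a subgroup, then argues it is closed by exhibiting it as the preimage of the null matrix under the continuous map $v\mapsto v - T^\dagger v T$ (your $v\mapsto vT - Tv$ is the same map up to left-multiplication by the invertible $T$), and concludes compactness — the paper by noting boundedness and closedness, you by noting that a closed subset of a compact group is compact. Your version is slightly more explicit about verifying the subgroup axioms, but the argument is the same.
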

\begin{proof}
Consider the subset $U_{\text{TI}}(d^N)$ of $U(d^N)$ containing all the T-invariant unitaries --- for every $v\in U_{\text{TI}}(d^N)$, we have $T^{\dagger}vT=v$. Clearly, $U_{\text{TI}}(d^N)$ is a subgroup of $U(d^N)$. As the operator norm of any unitary matrix is bounded, $U_{\text{TI}}(d^N)$ is bounded. Moreover, we can define $U_{\text{TI}}(d^N)$ as the preimage of the null matrix $\mathbf{0}$ under the operation $v-T^{\dagger}vT$ for $v\in U(d^N)$, hence it is necessarily closed. This implies that $U_{\text{TI}}(d^N)$ is a compact subgroup of $U(d^N)$. 
\end{proof}

A method for constructing random translation invariant unitaries using the polar decomposition is outlined in Appendix \ref{polar}. 
An immediate consequence of the above result is that there exists a natural Haar measure on the subgroup $U_{\text{TI}}(d^N)$. It is to be noted that projecting Haar random states onto T-invariant subspaces creates uniformly random states within those subspaces. This means that the distribution of states in $\mathcal{E}^{k}_{\text{TI}}$ is invariant under the action of $U_{\text{TI}}(2^N)$.
To see this, sample $|\phi\rangle$ and $|\phi'\rangle$ from $\mathcal{E}^{k}_{\text{TI}}$ such that they are related to each other via the left invariance of the Haar measure over $U(d^N)$, i.e., $|\phi\rangle=\mathbf{T}_{k}u|0\rangle/\mathcal{N}$ and $|\phi'\rangle=\mathbf{T}_{k}vu|0\rangle/\mathcal{N}$, where $u\in U(d^N)$ and $v\in U_{\text{TI}}(d^N)$. Since $v$ and $\mathbf{T}_{k}$ commute, we can write $|\phi'\rangle=v|\phi\rangle$. 
Let $v$ be sampled according to the Haar measure in $U_{\text{TI}}(d^N)$, then, the state $v|\phi\rangle$ must be uniformly random in $\mathcal{E}^{k}_{\text{TI}}$. Now, $|\phi\rangle$ and $|\phi'\rangle$ being sampled through the projection 
of $\mathbf{T}_{k}$, we can conclude that all the states similarly projected to $\mathcal{E}^{k}_{\text{TI}}$ will be uniformly random. 
We use this result to derive the moments of $\mathcal{E}^{k}_{\text{TI}}$. 

\begin{theorem}\label{Tran_designs}
Let $|\psi\rangle$ be a pure quantum state drawn uniformly at random from the Haar ensemble, then for the mapping $|\psi\rangle\rightarrow |\phi\rangle=\mathbb{T}_{k}(|\psi\rangle)$, it holds that 
\begin{eqnarray}
\mathbb{E}_{|\phi\rangle\in\mathcal{E}^{k}_{\text{TI}}}\left[\left[|\phi\rangle\langle\phi|\right]^{\otimes t}\right]=\dfrac{1}{\alpha_{t}}\mathbf{T}^{\otimes t}_{k}\mathbf{\Pi}_{t}, 
\end{eqnarray}
where $\alpha_t$ denotes the normalizing constant, which is given by
\begin{eqnarray}\label{Tmoments}
\alpha_t=\text{Tr}\left(\mathbf{T}^{\otimes t}_{k}\mathbf{\Pi}_{t}\right).    
\end{eqnarray}    
\end{theorem}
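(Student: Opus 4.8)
\textbf{Proof proposal for Theorem \ref{Tran_designs}.}

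The plan is to compute the $t$-th moment of $\mathcal{E}^{k}_{\text{TI}}$ by pulling the Haar average back to the unitary group $U(d^N)$ and exploiting Schur--Weyl duality, then promoting the result using the $U_{\text{TI}}(d^N)$-invariance established just above. First I would fix a reference state, say $|0\rangle \equiv |0\rangle^{\otimes N}$, and write a generic element of $\mathcal{E}^{k}_{\text{TI}}$ as $|\phi\rangle = \mathbf{T}_{k}u|0\rangle / \mathcal{N}(u)$ with $u$ Haar-distributed on $U(d^N)$ and $\mathcal{N}(u) = \sqrt{\langle 0|u^{\dagger}\mathbf{T}_{k}u|0\rangle \cdot N}$, using $\mathbf{T}_{k}^{\dagger}\mathbf{T}_{k} = N\mathbf{T}_{k}$ and $\mathbf{T}_k^\dagger = \mathbf{T}_k$. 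The subtlety is the state-dependent normalization in the denominator, which does not factor out of the Haar integral in an obvious way. The main obstacle is therefore to argue that one may legitimately replace the average of the normalized projector by a normalized average of the unnormalized projector.

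I would resolve this as follows. Rather than integrating over $U(d^N)$ directly, I would use the $U_{\text{TI}}(d^N)$-invariance of the distribution of $\mathcal{E}^{k}_{\text{TI}}$ proved in the paragraph preceding the theorem: the ensemble average $M_t := \mathbb{E}_{|\phi\rangle\in\mathcal{E}^{k}_{\text{TI}}}[(|\phi\rangle\langle\phi|)^{\otimes t}]$ satisfies $v^{\otimes t} M_t (v^{\dagger})^{\otimes t} = M_t$ for all $v\in U_{\text{TI}}(d^N)$. Next, observe that every $|\phi\rangle$ in the ensemble lives in the charge-$k$ momentum sector, i.e. $\mathbf{T}_k|\phi\rangle = N|\phi\rangle$ (equivalently $|\phi\rangle = \frac{1}{N}\mathbf{T}_k|\phi\rangle$), so $M_t$ is supported on the image of $\mathbf{T}_k^{\otimes t}$; combined with the permutation symmetry $\pi_i M_t \pi_i^{-1} = M_t$ for $\pi_i\in S_t$ (each factor is the same density matrix), $M_t$ is supported on the image of $\mathbf{T}_k^{\otimes t}\mathbf{\Pi}_t$. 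The key step is to show this support is in fact one-dimensional, so that $M_t \propto \mathbf{T}_k^{\otimes t}\mathbf{\Pi}_t$. For this I would invoke the analogue of Schur--Weyl duality restricted to the subgroup: the commutant of $\{v^{\otimes t} : v\in U_{\text{TI}}(d^N)\}$ intersected with the relevant symmetric-and-momentum-projected subspace, by an irreducibility/transitivity argument (the $U_{\text{TI}}$-action is transitive on the unit sphere of the charge-$k$ sector, which follows from transitivity of $U(d^N)$ together with the commutation $[v,\mathbf{T}_k]=0$), forces any invariant positive operator supported there to be proportional to the orthogonal projector onto the permutation-symmetric part of the $t$-fold charge-$k$ sector, which is exactly $\mathbf{T}_k^{\otimes t}\mathbf{\Pi}_t$ up to normalization.

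Finally I would fix the proportionality constant by taking traces: since $M_t$ is a normalized average of rank-one projectors, $\mathrm{Tr}(M_t) = 1$, and hence $M_t = \mathbf{T}_k^{\otimes t}\mathbf{\Pi}_t / \alpha_t$ with $\alpha_t = \mathrm{Tr}(\mathbf{T}_k^{\otimes t}\mathbf{\Pi}_t)$, as claimed. A cleaner alternative route to the same end, which I would present if the irreducibility argument proves delicate, is the direct Weingarten computation: write $M_t = \mathbb{E}_u\big[(\mathbf{T}_k u |0\rangle\langle 0| u^{\dagger}\mathbf{T}_k)^{\otimes t} / (N\langle 0|u^\dagger \mathbf{T}_k u|0\rangle)^{t}\big]$, expand via a resolvent/replica trick to handle the denominator, apply the Haar moment formula $M_{t}(\cdot)$ from the Schur--Weyl subsection (the paper's Example 2 gives the $t=2$ case explicitly), and collect terms; the permutation operators in the Haar average get sandwiched by $\mathbf{T}_k^{\otimes t}$, and $\mathbf{T}_k^{\otimes t}\pi_i \mathbf{T}_k^{\otimes t}$ reduces to a multiple of $\mathbf{T}_k^{\otimes t}\mathbf{\Pi}_t$ after summing over $S_t$, with all the combinatorial prefactors conspiring into the single constant $\alpha_t$. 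I expect the denominator handling (concentration of $\langle 0|u^\dagger \mathbf{T}_k u|0\rangle$ and the justification that replica corrections vanish in the relevant sense) to be the one genuinely technical point; the invariance argument sidesteps it entirely, so I would lead with that and relegate the Weingarten version to a remark.
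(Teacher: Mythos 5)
Your argument is correct, and it takes a genuinely different (and arguably cleaner) route than the paper. The paper handles the state-dependent normalization by first showing, via the $U_{\text{TI}}(d^N)$-invariance trick, that $\left(|\phi\rangle\langle\phi|\right)^{\otimes t}$ and $\langle\psi|\mathbf{T}_k^\dagger\mathbf{T}_k|\psi\rangle^{t}$ are statistically independent; the moment is then written as a ratio of unnormalized Haar integrals over $U(d^N)$, with the numerator evaluated by Schur--Weyl duality. Your primary route bypasses the denominator entirely: you use the same $U_{\text{TI}}$-invariance but exploit it to conclude directly that $\mathcal{E}^{k}_{\text{TI}}$ is the Haar ensemble on the charge-$k$ sector (because $U_{\text{TI}}$ surjects onto the full unitary group of that sector, hence acts transitively on its unit sphere), so that $M_t$ is the standard $t$-th Haar moment of a $d_k$-dimensional system, i.e. proportional to the projector onto $\text{Sym}^t$ of the $k$-sector; since $\mathbf{T}_k^{\otimes t}$ and $\mathbf{\Pi}_t$ are commuting multiples of projectors, their product is exactly that projector up to scale, and $\text{Tr}(M_t)=1$ fixes the constant. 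What each approach buys: yours replaces the decoupling argument and the explicit Weingarten/Schur--Weyl computation with a single invocation of Schur's lemma on the irrep $\text{Sym}^t(\mathbb{C}^{d_k})$; the paper's version is longer but works with explicit integrals over $U(d^N)$ throughout, which may be pedagogically preferable in a chapter that has just introduced Haar integration tools. Your Weingarten remark is essentially the paper's actual route.

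Two small points of precision worth tightening. First, the phrase ``the key step is to show this support is in fact one-dimensional'' is imprecise --- the support of $M_t$ has dimension $\binom{d_k+t-1}{t}$; what is one-dimensional (and what you go on to say correctly) is the space of $U_{\text{TI}}^{\otimes t}$-invariant operators on that support. Second, the stated justification for transitivity (``follows from transitivity of $U(d^N)$ together with the commutation $[v,\mathbf{T}_k]=0$'') does not quite deliver the conclusion as written; the clean argument is that $T$ is diagonalizable with eigenspaces the momentum sectors, so any $v$ commuting with $T$ is block-diagonal across those sectors with each block an arbitrary unitary, hence the restriction map $U_{\text{TI}}(d^N)\to U(d_k)$ is onto, where $d_k=\text{Tr}(\mathbf{T}_k)/N$.
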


\begin{proof}
We first note that given a Haar random pure state $|\psi\rangle$, under the map $\mathbb{T}_{k}$, becomes an eigenstate of $T$ with the eigenvalue $e^{-2\pi ik/N}$, i.e., $|\phi\rangle =\mathbf{T}_{k}|\psi\rangle/\sqrt{\langle\psi|\mathbf{T}^{\dagger}_{k}\mathbf{T}_{k}|\psi\rangle}$. This generates an ensemble $\{|\phi\rangle \}$, denoted with $\mathcal{E}^{k}_{\text{TI}}$, when $|\psi\rangle\in\mathcal{E}_{\text{Haar}}$. We are interested in finding the moments associated with $\mathcal{E}^{k}_{\text{TI}}$. For any $k$, the $t$-th moment can be evaluated as follows: 
\begin{eqnarray}
\mathbb{E}_{\phi\in\mathcal{E}^{k}}\left[\left[|\phi\rangle\langle\phi|\right]^{\otimes t}\right]=\int_{\psi\in\mathcal{E}_{\text{Haar}}}d\psi\dfrac{\mathbf{T}^{\otimes t}_{k}\left[|\psi\rangle\langle\psi|\right]^{\otimes t}\mathbf{T}^{\dagger\otimes t}_{k}}{\langle|\psi|\mathbf{T}^{\dagger}_{k}\mathbf{T}_{k}|\psi\rangle^{t}},
\end{eqnarray}
where the integral on the right-hand side is performed over the Haar random pure states. Since the Haar random states can be generated through the action of Haar random unitaries on a fixed fiducial quantum state, we can write
\begin{eqnarray}
\mathbb{E}_{\phi\in\mathcal{E}^{k}}\left[\left[|\phi\rangle\langle\phi|\right]^{\otimes t}\right]=\int_{u\in U(d^N)}d\mu(u)\dfrac{\mathbf{T}^{\otimes t}_{k}\left[u|0\rangle\langle 0|u^{\dagger}\right]^{\otimes t}\mathbf{T}^{\dagger\otimes t}_{k}}{\langle 0|u^{\dagger}\mathbf{T}^{\dagger}_{k}\mathbf{T}_{k}u|0\rangle^{t}}.
\end{eqnarray}
Since the integrand in the above equation has $u$-dependence in both the numerator and the denominator, direct evaluation of the Haar integral is challenging. To circumvent it, let us now consider the following ensemble average:
\begin{eqnarray}\label{A3}
\mathbb{E}\left[\left[|\phi\rangle\langle\phi|\right]^{\otimes t} \langle\psi|\mathbf{T}^{\dagger}_{k}\mathbf{T}_{k}|\psi\rangle^{t}\right]=\int_{u\in U(d^N)}d\mu(u)\dfrac{\mathbf{T}^{\otimes t}_{k}\left[u|0\rangle\langle 0|u^{\dagger}\right]^{\otimes t}\mathbf{T}^{\dagger\otimes t}_{k}}{\langle 0|u^{\dagger}\mathbf{T}^{\dagger}_{k}\mathbf{T}_{k}u|0\rangle^{t}}\langle 0|u^{\dagger}\mathbf{T}^{\dagger}_{k}\mathbf{T}_{k}u|0\rangle^{t}.\nonumber\\
\end{eqnarray}
By taking advantage of the left and the right invariance of the Haar measure over the unitary group $U(d^N)$, we replace $u$ in the above equation with $vu$, where $v\in U_{\text{TI}}(d^N)\subset U(d^N)$. Under this action, the term $\langle 0|u^{\dagger}\mathbf{T}^{\dagger}_{k}\mathbf{T}_{k}u|0\rangle^{t}$ remains independent of $v$ as $[v, \mathbf{T}_{k}]=0$ for all $v\in U_{\text{TI}}(d^N)$ and $k$.
We then perform the Haar integration over $U_{\text{TI}}(d^N)$, which corresponds to the following:

\begin{align}\label{uni_inv}
\mathbb{E}\left[\left[|\phi\rangle\langle\phi|\right]^{\otimes t} \langle\psi|\mathbf{T}^{\dagger}_{k}\mathbf{T}_{k}|\psi\rangle^{t}\right]
&=\int_{u\in U(d^N)}d\mu(u)\langle 0|u^{\dagger}\mathbf{T}^{\dagger}_{k}\mathbf{T}_{k}u|0\rangle^{t}\underbrace{\int_{v\in U_{\text{TI}}(d^N)}d\mu_{\text{TI}}(v)\dfrac{\mathbf{T}^{\otimes t}_{k}\left[vu|0\rangle\langle 0|u^{\dagger}v^{\dagger}\right]^{\otimes t}\mathbf{T}^{\dagger\otimes t}_{k}}{\langle 0|u^{\dagger}\mathbf{T}^{\dagger}_{k}\mathbf{T}_{k}u|0\rangle^{t}}}_{=\mathbb{E}_{\phi\in\mathcal{E}^{k}_{\text{TI}}}\left[\left[|\phi\rangle\langle\phi|\right]^{\otimes t}\right]}\nonumber\\
&=\mathbb{E}_{\phi\in\mathcal{E}^{k}_{\text{TI}}}\left[\left[|\phi\rangle\langle\phi|\right]^{\otimes t}\right] \int_{u\in U(d^N)}d\mu(u)\langle 0| u^{\dagger}\mathbf{T}^{\dagger}_{k}\mathbf{T}_{k}u|0\rangle^{t}\nonumber\\
&=\mathbb{E}_{\phi\in\mathcal{E}^{k}_{\text{TI}}}\left[\left[|\phi\rangle\langle\phi|\right]^{\otimes t}\right] \int_{|\psi\rangle\in \mathcal{E}_{\text{Haar}}}d\psi
\langle\psi|\mathbf{T}^{\dagger}_{k}\mathbf{T}_{k}|\psi\rangle^{t}.
\end{align}
where $d\mu_{\text{TI}}$ denotes the Haar measure over the subgroup $U_{\text{TI}}(d^N)$. Since $v$ is uniformly random in $U_{\text{TI}}(d^N)$, $v|\phi\rangle$ is also uniformly random in $\mathcal{E}^{k}_{\text{TI}}$
for any $|\phi\rangle\in\mathcal{E}^{k}_{\text{TI}}$. Therefore, $\mathbb{E}_{|\phi\rangle\in\mathcal{E}^{k}_{\text{TI}}}\left[\left(|\phi\rangle\langle\phi|\right)^{\otimes t}\right]=E_{v\in U_{\text{TI}}(d^N)}\left[\left(v|\phi\rangle\langle\phi|v^{\dagger}\right)^{\otimes t}\right]$. This is substituted in the second equality above. Equation (\ref{uni_inv}) implies that $\left[|\phi\rangle\langle\phi|\right]^{\otimes t}$ and $\langle\psi|\mathbf{T}^{\dagger}_{k}\mathbf{T}_{k}|\psi\rangle^{t}$ are independent random variables. Then, combining Eq. (\ref{A3}) and (\ref{uni_inv}), we get
\begin{eqnarray}
\mathbb{E}_{\phi\in\mathcal{E}^{k}_{\text{TI}}}\left[\left[|\phi\rangle\langle\phi|\right]^{\otimes t}\right]&=&\dfrac{\int_{u\in U(d^N)}d\mu(u)\mathbf{T}^{\otimes t}_{k}\left[u|0\rangle\langle 0|u^{\dagger}\right]^{\otimes t}\mathbf{T}^{\dagger\otimes t}_{k} }{\int_{u\in U(d^N)}d\mu(u)\langle 0| u^{\dagger}\mathbf{T}^{\dagger}_{k}\mathbf{T}_{k}u|0\rangle^{t}} \nonumber\\
&=&\dfrac{\mathbf{T}^{\otimes t}_{k}\mathbf{\Pi}_{t}}{\text{Tr}\left( \mathbf{T}^{\otimes t}_{k}\mathbf{\Pi}_{t} \right)},
\end{eqnarray}
implying the result. Our analysis does not require an explicit form for $\alpha^{t}_{k}=\text{Tr}\left( \mathbf{T}^{\otimes t}_{k}\mathbf{\Pi}_{t} \right)$. Hence, we leave it unchanged.
\end{proof}

Therefore, the T-invariance imposes an additional structure to the moments through the product of $\mathbf{T}^{\otimes t}_{k}$, where the Haar moments are uniform linear combinations of the permutation group elements. 
In the following section, we identify a sufficient condition for obtaining approximate state designs from the generic T-invariant generator states sampled from $\mathcal{E}^{k}_{\text{TI}}$. 

\section{Quantum state designs from T-invariant generator states}\label{emergent_designs}
In this section, we construct the projected ensembles from the T-invariant generator states sampled from $\mathcal{E}^{k}_{\text{TI}}$and verify their convergence to the quantum state designs. In particular, if $|\phi\rangle$ is drawn uniformly at random from $\mathcal{E}^{k}_{\text{TI}}$, we intend to verify the following identity:
\begin{eqnarray}\label{mean_t}
\mathbb{E}_{|\phi\rangle\in\mathcal{E}^{k}_{\text{TI}}} \left(\sum\limits_{|b\rangle\in\mathcal{B}}\dfrac{\left(\langle b|\phi\rangle\langle\phi|b\rangle\right)^{\otimes t}}{\left(\langle\phi|b\rangle\langle b|\phi\rangle\right)^{t-1}}\right)=\dfrac{\mathbf{\Pi}^{A}_{t}}{\mathcal{D}_{A, t}}, 
\end{eqnarray}
where $\mathcal{D}_{A, t}=2^{N_A}(2^{N_A}+1)...(2^{N_A}+t-1)$.  

The term on the left-hand side evaluates the $t$-th moment of the projected ensemble of a T-invariant state $|\phi\rangle$, with an average taken over all such states in $\mathcal{E}^{k}_{\text{TI}}$. It has been shown that for the Haar random states, the $t$-th moments of the projected ensembles are Lipshitz continuous functions with a Lipschitz constant $\eta=2(2t-1)$ \cite{cotler2023emergent}. Being $\mathcal{E}^{k}_{\text{TI}}\subset\mathcal{E}_{\text{Haar}}$, $\eta$ is also a Lipschitz constant for the case of $\mathcal{E}^{k}_{\text{TI}}$. Note that the number of independent parameters of a T-invariant state with momentum $k$ is $l=2\thinspace\text{rank}(\mathbf{T}_{k})-1$. Since $\mathbf{T}_k$ is a subspace projector, its eigenvalues assume values of either $N$ or zero. Therefore, $\text{rank}(\mathbf{T}_{k})=\text{Tr}(\mathbf{T}_{k})/N$. Hence, a random T-invariant state can be regarded as a point on a hypersphere of dimension $l=2\text{Tr}(\mathbf{T}_{k})/N-1$. Then, if the above relation in Eq. (\ref{mean_t}) holds, Levy's lemma guarantees that any typical state drawn from $\mathcal{E}^{k}_{\text{TI}}$ will form an approximate state design. In the following, we first verify the identity in Eq. \eqref{mean_t} for $t=1$, followed by the more general case of $t>1$. We then compare the results against the Haar random generator states.

\subsection{Emergence of first-order state designs ($t=1$)}
\label{first-design}
From Eq. (\ref{Tmoments}) of the last section, the first moment of $\mathcal{E}^{k}_{\text{TI}}$ is
$\mathbb{E}_{|\phi\rangle\in\mathcal{E}^{k}_{\text{TI}}}\left(|\phi\rangle\langle\phi|\right) = \mathbf{T}_{k}/\alpha_{1}$, where $\alpha_{1}$ typically scales exponentially with $N$. If $N$ is a prime number, we can write $\alpha^{1}_{k}$ explicitly as follows \cite{nakata2020generic}:
\begin{equation}\label{trace_T_k}
\alpha_{1}=\text{Tr}(\mathbf{T}_{k})=
\begin{cases}
    d^N+d(N-1) & \quad \textrm{if } k=0 \\
    d^N-d & \textrm{otherwise. }
\end{cases}
\end{equation}
To construct the projected ensembles, we now perform the local projective measurements on the $B$-subsystem. For $t=1$, the measurement basis is irrelevant. Then, for some generator state $|\phi\rangle\in\mathcal{E}^{k}_{\text{TI}}$, the first moment of the projected ensemble is given by $\sum_{|b\rangle\in\mathcal{B}}\langle b|\phi\rangle\langle\phi|b\rangle=\text{Tr}_{B}(|\phi\rangle\langle\phi|)$. Typically $\text{Tr}_{B}(|\phi\rangle\langle\phi |)$ approximates the maximally mixed state in the reduced Hilbert space $\mathcal{H}^{\otimes N_A}$. We verify this by averaging the partial trace over the ensemble $\mathcal{E}^{k}_{\text{TI}}$:
\begin{align}\label{N_A_1stmoment_gen}
\mathbb{E}_{\phi\in\mathcal{E}^{k}_{\text{TI}}}\left[\text{Tr}_{B}\left(|\phi\rangle\langle\phi |\right)\right]&=\text{Tr}_{B}\left[\mathbb{E}_{|\phi\rangle\in\mathcal{E}^{k}_{\text{TI}}}\left( |\phi\rangle\langle\phi| \right)\right]=\dfrac{\text{Tr}_{B}(\mathbf{T}_{k})}{{\alpha_{1}}}.
\end{align}
To examine the closeness of $\text{Tr}_{B}(\mathbf{T}_{k})/\alpha_{1}$ to the maximally mixed state, we first expand $\mathbf{T}_{k}$ given in Eq. (\ref{Tstate}) as
\begin{align}\label{Ptrace}
\text{Tr}_{B}(\mathbf{T}_{k})=2^{N}\left[\dfrac{\mathbb{I}_{A}}{2^{N_A}}+\dfrac{1}{2^N}\sum_{j=1}^{N-1}e^{2\pi ijk/N}\text{Tr}_{B}(T^j)\right].
\end{align}
For $j=1$, it can be shown that $\text{Tr}_{B}(T)=T_{A}$, where $T_A$ is the translation operator acting exclusively on the subsystem-$A$. For $j\geq 2$, $\text{Tr}_{B}(T^j)=\sum_{b}\langle b|T^j|b\rangle$ outputs a random permutation operator supported over $A$ whenever $N_A\geq \gcd{(N, j)}$. If $N_A<\gcd{(N, j)}$, the partial trace would result in a constant times identity operator $\mathbb{I}_{2^{N_A}}$. Interested readers can find more details in Appendix \ref{ptrace}. If $N$ is prime and $2\leq j<N$, we have $\gcd{(N, j)}=1$, which is less than $N_A$ whenever $N_A>2$. Then, we can explicitly show that the trace norm of the second term on the right side in Eq. (\ref{Ptrace}) is bounded from above as follows: 
\begin{eqnarray}\label{1-design}
\dfrac{1}{2^N}\left\| \sum_{j=1}^{N-1}e^{2\pi ijk/N}\text{Tr}_{B}(T^j) \right\|_{1}\leq \sum_{j=1}^{N-1}\dfrac{\left\| \text{Tr}_{B}(T^j) \right\|_{1}}{2^N}=\dfrac{(N-1)}{2^{N_B}},
\end{eqnarray}
implying that it converges to a null matrix exponentially with $N_B$. Hence, Eq. (\ref{N_A_1stmoment_gen}) converges exponentially with $N_B$ to the maximally mixed state ($\rho_{A}\propto\mathbb{I}$) in the reduced Hilbert space $\mathcal{H}^{\otimes N_A}$ --- for $N_B\gg\log_2(N)$, we have $\mathbb{E}_{\phi}[\sum_{b}\langle b|\phi\rangle\langle\phi|b\rangle]\approx\mathbb{I}_{A}/2^{N_A}$. Then, as mentioned before, we can invoke Levy's lemma to argue that a typical $|\phi\rangle\in\mathcal{E}^{k}$ approximately generates a state-$1$ design. 

While we initially assumed that $N$ is prime, the results also hold qualitatively for non-prime $N$. In the latter case, for $N_A<\gcd(N, j)$, partial traces may yield identity operators, i.e., $\sum_{b}\langle b|T^j|b\rangle\propto\mathbb{I}_{2^{N_A}}$. These instances introduce slight deviations from Eq. (\ref{1-design}) but have a minor impact on the overall result. Since the error is exponentially suppressed, it is natural to expect that $\Delta^{(1)}$ for these states typically shows identical behavior as that of the Haar random states, and we confirm this with the help of numerical simulations. 

\begin{figure}
\begin{center}
\includegraphics[scale=0.55]{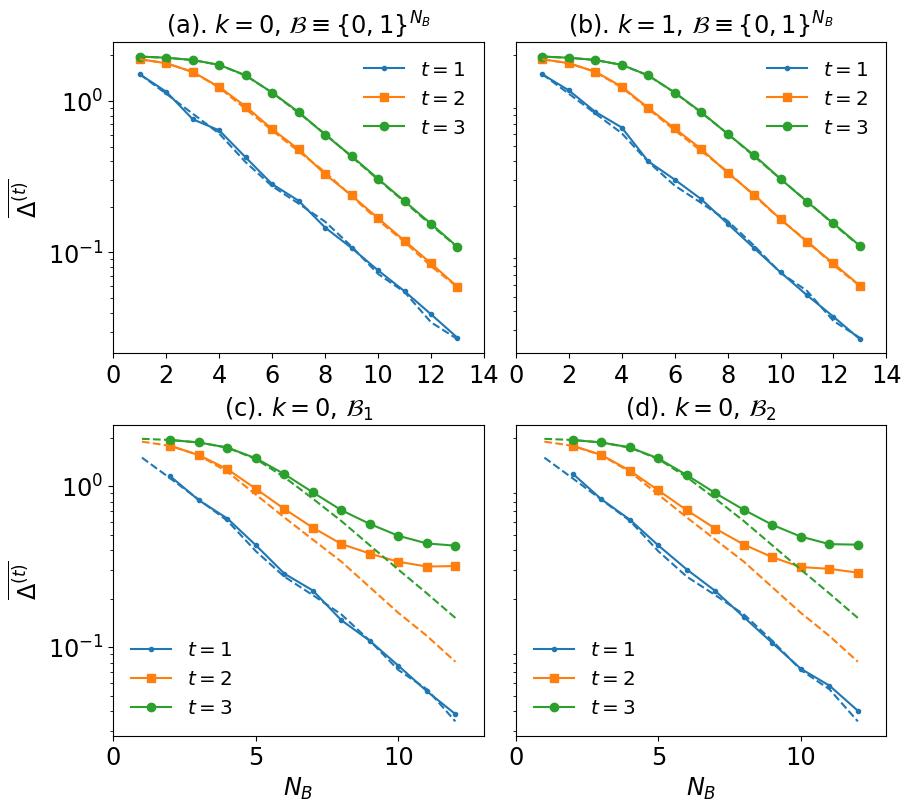}
\end{center}
\caption{\label{fig:haar_vs_tran} Illustration of $\overline{\Delta^{(t)}}$ versus $N_B$ for the projected ensembles of T-invariant generator states sampled uniformly at random from $\mathcal{E}^{k}_{\text{TI}}$. The results are shown for the first three moments. In the top panels (a) and (b), the measurements are performed in the computational basis: $\{\Pi_b = |b\rangle\langle b|$ for all $b\in \{0, 1\}^{N_B}\}$, where $\{0, 1\}^{N_B}$ denotes the set of all $N_B$-bit strings. While (a) represents the numerical computations in the $k=0$ momentum sector, $k=1$ is considered in panel (b). The results are averaged over ten initial generator states. The results appear qualitatively similar for both the momentum sectors. The trace distance falls to zero with an exponential scaling $\sim 2^{-N_B/2}$ for all the moments calculated. The calculations in (c) and (d) are performed in the momentum sector $k=0$ for two different measurement bases. In (c), we perform the measurements in the eigenbasis of the local translation operator supported over $B$ --- $T_B$. In (d), we break the local translation symmetry of $T_B$ by applying a local Haar random unitary. We perform measurements in the eigenbasis of the resulting operator. See the main text for more details. }   
\end{figure}

Figure \ref{fig:haar_vs_tran} illustrates the decay of the trace distance versus $N_B$ for the first three moments, considering three different bases (see the description of the Fig. \ref{fig:haar_vs_tran} and the following subsection) and two momentum sectors. The blue-colored curves correspond to the first moment. The blue curves, irrespective of the choice of basis and the momentum sector, always show exponential decay, i.e., $\overline{\Delta^{(1)}}\sim 2^{-N/2}$, where the overline indicates that the quantity is averaged over a few samples. We further benchmark these results against the case of Haar random generator states. For the Haar random states, the results are shown in dashed curves with the same color coding used for the first moment. We observe that the results are nearly identical in both cases, with minor fluctuations attributed to the averaging over a finite sample size.

\subsection{Higher-order state designs ($t>1$)}
\label{second-design}
Here, we provide a condition for producing approximate higher-order state designs from the random T-invariant generator states. 
    
\begin{theorem}\label{sufficient}
{\textup{(Sufficient condition for the identity in Eq. (\ref{mean_t}))}} 
Given a measurement basis $\mathcal{B}$ having supported over the subsystem-$B$, then the identity in Eq. (\ref{mean_t}) holds if for all $|b\rangle\in\mathcal{B}$, $\langle b|\mathbf{T}_{k}|b\rangle =\mathbb{I}_{2^{N_A}}$. 
\end{theorem}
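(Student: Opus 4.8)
The goal is to show that, under the hypothesis $\langle b|\mathbf{T}_{k}|b\rangle=\mathbb{I}_{2^{N_A}}$ for every $|b\rangle\in\mathcal{B}$, the expected $t$-th moment of the projected ensemble of a random T-invariant generator state equals the corresponding Haar moment $\mathbf{\Pi}^{A}_{t}/\mathcal{D}_{A,t}$. The natural strategy is to start from Theorem~\ref{Tran_designs}, which gives the $t$-th moment of the generator ensemble $\mathcal{E}^{k}_{\text{TI}}$ as $\mathbf{T}^{\otimes t}_{k}\mathbf{\Pi}_{t}/\alpha_t$, and then push this through the projection map $|\phi\rangle\mapsto\{p_b,|\phi(b)\rangle\}$. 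The key observation that makes the computation tractable is that the unnormalized projected-ensemble moment functional $\sum_{b}(\langle b|\phi\rangle\langle\phi|b\rangle)^{\otimes t}/(\langle\phi|b\rangle\langle b|\phi\rangle)^{t-1}$ is a ratio, but exactly as in the proof of Theorem~\ref{Tran_designs} one can clear the denominator: use the fact that $[|\phi\rangle\langle\phi|]^{\otimes t}$ (as an element of $\mathcal{E}^{k}_{\text{TI}}$) and the scalar $\langle\psi|\mathbf{T}_k^\dagger\mathbf{T}_k|\psi\rangle^t$ are independent random variables over the Haar ensemble. So I would first write the desired average as a Haar integral over $u\in U(d^N)$ with numerator and denominator both present, then insert an averaging over $v\in U_{\text{TI}}(d^N)$ (legitimate by left-invariance of the Haar measure on $U(d^N)$, since $[v,\mathbf{T}_k]=0$), and factor the integral exactly as in Eq.~(\ref{uni_inv}). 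This reduces the claim to evaluating
\begin{eqnarray*}
\sum_{|b\rangle\in\mathcal{B}}\left(\mathbb{I}_{2^{N_A}}\otimes\langle b|\right)^{\otimes t}\,\frac{\mathbf{T}^{\otimes t}_{k}\mathbf{\Pi}_{t}}{\alpha_t}\,\left(\mathbb{I}_{2^{N_A}}\otimes|b\rangle\right)^{\otimes t},
\end{eqnarray*}
i.e.\ the partial ``measurement contraction'' of the generator moment on the $B$-replicas.

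\textbf{Main computation.} The next step is to commute the sum over $\mathcal{B}$ past $\mathbf{\Pi}_{t}$. Since $\mathbf{\Pi}_{t}=\sum_{\pi\in S_t}\pi$ acts on the $t$ replicas of the full Hilbert space $\mathcal{H}^{\otimes N}=\mathcal{H}^{\otimes N_A}\otimes\mathcal{H}^{\otimes N_B}$, each permutation factors as $\pi=\pi_A\otimes\pi_B$ where $\pi_A$ permutes the $A$-replicas and $\pi_B$ the $B$-replicas. The crucial algebraic fact I would establish is the identity
\begin{eqnarray*}
\sum_{|b\rangle\in\mathcal{B}}\left(\langle b|^{\otimes t}\right)\,\pi_B\,\mathbf{T}^{\otimes t}_{k}\,\left(|b\rangle^{\otimes t}\right)
\;=\;\prod_{\text{cycles }c\text{ of }\pi_B}\Big(\langle b|^{\otimes |c|}\,\mathbf{T}_k^{\otimes|c|}\,\cdots\Big),
\end{eqnarray*}
which, when the hypothesis $\langle b|\mathbf{T}_{k}|b\rangle=\mathbb{I}_{2^{N_A}}$ is used to collapse each cyclic contraction on $B$, produces a $b$-independent operator on the $A$-replicas. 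Concretely, because $\mathbf{T}_k=\sum_j e^{2\pi i jk/N}T^j$ with $T=T_A\otimes T_B$ (up to a shift linking the two factors — this is the one subtle point, since translation is \emph{not} onsite and $T^j$ does not literally factor across the $A|B$ cut), I expect $\langle b|\mathbf{T}_k|b\rangle$ to generically be an operator on $\mathcal{H}^{\otimes N_A}$ built from $T_A^j$ and permutation-like operators; the hypothesis forces this to be the identity. Under that hypothesis, contracting the $B$-indices of $\mathbf{T}^{\otimes t}_{k}$ along each cycle of $\pi_B$ yields $\mathbb{I}_{2^{N_A}}^{\otimes(\text{number of }A\text{-slots in that cycle})}$ times the leftover $A$-side action, and summing over $|b\rangle\in\mathcal{B}$ just multiplies by $|\mathcal{B}|=2^{N_B}$ for the identity-cycle structure while the normalization $\alpha_t$ absorbs the rest. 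The outcome should be $\sum_{\pi_A\in S_t}\pi_A$ on the $A$-replicas, up to the overall constant, which is precisely $\mathbf{\Pi}^{A}_{t}$; matching traces then fixes the constant to $1/\mathcal{D}_{A,t}$, giving Eq.~(\ref{mean_t}).

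\textbf{Cleanup and the hard part.} After establishing the operator identity, I would verify the normalization separately by taking the trace of both sides: the left side of Eq.~(\ref{mean_t}) has trace $1$ (it is a sum of normalized rank-one projectors weighted by probabilities summing to one, raised to the $t$-th tensor power and traced), and $\text{Tr}(\mathbf{\Pi}^{A}_{t})=\mathcal{D}_{A,t}$, so the constant is forced — this avoids ever needing the explicit value of $\alpha_t$, consistent with the remark at the end of the proof of Theorem~\ref{Tran_designs}. Combined with Levy's lemma (Theorem in Sec.~\ref{chap-back} on Lipschitz concentration), since the projected-ensemble moment is Lipschitz with constant $\eta=2(2t-1)$ and $\mathcal{E}^{k}_{\text{TI}}$ lives on a sphere of dimension $l=2\,\text{Tr}(\mathbf{T}_k)/N-1$ which grows exponentially in $N$, the equality of expectations upgrades to: a typical $|\phi\rangle\in\mathcal{E}^{k}_{\text{TI}}$ yields $\Delta^{(t)}_{\mathcal{E}}\leq\varepsilon$ with overwhelming probability. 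The main obstacle I anticipate is precisely the bookkeeping in the combined sum over $\pi\in S_t$ and $|b\rangle\in\mathcal{B}$: because translation is non-onsite, $T^j$ mixes the $A$ and $B$ factors, so $\langle b|^{\otimes t}\,\pi_B\,\mathbf{T}^{\otimes t}_k\,|b\rangle^{\otimes t}$ does not trivially decompose, and one must carefully track how the cyclic structure of $\pi_B$ interacts with the $N$-term sum defining each $\mathbf{T}_k$ factor. The hypothesis $\langle b|\mathbf{T}_k|b\rangle=\mathbb{I}$ is exactly what is needed to make every such cyclic contraction collapse cleanly to the identity on the $A$-side; showing that it suffices for \emph{all} permutations $\pi\in S_t$ (not just transpositions) — by reducing a general $\pi_B$-contraction to a product over its cycles, each of which is a single closed loop of $\langle b|\mathbf{T}_k\cdots\mathbf{T}_k|b\rangle$-type terms — is the technical heart of the argument.
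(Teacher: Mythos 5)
Your plan has a genuine gap at the very first reduction, where you claim to ``clear the denominator'' and reduce the problem to computing the operator
\begin{equation*}
\sum_{|b\rangle\in\mathcal{B}}\left(\mathbb{I}_{2^{N_A}}\otimes\langle b|\right)^{\otimes t}\,\frac{\mathbf{T}^{\otimes t}_{k}\mathbf{\Pi}_{t}}{\alpha_t}\,\left(\mathbb{I}_{2^{N_A}}\otimes|b\rangle\right)^{\otimes t}.
\end{equation*}
This is not the projected-ensemble moment. You are conflating two distinct denominators. The independence trick used in the proof of Theorem~\ref{Tran_designs} concerns the normalization scalar $\langle\psi|\mathbf{T}^{\dagger}_{k}\mathbf{T}_{k}|\psi\rangle^{t}$, which is a function of the pre-projection Haar variable $|\psi\rangle$; the independence comes precisely from the component of $|\psi\rangle$ lying outside the $\mathbf{T}_k$-subspace, which is forgotten by the projection $|\psi\rangle\mapsto|\phi\rangle$ and hence decouples from any functional of $|\phi\rangle$ once you insert the extra $U_{\text{TI}}$-average. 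The denominator in Eq.~(\ref{mean_t}), by contrast, is the Born weight $p_b^{t-1}=\left(\langle\phi|b\rangle\langle b|\phi\rangle\right)^{t-1}$, a deterministic function of $|\phi\rangle$ itself (indeed of the numerator: $p_b$ is the trace of a single replica of $\langle b|\phi\rangle\langle\phi|b\rangle$). There is no independent residual randomness to exploit, so the factorization step you propose does not go through, and the quantity you would end up computing is $\mathbb{E}_{\phi}\left[\sum_{b}\left(\langle b|\phi\rangle\langle\phi|b\rangle\right)^{\otimes t}\right]$, whose trace is $\mathbb{E}_{\phi}\left[\sum_b p_b^t\right]\neq 1$ rather than the required $\mathbb{E}_{\phi}\left[\sum_b p_b\right]=1$.

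The error is not merely a normalization one either. For $t=2$ your main computation is correct as far as it goes: under the hypothesis $\langle b|\mathbf{T}_k|b\rangle=\mathbb{I}_{2^{N_A}}$, the cyclic-contraction bookkeeping does collapse $\sum_{b}\langle b|^{\otimes 2}\mathbf{T}_k^{\otimes 2}\mathbf{\Pi}_2|b\rangle^{\otimes 2}$ to $2^{N_B}\mathbf{\Pi}^A_2$. But $\alpha_2=\text{Tr}(\mathbf{T}_k^{\otimes 2}\mathbf{\Pi}_2)=\alpha_1(\alpha_1+N)$ with $\alpha_1=\text{Tr}(\mathbf{T}_k)$, and $2^{N_B}/\left[\alpha_1(\alpha_1+N)\right]\neq 1/\left[2^{N_A}(2^{N_A}+1)\right]$ --- for example $N=5$, $N_A=3$, $k=0$ gives $288\neq 1800$ on the two sides of the would-be identity. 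The ``match the traces to fix the constant'' rescue fails because the object you computed does not have the same trace as the left-hand side of Eq.~(\ref{mean_t}), so matching traces would simply rescale a quantity that is off by a $b$- and $\phi$-dependent weight, not a scalar. This is exactly why the paper instead expands $p_b^{-(t-1)}=\left(1-(1-p_b)\right)^{-(t-1)}$ as a power series in $1-p_b$, converting the rational function into an infinite family of integer moments $\left(\langle b|\phi\rangle\langle\phi|b\rangle\right)^{\otimes(t+r)}$; each of these \emph{is} amenable to the Theorem~\ref{Tran_designs}-style average (yielding $\langle b|\mathbf{T}_k|b\rangle^{\otimes(t+r)}\mathbf{\Pi}^A_{t+r}$), and the hypothesis is used to force the remaining $b$-dependent scalar $\langle\psi b|\mathbf{T}_k|\psi b\rangle^r$ to equal $1$, after which the series resums to the correct normalization $1/2^{N_B}$. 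Your approach has no mechanism to reinstate those $r\geq 1$ corrections, and they do not vanish.
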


\begin{proof}
Here, we seek to obtain a sufficient condition for the emergence of state designs from randomly chosen T-invariant generator states from $\mathcal{E}^{k}_{\text{TI}}$. In particular, for a randomly chosen $|\phi_{AB}\rangle\in\mathcal{E}^{k}_{\text{TI}}$, we aim to establish a condition on the measurement basis $|\mathcal{B}\rangle\equiv\{|b\rangle\}$ for the following identity:
\begin{align}
\mathbb{E}_{|\phi_{AB}\rangle\in\mathcal{E}^{k}_{\text{TI}}} \left(\sum_{|b\rangle\in\mathcal{B}}\dfrac{\left[\langle b|\phi_{AB}\rangle\langle\phi_{AB}|b\rangle\right]^{\otimes t}}{\left(\langle\phi_{AB}|b\rangle\langle b|\phi_{AB}\rangle\right)^{t-1}}\right)=\dfrac{\mathbf{\Pi}^{A}_{t}}{d_A(d_A+1)...(d_A+t-1)},
\end{align}
where $d_A=2^{N_A}$, the total Hilbert space dimension of the susbsyetm $A$. Since the expectation ($\mathbb{E}_{|\phi\rangle\in\mathcal{E}^{k}_{\text{TI}}}$) commutes with the summation ($\sum_{|b\rangle\in\mathcal{B}}$), we write 
\begin{align}
\mathbb{E}_{|\phi_{AB}\rangle\in\mathcal{E}^{k}_{\text{TI}}} \left(\sum_{|b\rangle\in\mathcal{B}}\dfrac{\left[\langle b|\phi_{AB}\rangle\langle\phi_{AB}|b\rangle\right]^{\otimes t}}{\left(\langle\phi_{AB}|b\rangle\langle b|\phi_{AB}\rangle\right)^{t-1}}\right)=\sum_{|b\rangle\in\mathcal{B}}\mathbb{E}_{|\phi_{AB}\rangle\in\mathcal{E}^{k}_{\text{TI}}}\left( \dfrac{\left[\langle b|\phi_{AB}\rangle\langle\phi_{AB}|b\rangle\right]^{\otimes t}}{\left(\langle\phi_{AB}|b\rangle\langle b|\phi_{AB}\rangle\right)^{t-1}} \right) .   
\end{align}
We note that for any $|\phi_{AB}\rangle\in\mathcal{H}^{\otimes N}$ and any $|b\rangle\in\mathcal{H}^{\otimes N_B}$, the scalar quantity $\langle\phi_{AB}|b\rangle\langle b|\phi_{AB}\rangle$ is always less than or equal to $1$, i.e., $\langle\phi_{AB}|b\rangle\langle b|\phi_{AB}\rangle\leq 1$. Thus, by writing $(1-(1-\langle\phi_{AB}|b\rangle\langle b|\phi_{AB}\rangle))$ in the denominator, we make use of the infinite series expansion of $1/(1-x)^{t-1}$ to evaluate the above expression. It then follows that 
\begin{align}\label{B3}
\dfrac{\left(\langle b|\phi_{AB}\rangle\langle\phi_{AB}|b\rangle\right)^{\otimes t}}{\left(\langle\phi_{AB}|b\rangle\langle b|\phi_{AB}\rangle\right)^{t-1}}&= \left(\langle b|\phi_{AB}\rangle\langle\phi_{AB}|b\rangle\right)^{\otimes t}\sum_{n=0}^{\infty}\binom{n+t-2}{t-2}\sum_{r=0}^n \binom{n}{r} (-1)^r \text{Tr}\left[\left(\langle b|\phi_{AB}\rangle\langle\phi_{AB}|b\rangle\right)^{\otimes r}\right]
\end{align}
Note that the (unnormalized) state $\langle b|\phi_{AB}\rangle\langle\phi_{AB}|b\rangle$ has support solely over the subsystem $A$. For computational convenience, we write it as follows:
\begin{eqnarray}\label{unn}
\langle b|\phi_{AB}\rangle\langle\phi_{AB}|b\rangle &=&\left(\sum_{m_i=0}^{2^{N_A}-1}|m_i\rangle \langle m_i|\right)\left(\langle b|\phi_{AB}\rangle\langle\phi_{AB}|b\rangle\right)\left(\sum_{n_i=0}^{2^{N_A}-1}|n_i\rangle \langle n_i|\right)\nonumber\\
&=&\sum_{m_i=0}^{2^{N_A}-1}\sum_{n_i=0}^{2^{N_A}-1}|m_i\rangle\langle n_i| \text{Tr}\left[ |n_i\rangle\langle m_i|\left(\langle b|\phi_{AB}\rangle\langle\phi_{AB}|b\rangle\right) \right],
\end{eqnarray}
where 
\begin{equation*}
\sum_{m_i=0}^{2^{N_A}-1}|m_i\rangle \langle m_i|=\sum_{n_i=0}^{2^{N_A}-1}|n_i\rangle \langle n_i|= \mathbb{I}_{2^{N_A}}.  
\end{equation*}
Incorporating Eq. (\ref{unn}) into Eq. (\ref{B3}) gives
\begin{align}
\dfrac{\left[\langle b|\phi_{AB}\rangle\langle\phi_{AB}|b\rangle\right]^{\otimes t}}{\left(\langle\phi_{AB}|b\rangle\langle b|\phi_{AB}\rangle\right)^{t-1}}&=\sum_{\substack{m_1, m_2, ..., m_t \\ n_1, n_2, ..., n_t}}|m_1m_2..., m_t\rangle\langle n_1, n_2, ...n_t|\sum_{n=0}^{\infty}\binom{n+t-1}{t-1}\sum_{r=0}^n \binom{n}{r}  \nonumber\\
&(-1)^r\text{Tr}\left[|n_1, n_2, ..., n_t\rangle\langle m_1, m_2, ..., m_t|\left(\langle b|\phi_{AB}\rangle\langle\phi_{AB}|b\rangle\right)^{\otimes (t+r)}\right],
\end{align}
In this expression, all the replicas of $\langle b|\phi_{AB}\rangle\langle\phi_{AB}|b\rangle$ are stacked together within the trace operation. This allows us to perform the invariant integration over the states $|\phi_{AB}\rangle\in\mathcal{E}^{k}_{\text{TI}}$, which is evaluated as 
\begin{eqnarray}\label{B6}
\mathbb{E}_{|\phi_{AB}\rangle\in\mathcal{E}^{k}_{\text{TI}}}\left[\left(\langle b|\phi_{AB}\rangle\langle\phi_{AB}|b\rangle\right)^{\otimes (t+r)}\right]=\dfrac{\langle b|\mathbf{T}_{k}|b\rangle^{\otimes (t+r)}\mathbf{\Pi}^{A}_{t+r}}{\text{Tr}(\mathbf{T}_{k}\mathbf{\Pi}^{AB}_{t+r})},    
\end{eqnarray}
where $\mathbf{\Pi}^{A}_{t+r}$ and $\mathbf{\Pi}^{AB}_{t+r}$ denote projectors onto the permutation symmetric subspaces of $t+r$ copies. While the former acts only on the replicas of the subsystem $A$, the latter acts on the replicas of the entire system $AB$. It is now useful to write $\mathbf{\Pi}^{A}_{t+r}$ as follows:
\begin{eqnarray}
\mathbf{\Pi}^{A}_{t+r}=\mathcal{D}_{A, t+r}\int_{|\psi\rangle\in\mathcal{E}_{\text{Haar}}}d\psi \left(|\psi\rangle\langle\psi |\right)^{\otimes (t+r)},\quad\text{where }|\psi\rangle\in\mathcal{H}^{\otimes N_A}
\end{eqnarray}
Where, $\mathcal{D}_{A, t+r}=d_A(d_A+1)...(d_A+t+r-1)$ and $d_A=2^{N_A}$. It then follows that 
\begin{align}\label{sing}
\sum_{|b\rangle\in\mathcal{B}}\mathbb{E}_{\phi_{AB}\in\mathcal{E}^{k}_{\text{TI}}}\left( \dfrac{\left[\langle b|\phi_{AB}\rangle\langle\phi_{AB}|b\rangle\right]^{\otimes t}}{\left(\langle\phi_{AB}|b\rangle\langle b|\phi_{AB}\rangle\right)^{t-1}} \right)&=
\sum_{|b\rangle\in\mathcal{B}}\langle b|\mathbf{T}_{k}|b\rangle^{\otimes t}\int_{\psi\in\mathcal{E}_{\text{Haar}}}d\psi \left(|\psi\rangle\langle\psi |\right)^{\otimes t} \nonumber\\
&\sum_{n=0}^{\infty}\binom{n+t-2}{t-2}\sum_{r=0}^n \binom{n}{r} (-1)^r \mathcal{D}_{A, t+r}\dfrac{\langle\psi b|\mathbf{T}_{k}|\psi b\rangle^{r}}{\text{Tr}(\mathbf{T}_{k}\mathbf{\Pi}^{AB}_{t+r})}.
\end{align}
A sufficient condition, $\langle b|\mathbf{T}_{k}|b\rangle=\mathbb{I}_{2^{N_A}}$ for all $|b\rangle\in\mathcal{B}$, ensures the convergence of the right-hand side of the above expression to the Haar moments. If satisfied, for all $|b\rangle\in\mathcal{B}$, we will have $\langle \psi b|\mathbf{T}_{k}|\psi b\rangle=1$. As a result, both the integral and the integrand can be decoupled from the infinite series. Then, the infinite series can be understood as the normalizing factor, which necessarily converges to $1/2^{N_B}$. Therefore, we get
\begin{align}
\mathbb{E}_{\phi_{AB}\in\mathcal{E}^{k}}\left(\sum_{|b\rangle\in\mathcal{B}} \dfrac{\left[\langle b|\phi_{AB}\rangle\langle\phi_{AB}|b\rangle\right]^{\otimes t}}{\left(\langle\phi_{AB}|b\rangle\langle b|\phi_{AB}\rangle\right)^{t-1}} \right)&= \int_{\psi\in\mathcal{E}^{A}_{\text{Haar}}}d\psi \left(|\psi\rangle\langle\psi |\right)^{\otimes t}\nonumber\\
&=\dfrac{\mathbf{\Pi}^{A}_{t}}{2^{N_A}(2^{N_A}+1)...(2^{N_A}+t-1)},
\end{align}
implying that the moments of the projected ensembles, on average, converge to the Haar moments. 
\end{proof}

For a given basis vector $|b\rangle\in\mathcal{B}$, the condition is maximally violated if it can be extended to have support over the full system such that it becomes an eigenstate (with momentum charge $k$) of the translation operator. That is, if there exists an arbitrary $|a\rangle\in\mathcal{H}^{\otimes N_A}$ such that $T\left(|a\rangle\otimes |b\rangle\right) =e^{-2\pi ik/N}\left(|a\rangle\otimes |b\rangle\right)$. Then, the expectation of $\mathbf{T}_{k}$ in this state becomes $\langle ab|\mathbf{T}_{k}|ab\rangle =N$, which is in maximal violation of the sufficient condition \footnote{Note that the sufficient condition would require $\langle ab|\mathbf{T}_{k}|ab\rangle =1$}. For example, when $k=0$, the basis states $|0\rangle^{\otimes N_B}$ and $|1\rangle^{\otimes N_B}$ of the standard computational basis can be extended to $|0\rangle^{\otimes N}$ and $|1\rangle^{\otimes N}$ respectively. So, $\langle 0|^{\otimes N}\mathbf{T}_0|0\rangle^{\otimes N}=\langle 1|^{\otimes N}\mathbf{T}_0|1\rangle^{\otimes N}=N$, correspond to the maximal violation. On the other hand, most of the basis vectors of the computational basis satisfy the sufficient condition. It is also worth noting that if $|ab\rangle$ becomes a T-invariant state with a different momentum charge ($k'\neq k$), then $\langle ab| \mathbf{T}_{k} |ab\rangle=0$. 

We now calculate the trace distance $\Delta^{(t)}$ and average it over a few sample states taken from $\mathcal{E}^{k}_{\text{TI}}$. We illustrate the results in Fig. \ref{fig:haar_vs_tran} for the second (orange color) and third (green color) moments by keeping $N_A$ and the measurement bases as before. 
Similar to the case of the first moment, we contrast the results with the case of Haar random generator states represented by the dashed curves. 
In the panels \ref{fig:haar_vs_tran}a and \ref{fig:haar_vs_tran}b corresponding to generator states with $k=0$ and $1$,  projective measurements in the computational basis show an exponential decay of the average trace distance with $N_B$ for both the moments. Additionally, on a semilog scale, the decays for all the moments appear to align along parallel lines at sufficiently large $N_B$ values. From the comparison between the Figs.\ref{fig:haar_vs_tran}a and \ref{fig:haar_vs_tran}b, it is evident that there are no noticeable differences when the generator states are chosen from different momentum sectors. We also consider the eigenbasis of the local translation operator $T_B$ for the measurements, of which a representative case for $k=0$ is shown in Fig. \ref{fig:haar_vs_tran}c. We observe the average trace distances deviate from the initial exponential decay and approach non-zero saturation values. In Fig. \ref{fig:haar_vs_tran}d, we consider the case where the translation symmetry is broken weakly by applying a single site Haar random unitary to the left of $T_B$, i.e., $T'_B=(u\otimes\mathbb{I}_{2^{N_B-1}})T_{B}$. We then consider the eigenbasis of the resultant operator $T'_B$ for the measurements on $B$. Surprisingly, the trace distance still saturates to a finite value for both the moments despite the broken translation symmetry. 
In the following subsection, we elaborate more on the interplay between the measurement bases and the sufficient condition derived in Result. \ref{sufficient}.

\subsection{Overview of the bases violating the sufficient condition} 
\label{violation}
From Fig. \ref{fig:haar_vs_tran}, it is evident that not all measurement bases furnish higher-order state designs. Here, we analyze the degree of violation of the sufficient condition by different measurement bases. 
Some, like the computational basis, exhibit mild violations, while others significantly deviate from the condition. Given a measurement basis $\mathcal{B}$, we quantify the average violation of the sufficient condition using the quantity $\mathbf{\Delta}(\mathbf{T}_{k}, \mathcal{B})/2^{N_B}$, where 
\begin{eqnarray}
\mathbf{\Delta}(\mathbf{T}_{k}, \mathcal{B})=\sum_{|b\rangle\in\mathcal{B}}\|\langle b|\mathbf{T}_{k}|b\rangle -\mathbb{I}_{2^{N_A}}\|_{1}.     
\end{eqnarray}
In general, finding bases that fully satisfy the condition, implying $\mathbf{\Delta}(\mathbf{T}_{k}, \mathcal{B})=0$, is hard.
Depending upon $\mathcal{B}$, this quantity will display a multitude of behaviors. Also, note that for a single site unitary $u$, the local transformation of $|b\rangle$ to $|b'\rangle=u^{\otimes N_B}|b\rangle$ leaves $\mathbf{\Delta}(\mathbf{T}_{k}, \mathcal{B})$ invariant. 
To see the nature of the violation in a generic basis, we numerically examine $\mathbf{\Delta}(\mathbf{T}_{k}, \mathcal{B})/2^{N_B}$ versus $N_B$ for three different bases, namely, the computational basis, a Haar random product basis, and a Haar random entangling basis, all supported over $B$.

\begin{figure}
\begin{center}
\includegraphics[scale=0.55]{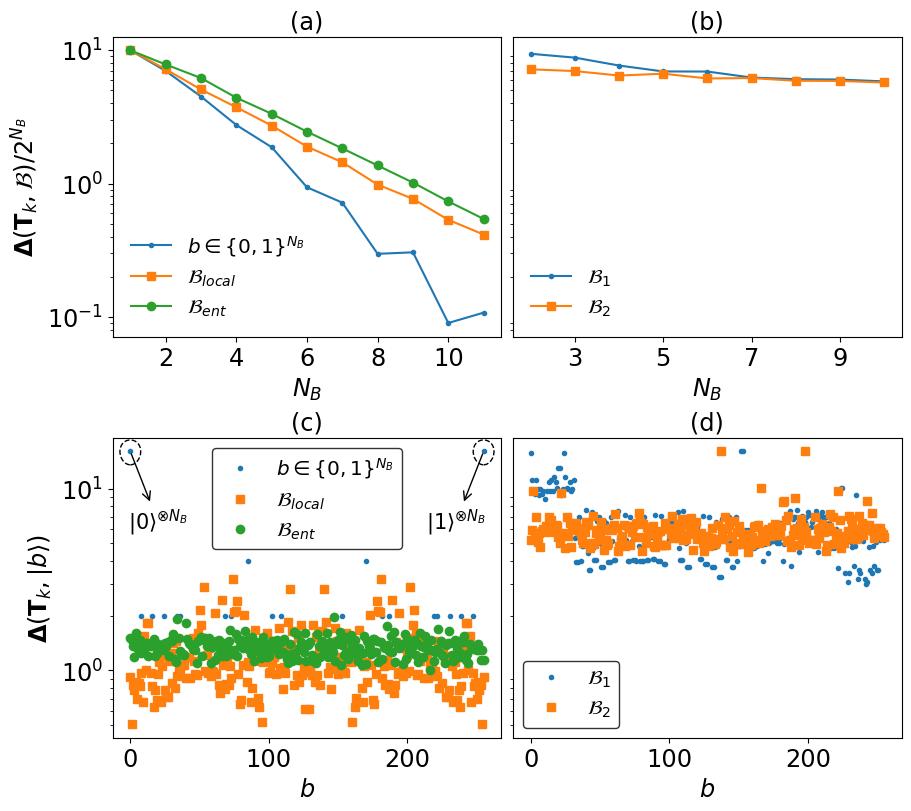}
\end{center}
\caption{\label{fig:viol}  The figure illustrates the average violation of the sufficient condition by different bases as quantified by $\mathbf{\Delta}(\mathbf{T}_{k}, \mathcal{B})/2^{N_B}$. Here, we fix $N=3$. In panel (a), $\mathbf{\Delta}(\mathbf{T}_{k}, \mathcal{B})/2^{N_B}$ versus $N_B$ is plotted for three different bases, namely, (i) the computational basis or $\sigma^z$ basis (blue), (ii) basis obtained by applying local Haar random unitaries on the computational basis (orange), and (iii) an entangled basis obtained by applying a Haar unitary of dimension $2^{N_B}$ on the computational basis (green). For all three bases, $\mathbf{\Delta}(\mathbf{T}_{k}, \mathcal{B})$ decays exponentially with $N_B$. In panel (b), the measurements are performed in the eigenbases of the operators $T_B$ and $u_{N_A+1}T_B$, where $T_B$ is the local translation operator supported over the subsystem $B$ and $u_{N_A+1}$ denotes a local Haar unitary acting on a site labeled with $N_A+1$. The violation stays nearly constant with $N_B$ for these two bases. In the bottom panels (c) and (d), the violation is quantified for each basis vector by fixing $N=11$. Here, we plot $\|\langle b| \mathbf{T}_{k} |b\rangle\|_1$ for each $|b\rangle\in\mathcal{B}$ for all the bases considered in the above panels. See the main text for more details. }   
\end{figure}

The results are shown in Fig. \ref{fig:viol}. In Fig. \ref{fig:viol}a, the blue curve represents the violation for the computational basis. Clearly, the decay of the violation is exponential and faster than the other cases considered. The orange curve represents the case of local random product basis. This can be obtained by applying a tensor product of Haar random local unitaries on the computational basis vectors. 
As the figure depicts, the violation still decays exponentially but slower than in the case of computational basis. Finally, we consider a random entangling basis by applying global Haar unitaries on the computational basis vectors. The violation still decays exponentially but slower than the previous two. In Fig. \ref{fig:viol}c, we plot the violation for each basis vector of the above bases considered while keeping $N_B$ fixed. We see that, except for a few vectors, the violation stays concentrated near a value of order $O(1)$. In the computational basis, the only vectors $|0\rangle^{\otimes N_B}$ and $|1\rangle^{\otimes N_B}$ show the maximal violation, which are encircled/marked in Fig. \ref{fig:viol}c. We consider the violation is significant if $\mathbf{\Delta}(\mathbf{T}_{k}, \mathcal{B})/2^{N_B}$ does not decay with $N_B$. If this happens, the projected ensembles may fail to converge to the state designs even in the large $N_B$ limit. Note that, while the exponential decay of the violation may appear generic, there exist bases that show nearly constant violation as $N_B$ increases, which are depicted in Fig. \ref{fig:haar_vs_tran}b and \ref{fig:haar_vs_tran}d.

To explore such measurement bases with significant violations, we consider the following.
\begin{align}\label{tria}
\mathbf{\Delta}(\mathbf{T}_{k}, \mathcal{B})&=\sum_{|b\rangle\in\mathcal{B}}\left\|\langle b|\sum_{j=0}^{N-1}e^{2\pi ijk/N}T^j|b\rangle -\mathbb{I}_{2^{N_A}}\right\|_{1}\nonumber\\  &=\sum_{|b\rangle\in\mathcal{B}} \left\| \sum_{j=1}^{N-1}e^{2\pi ijk/N} \langle b| T^j |b\rangle \right\|_{1} \nonumber\\
&=\sum_{|b\rangle\in\mathcal{B}}\left(\left\|e^{2\pi ir/N} \langle b|T^{r}|b\rangle +\sum_{j\neq r}e^{2\pi ijk/N} \langle b| T^j |b\rangle \right\|_{1}\right).
\end{align}
In the second line, we used the fact $\langle b|\mathbb{I}_{2^N}|b\rangle =\mathbb{I}_{2^{N_A}}$ and subtracted it from $e^{2\pi i 0/N}\langle b|T^0|b\rangle$. 
In the third equality, a term corresponding to an arbitrary integer $r$ in the summation, where $1 \leq r \leq N-1$, has been isolated from the remaining terms. This enables the analysis of violations with respect to each element of the translation group, facilitating the identification of the violating bases. To illustrate this, we consider the specific case where $r=1$:
\begin{eqnarray}
\langle b|T|b\rangle&=&\langle b|S_{1, 2}S_{2, 3}\cdots S_{N-1, N}|b\rangle\nonumber\\    
&=&\left(S_{1, 2}\cdots S_{N_A-1, N_A}\right)\langle b|S_{N_A, N_A+1}\cdots S_{N-1, N}|b\rangle\nonumber\\
&=&\int_{u}d\mu(u)\left(T_Au^{\dagger}_{N_A}\right)\langle b|u_{N_A+1}T_B|b\rangle, 
\end{eqnarray}
where $S_{i,i+1}$ denotes the swap operator between $i$ and $i+1$ sites, $T_A$ and $T_B$ are translation operators locally supported over the subsystems $A$ and $B$. In the third equality, $S_{N_A, N_A+1}$ is replaced by the unitary Haar integral expression of the swap operator --- $S_{N_A, N_A+1}=\int_{u}d\mu(u)(u^{\dagger}_{N_A}\otimes u_{N_A+1})$, where $d\mu(u)$ represents the invariant Haar measure over the unitary group $U(2)$ \cite{zhang2014matrix}. Then, we heuristically argue that the measurements in the eigenbasis of the operator $u_{N_A+1}T_B$ for an arbitrary $u$ would lead to $\mathbf{\Delta}(\mathbf{T}_{k}, \mathcal{B})\sim O(2^{N})$. Consequently, the average violation $\mathbf{\Delta}(\mathbf{T}_{k}, \mathcal{B})/2^{N_B}$ stays nearly a constant of order $O(2^{N_A})$ even in the large $N_B$ limit. We illustrate this by considering the eigenbases of the operator $u_{N_A+1}T_B$ in Fig. \ref{fig:viol}b and \ref{fig:viol}d for two cases of $u_{N_B}$, namely, $u_{N_B}=\mathbb{I}_{2}$ and a Haar random $u_{N_B}$. Infact, we considered the same measurement bases in Figs. \ref{fig:haar_vs_tran}c and \ref{fig:haar_vs_tran}d and observed that the projected ensembles do not converge to the higher-order state designs.
It is interesting to notice that the measurement bases that do not respect the translation symmetry can also hinder the design formation [see Fig. \ref{fig:haar_vs_tran}d]. In the following, we elaborate on this aspect further by considering $r=2$ in Eq. (\ref{tria}).

\subsection{Violation of the sufficient condition for $r=2$}
\label{vior2}
In this appendix, we examine Eq. (\ref{tria}) for $r=2$ and seek to identify the measurement bases that strongly violate the sufficient condition. So far, we have carried out the analysis for $r=1$. We have observed that the eigenbases of the operators $u_{N_A+1}T_B$ for all $u_{N_A+1}\in U(d)$ significantly violate the condition, where $T_B$ denotes the translation operator supported over $B$. The subscript $N_A+1$ denotes that the unitary acts on the site labeled $N_A+1$. When $u_{N_A+1}=\mathbb{I}_{2}$, the operator is simply a translation operator over $B$. The eigenbasis of this operator is locally translation invariant. However, for a random $u_{N_A+1}$, the translation symmetry gets weakly broken. As $r$ is increased further, the local translation symmetry of the measurement basis gradually disappears. To see this for $r=2$, we consider the following equality: 
\begin{eqnarray}
\mathbf{\Delta}(\mathbf{T}_{k}, \mathcal{B})=\sum_{|b\rangle\in\mathcal{B}}\left(\left\|e^{2\pi ir/N} \langle b|T^{2}|b\rangle +\sum_{j\neq 2}e^{2\pi ijk/N} \langle b| T^j |b\rangle \right\|_{1}\right).  
\end{eqnarray}
We now write $T^2$ as 
\begin{eqnarray}
 T^2&=&TT\nonumber\\
 &=&\left(S_{12}S_{23}\cdots S_{N-1, N}\right)\left(S_{12}S_{23}\cdots S_{N-1, N}\right)
\end{eqnarray}
For simplicity, we take $N_A=3$. The partial expectation of $T^2$ with respect to a basis vector $|b\rangle\in\mathcal{B}$ can be written as 
\begin{eqnarray}
\langle b|T^2|b\rangle &=& S_{12}S_{23}S_{12}\langle b| S_{34}S_{23}\left( S_{45}S_{34}S_{56}\cdots S_{N-1, N} \right)\left( S_{45}S_{56}\cdots S_{N-1, N} \right) |b\rangle\nonumber\\
\end{eqnarray}
We now substitute the integral expression of the swap operators corresponding to $S_{34}$ and $S_{23}$. It then follows that 
\begin{eqnarray}
\langle b|T^2|b\rangle&=&S_{13}\int_{u\in U(d)}d\mu(u)\int_{v\in U(d)} d\mu(v)\int_{w\in U(d)}d\mu(w)u_{2}v_{3}u_{3}w_{3}\nonumber\\
&&\hspace{3.5cm}\langle b| v_{4} \left( S_{45}w_{4}S_{56}\cdots S_{N-1, N} \right)\left( S_{45}S_{56}\cdots S_{N-1, N} \right)|b\rangle \nonumber\\
&=&S_{13}\int_{u\in U(d)}d\mu(u)\int_{v\in U(d)} d\mu(v)\int_{w\in U(d)}d\mu(w)u_{2}v_{3}u_{3}w_{3}\langle b| v_{4}T_Bw_{4}T_{B} |b\rangle\nonumber\\
\end{eqnarray}
If the measurement basis is the eigenbasis of the operator $( v_4T_Bw_4T_B)$ for some $v$ and $w$ being local Haar random unitaries, one can expect that $\sum_{|b\rangle\mathcal{B}}\|\langle b| T^2 |b\rangle\|_{1}\sim O(2^{N_B})$. For $v=w=\mathbb{I}_{2}$, the above operator becomes $T^2_B$. The eigenbasis of this operator is invariant under translations by two sites. Likewise, one can show that for an arbitrary integer $r$, the eigenbasis of $T^r$ strongly violates the condition. Figures \ref{r2r3}a-\ref{r2r3}c demonstrate the decay of trace distance by considering the measurements in the eigenbases of $T^2_{B}$ and $T^3_{B}$ operators. From the figure, it is evident that the design formation is obstructed. This suggests that the sufficient condition we derived could potentially be necessary as well for the emergence of higher-order state designs.  

\begin{figure*}
\includegraphics[scale=0.425]{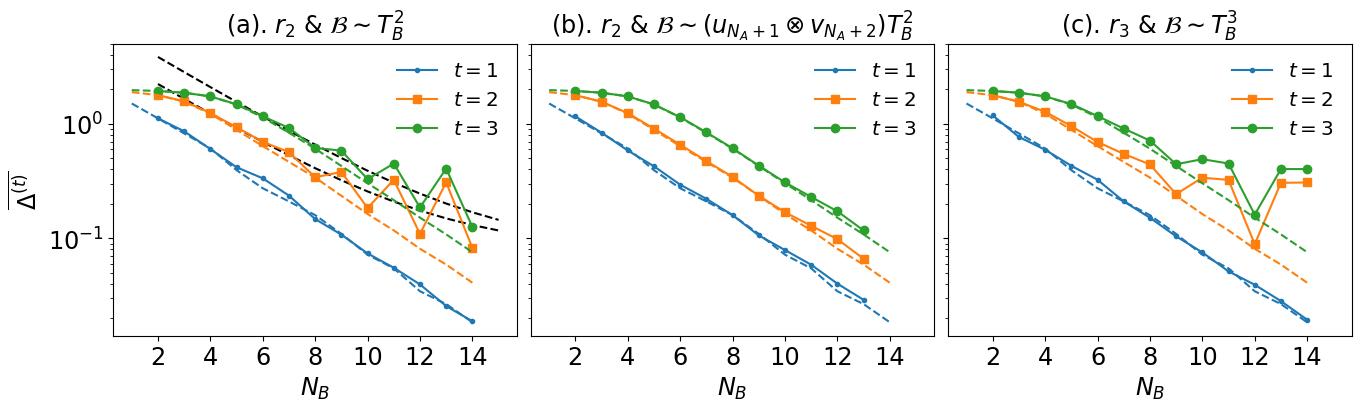}
\caption{\label{r2r3} Illustration of $\overline{\Delta^{(t)}}$ versus $N_B$ for the first three moments when the measurements are performed in bases largely violating the sufficient condition. In panels (a) and (b), the measurement bases are the eigenbases of $T^2_{B}$ and $(u_{N_A+1}\otimes v_{N_A+1})T^2_{B}$, where $T_B$ denotes the local translation operator over the subsystem-$B$. We observe that in the former case, the trace distances for higher order moments initially decay and acquire oscillatory behavior around exponential curves decaying to finite non-zero values for larger values of $N_B$. Due to the applications of local Haar random unitaries, the behavior in the latter case appears to decay for smaller $N_B$ values. However, for larger values of $N_B$, we anticipate that the trace distance approaches a finite non-zero value. Panel (c) corresponds to the measurements in the eigenbasis of $T^3_{B}. $  
} 
\end{figure*}

\section{Generalization to other symmetries}
\label{genn}
In the preceding sections, we examined the emergence of state designs from the translation symmetric generator states. In this section, we extend these findings to other symmetries, specifically considering $Z_{2}$ and reflection symmetries as representative examples.

\subsection{$Z_2$-symmetry}
\label{z2s}
The group associated with $Z_2$-symmetry consists of two elements $\{\mathbb{I}_{2^N}, \Sigma\}$, where $\Sigma=\otimes_{i=1}^{N}\sigma^{x}_{i}$. If a system is $Z_2$-symmetric, its Hamiltonian will be invariant under the action of $\Sigma$, i.e., $\Sigma  H\Sigma=H$. On the other hand, a quantum state $|\psi\rangle$ is considered $Z_2$-symmetric if it is an eigenstate of the operator $\Sigma$ with an eigenvalue $\pm 1$. Here, similar to the case of translation symmetry, we first construct an ensemble of $Z_2$-symmetric states by projecting the Haar random states onto a $Z_2$-symmetric subspace. We then follow the analysis of state designs using the projected ensemble framework.  

Given a Haar random state $|\psi\rangle\in\mathcal{E}_{\text{Haar}}$ that has support over $N$-sites, then
\begin{eqnarray}
|\psi\rangle\rightarrow |\phi\rangle=\dfrac{1}{\mathcal{N}}\mathbf{Z}_{k}|\psi\rangle
\end{eqnarray}
is a $Z_2$-symmetric state with an eigenvalue $(-1)^k$ with $k\in\{0, 1\}$, 
where $\mathbf{Z}_k=\mathbb{I}+(-1)^k\Sigma\quad\text{and }\mathcal{N}=\sqrt{\langle\psi|\mathbf{Z}^{\dagger}_{k}\mathbf{Z}_{k}|\psi\rangle}=\sqrt{2\langle\psi|\mathbf{Z}_{k}|\psi\rangle}$.
Introducing this symmetry reduces the randomness of the Haar random state by a factor of $2$. Specifically, the number of independent complex parameters needed to describe the state scales like $\sim O(2^{N-1})$. Using similar techniques employed for the T-invariant states, the moments of $Z_2$-symmetric ensembles can be evaluated as
\begin{eqnarray}
 \mathbb{E}_{\phi\in\mathcal{E}^{k}_{Z_{2}}}\left[\left[ |\phi\rangle\langle\phi| \right]^{\otimes t}\right] = \dfrac{\mathbf{Z}^{\otimes t}_{k}\mathbf{\Pi}_{t}}{\text{Tr}\left(\mathbf{Z}^{\otimes t}_{k}\mathbf{\Pi}_{t}\right)}.
\end{eqnarray}

The on-site nature of the $\mathbb{Z}_{2}$-symmetry allows us to write closed-form expressions for the moments of the projected ensembles irrespective of the choice of the measurement basis. Through detailed analytical derivation [see Appendix \ref{app-moments}], we show that the $t$-th order moment of the projected ensembles averaged over initial generator states takes the following form:
\begin{eqnarray}\label{z2moment}
\mathcal{M}^{t}_{\mathbb{Z}_{2}}= \dfrac{1}{\mathcal{N}}\sum_{|b\rangle\in\mathcal{B}}\langle b|\mathbf{Z}_{k}|b\rangle^{\otimes t}\mathbf{\Pi}^{A}_{t},    
\end{eqnarray}
where $\mathcal{N}$ denotes the normalization constant and is given by $\text{Tr}\left( \sum_{|b\rangle\in\mathcal{B}}\langle b|\mathbf{Z}_{k}|b\rangle^{\otimes t}\mathbf{\Pi}^{A}_{t} \right)$. Whenever $\langle b|\mathbf{Z}_{2}|b\rangle =\mathbb{I}_{2^{N_A}}$ for all $|b\rangle\in \mathcal{B}$, as required by the sufficient condition, right-hand side of the Eq. (\ref{z2moment}) equates to the $t$-th order Haar moment. In the following, we examine the projected ensembles for a few different choices of measurement bases.

To analyze the projected ensembles, we first fix the measurements on $N_B$ sites in the computational basis, i.e., $\{|b\rangle\langle b|\}$ for all $|b\rangle\in\{0, 1\}^{N_B}$. To get approximate state designs, it is sufficient to have $\langle b|\mathbf{Z}_{k}|b\rangle\approx\mathbb{I}_{2^{N_A}}$ for a sufficiently large number of $b\in\{0, 1\}^{N_B}$. In the case of $Z_2$-symmetry, this is indeed satisfied as we have $\langle b|\mathbf{Z}_{k}|b\rangle=\mathbb{I}_{2^{N_A}}+(-1)^{k}\langle b|\Sigma|b\rangle$, where the second term can be simplified as 
\begin{eqnarray}
\langle b|\Sigma|b\rangle =\left(\otimes_{j=1}^{N_A}\sigma^{x}_{j}\right)\left(\langle b_1|\sigma^x|b_1\rangle ...\langle b_{N_B}|\sigma^x|b_{N_B}\rangle\right).     
\end{eqnarray}
The terms within the brackets are the diagonal elements of $\sigma^x$ operator, which are zeros in the computational basis, implying that $\langle b|\Sigma|b\rangle=0$. Therefore, in this case, the sufficient condition is exactly satisfied. Consequently, the projected ensembles converge to the state designs for large $N_B$. The numerical results for the average trace distance are shown in Fig. \ref{fig:ZZ2}. We notice that the results nearly coincide with the case of Haar random generator states [see Fig. \ref{fig:ZZ2}a and \ref{fig:ZZ2}b]. 

\begin{figure}
\begin{center}
\includegraphics[scale=0.55]{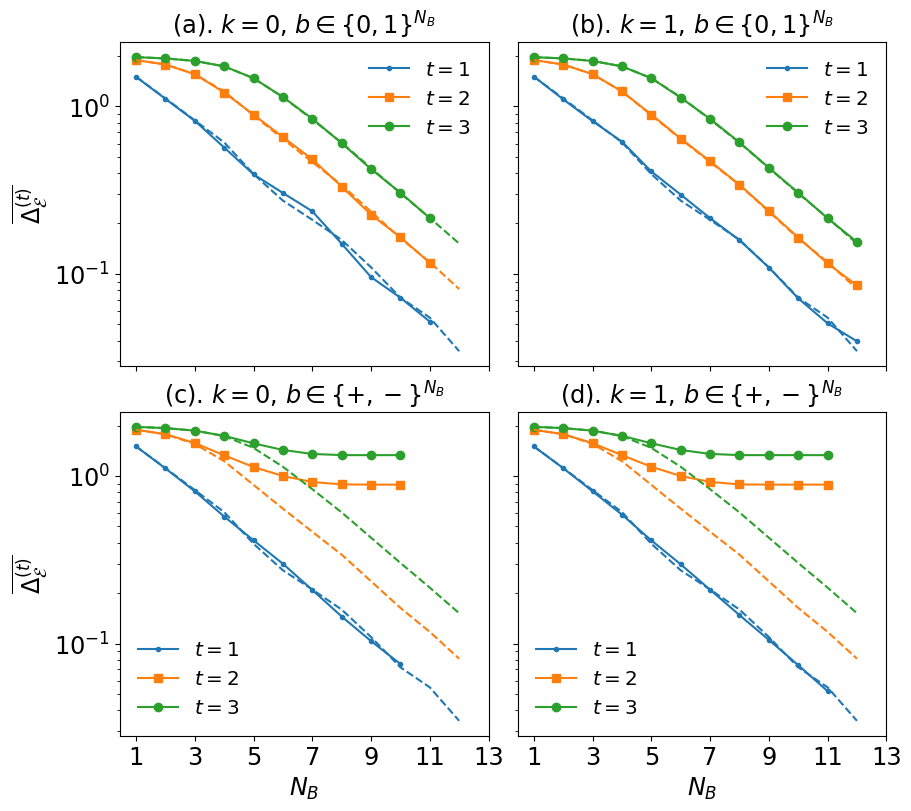}
\end{center}
\caption{\label{fig:ZZ2}  Average trace distance ($\overline{\Delta^{(t)}}$) between the moments of the projected ensembles and the moments of the Haar random states supported over $N_A$ sites, plotted against $N_B$. The initial states are chosen uniformly at random from the ensemble of $Z_2$-symmetric quantum states. The average is computed over ten samples of the initial generator states. In (a) and (b), the measurements are performed in the computational ($\sigma^{z}$) basis --- $\{|b\rangle\langle b|\}$ for all $b\in\{0, 1\}^{N_B}$. While the states chosen in (a) have the eigenvalue $1$, the other panel is plotted for the states with eigenvalue $-1$. We repeat the same calculation in the panels (c) and (d) with the measurement basis given by $\{|b\rangle\langle b|\}$ for all $b\in\{+, -\}^{N_B}$, where $|+\rangle$ and $|-\rangle$ represent the eigenstates of $\sigma^x$ and are connected to $|0\rangle$ and $|1\rangle$ through the Hadamard transform. 
} 
\end{figure}

\begin{figure}
\begin{center}
\includegraphics[scale=0.55]{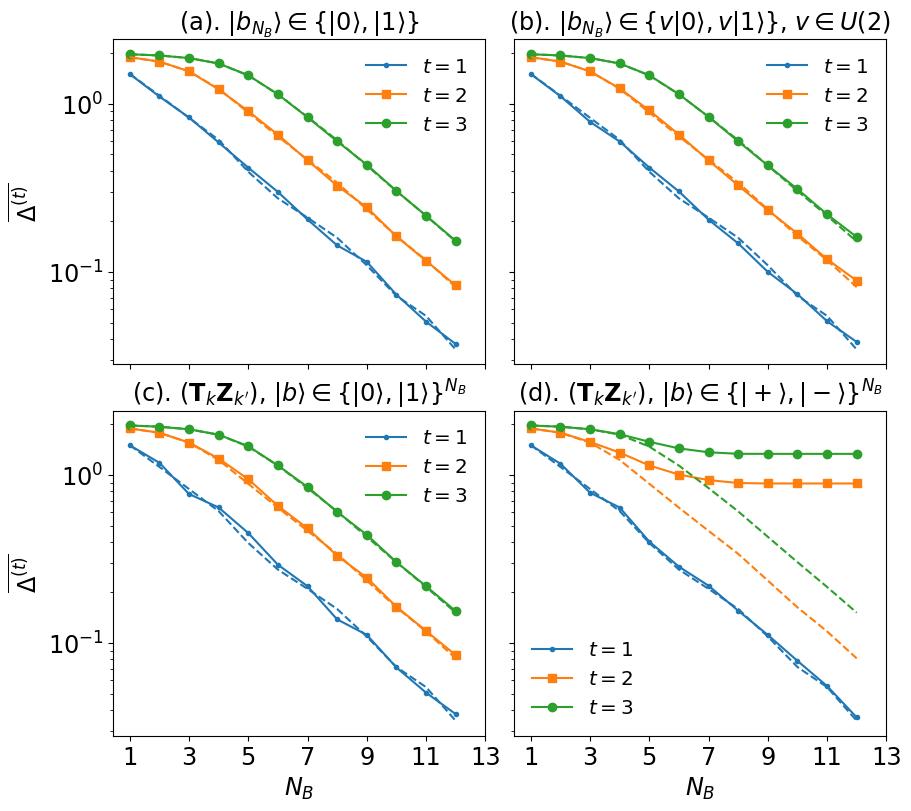}
\end{center}
\caption{\label{fig:ZZ2plustrans} The figure illustrates $\overline{\Delta^{(t)}}$ versus $N_B$. In (a), the measurements are performed in the eigenbasis of the operator $\sigma^{x}\otimes\cdots\otimes\sigma^x\otimes \sigma^z$, where the tensor product of $\sigma^x$ operators have support over $N_B-1$ sites and $\sigma^z$ is supported over $N_B$-th site. In (b), we replace the eigenbasis of $\sigma^z$ on $N_B$-th site with an eigenbasis of single site Haar random unitary. Here, we fix the charge $k=0$. In (c) and (d), we take the random generator states that are simultaneous eigenstates of both $T$ and $\Sigma$. While the measurement basis in $c$ is the computational basis, the eigenbasis of $\sigma^x$ is considered for measurements in (d). Here, the charges are fixed at $k=k'=0$. } 
\end{figure}

However, if the measurements are performed in $\sigma^x$ basis, given by $\{|b\rangle\langle b|\}$ for all $b\in\{+, -\}^{N_B}$, where $|+\rangle=(|0\rangle + |1\rangle)/\sqrt{2}$ and $|-\rangle=(|0\rangle - |1\rangle)/\sqrt{2}$, then $\langle \pm|\sigma^x|\pm\rangle=\pm 1$. As a result, the projected ensembles deviate significantly from the quantum state designs. The violation from the sufficient condition in this case, as quantified by $\mathbf{\Delta}(\mathbf{Z}_{k}, \mathcal{B})/2^{N_B}$, remains a constant for any $N_B$: 
\begin{align}
\dfrac{\mathbf{\Delta}(\mathbf{Z}_{k}, \mathcal{B})}{2^{N_B}} &=\dfrac{1}{2^{N_B}}\sum_{b\in\{+, -\}^{N_B}}\left\| \langle b|\mathbf{Z}_{k}|b\rangle -\mathbb{I}_{2^{N_A}} \right\|_{1}\nonumber\\
&=\dfrac{1}{2^{N_B}}\sum_{b\in\{+, -\}^{N_B}}\left\| (-1)^{k+\sum_{i=1}^{N_B}\text{sgn}(b_i)}\otimes_{j=1}^{N_A} \sigma^{x}_{j} \right\|\nonumber\\
&=2^{N_A}.
\end{align}
We demonstrate the numerical results of the average distance in Figs. \ref{fig:ZZ2}c and \ref{fig:ZZ2}d. In contrast to the computational basis measurements, here, only the first moment coincides with the case of Haar random generator states, while the higher moments appear to saturate to a finite value of $\overline{\Delta^{(t)}}$. {Interestingly, in this case, the average $t$-th order moment of the projected ensembles as obtained in Eq. (\ref{z2moment}) admits the following simple form [see Appendix \ref{app-moments} for details]:
\begin{eqnarray}
\mathcal{{M}}^{t}_{\mathbb{Z}_{2}}=\dfrac{1}{\mathcal{N}}\left( \mathbf{Z}^{\otimes t}_{0, N_A} + \mathbf{Z}^{\otimes t}_{1, N_A}\right) \mathbf{\Pi}^{t}_{A}    
\end{eqnarray}
with an appropriate normalizing constant $\mathcal{N}$. Here, $\mathbf{Z}_{0(1), N_A}$ denotes the $\mathbb{Z}_{2}$-symmetric subspace projector with corresponding charge when the system size is $N_A$. 
}


We now consider a case where we perform the local measurements on $N_B$-th site in $\sigma^z$ basis while keeping $\sigma^x$ measurement basis for the remaining $N_B-1$ sites. We represent the resultant basis with $b'\in\{+, 1\}^{N-1}\times \{0, 1\}$. Then,
\begin{eqnarray}
\langle b'|\mathbf{Z}_k|b'\rangle=\mathbb{I}_{2^{N_A}}+(-1)^{k}\langle b'|\Sigma|b'\rangle,     
\end{eqnarray}
where the second term vanishes as $\langle b'_{N_B}|\sigma^{x}_{N_B}|b'_{N_B}\rangle =0$. Therefore, we have $\langle b'|\Sigma|b'\rangle =0$ for all $|b'\rangle$, implying the required condition for the convergence of projected ensembles to the quantum state designs.
The corresponding results for the average trace distances are shown in Fig. \ref{fig:ZZ2plustrans}a. In Fig. \ref{fig:ZZ2plustrans}b, we replace the $\sigma^x$ basis on $N_B$-th site with a local Haar random basis. We find no significant differences between \ref{fig:ZZ2plustrans}a and \ref{fig:ZZ2plustrans}b within the range of $N_B$ considered for the numerical simulations.  
This demonstrates that a mild modification of the measurement basis can retrieve the state design if the sufficient condition is satisfied. So, the sufficient condition allows us to infer suitable measurement bases for obtaining higher-order state designs, which might not be apparent otherwise. Moreover, by gradually switching the local measurement basis (over $N_B$-th site) from $\sigma^x$ to $\sigma^z$, we can observe a transition in the randomness of the projected ensembles as characterized by $\overline{\Delta^{(t)}}$. In particular, we can choose the eigenbasis of $\alpha \sigma^z+(1-\alpha)\sigma^x$ for the measurements on $N_B$-th site. As the parameter varies, we observe a transition in $\overline{\Delta^{(t)}}$ from a finite constant value toward a system size-dependent value. We show the corresponding results in Fig. \ref{fig:z2trans}. From the figure, we observe that near $\alpha=0$, the trace distance $\overline{\Delta^{(t)}}$ remains system size independent. In contrast, as $\alpha$ approaches $1$, the trace distance becomes sensitive to the system size $N$. The violation of the sufficient condition, in this case, can be quantified as follows:
\begin{figure*}
\includegraphics[scale=0.43]{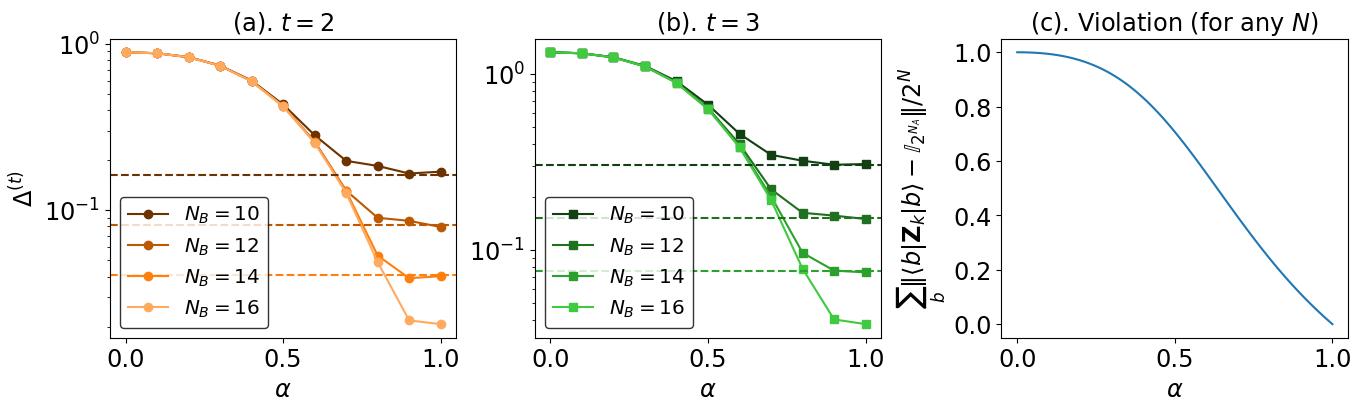}
\caption{\label{fig:z2trans} The figure illustrates the transition in the randomness of the projected ensemble when the initial generator states are generic states with $Z_2$ symmetry. Local $\sigma^x$ basis measurements are fixed for $N_B-1$ sites. The measurements on $N_B$-th site are performed in the eigenbasis of $\alpha\sigma^z+(1-\alpha)\sigma^x$. The trace distance between the moments of the Haar ensemble and the projected ensemble, $\overline{\Delta^{t}}$, is plotted against the parameter $\alpha$ for $t=2$ and $3$ for different system sizes. The dashed lines correspond to $\overline{\Delta^{t}}$ of that of Fig. \ref{fig:ZZ2plustrans}a. Note that the case of $t=1$ is trivial and stays nearly a constant for any $\alpha$, as it is independent of the measurement basis considered. } 
\end{figure*}
\begin{align}\label{G1}
\dfrac{\mathbf{\Delta}(\mathbf{Z}_{k}, \mathcal{B})}{2^{N_B}} &=\dfrac{1}{2^{N_B}}\sum_{b\in\mathcal{B}}\left\| \langle b|\mathbf{Z}_{k}|b\rangle -\mathbb{I}_{2^{N_A}} \right\|_{1}\nonumber\\
&=\dfrac{1}{2^{N_B}}\sum_{b_{1}\cdots b_{N_B-1}\in\{+, -\}^{N_B-1}}\sum_{b_{N_B}}\left\| (-1)^{k+\sum_{i=1}^{N_B-1}\text{sgn}(b_i)}\langle b_{N_B}| \sigma^x |b_{N_B}\rangle\otimes_{j=1}^{N_A} \sigma^{x}_{j} \right\|\nonumber\\
&=\dfrac{1}{2^{N_B}}\left\|\otimes_{j=1}^{N_A} \sigma^{x}_{j} \right\|\sum_{b_{1}\cdots b_{N_B-1}\in\{+, -\}^{N_B-1}}\sum_{b_{N_B}} |\langle b_{N_B}| \sigma^x |b_{N_B}\rangle|\nonumber\\
&=2^{N_A-1}\sum_{b_{N_B}} \left|\langle b_{N_B}| \sigma^x |b_{N_B}\rangle\right|,
\end{align}
where $\{|b_{N_B}\rangle\}$ denotes the eigenbasis of the operator $\alpha\sigma^z+(1-\alpha)\sigma^x$. From Eq. (\ref{G1}), we notice that the violation remains independent of the system size ($N$) and depends only on the parameter $\alpha$. Near $\alpha=0$, the violation stays nearly as constant ($\approx 1$) as depicted in Fig. \ref{fig:z2trans}c. Since $\dfrac{\mathbf{\Delta}(\mathbf{Z}_{k}, \mathcal{B})}{2^{N_B}}$ remains independent of $N$, the projected ensembles do not converge to the designs even in the limit of large $N$ when $\alpha$ is close to $0$. Hence, $\Delta^{(t)}$ remains nearly constant for all $N$. On the contrary, as $\alpha$ approaches $1$, the violation decays to zero, implying the convergence of the projected ensembles to the designs in the large $N$ limit. This may be understood as the transition of the projected ensemble from a localized distribution to a uniform distribution over the Hilbert space.


For completeness, we also demonstrate the emergence of state designs from the generator states that respect both translation and $Z_2$-symmetry. Since $T$ and $\Sigma$ commute, we can easily construct the states that respect both the symmetries by applying corresponding projectors consecutively on an initial state. Let $|\psi\rangle$ denote a Haar random state, then $|\phi\rangle =\mathbf{T}_{k_1}\mathbf{Z}_{k_2}|\psi\rangle /\sqrt{\mathcal{N}}$, where $\mathcal{N}=\sqrt{2N\left(\mathbf{T}_{k_1}\mathbf{Z}_{k_2}\right)}$, is a random vector that is simultaneously an eigenvector of both $T$ and $\Sigma$ with respective charges $k_1$ and $k_2$. Then, the condition to get quantum state designs from the projected ensembles would be $\langle b|\mathbf{T}_{k_1}\mathbf{Z}_{k_2}|b\rangle =0$ for all $|b\rangle\in\mathcal{B}$. The projective measurements in the standard computational basis mildly violate this condition, which results in the convergence towards state designs. 
The numerical results are shown in \ref{fig:ZZ2plustrans}c and \ref{fig:ZZ2plustrans}d. 
While the former is plotted by taking the computational basis measurements, the latter represents the results for the local measurements in $\sigma^x$ basis, i.e., $b\in\{+, -\}^{N_B}$.

\subsection{Reflection symmetry}
\begin{figure}
\begin{center}
\includegraphics[scale=0.45]{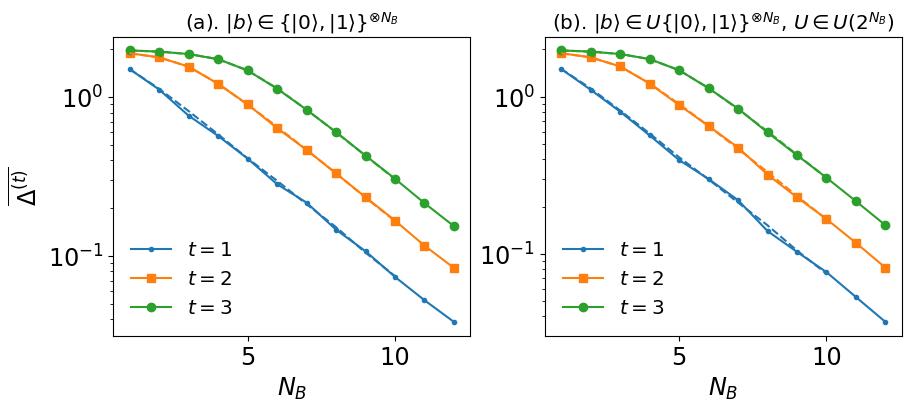}
\end{center}
\caption{\label{fig:ref} Illustration of $\overline{\Delta^{(t)}}$ vs $N_B$ for the random generator states with the reflection symmetry for the first three moments. We fix the charge $k=0$. In (a), the computational basis measurements are considered. In (b), the measurements are performed in a random product basis. We find no noticeable differences between these two. Moreover, the decay nearly coincides with the case when the generator states are Haar random. 
} 
\end{figure}

Here, we employ the projected ensemble framework for the generator states having reflection or mirror symmetry. In a system exhibiting reflection symmetry, the Hamiltonian remains invariant under swapping of mirrored sites around the center.
Let $R$ denote the reflection operation. Then $R$ generates a cyclic group of two elements, namely, the identity $\mathbb{I}$ and $R$ itself. In an $N$-qubit system, $R$ is defined as 
\begin{eqnarray}
R=
\begin{cases}
    S_{1, N}S_{2, N-1}....S_{N/2, N/2+1}      & \quad \textrm{if } N\textrm{ is even} \\
    S_{1, N}S_{2, N-1}....S_{(N-1)/2, (N+3)/2} & \quad \textrm{if } N\textrm{ is odd.}
\end{cases}
\end{eqnarray}
The reflection operator has $\{-1, 1\}$ as its eigenvalues. Hence, the total Hilbert space admits a decomposition into two invariant sectors. Then, the Hermitian operators $\mathbf{R}_{\pm}=\mathbb{I}\pm R$ project arbitrary states onto the respective subspaces. 
Engineering state designs from these generator states would require $\langle b|\mathbf{R}_{\pm}|b\rangle =\mathbb{I}_{2^{N_A}}$ for all $|b\rangle\in\mathcal{B}$. 

We consider a product basis $\mathcal{B}\equiv\{u|0\rangle, u|1\rangle\}^{\otimes N_B}$ for the measurements, where $u$ is an arbitrary unitary operator. 
For some $|b\rangle\in\mathcal{B}$, we have $\langle b|\mathbf{R}_{\pm}|b\rangle =\mathbb{I}_{2^{N_A}}\pm \langle b|R|b\rangle$. Since $\mathcal{B}$ is assumed to be a local product basis, we can write $|b\rangle =|b_{N_A+1}b_{N_A+2}...b_{N}\rangle$. To be explicit in calculating $\langle b|R|b\rangle$, let us consider $N_A=3$ and  $N_B=N-3\geq N_A$. Then,  
\begin{align}
 \langle b|R|b\rangle=\underbrace{|b_{N}b_{N-1}b_{N-2}\rangle\langle b_{N}b_{N-1}b_{N-2}|}_{\text{Supported on }A}  \underbrace{\delta_{b_4, b_{N-3}}  \delta_{b_5, b_{N-4}}...\delta_{b_{(N-1)/2}, b_{(N+3)/2}}}_{\text{palindrome condition}}.
\end{align}
Thus, $\langle b|R|b\rangle$ remains a non-zero operator only when the palindrome condition on the first $N_B-N_A$ bits of the string-$b$ is satisfied. If $N_B-N_A$ is even (odd), then we have a total of $n_{\text{even}}=2^{(N_B-N_A)/2}$ ($n_{\text{odd}}=2^{(N_B-N_A+1)/2}$) distinct palindromes. Then, the total number of violations of the condition for the given measurement basis will be $n_{\text{even}(\text{odd})}2^{N_A}$. For even-$N$, this number is exactly $2^{N/2}$. 
Moreover, the violation of the sufficient condition in the considered basis is
\begin{eqnarray}
\dfrac{\mathbf{\Delta}(\mathbf{R}_{k}, \mathcal{B})}{2^{N_B}}&=&\dfrac{1}{2^{N_B}}\sum_{|b\rangle\in\mathcal{B}}\| \langle b|R|b\rangle \|_{1}\nonumber\\
&=&
\begin{cases}
    2^{-N/2},& \text{if } N \text{ is even}\\
    2^{-(N-1)/2},              & \text{otherwise.}
\end{cases}
\end{eqnarray}
Since $\mathbf{\Delta}(\mathbf{R}_{k}, \mathcal{B})/2^{N_B}$ is exponentially suppressed as $N$ increases, the moments of the projected ensembles converge to the Haar moments. 
We plot the average $\Delta^{(t)}$ versus $N_B$ in Fig. \ref{fig:ref}. To illustrate, we consider the computational basis and random entangling basis for the measurement in  \ref{fig:ref}a and \ref{fig:ref}b, respectively. The later basis states can be obtained by the application of a fixed Haar random unitary supported over $B$ on the computational basis vectors. The results in both panels coincide with the case of Haar random generator states, implying the generation of higher-order state designs.

\subsection{Brief comment on the continuous symmetric cases}
So far, we have focused on the chaotic generator quantum states with discrete symmetry group structures and examined the emergence of state designs with respect to various measurement bases. Constructing projectors onto the subspaces that conserve the charge of the symmetry operators lies at the heart of our formalism. We highlight that our formalism equally applies to the cases involving continuous symmetries, provided one can construct projectors onto the charge-conserving subspaces. For instance, the total magnetization conservation $Q=\sum_{j}\sigma^z_{j}$ (or equivalently $U(1)$ symmetry) generates continuous symmetry group. In this case, the projector onto the charge conserving sector with the total charge $s$ can be written in the computational basis as 
\begin{eqnarray}
\mathbf{Q}_{s}= \sum_{f\in\{0, 1\}^{N}/|f|=s}|f\rangle\langle f|,\quad \text{where }|f|=\sum_{j=0}^{N}(-1)^{f_j+1}
\end{eqnarray}
Therefore, the sufficient condition for the emergence of state designs from the random eigenstates of $Q$ with the charge $s$ is $\langle b| \mathbf{Q}_{s} |b\rangle =\mathbb{I}_{2^{N_A}}$ for all $|b\rangle\in\mathcal{B}$. In this case, the computational basis measurements are unsuitable for extracting state designs from the projected ensembles. One can then extract the state designs from the projected ensembles by carefully choosing the measurement basis.

\section{Deep thermalization in a chaotic Hamiltonian}
\label{sising}

\begin{figure*}
\includegraphics[scale=0.44]{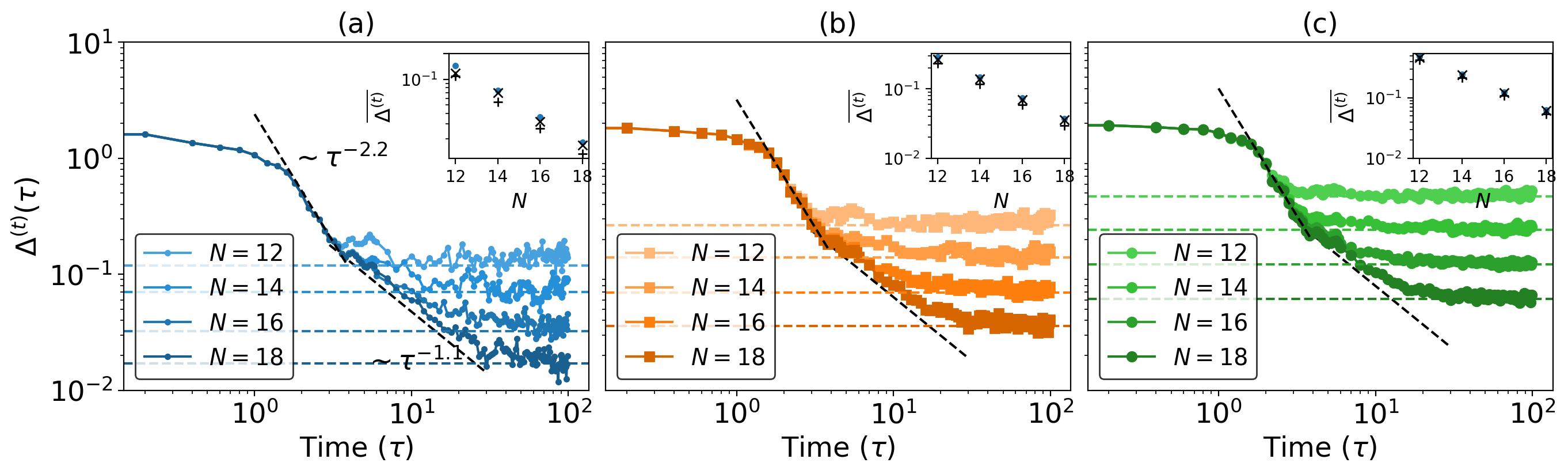}
\caption{\label{fig:ising-phy} Deep thermalization or dynamical generation of state designs (as characterized by $\Delta^{(t)}(\tau)$) of a quantum state $|\psi(\tau)\rangle$ evolved under the dynamics of a chaotic Ising Hamiltonian with periodic boundary conditions. Here, the evolution time is denoted with $\tau$, and the initial state is taken to be $|0\rangle^{\otimes N}$. The results are sequentially shown for the first three moments ($t=1, 2$, and $3$) in the panels along the row. The size of the projected ensembles $N_A$ is fixed at $3$. For each moment, the numerics are carried out for different system sizes varying from $N=12$ to $N=18$. Note the color scheme: 
darker to lighter shading of the colors represents larger to smaller system sizes. 
The dashed horizontal lines in all the panels represent the value attained on average by a typical random state, which is both translation symmetric (with momentum charge $k=0$) and a common eigenstate of the reflection operators about every site, with eigenvalue $1$ [see Appendix \ref{benchmark_latetim}]. From the numerical results, we observe a two-step relaxation of $\Delta^{(t)}(\tau)$ towards the saturation. (insets) {Comparison of the long-time averages of $\Delta^{t}(\tau)$ with the average trace distance when the generator states are random simultaneous eigenvectors of the translation and reflection operators (shown with $X $-markers) and random translation symmetric states (as shown with $+$-markers)}. The insets reveal that the long-time averages of $\Delta^{(t)}(\tau)$ coincide well with the former case while the latter case shows slight deviations. See the main text for more details. } 
\end{figure*}

In the preceding sections, our analysis focused primarily on obtaining state designs from Haar random states with symmetry. Here, we examine the dynamical generation of the state designs in a tilted field Ising chain with periodic boundary conditions (PBCs). The corresponding Hamiltonian is given by 
\begin{eqnarray}\label{ising}
H=\sum_{i=1}^{N}\sigma^{x}_{i}\sigma^{x}_{i+1}+h_x\sum_{i=1}^{N}\sigma^{x}_{i}+h_y\sum_{i=1}^{N}\sigma^{y}_{i}, 
\end{eqnarray}
where the PBCs correspond to $\sigma^{x, y, z}_{N+i}=\sigma^{x, y, z}_{i}$. {The periodicity, along with the homogeneity of the interactions and the magnetic fields, makes the system translation invariant, i.e., $[H, T]=0$.} In addition, the Hamiltonian is invariant under reflections about all the sites. For the parameters $h_x=(\sqrt{5}+1)/4$ and $h_y=(\sqrt{5}+5)/8$, the system is chaotic, and the ETH has been thoroughly verified in Ref. \cite{kim2014testing}. Furthermore, deep thermalization has also been investigated in this model with open boundary in Ref. \cite{cotler2023emergent} and \cite{bhore2023deep}, which does not respect the translation symmetry. Here, we explore this aspect for the Hamiltonian in Eq. (\ref{ising}) and contrast the results with those of the open boundary condition (OBC). To proceed, we consider a trivial product state $|\psi\rangle=|0\rangle^{\otimes N}$ as the initial state and evolve it under the  Hamiltonian. It is to be noted that the initial state is a common eigenstate of all the reflection operators (with eigenvalue $1$) and the translation operator (with eigenvalue $1$). Since the Hamiltonian commutes with $T$, the final state $|\psi(\tau)\rangle=e^{-i\tau H}|0\rangle^{\otimes N}$ will remain a common eigenstate of the translation and reflection operators with corresponding eigenvalues, where $\tau$ denotes the time of evolution. As this state evolves, we construct and examine its projected ensembles at various times. The computational basis is considered for the projective measurements on the subsystem-$B$. The corresponding numerical results are shown in Fig. \ref{fig:ising-phy}. 

Figures \ref{fig:ising-phy}a-\ref{fig:ising-phy}c demonstrate the decay of $\Delta^{(t)}$ as a function of evolution time ($\tau$) for the first three moments,  $t=1$, $2$, and $3$, respectively. We show this evolution for different system sizes $N$ by considering $N_A=3$ fixed and changing $N_B$.
The evolution of $\Delta^{(t)}(\tau)$ suggests a two-step relaxation towards the saturation. Initially, over a short period, the trace distance $\Delta^{(t)}(\tau)$ scales like $\sim \tau^{-2.2}$ for all three moments. For the largest considered system size, $N=18$, this power law behavior spans across the region $1\lesssim \tau\lesssim 4$. Moreover, this time scale appears to grow with $N_B$, which can be read off from the plots. Note that in Ref. \cite{cotler2023emergent}, the initial decay has been observed to be $\sim \tau^{-1.2}$ for the model with OBC [see also Appendix \ref{symbrok_comparision}]. In contrast, the present case exhibits a nearly doubled exponent of the power law behavior. Interestingly, similar behavior characterized by exponential decay has been observed in Ref. \cite{shrotriya2023nonlocality}, where the authors focused on dual unitary circuits and contrasted the case of PBCs with OBCs {while keeping the fields and interactions homogeneous}. The doubled exponent noticed in the current case can be attributed to the periodicity as well as the homogeneity of interactions and fields in the Hamiltonian and, in turn, the translation symmetry. To elucidate it further, in Appendix \ref{symbrok_comparision}, we contrast these dynamics with the symmetry-broken cases obtained through the modification of boundary conditions and the introduction of disorders. In the present case, the entanglement across the bi-partition $AB$ grows at a rate twice the rate in the case of OBC \cite{mishra2015protocol, pal2018entangling}. To elucidate the role of entanglement growth at initial times, we examine the trace distance for the simplest case $t=1$, i.e., $\Delta^{(1)}(\tau)$. Since the first moment of the projected ensemble is simply the reduced density matrix of $A$, the trace distance can be written as 
\begin{eqnarray}
\Delta^{(1)}(\tau)=\left\|\rho_{A}(\tau)-\dfrac{\mathbb{I}}{2^{N_A}}\right\|_{1}
=\sum_{j=0}^{2^{N_A}-1}\left| \gamma^{2}_{j}(\tau)-\dfrac{1}{2^{N_A}} \right|,    
\end{eqnarray}
where $\rho_{A}(t)=\text{Tr}_{B}(|\psi(t)\rangle\langle\psi(t)|)$ and $\{\gamma_{j}\}$ denote the Schmidt coefficients of $|\psi(t)\rangle$ across the bipartition. Above expression relates $\Delta^{(1)}(\tau)$ to the fluctuations of the Schmidt coefficients around $1/2^{N_A}$, corresponding to the maximally mixed value. At the time $\tau=0$, as the considered initial state is a product state, the only non-zero Schmidt coefficient is $\gamma_0(\tau=0)=1$. Hence, it is fair to say that $\Delta^{(1)}(0)$ is largely dominated by the decay of $\gamma_{0}$ during the early times. This regime usually witnesses inter-subsystem scrambling of the initial state mediated by the entanglement growth.  
Moreover, the initial decay appears across all three panels with the same power law scaling, indicating a similar early-time dependence of $\Delta^{t}(\tau)$ over $\gamma_{0}(\tau)$.

Beyond the initial power-law regime, we observe a crossover to an intermediate-time power-law decay regime with a smaller exponent, which is followed by saturation at a large time. The power-law exponent in the intermediate decay regime depends on the system size $N$, yielding a non-universal characteristic. For $N=18$, we obtain $\Delta^{(t)}\sim t^{-1.2}$, which is to be contrasted with the early-time decay exponent. At the onset of this scaling, the largest Schmidt coefficient $\gamma_0$ becomes comparable with the other $\gamma_j$s. The corresponding timescale is referred to as collision time \cite{vznidarivc2012subsystem}. At the collision time, $\gamma_0$ comes close to $\gamma_1$, the second largest Schmidt-coefficient. Hence, $\gamma_0$ does not solely determine the decay of $\Delta^{(1)}(\tau)$. The two-step relaxation of quantum systems has been recently studied in systems with two or more symmetries and also in quantum circuit models \cite{bensa2022two, vznidarivc2023two}. Finally, we benchmark the late time saturation values for each $N$ using the random matrix theory (RMT) predictions for the appropriate ensembles [see Appendix \ref{benchmark_latetim}]. We do this by plotting horizontal (dashed) lines corresponding to the RMT values. We notice that the saturation matches well with the corresponding RMT predictions. The same is also illustrated in the insets of Fig. \ref{fig:ising-phy}, where the long-time averages of $\Delta^{t}(\tau)$ for different system sizes are denoted with dots. Whereas the RMT values are shown with the marker-$X$. It is evident from the insets that the saturation values and the RMT values nearly coincide. On the other hand, it is interesting to note for the random translation symmetric states with no other symmetries present, the average trace distance $\overline{\Delta^{(t)}}$ deviates slightly from the former case. However, in the case of higher-order moments, these differences appear to become smaller. 
We intuitively expect the two-step relaxation observed in the present case to arise from its two competing features: initial faster decay and late time saturation above random matrix prediction.

\section{Summary and Discussion}\label{discussion}

In summary, we have investigated the role of symmetries on the choice of measurement basis for quantum state designs within the projected ensemble framework. 
By employing the tools from Lie groups and measure theory, we have evaluated the higher-order moments of the symmetry-restricted ensembles. Using these, we have derived a sufficient condition on the measurement basis for the emergence of higher-order state designs. The condition reads as follows: Given an arbitrary measurement basis $\mathcal{B}\equiv \{|b\rangle\}$ over a subsystem-$B$, for a typical $Q$-symmetric state $|\psi_{AB}\rangle\in\mathcal{E}^{k}_{\text{Q}}$ with a charge $k$, $\langle b|\mathbf{Q}_{k}|b\rangle=\mathbb{I}_{2^{N_A}}$ for all $|b\rangle\in\mathcal{B}$ implies that the projected ensembles approximate higher-order state designs. Moreover, the approximation improves exponentially with $N_B$, the bath size.
While the condition is sufficient for the emergence of state designs, the necessity of it remains an open question.
We demonstrate its versatility by considering measurement bases violating the condition mildly. Our analysis further suggests that a significant violation of the condition likely prevents the convergence of projected ensembles to the designs even in the limit of large $N_B$. To elucidate it, we have quantified the extent to which a basis violates the sufficient condition using the quantity $\sum_{|b\rangle\in\mathcal{B}}\|\langle b|\mathbf{Q}_{k}|b\rangle -\mathbb{I}_{2^{N_A}}\|_{1}/2^{N_B}$. This quantity allows us to identify the bases that violate the condition significantly. We have shown that the measurements in these bases result in a finite value for the trace distance $\Delta^{(t)}$ even when $N_B$ is large. Surprisingly, these include bases that do not adhere to the symmetry in the generator states.

To begin with, we have chosen random T-invariant states as the generator states. In constructing these states, we projected the Haar random states onto the momentum-conserving subspaces to reconcile both randomness and symmetry. This allows one to construct distinct ensembles of T-invariant states, each with a different momentum. Thanks to the inherent Haar measure in these ensembles, the states in them are uniformly distributed. Given a suitable measurement basis, Levy's lemma then ensures that the projected ensemble of a typical T-invariant state well approximates a state design. Equipped with this argument, we have numerically verified the emergence of designs for different measurement bases. These bases include the standard computational ($\sigma^z$) basis and the eigenbasis of $T_B$. While the former nearly satisfies the sufficient condition, the latter violates it significantly. Accordingly, the trace distance $\Delta^{(t)}$ decays exponentially with $N_B$ for the computational basis. Whereas, for the eigenbasis of $T_B$, $\Delta^{(t)}$ converges to a non-zero value. To further contextualize our results in a more physical setting, we have focused on deep thermalization in a tilted field Ising chain with PBCs, respecting the translation symmetry. The results indicate that the decay of the trace distance with time occurs in two steps. The initial decay is observed to be twice the rate of the case of the same model with OBCs. In the intermediate time, the decay trend is a system-dependent power law. Whereas, in a long time, the trace distance saturates to a value slightly larger than RMT prediction, a reminiscence of other symmetries.

Due to the generality of our formalism, the results can be extended to other discrete symmetries, and are expected to hold for continuous symmetries. In particular, generalization to other cyclic groups is straightforward. To illustrate this, we have examined the projected ensembles from the generator states with $Z_2$ and reflection symmetries. A crucial implication of our results is that the sufficient condition plays a pivotal role in identifying appropriate measurement bases, even when their suitability is not immediately apparent.

If one considers two or more non-commuting symmetries, they do not share common eigenstates. In such systems, the equilibrium states have been shown to approximate non-abelian thermal states \cite{majidy2023noncommuting, yunger2016microcanonical}. These states have been experimentally realized recently in Ref. \cite{kranzl2023experimental}. Hence, an extensive study of deep thermalization and emergent state designs in these systems is a topic of our immediate future investigation. Additionally, measurement-induced phase transitions (MIPTs) occur due to an interplay between the measurements and the dynamics in many-body chaotic systems \cite{skinner2019measurement}.
Our results can offer insights into the mechanism of the MIPTs whenever the dynamics and the measurements are chosen to respect symmetries \cite{majidy2023critical}.

\chapter{Conclusion and Outlook}\label{conclusionthesis}

\section{Summary of the thesis}
This thesis focuses on investigating information scrambling in quantum chaotic systems and the role of symmetries and other constraints, such as mixed phase space dynamics, in limiting it at different time scales \cite{varikuti2022out, dileep2024}. In particular, we have examined the scrambling dynamics using OTOCs in two different systems that follow complementary routes to the chaos, namely, the KAM route \cite{varikuti2022out} and the non-KAM route \cite{dileep2024}. The KAM route implies that the system under consideration gradually transits from an integrable to a chaotic regime as the parameters that characterize the chaos in the system slowly vary. On the other hand, when perturbed by weak time-dependent fields, the non-KAM systems offer a fast route to the chaos through an abrupt breaking of invariant phase-space tori \cite{dileep2024}. Understanding the dynamics of the non-KAM systems is paramount as they offer prime examples where noisy quantum simulators fail to simulate due to the exceptionally high dynamical sensitivity of these systems to the variations in the parameters \cite{chinni2022trotter}. 

In our study on scrambling in non-KAM systems, we considered the kicked harmonic oscillator model and examined the OTOCs at different time scales for both resonance and non-resonance scenarios. We have observed that the initial growth of the OTOCs is insensitive to the resonant to non-resonant transitions. On the other hand, the differences between the resonant and non-resonant cases become evident in the long-term regime. When the system is in resonance, the lang time growth of the OTOC is quadratic, which contrasts with the non-resonant case where the OTOC growth is largely suppressed. Moreover, the OTOC, when averaged over all the coherent states, shows indefinite linear growth for strongly non-resonant cases. We have further extended this result to the case of finite-dimensional systems. 

As an application of the non-KAM systems, we have demonstrated their usefulness in quantum metrology by considering quantum Fisher information (QFI) as a figure of metric. Under the non-resonance condition, QFI exhibits quadratic growth over time, as expected for regular systems. Conversely, QFI shows at most hexic ($\sim t^6$) scaling over time under resonance conditions. The different scalings of QFI follow directly from the mean energy growth in the considered state. Specifically, if the mean energy growth follows a power law with an exponent $\alpha$, QFI will show power-law scaling with the exponent of $2\alpha+2$. In addition to numerical results, we also provided analytical arguments concerning translation-invariant and quantum-resonant cases to demonstrate that the maximum possible growth of QFI is indeed hexic. 

In our work on scrambling in kicked coupled tops, we have examined the scrambling for fully chaotic and mixed phase space dynamics. In the former case, we have primarily examined the classical-quantum correspondence of the OTOCs. In the latter case, analyzing the correspondence principle is challenging because of the coexistence of the regular and chaotic islands. For a comprehensive understanding of the mixed-phase space scrambling, we invoked Percival’s conjecture, which characterizes the eigenstates of the Hamiltonian or time-evolution operator into regular and chaotic. We showed that the OTOCs in states localized near regular islands saturate to lesser values than those localized on the chaotic sea.  

This thesis further branches out from scrambling studies and examines the role of symmetries in constraining a measurement-induced phenomenon called \textit{emergence of quantum state designs} in many-body quantum systems within the projected ensemble framework \cite{varikuti2024unraveling}. This phenomenon has been called the deep thermalization of many-body systems, a notion closely related to the eigenstate thermalization hypothesis (ETH). However, it generalizes the notion of thermalization to the level of subsystem wave functions. In this framework, projectively measuring a chaotic quantum state yields an ensemble of projected states associated with the unmeasured subsystem (called a projected ensemble) approximating higher-order designs \cite{cotler2023emergent}. Our work \cite{varikuti2024unraveling} has established a sufficient condition on the measurement basis for the emergence of designs from the many-body systems with symmetries. Under this condition, the measurements in a given basis ensure the convergence of the projected ensembles to quantum state designs in the limit of large measured subsystem sizes. Moreover, we observed that the projected ensembles undergo \textit{randomness-transitions} for certain symmetries as the measurement basis slowly varies. 

\section{Future prospepts}
In this thesis, we have tackled fundamental questions concerning information scrambling and the effect of the system's route to chaos in the quantum regime on this process. Additionally, we have explored how symmetries can impose constraints on deep thermalization in many-body quantum systems. As is customary in any scientific inquiry, addressing one question often uncovers numerous other open problems, and our work is no exception. Here, we outline a few open directions that are relevant to the current thesis.

One central focus in quantum many-body physics is to simulate quantum systems on a quantum device and benchmark these simulations in the presence of hardware errors \cite{sahu2022quantum, trivedi2022quantum}. In the semiclassical limit, the assurance will be provided for the stability of simulation if the classical counterpart of the target Hamiltonian is KAM and the error that scales extensively with the number of particles is small enough \cite{bulchandani2022onset}. However, for systems that have no classical analogue, the quantum KAM theorem remains elusive. Nevertheless, recent progress in this direction has studied the stability analysis for the symmetries of quantum systems --- see, for instance, Refs. \cite{brandino2015glimmers, burgarth2021kolmogorov} and the references therein. Quantum simulations of systems whose classical limit is non-KAM can pose a significant challenge to experimental implementations. Any slight perturbation in the form of hardware noise can give rise to dynamics far from the target dynamics. Moreover, digital quantum simulation of these systems is also challenging due to structural changes near resonances even when the hardware error is negligible \cite{chinni2022trotter}. Hence, more careful methods must be devised to simulate this class of systems.

Another intriguing direction concerning mixed-phase space dynamics is to explore the properties of Floquet or eigenstates of the system situated at the boundaries of regular islands. Despite their regularity, these states exhibit properties reminiscent of chaotic states. Hence, a detailed study of these states is an immediate and interesting future direction. Additionally, there has been ongoing debate regarding whether chaos is necessary or sufficient for scrambling. While some studies have argued that chaos is sufficient for scrambling, the necessary conditions remain ambiguous. Consequently, investigations in this direction are of significant interest.

In our study of deep thermalization in quantum systems with symmetries, we have derived a sufficient condition for the emergence of higher-order state designs. While the necessity of this condition has been confirmed through numerical results, conducting a thorough analytical investigation of the same would be an intriguing next step. In addition, if one considers two or more non-commuting symmetries, they do not share common eigenstates. In such systems, the equilibrium states have been shown to approximate non-abelian thermal states \cite{majidy2023noncommuting, yunger2016microcanonical}. These states have been experimentally realized recently in Ref. \cite{kranzl2023experimental}. Hence, an extensive study of deep thermalization and emergent state designs in these systems is a topic of our immediate future investigation. Additionally, measurement-induced phase transitions (MIPTs) occur due to an interplay between the measurements and the dynamics in many-body chaotic systems \cite{skinner2019measurement}.
Our results can offer insights into the mechanism of the MIPTs whenever the dynamics and the measurements are chosen to respect symmetries \cite{majidy2023critical}.

\appendix
\chapter{Appendix for Chapter 3}
\section{Computation of $\overline{\langle D(\beta)\rangle }$}\label{appendix:a}
In this appendix, we find the average expectation value of an arbitrary displacement operator over the space of all pure states, which facilitates the derivations of the average state-OTOCs discussed in the main text. Consider a displacement operator $D(\beta),$ where $\beta\in\mathbb{C}$. We are interested in evaluating $\overline{\langle D(\beta)\rangle}=\int_{\psi}d\psi\langle\psi|D(\beta)|\psi\rangle$, where $d\psi$ represents the normalized uniform measure over the space of pure states in the infinite-dimensional Hilbert space. Since the Fock states of the quantum harmonic oscillator form an orthonormal basis and constitute continuous variable state 1-designs, it suffices to average $\langle D(\beta)\rangle$ over the set of Fock states.
\begin{eqnarray}\label{dispavg}
\overline{\langle D(\beta)\rangle }=\int_{\psi}d\psi\langle\psi|D(\beta)|\psi\rangle&\equiv&\lim_{N\rightarrow \infty}\dfrac{1}{N} \sum_{n=0}^{N-1}\langle n|D(\beta)|n\rangle
\end{eqnarray}
The elements of the displacement operator in the Fock state basis can be written in terms of the associated Laguerre polynomials. Specifically, the diagonal entries, $\langle n|D(\beta)|n\rangle$ for all $n\geq 0$, can be obtained as 
\begin{equation}
\langle n|D(\beta)|n\rangle=e^{-|\beta|^2/2}L^0_n(|\beta|^2), \text{ where }L^0_n\left(|\beta|^2\right)=\sum_{k=0}^{n}(-1)^k\dfrac{n!}{(n-k)!k!k!}|\beta|^{2k}. 
\end{equation}
After incorporating this into Eq. (\ref{dispavg}), we get
\begin{eqnarray}\label{dispavgfinal}
\overline{\langle D(\beta)\rangle }&=&\lim_{N\rightarrow\infty}\dfrac{1}{N}\sum_{n=0}^{N-1}e^{-|\beta|^2/2}L^0_{n}(|\beta|^2)\nonumber\\
&=&e^{|\beta|^2/2}\lim_{N\rightarrow\infty}\dfrac{L^1_{N-1}(|\beta|^2)}{N} \nonumber\\
&=&e^{|\beta|^2/2}\delta_{\Re\left(\beta\right), 0}\delta_{\Im\left(\beta\right), 0}\nonumber\\
&=&\delta_{\Re\left(\beta\right), 0}\delta_{\Im\left(\beta\right), 0}.
\end{eqnarray}
In the second equality, we used the recursive relation $\sum_{n=0}^{k}L^i_{n}(x)=L^{i+1}_{k}$. Therefore, the average vanishes unless $\beta$ is zero, in which case $D(\beta)$ reduces to an infinite dimensional identity operator.

\section{Small $K$ Approximation of OTOCs}
\label{appendix:b}
In this appendix we aim to achieve two objectives. First, we provide analytical arguments to demonstrate that for small $K$, the commutator function exhibits quadratic growth at resonances. We then give an explicit derivation for Eq. (\ref{irr-otoc}) --- the average state-OTOC for irrational $R$ and small $K$. Here, we take $\hbar=1$. The derivation involves approximating the commutator $[\hat{a}(t), \hat{a}]$ up to the terms of order $O(K^2)$. 
\begin{eqnarray*}
[\hat{a}(t), \hat{a}]e^{2\pi it/R}=1+\dfrac{iK}{\sqrt{2\omega}}\sum_{j=0}^{t-1}e^{2\pi ij/R}\left[\sin\hat{X}(j), \hat{a}^{\dagger}\right], \text{ where } X=\dfrac{\hat{a}+\hat{a}^{\dagger}}{\sqrt{2\omega}}.
\end{eqnarray*}
To approximate the commutator up to the second order in $K$, we only require to retain $\sin\hat{X}(j)$ to the zeroth and the first order terms in $K$ for all $j>0$. First, a single application of $\hat{U}_{\tau}$ on $\sin\hat{X}$ gives
\begin{eqnarray}
\sin\hat{X}(1)&=&e^{iK\cos\hat{X}}e^{i\omega\tau\hat{a}^{\dagger}\hat{a}}\left(\sin\hat{X}\right)e^{-i\omega\tau\hat{a}^{\dagger}\hat{a}}e^{-iK\cos\hat{X}}\nonumber\\
&=&\sin\left(\hat{X}_{2\pi /R}\right)+iK\left[\cos\hat{X}, \sin\left(\hat{X}_{2\pi/R}\right)\right]+O(K^2),
\end{eqnarray}
where $\hat{X}_{\theta}=(\hat{a}e^{-i\theta}+\hat{a}^{\dagger}e^{i\theta})/\sqrt{2\omega}$. After $j$-number of repeated applications, the time evolved operator $\sin\hat{X}(j)$ can be approximated as shown below:
\begin{equation}
\sin\hat{X}(j)= \sin\left(\hat{X}_{2\pi j/R}\right)+iK\sum_{n=0}^{j-1}\left[\cos\left(\hat{X}_{2\pi n/R}\right), \sin\left(\hat{X}_{2\pi j/R}\right)\right]+O(K^2).
\end{equation}
Consequently, the Heisenberg evolution of $\hat{a}$ can be approximated as
\begin{equation}
\hat{a}(t)\approx \hat{a}+\dfrac{iK}{\sqrt{2\omega}}\sum_{j=0}^{t-1}e^{ij\omega\tau}\left\{\sin\left(\hat{X}_{2\pi j/R}\right)+iK\sum_{n=0}^{j-1}\left[\cos\left(\hat{X}_{2\pi n/R}\right),\sin\left(\hat{X}_{2\pi j/R}\right) \right]\right\}.
\end{equation}
From Eq. (\ref{com}), it follows that
\begin{eqnarray}\label{com-app}
\left[\hat{a}(t), \hat{a}^{\dagger}\right]e^{2\pi it/R}&\approx& 1+\dfrac{K}{2\omega}\sum_{j=0}^{t-1}\cos\left(\hat{X}_{2\pi j/R}\right)\nonumber\\
&&\hspace{-1cm}-\dfrac{K^2}{\sqrt{2\omega}}\sum_{j=0}^{t-1}\sum_{n=0}^{j-1}e^{2\pi ij/R}\left[\left[\cos\left(\hat{X}_{2\pi n/R}\right),\sin\left(\hat{X}_{2\pi j/R}\right) \right], \hat{a}^{\dagger}\right]
\end{eqnarray}
This approximation is valid for any $R$ as long as $K$ is small. In the following, we will focus on two separate instances: $R=4$ and an irrational value of $R$.

\textit{Instance-1 :} 
In the main text, we argued that the resonances usually result in coherent summations, leading to the quadratic growth of OTOCs. To illustrate this further, we examine Eq. (\ref{com-app}) for the case of $R=4$. To simplify the computation, we consider $t=4s$ where $s$ is a non-negative integer. By doing so, we can observe that the first summation on the right-hand side exhibits a clear linear dependence on $t$.
\begin{eqnarray}
\sum_{j=0}^{t-1}\cos\left(\hat{X}_{2\pi j/R}\right)= \dfrac{t}{4}\sum_{j=0}^{3}\cos\left(\hat{X}_{2\pi j/R}\right). 
\end{eqnarray}
on the other hand, the third term that contains the double summations has an implicit quadratic time dependence. Therefore, in the limit of weak perturbations, the OTOCs will grow quadratically at resonances. The proof of the argument is now complete.

\textit{Instance-2: }
We now consider $R$ to be an irrational number. The approximate OTOC in the limit of small $K$ can be written as
\begin{eqnarray}\label{irr-comm-comp}
C_{\hat{a}\hat{a}^{\dagger}}(t)&\approx & 1+\dfrac{K^2}{4\omega^2}\sum_{j, j'=0}^{t-1}\cos\left(\hat{X}_{2\pi j/R}\right)\cos\left(\hat{X}_{2\pi j'/R}\right)\nonumber\\
&&\hspace{-1cm}-\dfrac{K^2}{\sqrt{2\omega}}\sum_{j=0}^{t-1}\sum_{n=0}^{j-1}\left\{e^{2\pi ij/R}\left[\left[\cos\left(\hat{X}_{2\pi n/R}\right),\sin\left(\hat{X}_{2\pi j/R}\right) \right], \hat{a}^{\dagger}\right]+\textbf{h.c.}\right\}
\end{eqnarray}
Upon performing the average over the pure states, the third term vanishes for all $j<t$ and $n<j$. The second term also vanishes unless $j= j'$. Therefore, we finally obtain
\begin{eqnarray}
\overline{C_{|\psi\rangle,\thinspace \hat{a}\hat{a}^{\dagger}}}(t)&\approx &1+\dfrac{K^2}{4\omega^2}\sum_{j=0}^{t-1}\overline{\langle\psi| \cos^2\left(\hat{X}_{2\pi j/R}\right)|\psi\rangle}\nonumber\\
&=&1+\dfrac{K^2t}{8\omega^2} \text{ for } K\ll 1.
\end{eqnarray}
This concludes the derivation of Eq. (\ref{irr-otoc}) in the main text. 

\section{Linear growth of OTOCs in finite-dimensional integrable quantum systems}
\label{appendix:c}
It is well-known that a generic quantum system, whose classical limit is integrable, possesses uncorrelated eigenspectrum (or eigenphases if the system is time-periodic). If $V$ denotes the time evolution of a typical integrable system, then its eigenphases can be viewed as complex phases drawn uniformly at random from a complex unit circle. Without further loss of generality, we assume $V$ to be a random diagonal unitary acting on a $d$-dimensional Hilbert space. The elements of $V$ can be characterized as follows:
\begin{eqnarray}
V_{i j}=
\begin{cases}
 e^{2\pi i\phi}, \phi\in \left[-0.5, 0.5\right]& \text{if }  i=j\\
    0,              & \text{otherwise},
\end{cases}
\end{eqnarray}
where $\phi$ is a uniform random variable. After applying the perturbation, we assume the following Floquet form for the total system evolution: 
\begin{eqnarray}
U=Ve^{-i\varepsilon H} \text{ and }\varepsilon\ll 1, 
\end{eqnarray}
where $H$ is the perturbation and $\varepsilon$ is the kicking strength. By choosing the uncorrelated eigenphases, we already invoked the first condition that led to the linear growth of the OTOCs under the non-resonance condition. We now choose the initial operators that are conserved up to a phase under the action of $V$ --- let $A$ be an operator that satisfies $V^{\dagger}AV=e^{-i\theta}A$. For simplicity, we take $\theta=0$. We are now interested in computing the quantity given by 
\begin{eqnarray}
C(t)=\dfrac{1}{d}\text{Tr}\left[\left[A(t), A\right]^{\dagger}\left[A(t), A\right]\right], 
\end{eqnarray}
where $A(t)=U^{\dagger t}AU^t$. Ignoring all the higher order terms in $\varepsilon$, the commutator, $\left[A(t), A\right]$, can be written as 
\begin{eqnarray}
\left[A(t), A\right]\approx i\varepsilon\sum_{j=0}^{t-1}\left[\left[V^{\dagger j}HV^j, A\right], A\right].
\end{eqnarray}
It then follows that
\begin{eqnarray}
\text{Tr}\left[\left[A(t), A\right]^{\dagger}\left[A(t), A\right]\right] &\approx&\varepsilon^2\sum_{j, j'=0}^{t-1}\left[\left[V^{\dagger j}HV^j, A\right], A\right]^\dagger\left[\left[V^{\dagger j'}HV^{j'}, A\right], A\right]\nonumber\\
&=&\varepsilon^2\sum_{j, j'=0}^{t-1}\text{Tr}\left[6H_jA^2H_{j'}A^2-4H_jA^3H_{j'}A-4H_jAH_{j'}A^3\right.\nonumber\\
&&\left.+H_jH_{j'}A^4+H_{j'}H_jA^4\right]\nonumber\\
&=&\varepsilon^2\sum_{j, j'=0}^{t-1}\text{Tr}\left[6H_{j-j'}A^2HA^2-4H_{j-j'}A^3HA-4H_{j-j'}AHA^3\right.\nonumber\\
&&\left.+H_{j-j'}HA^4+HH_{j-j'}A^4\right]
\end{eqnarray}
where $H_{j-j'}=V^{\dagger j-j'}HV^{j-j'}$. Now, the process of averaging over the random diagonal unitaries gives
\begin{eqnarray}\label{C6}
\int_{V}dVC(t)&=&\dfrac{1}{d} \int_{V}\text{Tr}\left[\left[A(t), A\right]^{\dagger}\left[A(t), A\right]\right]dV\nonumber\\
&=&\dfrac{\varepsilon^2}{d}\sum_{j, j'=0}^{t-1}\int_{V}dV\text{Tr}\left[6H_{j-j'}A^2HA^2-4H_{j-j'}A^3HA-4H_{j-j'}AHA^3\right.\nonumber\\
&&\left.+H_{j-j'}HA^4+HH_{j-j'}A^4\right]\nonumber\\
&=&\dfrac{\varepsilon^2}{d}t\text{Tr}\left[6HA^2HA^2+2H^2A^4-8HAHA^3\right]\nonumber\\
&&+\dfrac{\varepsilon^2}{d}\sum_{j\neq j'}\int_{V}dV\text{Tr}\left[6H_{j-j'}A^2HA^2-4H_{j-j'}A^3HA-4H_{j-j'}AHA^3\right.\nonumber\\
&&\left.+H_{j-j'}HA^4+HH_{j-j'}A^4\right].
\end{eqnarray}
The integrals over the random diagonal unitaries can be solved as follows:
\begin{eqnarray}
\int_{V}dV \text{ Tr}\left[H_{j-j'}A^2HA^2\right]&=&\text{ Tr}\left[\text{diag}\left(H\right)\right]\text{Tr}\left[HA^4\right],\nonumber\\
\int_{V}dV\text{ Tr}\left[H_{j-j'}A^3HA\right]&=&\text{ Tr}\left[\text{diag}\left(H\right)\right]\text{Tr}\left[HA^4\right],\nonumber\\
\int_{V}dV\text{ Tr}\left[H_{j-j'}HA^4\right]&=&\text{ Tr}\left[\text{diag}\left(H\right)\right]\text{Tr}\left[HA^4\right], \nonumber\\
\int_{V}dV\text{ Tr}\left[HH_{j-j'}A^4\right]&=&\text{ Tr}\left[\text{diag}\left(H\right)\right]\text{Tr}\left[HA^4\right].\nonumber\\
\end{eqnarray}
All the above integrals yield identical results. As a result, the second term in Eq. (\ref{C6}) involving the double summation vanises. Therefore, we finally obtain 
\begin{eqnarray}
\int_{V}dVC(t)=\dfrac{\varepsilon^2 t}{d}\text{ Tr}\left[6HA^2HA^2+2H^2A^4-8HAHA^3\right]
\end{eqnarray}
In conclusion, we have showed analytically that the OTOC grows linearly with time, given that the aforementioned conditions regarding the initial operators and eigenphases are satisfied.

\chapter{Appendix for Chapter 5}

\section{Derivation of Eqs. (\ref{23}) and (\ref{24})}
\label{AppB}
We first compute the two point correlator $\tr\left( U^{\dagger}A^2UB^2 \right)$, where $U=\oplus_{\substack{F_{z}}}U_{F_z}$ and $F_z$ varies from $-2J$ to $+2J$. The unitaries $U_{F_z}$s have non-trivial support over the invariant subspaces labeled by the quantum number $F_z$. For the RMT value, we choose these unitaries randomly from the circular orthogonal ensemble. We now write 
\begin{eqnarray}
\tr\left( U^{\dagger}A^2UB^2 \right)&=&\tr\left[\left(\underset{F_z}{\oplus} U_{F_z}\right)^{\dagger}A^2\left(\underset{F'_z}{\oplus}U_{F'_z}\right)B^2 \right] \nonumber\\
&=&\sum_{F_z, F'_z=-2J}^{2J}\tr\left( U^{\dagger}_{F_z}A^2U_{F'_z}B^2 \right)\nonumber\\
&=&\sum_{F_z=-2J}^{2J}\tr\left( U^{\dagger}_{F_z}A^2U_{F_z} B^2\right). 
\end{eqnarray}
The third equality follows directly from Eq. (\ref{inter}). We now compute the average two-point correlator over all subspace unitaries $U_{F_z}\in \text{CUE}(2J+1-|F_z|)$. 
\begin{eqnarray}
\overline{C_{2}}=\dfrac{1}{2J+1}\sum_{F_z=-2J}^{2J}\tr\left( \overline{U^{\dagger}_{F_z}A^2U_{F_z}} B^2 \right). 
\end{eqnarray}
For an arbitrary matrix $P$ and a random $W\in \text{COE}$, the operator $\overline{W^{\dagger}PW}$ can be evaluated as \cite{brouwer1996diagrammatic}
\begin{eqnarray}
\overline{W^{\dagger}PW}=\dfrac{1}{d+1}\left( P^{T}+\tr(P)\mathbb{I} \right),  
\end{eqnarray}
where $d$ is the dimension of the Hilbert space over which the unitary $W$ acts, and $P^T$ denotes the transpose matrix of $P$. In the above expression, all the operators have been assumed to have equal dimensions. We now replace $P$ with $A^2$ and consider the average over the subspace unitaries belonging to the COE. In this case, the above equation gets modified as follows:
\begin{eqnarray}
\overline{U^{\dagger}_{F_z}A^2U_{F_z}} = \dfrac{1}{2J+2-|F_z|}\left[ \tr\left( [A^2]_{F_z}\right)\mathbb{I}_{F_z}+[A^2]^{T}_{F_z} \right], 
\end{eqnarray}
where $[A^2]_{F_z}$ denotes the projection of $A^2$ onto the invariant subspace with the quantum number $F_z$ and $\mathbb{I}_{F_z}$ is the identity operator supported non-trivially over the same subspace. It then follows that 
\begin{eqnarray}
\overline{C_{2}}=\dfrac{1}{(2J+1)^2}\sum_{F_z}\dfrac{\tr([A^2]_{F_z}^TB^2)+\tr([A^2]_{F_z})\tr([B^2]_{F_z})}{2J+2-|F_z|}. 
\end{eqnarray}
We now show that the four-point correlator, on the other hand, vanishes for the $\hat{I}_x\hat{J}_x$ OTOC. To do so, we consider 
\begin{align}
\overline{C_4}=&\dfrac{1}{(2J+1)^2}\tr\left[\overline{\left(\underset{F^{a}_z}{\oplus}U_{F^{a}_z}\right)^{\dagger}A\left(\underset{F^{b}_z}{\oplus}U_{F^{b}_z}\right)B\left(\underset{F^{c}_z}{\oplus}U_{F^{c}_z}\right)^{\dagger}A\left(\underset{F^{d}_z}{\oplus}U_{F^{d}_z}\right)B}\right] \nonumber\\
=&\dfrac{1}{(2J+1)^2}\sum_{F^{a}_zF^{b}_zF^{c}_zF^{d}_z}\tr\left(\overline{U^{\dagger}_{F^{a}_z}AU_{F^{b}_z}BU^{\dagger}_{F^{c}_z}AU_{F^{d}_z}}B\right).
\end{align}
Equation (\ref{inter}) implies that the four-point correlator vanishes unless $F^{a}_{z}=F^{b}_{z}\pm 1$, and $F^{c}_{z}=F^{d}_{z}\pm 1$. Hence, 
\begin{eqnarray}
\overline{C_{4}}=\dfrac{1}{(2J+1)^2}\sum_{F^{a}_z, F^{c}_{z}}\tr\left(\overline{U^{\dagger}_{F^{a}_z}AU_{F^{a}_z\pm 1}BU^{\dagger}_{F^{c}_z}AU_{F^{c}_z\pm 1}}B  \right).
\end{eqnarray}
Interestingly, the above expression can be further simplified by noting that the terms inside the summation vanish unless $F^{a}_{z}=F^{c}_{z}$. It then follows that 
\begin{eqnarray}
\overline{C_{4}}=\dfrac{1}{(2J+1)^2}\sum_{F_{z}=-2J}^{2J}\tr\left(\overline{U^{\dagger}_{F_z}AU_{F_z\pm 1}BU^{\dagger}_{F_z}AU_{F\pm 1}}B\right).    
\end{eqnarray}
Here, we replaced $F^{a}_z$ with $F_z$. A few remarks are in order. Firstly, the subspace operators $U_{F_z}$ and $U_{F_z\pm 1}$ act non-trivially on the invariant subspaces with the total magnetization $F_z$ and $F_z\pm 1$ respectively, hence independent. Hence, the COE average over the unitaries $U_{F_z}$ and $U_{F_z\pm 1}$ should be performed independently. See Ref. \cite{brouwer1996diagrammatic} for more details concerning the averages over COE matrices. Upon performing the average, we get
\begin{eqnarray}
\overline{C_{4}}=0.    
\end{eqnarray}
Therefore, $C_{\text{RMT}}=\overline{C_{2}}$. 
\chapter{Appendix for Chapter 6}
\section{Construction of random T-invariant unitaries}
\label{polar}
The QR decomposition is traditionally used to generate Haar random unitary operators from the initial random Gaussian matrices. However, QR decomposition can not produce uniformly distributed unitaries from the subgroups such as $U_{\text{TI}}(d^N)$ as the decomposition does not preserve the symmetries of the initial operator. Here, we use polar decomposition as an alternative to the QR decomposition to generate random unitary operators. For a given initial operator $Z$, the polar decomposition is given by $Z=UP$, where $P$ is a positive semi-definite operator, $P=\sqrt{Z^{\dagger}Z}$. If $Z$ is a full rank matrix, $U=Z(Z^{\dagger}Z)^{-1/2}$ can be  uniquely computed. If $Z$ is a complex Gaussian matrix with mean $\mu=0$ and standard deviation $\sigma=1$, the polar decomposition will yield the ensemble of unitaries whose moments match those of the Haar random unitaries. To see this, consider $A$, an arbitrary operator acting on $t$-replicas of the same Hilbert space $\mathcal{H}^{d}$. Then, we have 
\begin{eqnarray}
\langle U^{\dagger\otimes t}AU^{\otimes t}\rangle &=& \int_{Z}d\mu(Z) \left(Z(Z^{\dagger}Z)^{-1/2}\right)^{\dagger\otimes t}A\left(Z(Z^{\dagger}Z)^{-1/2}\right)^{\otimes t}\nonumber\\
&=&\int_{Z}d\mu(Z) \left((Z^{\dagger}Z)^{-1/2} Z^{\dagger}\right)^{\otimes t}A\left(Z(Z^{\dagger}Z)^{-1/2}\right)^{\otimes t},  
\end{eqnarray}
where $d\mu(Z)$ denotes the invariant measure over the Ginebre ensemble. 
Since the Ginibre ensemble is unitarily invariant, we replace $Z$ with $VZ$ for some $V\in U(d^N)$ and perform Haar integral over $V$. This action keeps the overall integral in the above equation invariant. 
\begin{eqnarray}
\langle U^{\dagger\otimes t}AU^{\otimes t}\rangle &=&\int_{Z}d\mu(Z)\int_{V\in U(d^N)}d\mu(V) \left((Z^{\dagger}Z)^{-1/2} Z^{\dagger}V^{\dagger}\right)^{\otimes t}A\left(VZ(Z^{\dagger}Z)^{-1/2}\right)^{\otimes t}\nonumber\\
&=&\int_{Z}d\mu(Z)\left((Z^{\dagger}Z)^{-1/2} Z^{\dagger}\right)^{\otimes t}\left(\int_{V}d\mu(V)V^{\dagger\otimes t}AV^{\otimes t}\right)\left(Z(Z^{\dagger}Z)^{-1/2}\right)^{\otimes t}. \nonumber\\
\end{eqnarray}
By the Schur-Weyl duality, $\int_{V\in U(d^N)}d\mu(V)V^{\dagger\otimes t}AV^{\otimes t}=\sum_{i=1}^{t!}c_i \pi_i$, where $\{\pi_i\}$s are permutation operators acting on $t$-replicas of the Hilbert space. It then follows that 
\begin{eqnarray}
\langle U^{\dagger\otimes t}AU^{\otimes t}\rangle =\left(\int_{V}d\mu(V)V^{\dagger\otimes t}AV^{\otimes t}\right)\left(\int_{Z}d\mu(Z)\left((Z^{\dagger}Z)^{-1/2} Z^{\dagger}\right)^{\otimes t}\left(Z(Z^{\dagger}Z)^{-1/2}\right)^{\otimes t}\right). \nonumber\\ 
\end{eqnarray}
Since $Z(Z^{\dagger}Z)^{-1/2}=U$ is a unitary operator, the integrand of the second integral becomes the Identity operator. Therefore, 
\begin{eqnarray}
\langle U^{\dagger\otimes t}AU^{\otimes t}\rangle= \int_{V\in U(d^N)}d\mu(V)V^{\dagger\otimes t}AV^{\otimes t}. 
\end{eqnarray}
This equation implies that the moments of the ensemble of unitaries from the polar decomposition are identical to those of the Haar ensemble of unitaries. 

We now show that if the initial operator commutes with an arbitrary unitary operator, then the resulting unitary from the polar decomposition necessarily commutes with the same. For our purpose, we take the commuting unitary to be $T$, the translation operator. Let $Z$ be randomly drawn from the Ginibre ensemble. Then, the operator $Z'=\sum_{j=0}^{N-1}T^{\dagger j}ZT^{j}$ is translation invariant as $T^{\dagger}Z'T=Z'$. Moreover, $P'=\sqrt{Z^{'\dagger}Z'}$ is also $T$-invariant whenever $Z'$ is a full rank matrix. Consequently, the resulting unitary $U'$ commutes with $T$. One can also show that the distribution of $Z'$ is invariant under the action of elements of $U_{\text{TI}}(d^N)$. Therefore, the resulting ensemble of unitaries has the same moments as those of the unitary subgroup $U_{\text{TI}}(d^N)$.

\section{Partial trace of $T^j$}
\label{ptrace}
This appendix shows that the particle trace of $T^j$ results in some permutation operator whenever $N_A\geq \gcd(N, j)$. We first consider $j=1$. Then, $\text{Tr}_{B}(T)$ is still a translation operator, acting on the subsystem-$A$ as shown in the following:
\begin{eqnarray}
\text{Tr}_{B}\left(T\right)&=&\sum_{b\in\{0, 1\}^{N_B}}\langle b|T|b\rangle    \nonumber\\
&=&\sum_{\substack{a \in \{0, 1\}^{N_A} \\ a' \in \{0, 1\}^{N_A} \\ b \in \{0, 1\}^{N_B}}} 
\langle a_1...a_{N_A}b_1...b_{N_B}|T|a'_1...a'_{N_A}b_1...b_{N_B}\rangle |a_1...a_{N_A}\rangle\langle a'_1...a'_{N_A}|\nonumber\\
&=&\sum_{\substack{a \in \{0, 1\}^{N_A} \\ a' \in \{0, 1\}^{N_A} \\ b \in \{0, 1\}^{N_B}}} 
\left(\langle a_1...a_{N_A}b_1...b_{N_B}|b_{N_B}a'_1...a'_{N_A}b_1...b_{N_B-1}\rangle\right) |a_1...a_{N_A}\rangle\langle a'_1...a'_{N_A}|\nonumber\\
&=&\sum_{\substack{a \in \{0, 1\}^{N_A} \\ a' \in \{0, 1\}^{N_A} \\ b \in \{0, 1\}^{N_B}}} 
\delta_{a_1, b_{N_B}}\delta_{a_2, a'_1}\delta_{a_3, a'_2}...\delta_{a_{N_A}, a'_{N_A-1}}\delta_{b_1, a'_{N_A}}\delta_{b_2, b_1}...\delta_{b_{N_B}, b_{N_B-1}}\nonumber\\
&&\hspace{8cm}\left(|a_1...a_{N_A}\rangle\langle a'_1...a'_{N_A}|\right)\nonumber\\
&=&\sum_{a\in\{0, 1\}^{N_A}}|a_1...a_{N_A}\rangle\langle a_2...a_{N_A}a_1|\nonumber\\
&=&T_A.
\end{eqnarray}
In the fourth equality, on the right-hand side, the product of Kronecker deltas results in the following chains of equalities:
\begin{align}
a_1=b_{N_B}=b_{N_B-1}=...=b_{2}=b_{1}=a'_{N_A} \quad\text{and}\quad a_{i}=a'_{i-1}\quad\text{for all }N\leq i\leq 2.
\end{align}
The first chain contains equalities of all the bits of the $b$-strings. Therefore, the summation over $b\in\{0, 1\}^{N_B}$ disappears. Besides, the sum involving $a'$ strings disappears due to the remaining equalities, finally leading to the translation operator on $A$. 

For any $j>1$, the partial trace of $T^j$ also forms the product of Kronecker deltas. Every equality chain starting with $a_i$ of the string $a$ must end with $a'_j$ of $a'$ for some $i, j \leq N{A}$. We denote this chain as $[a_i-a'_j]$. Constructing a sequence of distinct chains $[a_i-a'_j][a_j-a'_k]...[a_l-a'_i]$, where subscripts of the last and first elements of consecutive chains match, forms a complete cycle if it covers all bits of $a$, $a'$, and $b$. Then, for any $j>1$, we observe the following implications:
\begin{enumerate}[i]
    \item If $N_A < \gcd(N, j)$, chains starting with $a_i$s always end with $a'_i$s, preventing a complete cycle. However, there will be exactly $\gcd(N, j)$ number of equality chains, each forming an incomplete cycle. Since the endpoints of the chains share the same subscripts, the resulting operator is a constant multiple of $\mathbb{I}_{2^{N_A}}$. 
    
    \item The second possibility is that all the chains can be stacked together to form a full cycle, mapping all $a'_i$s to distinct $a_j$s. Consequently, the resulting operator becomes a permutation operator on subsystem-$A$. A complete cycle can only be formed if $N_A\geq \gcd(N, j)$ (See also Lemma 3.8 in Ref. \cite{sugimoto2023eigenstate}). 
\end{enumerate}       
For $N_A>1$, a full cycle will always form if $N$ assumes a prime number as $N_A\geq \gcd(N, j)=1$ for any $j$.

\section{Applicability of Levy's lemma for the moments of the projected ensembles with typical generator states}
{Equation (6.26) of the main text tells us that the $t$-th moment operator of a projected ensemble, when averaged over many random symmetric generator states, equates to the $t$-th Haar moment whenever the sufficient condition holds. Then, Levy's lemma can be used to argue that the moment operator for a typical generator state also approximates the Haar moments, and the distance between them shrinks exponentially with the Hilbert space dimension. In this appendix, we provide some basic details concerning Levy's lemma and its applicability in the context of the projected ensembles.} 

{\textbf{Definition} (Lipschitz continuous functions). A function $f: X\rightarrow Y$ is Lipschitz continuous with Lipschitz constant $\eta$, if for any $x_1, x_2\in X$, it holds that 
\begin{eqnarray}
d_y(f(x_1), f(x_2))\leq \eta d_x(x_1, x_2),     
\end{eqnarray}
where $d_x$ and $d_y$ indicate the distance metrics associated with the spaces $X$ and $Y$,  respectively. The Lipschitz continuity is a stronger form of the uniform continuity of $f$ \cite{o2006metric}, and $\eta$ upper bounds the slope of $f$ in $X$ \cite{milman1986asymptotic, ledoux2001concentration}}. 

\textbf{Levy's lemma} \cite{milman1986asymptotic, ledoux2001concentration}
{Let $f: \mathbb{S}^{d-1} \rightarrow \mathbb{R}$ be a Lipschitz function defined over a $(d-1)$-sphere $\mathbb{S}^{d-1}$, equipped with a natural Haar measure. Suppose a point $x \in \mathbb{S}^{d-1}$ is drawn uniformly at random from $\mathbb{S}^{d-1}$. Then, for any $\varepsilon > 0$, the following concentration inequality holds:
\begin{eqnarray}
\text{Pr}\left[ \left|f(x)-\mathbb{E}_{x\in \mathbb{S}^{d-1}}(f(x))\right|\geq \varepsilon \right]\leq 2\exp\left\{ \dfrac{-d\varepsilon^2}{9\pi^3\eta^2} \right\}, 
\end{eqnarray}
where $\eta$ is the Lipschitz constant of $f$ and $c$ is a positive constant.}

{Proof of Levy's lemma can be found in Ref. \cite{gerken2013measure}. Levy's lemma guarantees that the value of a Lipschitz continuous function at a typical $x\in \mathbb{S}^{d-1}$ is always close to its mean value, as given by $\mathbb{E}_{x\in \mathbb{S}^{d-1}}(f(x))$. The difference between the mean and a typical value is exponentially suppressed with the Hilbert space dimension}.

{Let us now consider the moments of a projected ensemble for an arbitrary generator state ($|\phi\rangle$) with symmetry: 
\begin{eqnarray}\label{levylemma}
\mathcal{M}^{t}(|\phi\rangle)=\sum_{|b\rangle\in\mathcal{B}} \dfrac{\left[\langle b|\phi\rangle\langle\phi|b\rangle\right]^{\otimes t}}{\left(\langle\phi|b\rangle\langle b|\phi\rangle\right)^{t-1}} .   
\end{eqnarray}
For Haar random generator state, $[\mathcal{M}^{t}(|\phi\rangle)]_{ij}$, the elements of the moment operator, have been shown to be Lipschitz continuous functions of $|\phi\rangle$ with the Lipschitz constant $\eta\leq 2(2t-1)$. Then, Levy's lemma, as given in Eq. (\ref{levylemma}), implies that the differences between the elements of the projected ensemble moments and their respective means are exponentially suppressed with the total Hilbert space dimension $2^{N}$. A detailed explanation for the same can be found in Ref. \cite{cotler2023emergent}.
Since the ensembles of uniform random states with symmetry are subsets of the Haar ensemble, the corresponding $[\mathcal{M}^{t}(|\phi\rangle)]_{ij}$s remain Lipschitz continuous with the same Lipschitz constant $\eta$ (or smaller than $\eta$)}.

\section{Characterization of the distribution of the projected ensembles }
\label{app-moments}
Equation (\ref{sing}) provides an expression for the moments of the projected ensembles with respect to a given measurement basis, averaged over the initial generator states from the symmetry-restricted ensembles. These moments characterize the underlying distribution of the ensembles to some extent. The resultant distribution depends both on the initial symmetry and the measurement basis considered. When the sufficient condition is satisfied, the resultant distribution displays the moments of the Haar ensemble. It's interesting to examine the distribution of projected ensembles when the sufficient condition does not hold. In this appendix, we provide closed-form expressions for the moments of the projected ensembles. To illustrate, we take the generator states from $\mathbb{Z}_2$-symmetric ensembles and perform the measurements in the local product basis given by $\mathcal{B}\equiv\{+, -\}^{N_B}$. 
For the $\mathbb{Z}_2$-symmtric case, the resulting distribution of the projected ensembles from the measurements in $\sigma^x$ basis indeed displays the moments, which are linear combinations of the symmetry-restricted moments corresponding to both parities. In the following, we shall show this analytically.

{To begin with, let us recall that the moments of the projected ensembles averaged over the ensemble of $\mathbb{Z}_2$-symmetric states are given by 
\begin{eqnarray}\label{moment}
\mathcal{M}^{t}_{\mathbb{Z}_{2}}&=&\mathbb{E}_{|\phi\rangle\in\mathcal{E}^{k}_{\mathbb{Z}_{2}}}\left(\sum_{|b\rangle\in\mathcal{B}}\dfrac{\left[\langle b|\phi\rangle\langle\phi|b\rangle\right]^{\otimes t}}{\left(\langle\phi|b\rangle\langle b|\phi\rangle\right)^{t-1}}\right)\nonumber\\
&=& \int_{u\in U_{\mathbb{Z}_2}(2^N)} d\mu(u) \left(\sum_{|b\rangle\in\mathcal{B}}\dfrac{\left[\langle b|u|\phi\rangle\langle \phi|u^{\dagger}|b\rangle\right]^{\otimes t}}{\left(\langle \phi|u^{\dagger}|b\rangle\langle b|u|\phi\rangle\right)^{t-1}}\right), 
\end{eqnarray}
where $U_{\mathbb{Z}_{2}}(2^N)\subset U(2^N)$ denotes the set of all unitaries that commute with the operator $\otimes_{i=1}^{N}\sigma^x_{i}$. One can easily show that $U_{\mathbb{Z}_{2}}(2^N)$ is a compact subgroup of $U(2^N)$.} 

\begin{figure}
    \centering
    \includegraphics[width=0.5\linewidth]{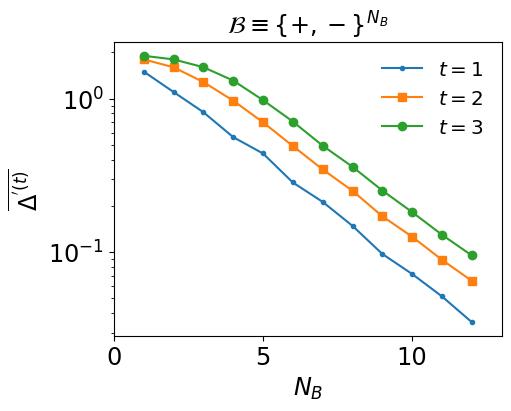}
    \caption{{Illustration of the average trace distance $\overline{\Delta^{'(t)}}$ vs $N_B$ for the random generator states with the $\mathbb{Z}_{2}$-symmetry for the first three moments. We fix the charge $k=0$ and $N_A=3$. The measurements are performed in local product basis $\mathcal{B}\equiv \{+, -\}^{N_B}$. For numerical purposes, the average trace distance is evaluated by considering $10$ samples of the initial generator states.}}
    \label{fig1}
\end{figure}

{Let $U_{\mathbb{Z}_{2}}(2^{N_A})$ denotes the set of all unitaries that commute with $\bigotimes_{i=1}^{N_A}\sigma^{x}_{i}$. Then, one can verify that the elements of $U_{\mathbb{Z}_{2}}(2^{N_A})$ also commute with the symmetry operator \( \bigotimes_{i=1}^{N} \sigma^x_i \). This will imply that $U_{\mathbb{Z}_{2}}(2^{N_{A}})\subset U_{\mathbb{Z}_{2}}(2^{N})$ \footnote{Note that for non-on-site symmetries, such as translation and reflection symmetries, this statement does not hold, so the following analysis cannot be extended straightforwardly.}. By making use of invariance of Haar measure associated with $U_{\mathbb{Z}_{2}}(2^{N})$, we replace $u$ in Eq. (\ref{moment}) with $vu$, where $v\in U_{\mathbb{Z}_{2}}(2^{N_{A}})$. It then follows that
\begin{eqnarray}
\mathcal{M}^{t}_{\mathbb{Z}_{2}}=\sum_{|b\rangle\in\mathcal{B}} \int_{u\in U_{\mathbb{Z}_2}(2^N)} d\mu(u)  \left(\dfrac{\left[\langle b|vu|\phi\rangle\langle \phi|u^{\dagger}v^{\dagger}|b\rangle\right]^{\otimes t}}{\left(\langle \phi|u^{\dagger}|b\rangle\langle b|u|\phi\rangle\right)^{t-1}}\right)=v^{\otimes t}\mathcal{M}^{t}_{\mathbb{Z}_{2}} v^{\dagger \otimes t}. 
\end{eqnarray}
The unitary freedom in the above equation can be used to show that the integrand and the denominator are independent random variables. This can be done by using arguments similar to those used while deriving the moments of the symmetry-restricted ensembles in the main text. It is then straightforward to write the moments of the projected ensemble as
\begin{eqnarray}\label{ind-meas}
\mathcal{M}^{t}_{\mathbb{Z}_{2}} \propto \sum_{|b\rangle\in\mathcal{B}}\langle b| \mathbf{Z}_{k} |b\rangle^{\otimes t}\mathbf{\Pi}^{t}_{A}, 
\end{eqnarray}
where $\mathbf{\Pi}^{t}_{A}=\sum_{j}\pi_{j}$ is the projector onto the permutation symmetric subspace of $t$-copies of the Hilbert space spanning $N_A$ sites. Note that Eq. (\ref{ind-meas}) holds for any measurement basis. We now fix the measurement basis to be $\mathcal{B}\equiv\{+, -\}^{N_B}$. For any $|b\rangle\in \mathcal{B}$, the partial inner product $\langle b| \mathbf{Z}_{k} |b\rangle$ can be evaluated as follows: 
\begin{eqnarray}
\langle b| \mathbf{Z}_{k} |b\rangle = \mathbb{I}_{2^{N_A}}+(-1)^{k+\sum_{i=1}^{N_B}\text{sgn}(b_i)}\left(\otimes_{i=1}^{N_A}\sigma^{x}_{i}\right)=\mathbf{Z}_{k', N_A},   
\end{eqnarray}
where $k'=k+\sum_{i=1}^{N_B}\text{sgn}(b_i)$. The subscript $N_A$ in $\mathbf{Z}_{k', N_A}$ distinguishes the subspace projector acting on all the sites ($\mathbf{Z}_{k}$) from the one that acts only on $N_A$-number of sites ($\mathbf{Z}_{k', N_A}$). Depending on the values taken by $k$ and the parity of the basis vector $|b\rangle$, r.h.s of the above expression becomes a subspace projector onto one of the eigenspaces of the $\mathbb{Z}_{2}$-symmetry operator. Moreover, one can verify that half of the basis vectors have even parity and the remaining half have odd parity. It then follows that 
\begin{eqnarray}
\mathcal{M}^{t}_{\mathbb{Z}_{2}}=\dfrac{1}{\mathcal{N}} \left( \mathbf{Z}^{\otimes t}_{0, N_A} + \mathbf{Z}^{\otimes t}_{1, N_A}\right) \mathbf{\Pi}^{t}_{A}, 
\end{eqnarray}
where $\mathcal{N}$ denotes the normalization constant and is given by $\mathcal{N}=\text{Tr}\left[ \left( \mathbf{Z}^{\otimes t}_{0, N_A} + \mathbf{Z}^{\otimes t}_{1, N_A}\right) \mathbf{\Pi}^{t}_{A}\right]$.}

{We now numerically calculate $\overline{\Delta^{'(t)}}$, the average trace distance between the moments of the projected ensembles of a typical $\mathbb{Z}_{2}$-symmetric generator state and $\mathcal{M}_{\mathbb{Z}^{t}_{2}}$, as the system size varies. 
\begin{eqnarray}
    \Delta'^{(t)}=\left\| \left(\sum_{|b\rangle\in\mathcal{B}}\dfrac{\left[\langle b|\phi_{AB}\rangle\langle\phi_{AB}|b\rangle\right]^{\otimes t}}{\left(\langle\phi_{AB}|b\rangle\langle b|\phi_{AB}\rangle\right)^{t-1}}\right) - \mathcal{M}^{t}_{\mathbb{Z}_{2}}\right\|_{1}.
\end{eqnarray}
We evaluate $\Delta^{'(t)}$ and average it over many samples of the initial generator states, which is denoted by $\overline{\Delta^{'t}}$. Supporting numerical results are shown in Fig. \ref{fig1}. We observe that the average trace distance $\overline{\Delta^{'t}}$ exponentially converges to zero as the measured subsystem size $N_B$ increases. }

\section{Further details on deep thermalization in Ising chain}\label{comparision}
\subsection{Contrasting deep thermalization with and without translation symmetry}\label{symbrok_comparision}
{In the main text, we have examined the deep thermalization in the Ising chain having homogeneous model parameters and with periodic boundary conditions (PBCs). Recall that PBCs, together with homogeneous interaction and fields, imply translation symmetry in the considered system. Due to this symmetry, the entanglement builds up in the time-evolved state at a rate double that of the case with an open boundary. As we notice here, the same is reflected in the phenomenon of emergent state designs. 
In this appendix, we study deep thermalization in the absence of translation symmetry and contrast it with the translation-symmetric case. }
\subsubsection{Through modification of boundary conditions}
\begin{figure}
\includegraphics[scale=0.43]{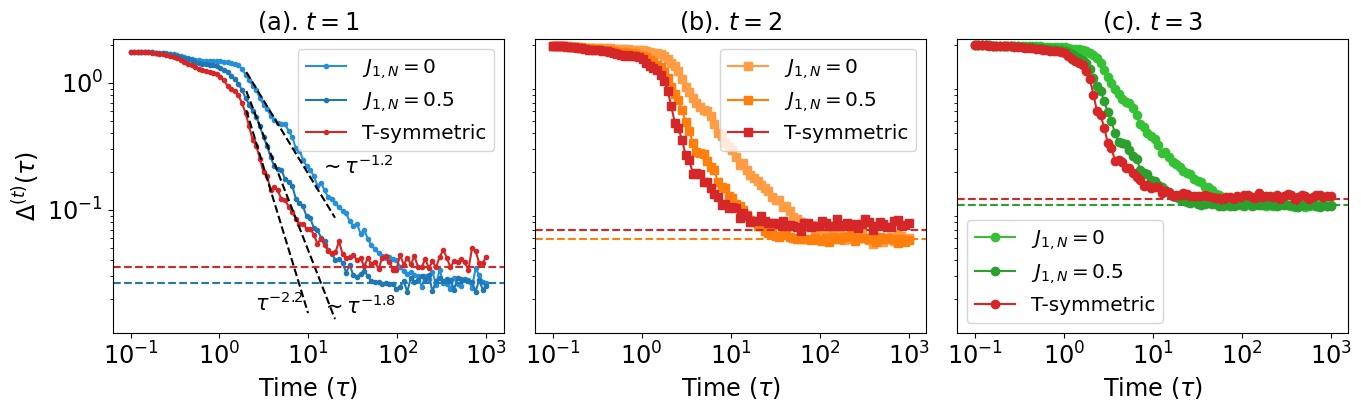}
\caption{\label{fig:ising} The figure contrasts the decay of $\Delta^{(t)}$ for the state $|0\rangle^{\otimes N}$ evolved under the Hamiltonian of the Ising chain with translation symmetry (red curves) with that of moderately broken translation symmetry and the one with open ends (blue, orange, and green curves for $t=1$, $2$, and $3$). The latter cases are characterized by the interaction strength between the first and $N$-th spins, $J_{1, N} = 0.5$ and $0$, respectively. The data is presented for $N=16$ and $N_A=3$. The panels correspond to the first three moments $t=1$, $2$, and $3$, respectively. We show the power-law scaling to compare the convergence rate in the early time regime. At late times, the data saturate to the predicted RMT values with and without symmetry, which are marked by the horizontal dashed lines.} 
\end{figure}

{Here, we break the translation symmetry of the model by considering (i) open boundary condition (OBC) where interaction between the last and first spins of the chain is absent and (ii) inhomogeneous interaction between a pair of spins with closed boundary. 
 Specifically, we contrast the initial decay of the trace distance measure for these two cases with the translation symmetric case. The corresponding results are shown in Fig. \ref{fig:ising}. 
 The OBC implies $J_{1,N} = 0$, where $J_{1, N}$ represents the interaction strength between the first and $N$-th spins.
 In this case, \(\Delta^{(t)}(\tau)\) displays a power-law decay \(\sim \tau^{-1.2}\) in the early time regime. This has nearly half the exponent of $\sim \tau^{-2.2}$ scaling observed in the translation-symmetric case. In order to interpolate between these two cases, we consider the system with a closed boundary but with an inhomogeneous interaction between a pair of spins. This is implemented by considering $J_{1, N}=0.5$. Notice that though the system is now closed, the translation symmetry still remains absent, and we refer to this case as moderately broken translation symmetry. 
 Here, we observe that the power-law decay $\sim \tau^{-1.8}$ faster than in the OBC case, however, it is still slower than in the PBCs case with homogeneous model parameters. In all the cases, the trace distance measure saturates at late times to appropriate RMT predicted values, which are marked by the horizontal dashed lines. }


\begin{figure}
\includegraphics[scale=0.43]{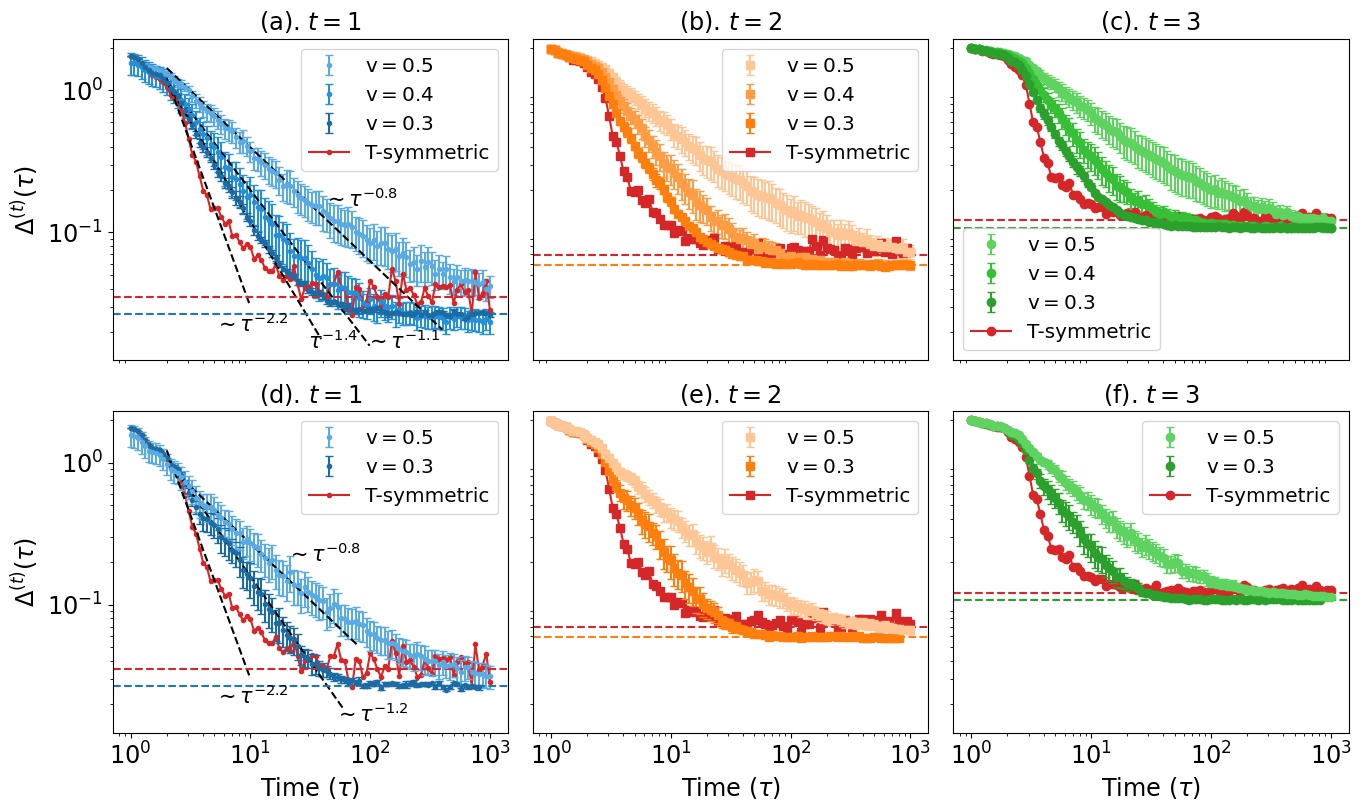}
\caption{\label{fig:ising-disorder} 
The figure contrasts the decay of $\Delta^{(t)}$ for the state $|0\rangle^{\otimes N}$ evolved under the dynamics of translation-symmetric Ising model (red curves) and disordered Ising model given in Eq. \eqref{inhomo_ising} (blue, orange, and green curves for $t=1$, $2$, and $3$).  Like in the main text, we employ PBCs, and the data is presented for $N = 16$ and $N_A = 3$. (a)-(c) and (d)-(f) illustrate the results when the disorder is introduced by randomizing the strengths of the interactions and transverse fields, respectively. The variance $v$ of the randomized parameters controls the strength of the disorder (see Appendix \ref{comparision_disord}), and lighter shading corresponds to a stronger disorder. 
For the disordered cases, ensemble-averaged (over $10$ realizations) data with standard ensemble error (shown as error bars) is presented. We show the power-law scaling to compare the convergence rate in the early time regime. 
At late times, the data saturate to the predicted RMT values with and without symmetry, which are marked by the horizontal dashed lines.} 
\end{figure}

\subsubsection{Through introduction of disorder}
\label{comparision_disord}
{In this section, we consider the Ising chain with inhomogeneous model parameters. We employ this by introducing diagonal and off-diagonal disorders. In particular, we consider two cases by randomizing the strengths of the (i) interactions and (ii) transverse fields. Thus, the Hamiltonian of the Ising model with these disorders can be written as  
\begin{eqnarray}\label{inhomo_ising}
H=\sum_{i=1}^{N} (J+\eta_{i})\sigma^{x}_{i} \sigma^{x}_{i+1}+\sum_{i=1}^{N}h_x\sigma^{x}_{i}+\sum_{i=1}^{N}(h_y+\xi_i)\sigma^{y}_{i}, 
\end{eqnarray}
where $\eta_{i}$ and $\xi_i$ are independent and identically distributed random variables chosen from Gaussian distribution $\mathcal{N}(0, v)$ with zero mean and variance $v$.  Similar to the main text, here we study the model with a closed boundary. The case with $\eta_{i} = \xi_i = 0$ for all $i=1, 2, \cdots, N$, and $J = 1$ corresponds to the Hamiltonian considered in Eq. \eqref{ising} of the main text. Note that the model needs to be chaotic in order to obtain the deep-thermalization characteristics. It is known that the Ising model with both transverse and longitudinal fields is nonintegrable for non-zero values of $h_x$, $h_y$, and $J$ \cite{kim2023nonintIsing, kim2014testing, sharma2015quenches, cotler2023emergent}. But, in order to stay close to the point where the model is robustly chaotic, we choose $\left[J, h_{x}, h_y\right] = \left[1, (\sqrt{5}+1)/4, (\sqrt{5}+5)/8\right]$, as also considered in the main text. Then, the random variables $\eta_{i}$ and $\xi_i$ make the model inhomogeneous, thereby breaking the translation symmetry. The variance $v$ of the random variables controls the strength of the disorder introduced in the otherwise translation symmetric system. In other words, $v$ quantifies the degree of translation symmetry breaking in the considered model.}

{In Fig. \ref{fig:ising-disorder}, we illustrate the decay of $\Delta^{(t)}(\tau)$ for the disordered Ising Hamiltonian presented in Eq. (\ref{inhomo_ising}), and contrast it with the clean, homogeneous case. Figures \ref{fig:ising-disorder}a-\ref{fig:ising-disorder}c show the evolution when the disorder is introduced by randomizing the interaction strengths. We consider a few values of $v$ to demonstrate the effects of disorder with increasing strength. However, we keep $v$ considerably small such that the model remains chaotic, which can also be inferred from the decreasing trend and long-time saturation of $\Delta^{(t)}(\tau)$. We notice that in the early time regime, the trace distance measure shows slower convergence in comparison to the clean case, and this rate decreases as the strength of the disorder is enhanced. In particular, the numerical results depict that the disorder ensemble averaged $\Delta^{(t)}(\tau)$ has the power-law scaling as $\sim\tau^{-1.4},$ $\sim\tau^{-1.1},$ and $\sim\tau^{-0.8}$ for $v = 0.3$, $0.4$ and $0.5$, respectively. This can be contrasted with the clean case, where the decay follows the $\sim \tau^{-2.2}$ law. For all the cases, $\Delta^{(t)}(\tau)$ saturate at late time to appropriate RMT predicted values, which are marked by the horizontal dashed lines. Similarly, Figs. \ref{fig:ising-disorder}d-\ref{fig:ising-disorder}f illustrate the evolution when the disorder is present only in the transverse fields. As previously, we observe that in the early time-regime, the decay of trace distance measure gets slower with increasing disorder strength. Consequently, the above analysis unveils that the Ising model with translation symmetry exhibits faster convergence of the trace distance measure during the early time evolution in comparison to the cases when the symmetry is broken.}

\subsection{Benchmarking the late time evolution with translation plus reflection symmetric random states}\label{benchmark_latetim}
\begin{figure}
    \centering
    \includegraphics[width=0.4\linewidth]{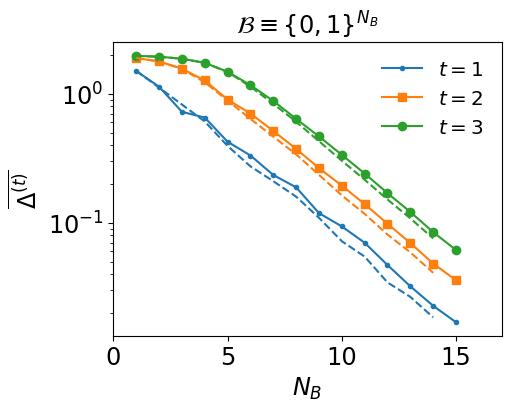}
    \caption{Comparison of the average trace distance $\overline{\Delta^{t}}$ versus $N_B$ for the first three moments when the initial generator states are simultaneous eigenvectors of the translation and reflection symmetries (shown with thick lines) with the case of Haar random generator states (shown with dashed lines). The measurements are performed in $\sigma^z$ basis, and $N_A$ is fixed at $3$. Additionally, note that we have considered $10$ samples of the initial generator states to numerically evaluate the average trace distance. }
    \label{ref-tran}
\end{figure}
{The Ising Hamiltonian considered in this work displays reflection symmetries about every site in addition to the translation symmetry. Moreover, the initial state $|0\rangle^{\otimes N}$ is a common eigenvector of all the aforementioned symmetry operators. Hence, to obtain the RMT predictions, it is imperative to evaluate the trace distance $\overline{\Delta^{t}}$ for the ensemble of states that are common eigenvectors of the translation and reflection symmetries with the eigenvalue $1$. It is to be noted that the symmetry operators corresponding to distinct reflection operations do not commute with each other, nor do they commute with the translation operator. Nevertheless, they all share an overlapping eigenspace. Therefore, under the dynamics of the chaotic Ising chain, the initial state evolves and equilibrates to one of the states belonging to the ensemble spanned by the common eigenspace of all the symmetry operators. Starting from a global Haar random state $|\psi\rangle$, one can construct uniform random states with the aforementioned symmetries as follows:
\begin{eqnarray}\label{tran_ref}
   |\phi\rangle=\dfrac{1}{\mathcal{N}}\mathbf{R}^{N-1}_{0}\mathbf{R}^{N-2}_{0}\cdots \mathbf{R}^{0}_{0}\mathbf{T}_{0}|\psi\rangle,  
\end{eqnarray}
where $\mathbf{R}^{j}_{0}$ denotes the projector onto the reflection (around $j$-th site) symmetric subspace with the eigenvalue $(-1)^{0}=1$. It is worth noting that if a state $|\psi\rangle$ is a simultaneous eigenvector of a reflection operator about an arbitrary site and also the translation operator, then it will also be an eigenvector of other reflection operators corresponding to remaining $N-1$ sites. Having constructed the ensemble of states as given in Eq. (\ref{tran_ref}), one can proceed with the computation of $\overline{\Delta^{t}}$. The corresponding results for the first three moments are shown In Fig. \ref{ref-tran}. These results are compared with the case when the initial states are completely Haar random (shown with dashed lines). We notice slight differences between both cases. These differences may be attributed to the non-commuting nature of the translational and reflection symmetries. The RMT values obtained in this figure, by taking all the symmetries into account, can be used to compare the saturation values in Fig. \ref{fig:ising-phy} of the main text.}

\bibliographystyle{plain}
\bibliography{thesis.bbl} 

\end{document}